\theoremstyle{definition}
\newtheorem{theorem}{Theorem}
\newtheorem{example}[theorem]{Example}
\newtheorem{definition}[theorem]{Definition}
\newtheorem{lemma}[theorem]{Lemma}
\newtheorem{proposition}[theorem]{Proposition}
\newtheorem{claim}[theorem]{Claim}
\crefname{subsection}{Subsection}{subsections}
\Crefname{subsection}{Subsection}{Subsections}
\crefname{observation}{Observation}{observations}
\Crefname{observation}{Observation}{Observations}
\title{Finite Based Contraction and Expansion via Models}
\author{
    % Authors
    Ricardo Guimarães\textsuperscript{\rm 1},
    Ana Ozaki\textsuperscript{\rm 1},
    Jandson S. Ribeiro\textsuperscript{\rm 2}
}
\begin{document}

\maketitle

\begin{abstract}
We propose a new paradigm for Belief Change in which
the new information is represented as sets of models, while
the agent's body of knowledge is represented as a finite
set of formulae, that is, a finite base.
The focus on finiteness is crucial when we consider limited agents and reasoning algorithms.
Moreover, having the input as arbitrary set of models is more general than the usual treatment of formulae as input.
In this setting, we define new Belief Change operations akin to traditional expansion and contraction, and we identify the rationality postulates that emerge due to the finite representability requirement.
We also analyse different logics concerning compatibility with our framework.
\end{abstract}

\section{Introduction}

The field of Belief Change \citep{Alchourron1985,Hansson1999} studies how an agent should rationally modify its current beliefs when confronted with a new piece of information.
The agent should preserve most of its original beliefs,  minimising loss of information, which is known as the principle of minimal change.
Traditionally,  Belief Change is studied via two perspectives:
(i) set of rationality postulates that conceptualise the principle of minimal change (ii) and classes of Belief Change operations characterised by such rationality postulates.
%These two views of Belief Change are shown to be equivalent by representation theorems.
These two views of Belief Change are tightly connected via representation theorems which show that these views are equivalent.

The standard paradigm of Belief Change~\citep{Alchourron1985},  named AGM due to the initial of its founders,
assumes that an agent's epistemic state is represented as a set of sentences logically closed known as theories.
A main issue with theories is that they are often infinite,  whilst
rational agents are cognitively limited in the sense that an agent is only capable of carrying a finite amount of explicit beliefs,  and all its implicit beliefs follows from such a finite body of beliefs.
%The main purpose of using theories is that they represent all the beliefs an agent is committed to believe in a given instant of time,  while such possibly infinite corpus of beliefs are generated from a finite corpus of beliefs.

A theory that can be generated from a finite set of formulae is called \emph{finite based} \citep{Hansson1996, Hansson93}.
For this reason, we will call finite sets of formulae \emph{finite bases}, as arbitrary sets of formulas simply (belief) \emph{bases}.
While in classical propositional logics, every theory is finite based,  this is  not the case for more expressive logics such as first-order logic (FOL) and several Description Logics (DLs) \citep{baader_horrocks_lutz_sattler_2017} such as \ALC{}.
Computationally,  this finiteness requirement is also important as  reasoners for a particular logic usually can only deal with finite sets of formulae.
Therefore, it is paramount that Belief Change operations guarantee that the new epistemic state is finite based.
%It turns out that
%However, the current classes of AGM rational operators do not guarantee finiteness.
However, the AGM postulates do not address finite representation of epistemic states.
In fact, the question of finite representability has not been prioritised in Belief Change.
%The focus has been given to the design of operators that connect with the AGM rationality postulates and it turns out that they do not %because the rationality postulate do not
%the how epistemic states are represented    %and   %but neglecting finite representability.
%the need for finite representations.
%\nb{strange sentence, RJ: tried to fix it.. Thinking if there is a explanation for not looking at finiteness before?}
%\nb{RG:\ I do not think there is a good explanation, they simply did not think it was a priority}
 %without care about finitely representation.
In this work, we address this issue by devising novel classes of belief operators which ensure that the outcome is finite based.

Moreover,
although the AGM rationality postulates do not depend on any specific logic,  %to be Tarskian (see \cref{sec} for details),  other
classes of Belief Change operations have been devised upon strong assumptions about the underlying logics.
In the last years,  effort have been made in replacing some of these assumptions with weaker conditions in order to extend the AGM paradigm %,   and others,
to more logics such as logics without classical negation \citep{Ribeiro2013},  Horn logics \citep{DelgrandeW13,DelgrandeW10},   temporal logics and logics without compactness \citep{RibeiroNW18,RibeiroNW19AAAI,RibeiroNW19}.
In this work,
we consider that the incoming information is %should be
represented
as a set of \emph{models},  which generalises the AGM paradigm and other classical Belief Change
frameworks where the incoming information is represented as
formulae in the same logic.
%This choice allows us to define
%more general operations than most Belief Change approaches as the input will not be
%constrained by the language.
Moreover,
there are scenarios where it is more
convenient that the incoming information
is represented as {models}.
This is the case of the  Learning from Interpretations setting~\citep{DeRaedt1997},
where a formula needs to be created or changed to either incorporate or block
a set of models. \Citet{Arias2007} use this setting to model the construction of Horn theories from graphs.

In logics displaying theories that are not finite based,  the `closest' finite based epistemic state can be chosen instead.
%A main hurdle involved with this approach is that such `closest' finite  representable 	 bases might not even exist.
We present an intuitive notion of `closest' finite base to handle cases in which not every theory is finite based.
Using this notion, we then define model change operations which correspond, in
spirit, to expansion and contraction in the AGM paradigm.
We also investigate the rationality consequences of the finiteness requirement and show that our operators only gain or lose models (information) when desired or necessary.
Furthermore, we analyse the compatibility of logics with respect to the emerged rationality postulates, that is, we obtain necessary and sufficient conditions for a logic to admit rational contraction  and expansion operators by models.
%We show that,  analogous to the incompatibility results of the AGM paradigm with some non-classical logics\footnote{aa},  the finite requirement
%We show that there are logics in which it is not possible to define rational operations of model contraction and model expansion that preserves \textit{finiteness}.  As an alternative,  we propose to ....
%either contraction or expansion do not admit that the rational
%in general not every rational
%We focus on expansion and contraction,   since they
%We identify the requirements that guarantee theirapplicability and characterise the new operations via postulates,contrasting them with the postulates for the formula-based approach.

In \Cref{sec:prelim}, we briefly review  basic concepts.
%, in particular, %small here, not a  content change
%regarding Belief Change.
In \cref{sec:modelChange}, we detail the new Belief Change paradigm that we propose. % of Belief Change,  where input information are represented as sets of finite models.
We discuss in \Cref{sec:uniq}   how properties of a logic (seen as a satisfaction system) affect the behaviour of model change operations.
%we briefly review some concepts from Belief Change, % which our approach builds upon and we detail the paradigm that we propose here.
In \Cref{sec:usecases}, we analyse different logics and the ability to define model change operations following our paradigm.
%respective representation theorems.
In \cref{sec:relatedWorks}, we highlight
related works
%studies which share similarities with our proposal and we
and conclude in \cref{sec:conc}.
Full proofs of the results can be found in the appendix.

\section{Notation and Basic Notions\label{sec:prelim}}%

%\label{sec:notation}}

%Recall that

%A logic is usually presented as a pair $(\llang, \models)$, where $\llang$ is a language and $\models$ is an entailment relation used to expressed which conclusions are derived from a given set of formulae.  %Here instead,  we look at a logic simply from its

%For us,  a logic is a pair $(\llang,  Cn)$, where $\llang$ is a language,
%and $Cn$ is a logical consequence operation that contains all the entailments from a given set of sentences on $\llang$.
%and $\models$ is an entailment relation that maps sets of formulae in $\llang$ to formulae in $\llang$.
%Following \citep{} and \citep{},  we do not make any assumptions about the language $\llang$  or the properties that $\models$ holds,
%rather we look at a logic as %simply from its %assume that the consequence operation of a logic
%entailment relation
%is defined according to
%a satisfaction system:
The power set of a set $A$ is denoted by $\powerset(A)$,  while the set of all finite subsets of $A$ is denote by $\finitepwset(A)$.
We will write \(\nepwset(A)\) to refer to the non-empty subsets of \(A\).
Following \citet{Aiguier2018} and \citet{DelgrandePW18},  we look at a logic as a satisfaction system.
%
%\begin{definition}\cite{Aiguier2018}%
%\label{def:logsys}
A satisfaction system is a triple \(\logsys=(\llang, \mUni, \models)\),  where
\(\llang\) is a language, \(\mUni\) is the set of models, also
called interpretations, used to give meaning to the sentences in
$\llang$,  and \(\models\) is a satisfaction relation which indicates that a model \(M\)
satisfies a base \(\baseb\) (in symbols, \(M \models
\baseb\)).
%    Let \(\llang\) be a language, \(\mUni\) set of models and \(\models
%    : \mUni \to 2^\llang\) a satisfaction relation.  We call the triple
%     a satisfaction system.
%\end{definition}
%
Looking at a logic simply as a satisfaction system allows us to explore its
properties without making assumptions about the language or putting constraints
upon the logic's entailment relation.  %should satisfy.
Our concern is to turn a belief base into a new one that either is satisfied by a given set of models,  or is not satisfied by such models.
Towards this end,  we do not need to constrain how models are used to define
the satisfaction relation,  but rather identify exactly which models satisfy
a belief base $\baseb$ in a satisfaction system \(\logsys = (\llang, \mUni, \models)\) which is given by:
%This degree of freedom is what permits us to extend define
%\nb{As the purpose of this work ...... connection from models,,, that is,  given a knwoedbe base B find a knowledge base B' such that M is not in the models of B' (eviction) or ... reception...... we do not need to look at the connection form models to formulae, but only from formulae to models,.}
%Given a satisfaction system,  the logical consequence operation of a logic $(\llang, Cn)$ is defined as $Cn(A) = \{ \varphi \in \llang \mid   Mod(A)  \}$
%
%By abuse of notation, when there is no imminent confusion, we will use the symbol $\models$ for both entailment relation and satisfaction relation.
%
%that logical consequence opera
%A model \(M\) is simply a structure used to give semantics to an underlying logic language.
%We denote by \(\mathfrak{M}\) the set of all possible models.
%Moreover, we assume a semantic system that, for each set of formulae
%\textcolor{blue}{% Shortened
%Given a satisfaction system $\logsys = (\llang, \mUni, \models)$, we define
\[\modelsofx{\baseb}{\logsys}
\coloneqq \{M \in \mathfrak{M} \mid %\forall \varphi \in \baseb,
M \models
\baseb\}.\]
%as the set of all models satisfying a set of formulae $\baseb \subseteq \llang$.
%\(\baseb\) of the language \(\llang\) gives a set of models
%Additionally,  if for all \(\varphi \in \llang\) it holds that \(M \models
%\varphi\) iff \(M' \models \varphi\), then we write \(M \equiv^\llang M'\). We
%also define \(\eqclass{M}{\llang} \coloneqq \{M' \in \mathfrak{M} \mid M' \equiv^\llang
%M \}\). \todo{RG:\ many of these ``model equivalence'' definitions are not being
%used any more}
%}
%...  will bE
%constructed in the context of a satisfaction system as in \cref{def:logsys}.
%\textcolor{red}{Maybe say that we will look at a logic simply form its satisfaction system ...... as in peppas and logsys papers, and we do this because...????}
%
%\textcolor{blue}{%
%We will look at a logics through \emph{satisfaction systems}, as in the formalisations
%of Belief Change devised by~\citet{Aiguier2018} and~\citet{Peppas????}.
%Satisfaction systems allow us concisely refer to particular languages,  sets of
%models and satisfaction relations, as these three elements need to be considered
%together in the operations and postulates that we propose.
%}
We will write simply \(\modelsof{\baseb}\) when the satisfaction system is clear from the context.
A set of models $\mSet$ within $\logsys$ is finitely representable iff there is
$\baseb \in \finitepwset(\llang)$ such that $\modelsof{\baseb} = \mSet$.  Also,   %this end,
we say that a set of formulae $\baseb \subseteq \llang$ is finitely
representable iff there is a $\baseb' \in \finitepwset(\llang)$ with $\modelsof{\baseb} = \modelsof{\baseb'}$.  %Given a satisfaction system  \(\logsys = (\llang, \mUni, \models)\),
The collection of all \emph{finitely representable sets of models in $\logsys$}
is given by:
\[
        \FRsets(\logsys) \coloneqq \{\mSet \subseteq \mUni \mid \exists \baseb \in
        \finitepwset(\llang) : \modelsof{\baseb} = \mSet\}.
    \]

\section{Model Oriented Change on Finite Bases
%Changing Finite Bases by Models
}\label{sec:modelChange}
In this work,  unlike the standard representation methods in Belief Change,   we consider that:  %(i)
incoming information is represented as a (possibly infinite) \textit{set of models}; %(ii)
while an agent's epistemic states are represented as \textit{finite (belief) bases}.
%The notion of \textit{belief bases} we employ in this work aligns with \citet{Nebel90, dix94} and \citet{Dalal88},  where a belief base is used simply as a form of finitely representing an agent epistemic state.
Differently, from most approaches in Belief Base Change, we are not concerned %small changes here
with
the syntactical structure but, instead, with finiteness.
%Hence, in this work, a \textit{belief base} corresponds to a finite set
%of formulas, %\nb{RG:\ changed}  ok thanks!
%and we will use the terms \textit{belief base} and \textit{knowledge base} interchangeably.
%\nb{O: replace everywhere kb by finite belief base}
This notion of belief bases aligns with \citet{Nebel90, dix94} and
\citet{Dalal88},  where a belief base is used simply as a form of finitely representing an agent's epistemic state.
%This notion is opposed to the belief base change,  where the sintatic information within a knowledge base is advocated to carry .....  In
In our setting,  we call each form of rational change in
beliefs %\nb{RG:\ tentative clarification} ok thanks!
a \emph{model change operation}. Formally, a model change
operation is a function $f: \finitepwset(\llang) \times \powerset(\mUni) \to \finitepwset(\llang)$.
%In the light of a finite model, it turns  a belief base into a new one.  %We will call such operations
We propose two kinds of model change operations: \textit{\mexnm{}}
(\(\Rcp(\baseb,\mSet)\)) when we want to accept the input models; and
\textit{\mconnm{}} (\(\Evc(\baseb,\mSet)\)) when we want to reject them instead.
%}
%\begin{description}[leftmargin=*]
    %\item[\textbf{\textit{\mexnm{}}:}] $\Rcp(\baseb,\mSet)$,   add all models
        %in $\mSet$ to $\baseb$
    %\item[\textbf{\textit{\mconnm{}}:}] $\Evc(\baseb,\mSet)$,  subtract $\mSet$ from $\baseb$
%\end{description}
%
%
%In this work,  unlike ...,  we consider that: (i) incoming pieces of information .....; while (ii) both .....
%
%It is important to stress that our interpretation and motivation .... is aligned with .... where a knowledge base simply works as a finite representation of an epistemic state, oppose to .... where  the syntax of .. is important.
%
%
%In this work,  we propose a more general system .... instead of... we consider that an incoming piece of information is represented as a single finite model.   We distinguish only two kinds of model change operations ....
%
%...
%...
%...

\Mexnm{} turns the current belief state into a new one that is satisfied by the
input models; while in \mconnm{} the new epistemic state is not satisfied by any
of the input models.  In comparison to the Belief Change operations on formulae as
input,  \mexnm{} resembles formula contraction,  as incorporating a new model
implies in removing some formulae from the original belief set.  Analogously,
\mconnm{} resembles formula expansion, as removal of a model implies in
acquisition of information.
In propositional logics,  \mconnm{} and \mexnm{} can be easily defined, as
any set of
models (over a finite signature) is finitely representable. However, in many logics,
there are sets of models that are not finitely representable, even if you
assume that the signature is finite.
We circumvent this issue by adding or removing
models from the current finite base towards the `closest' finite
base satisfied (resp.\ rejected) by the input models.
%On one hand,  a suitable notion of `closeness' must be provided in order to guarantee minimal change.
%On the other hand,  such a `closest' solution might not even exist depending on the adopted %the underlying %logic,  or
%satisfaction system.  %being used.
%We present suitable minimal change criteria to define both \mexnm{} and
%revision based on an adequate notion of `closeness'.
We show that even with an intuitive notion of `closeness', there
are cases where the `closest' solution does not exist. We also identify
when a solution is uniquely determined.
We introduce each operation separately in the two following
subsections. %In \Cref{sec:uniq}, we identify in which cases \mconnm{} and \mexnm{}

\subsection{\Mconnm{}}%
\label{sec:evc}

%% ===== Move this to model Expansion ========
%The purpose of \mexnm{} is to augment the current knowledge base $\baseb$ with a given interpretation $\modelm$.
% If $\modelsof{\baseb} \cup \modelm$ is not finitely representable,  then we could simply add more models until we obtain finite representability.
%% In order to guarantee finite representation,  more model can be added.
%The question is hand is how many, and which,  extra models  are one willing to add to maintain finite representation? An intuitive idea is to look at the smaller (with respect to set inclusion) finite representable superset of $\baseb$ that contains $\modelm$.  The set of all closest solutions are given by $\FRsubs$:
%%===========

The purpose of \mconnm{} is to change the current finite base
$\baseb$ as to forbid any interpretation in the input set $\mSet$.
 If $\modelsof{\baseb} \setminus \mSet$ is not finitely representable, then we
 could simply remove more models until we obtain finite representability.
% In order to guarantee finite representation,  more model can be added.
The question at hand is how many and which models to remove to obtain a finite
representation?  An intuitive idea is to look at a \(\subseteq\)-maximal %largest (with
finitely representable subset of $\modelsof{\baseb} \setminus \mSet$.
Such a set is the closest we can get to the ideal result in order to
keep finite representability when subtracting $\mSet$. The class of \mconnm{}
functions we define in this section is based on this idea.  Before we present
them,  let us first introduce some
%In order to properly define a \mconnm{} function founded on this idea,  let us
%first define some
auxiliary tools.
%We will define a \mconnm{} function,  called rim contraction function, based on
%this idea.
%%=============
%It is expected that such a closest solution is not unique,  and therefore a choice must be made in order  to pick the most preferable solutions.
%We will define a \mconnm{} function,  called rim contraction function, based on this idea.
%We start by identifying the set of
%In order to formalise this idea,  let us first define the set of
%identify the set of  all smallest supersets of $\baseb$:
%=============
%we only need two main elements: identify the set of
%all smallest supersets of $\baseb$:  % (presented in Definition~\ref{def:FRsubs},  and a selection function that picks one of such smallest supersets that does not contain $\modelm$  best solution among those that
%let us first define a function $\FRsubs$ the identifyall the largest subsets The set of all such closest solutions are given by $\FRsubs$:

%It is expected that more then one clv and The set of all closest solutions are given by $\FRsubs$:

%One way to formalise the notion of closeness mentioned before when performing\mconnm{} is to look at the \emph{\(\subseteq\)-maximal finitely representable subsets of a set of models} as in \cref{def:FRsubs}

\begin{definition}%
\label{def:FRsubs}
    Let \(\logsys = (\llang, \mUni, \models)\) be a satisfaction system. Also,
    let \(\mSet \subseteq \mUni\).
    \begin{align*}
        &\FRsubs(\mSet, \logsys) \coloneqq \{ \mSet' \in \FRsets(\logsys) \mid
            \mSet' \subseteq \mSet \\
        &\qquad \text{and} \not\exists \mSet^{''} \in \FRsets(\logsys) \text{ with
    } \mSet' \subset \mSet^{''} \subseteq \mSet \}.
    \end{align*}
\end{definition}

Given a satisfaction system \(\logsys = (\llang, \mUni, \models)\) and a set of models $\mSet \subseteq \mUni$, the set $\FRsubs(\mSet, \logsys)$ contains exactly all the largest (w.r.t.\ set inclusion) finitely representable subsets of $\mSet$.
If we want to contract a set $\mSet$ from a finite base $\baseb$,  then
we can simply build a finite base for one of the sets in $\FRsubs(\modelsof{\baseb}
\setminus \mSet)$.
It turns out that one cannot naively apply this strategy because,  depending on the underlying satisfaction system,  %not every set of models $X$ might present a maximal subset that is finitely representable,  that is,  $\FRsubs(X, \logsys) = \emptyset$;
there might exist a finite base $\baseb$ and set of models $\mSet$ such
that:
\begin{enumerate}[label= (\arabic*), leftmargin=*]
    \item\label{prob1} %not every set of models $X$ might present a maximal subset that is finitely representable,  that is,  $\FRsubs(X, \logsys) = \emptyset$; and
%	it might exist a set of models $X$ such that $\FRsubs(X, \logsys) = \emptyset$; and
$\FRsubs(\modelsof{\baseb} \setminus \mSet, \logsys) = \emptyset$; or
    \item\label{prob2} $|\FRsubs(\modelsof{\baseb} \setminus \mSet,  \logsys)| \geq 2$.  %might present more than one  set.
\end{enumerate}

%The case (1) exhibits the situation in which contraction cannot be performed, because these situations
If a satisfaction system $\logsys$ displays problem~\ref{prob1} then
we cannot subtract $\mSet$.
{ Thus, we say that a satisfaction system \(\logsys = (\llang, \mUni, \models)\) is \emph{\mconnm-compatible} iff \(\FRsubs(\modelsof{\baseb} \setminus \mSet, \logsys) \neq \emptyset\) for all \(\baseb \in \finitepwset(\llang)\) and \(\mSet \subseteq \mUni\).}
There are two possible causes for problem~\ref{prob1}. First, when a set of models \(\mSet\) has no finitely
representable subset, that is, \(\emptyset \not\in \FRsets(\logsys)\). Second,
when there is no \(\subseteq\)-maximal among infinitely many subsets of
\(\mSet\): for any such subset, there is another subset of \(\mSet\) in
\(\FRsets(\logsys)\) that contains it.
\Cref{fig:4d} illustrates the satisfaction system for propositional Horn logic (\(\logsyshorn\)), a case in which \(\FRsubs\) is always non-empty.
Note that bases in Horn logic can represent only sets of models that are closed under conjunction, which explains why $\{\hmodb, \hmodd\}$ is selected but $\{\hmodb, \hmodc\}$ is not.
%\begin{restatable}{proposition}{ELprob}
    %Let $\logsys$ be the usual satisfaction system for \(\EL\), without the
    %bottom concept constructor (\(\bot\)),  then
%%    If \(\logsys\) is the satisfaction system for \(\EL\) with the usual
    %%descriptive semantics then %\(\FRsubs(\emptyset, \logsys) = \emptyset\). \end{restatable}
%\textcolor{black}{%

%\begin{observation}
%For every satisfaction system $\logsys$,
%the empty set is not finitely representable iff there exists a set of model  that is not finitely representable.
%%$\emptyset \not \in \FRsets(\logsys)$
%%the empty set of model is not finitely representable iff there is some $\baseb \in \finitepwset(\llang)$ such that $\FRsubs(\babseb \setminus \{ \})$
%\end{observation}

%\Cref{incFRsubs} states that \mconnm{} can be applied only on %We can then safely restrict to
%satisfaction systems that ensure that \(\FRsubs\) is defined over
%\(\FRsets(\Lambda)\).
%We say that such satisfaction systems are \emph{\mconnm-compatible}.

\begin{figure}[tb]
    \centering
    \begin{tikzpicture}[scale=1.6]
        \tikzstyle{vertex}=[fill=white, very thin, inner sep=1pt, outer sep=0pt]
        \tikzstyle{selected vertex} = [vertex, draw]
        \tikzstyle{selected edge} = [draw,line width=3pt,->,gray, opacity=0.7]
        \tikzstyle{edge} = [draw,-,black]
        \tikzstyle{pedge} = [draw,->,black, shorten >=2pt]

        \pgfmathsetmacro\sidelength{1.2}
        \pgfmathsetmacro\backx{-0.5}
        \pgfmathsetmacro\backy{0.866}
        \pgfmathsetmacro\transferx{2*\sidelength}
        \pgfmathsetmacro\transfery{0}

        \pgfmathsetmacro\vax{0}
        \pgfmathsetmacro\vay{0}
        \pgfmathsetmacro\vbx{\vax + \sidelength}
        \pgfmathsetmacro\vby{\vay}
        \pgfmathsetmacro\vcx{\vax + \backx}
        \pgfmathsetmacro\vcy{\vay + \backy}
        \pgfmathsetmacro\vdx{\vcx + \sidelength}
        \pgfmathsetmacro\vdy{\vcy}
        \pgfmathsetmacro\vex{\vax}
        \pgfmathsetmacro\vey{\vay + \sidelength}
        \pgfmathsetmacro\vfx{\vbx}
        \pgfmathsetmacro\vfy{\vey}
        \pgfmathsetmacro\vgx{\vcx}
        \pgfmathsetmacro\vgy{\vcy + \sidelength}
        \pgfmathsetmacro\vhx{\vdx}
        \pgfmathsetmacro\vhy{\vgy}
        \pgfmathsetmacro\vix{\vax + \transferx}
        \pgfmathsetmacro\viy{\vay + \transfery}
        \pgfmathsetmacro\vjx{\vix + \sidelength}
        \pgfmathsetmacro\vjy{\viy}
        \pgfmathsetmacro\vkx{\vix + \backx}
        \pgfmathsetmacro\vky{\viy + \backy}
        \pgfmathsetmacro\vlx{\vkx + \sidelength}
        \pgfmathsetmacro\vly{\vky}
        \pgfmathsetmacro\vmx{\vix}
        \pgfmathsetmacro\vmy{\viy + \sidelength}
        \pgfmathsetmacro\vnx{\vjx}
        \pgfmathsetmacro\vny{\vmy}
        \pgfmathsetmacro\vox{\vkx}
        \pgfmathsetmacro\voy{\vky + \sidelength}
        \pgfmathsetmacro\vpx{\vlx}
        \pgfmathsetmacro\vpy{\voy}
        \node[selected vertex] (v0) at (\vax,\vay) {$\emptyset$};
        \node[selected vertex] (v1) at (\vbx,\vby) {$\{\hmodd\}$};
        \node[selected vertex] (v2) at (\vcx,\vcy) {$\{\hmodc\}$};
        \node[selected vertex] (v3) at (\vdx,\vdy) {$\{\hmodc, \hmodd\}$};
        \node[selected vertex] (v4) at (\vex,\vey) {$\{\hmodb\}$};
        \node[selected vertex] (v5) at (\vfx,\vfy) {$\{\hmodb, \hmodd\}$};
        \node[vertex] (v6) at (\vgx,\vgy) {$\{\hmodb, \hmodc\}$};
        \node[vertex] (v7) at (\vhx,\vhy) {$\{\hmodb, \hmodc, \hmodd\}$};
        \node[selected vertex] (v8) at (\vix,\viy) {$\{\hmoda\}$};
        \node[selected vertex] (v9) at (\vjx,\vjy) {$\{\hmoda, \hmodd\}$};
        \node[selected vertex] (v10) at (\vkx,\vky) {$\{\hmoda, \hmodc\}$};
        \node[selected vertex] (v11) at (\vlx,\vly) {$\{\hmoda, \hmodc, \hmodd\}$};
        \node[selected vertex] (v12) at (\vmx,\vmy) {$\{\hmoda, \hmodb\}$};
        \node[selected vertex] (v13) at (\vnx,\vny) {$\{\hmoda, \hmodb, \hmodd\}$};
        \node[selected vertex] (v14) at (\vox,\voy) {$\{\hmoda, \hmodb, \hmodc\}$};
        \node[selected vertex] (v15) at (\vpx,\vpy) {$\mUni$};

        \begin{scope}[on background layer]
            %\draw[step=1cm,gray,very thin] (-2,-2) grid (3,3);
            \draw[pedge] (v0) -- (v1);
            \draw[pedge] (v0) -- (v2);
            \draw[pedge] (v0) -- (v4);
            \draw[pedge] (v0.south) to[bend right, looseness=0.35, pedge] (v8.south);

            \draw[pedge] (v1) -- (v3);
            \draw[pedge] (v1) -- (v5);
            \draw[pedge] (v1) to[bend right, looseness=0.35, pedge] (v9);

            \draw[pedge] (v2) -- (v3);
            \draw[pedge] (v2) -- (v6);
            \draw[pedge] (v2) to[bend right, looseness=0.35, pedge] (v10);

            \draw[pedge] (v4) -- (v5);
            \draw[pedge] (v4) -- (v6);
            \draw[pedge] (v4) to[bend left, looseness=0.35, pedge] (v12);

            \draw[pedge] (v8) -- (v9);
            \draw[pedge] (v8) -- (v10);
            \draw[pedge] (v8) -- (v12);

            \draw[pedge] (v3) -- (v7);
            \draw[pedge] (v3) to[bend right, looseness=0.35, pedge] (v11);

            \draw[pedge] (v5) -- (v7);
            \draw[pedge] (v5) to[bend left, looseness=0.35, pedge] (v13);

            \draw[pedge] (v9) -- (v11);
            \draw[pedge] (v9) -- (v13);

            \draw[pedge] (v6) -- (v7);
            \draw[pedge] (v6) to[bend left, looseness=0.35, pedge] (v14);

            \draw[pedge] (v10) -- (v11);
            \draw[pedge] (v10) -- (v14);

            \draw[pedge] (v12) -- (v13);
            \draw[pedge] (v12) -- (v14);

            \draw[pedge] (v7) to[bend left, looseness=0.35, pedge] (v15);
            \draw[pedge] (v11) -- (v15);
            \draw[pedge] (v13) -- (v15);
            \draw[pedge] (v14) -- (v15);
            %\draw[selected edge] (v3) to[] (v1);
            %\draw[selected edge] (v3) to[] (v2);
            \draw[selected edge] (v6) to[] (v2);
            \draw[selected edge] (v6) to[] (v4);
            \draw[selected edge] (v7) to[] (v5);
            %\draw[selected edge] (v7) to[] (v1);
            \draw[selected edge] (v7) to[] (v3);
            %\draw[selected edge, dashed] (v6) to[bend left, looseness=0.37] (v14);
            %\draw[selected edge, dashed] (v7) to[bend left, looseness=0.37] (v15);
        \end{scope}
    \end{tikzpicture}
    \caption{\label{fig:4d}Lattice generated by the sets of valuations over the propositional atoms \(\{p, q\}\). Boxed vertices correspond to sets of models in \(\FRsets(\logsyshorn)\). Thin arrows indicate set inclusion, the thick full arrows link sets of models to elements in their respective \(\FRsubs\)}% and the dashed ones link them to elements in their respectives \(\FRsups\) }
\end{figure}
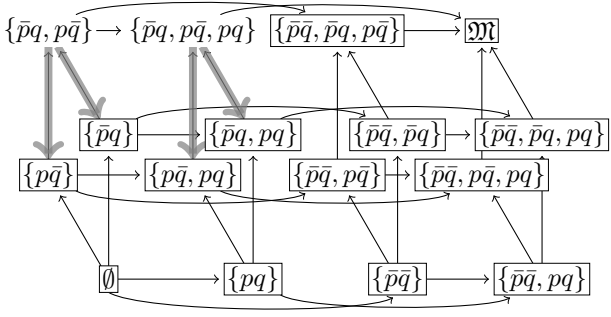

As we will prove in \cref{sec:usecases},
the usual satisfaction systems for propositional logic and propositional Horn
logic are \mconnm-compatible.
However, we will also show that some important satisfaction systems (for instance for
the Description Logic \ALC{}) do not have this property.

%because the universe of models is finite and there
%is a finite inconsistent base \nb{RG: consider postponing this to connect with
%the result later on}. In \cref{sec:usecases}, we consider
%a satisfaction system that does not have this property.

There are two alternatives to deal with problem~\ref{prob1}. %In cases where \(\FRsubs\) is empty,
One is to apply %reject the change,
%We can follow
an approach similar to %In cases where, \(\FRsubs\) is empty, we can follow an approach similar to what
semi-revision \citep{Hansson1997} and reject the change,
keeping the finite base intact.
Another alternative, if \(\emptyset \in \FRsets(\logsys)\), is to impose another
constraint over the plausible candidates.%, in addition to finite
Problem~\ref{prob2} is related to epistemic choices.
Intuitively $\FRsubs(\modelsof{\baseb}\setminus \mSet, \logsys)$
presents the best solutions to remove $\mSet$. If multiple solutions exist,
then the agent needs to choose among them.
Traditionally, it is assumed that such choices are based on an agent's epistemic
preference over its beliefs,  and such choices are realised by a selection function:   %and a selection function that realised such epistemic choices can be
%We turn our attention now to the issue (2).  If
% which considers the case when more than one
%%On the other hand,  there might be more than one set
%in \(\FRsubs\) and to avoid losing too much information, we might want to select
%just one among them. This choice can be formalised together with the selection
%of the actual syntactic form of the base, as we detail in \cref{def:FRsel}.

\begin{definition}%
\label{def:FRsel}
%    Let \(\logsys = (\llang, \mUni, \models)\) be a satisfaction    system.
    %Also,  for any \(\mSet    \subseteq \mUni\), let \(\baseof{\mSet}\) be one of its finite    representations in \(\llang\).
    A \frselnm{} on a satisfaction system \(\logsys\) is a map
    \(\FRsel : \nepwset(\FRsets(\logsys)) \to \FRsets(\logsys)\) such that
    %iff    it can be defined as follows:
%
$\FRsel(X) \in X$.
%    \begin{align*}
%        \FRsel(X) \in X,
%        % \baseof{\mSet} \text{ for some } \mSet \in X &
%        \text{ if } X \neq \emptyset
%    \end{align*}
    %Let \(\logsys = (\llang, \mUni, \models)\) be a FR-satisfaction
    %system.  %Also, for any \(\mSet    \subseteq \mUni\), let \(\baseof{\mSet}\) be one of its finite    representations in \(\llang\).
    %A model selection function on $\logsys$ is a map
    %\(\FR : 2^{\FRsets(\logsys)}  \to \FRsets(\logsys) \)  %is a finite
%%    representation      \(\logsys\) iff %it can be defined as follows:
%%
    %such that
    %\begin{itemize}
        %\item[] (i) $\FR(\mSet) \in \mSet$,  if $\mSet \neq \emptyset$;
        %\item[] (ii) $\FR(\mSet) = \emptyset$, otherwise.
    %\end{itemize}
\end{definition}

Thus, each \frselnm{} determines an \mconnm{} function as follows.

\begin{definition}%
\label{def:FRMcon}
Let \(\logsys\) be an \mconnm-compatible satisfaction system and \(\FRsel\)
a \frselnm{} on $\logsys$. The
\emph{maxichoice \mconnm{} function on \(\logsys\)} defined by \(\FRsel\) is
%    Let \(\logsys = (\llang, \mUni, \models)\) be a FR-satisfaction system and
%    \(\FRsel\) a finite representation selection function.
a map \( \mCon_{\FRsel}: \finitepwset(\llang) \times \powerset(\mUni) \to \finitepwset(\llang)\)
such that: %is edging  \mconnm{} function iff
%    The finitely representable \mconnm{} function
    %over \(\logsys\) based on \(\FRsel\) is defined as:
    \begin{displaymath}
%         \mCon_\FRsel(\baseb, \modelm) = \FRsel(\FRsubs(\modelsof{\baseb}         \setminus \{M\}, \logsys)).
\modelsof{ \mCon_{\FRsel}(\baseb, \mSet)} = \FRsel(\FRsubs(\modelsof{\baseb}
         \setminus \mSet, \logsys)).
    \end{displaymath}
\end{definition}

The operation $\mCon_{\FRsel}$ chooses exactly one set in
$ \FRsubs$.  \Mconnm{} functions that use this strategy are called
maxichoice %functions
because by choosing only one element they keep as much
information  as possible from the original finite base.
Another approach  is to allow the selection function to choose multiple
elements,  and then intersect all  of them to build the \mconnm{} result.
However, \cref{propNoPM} shows that
this %partial-meet
strategy cannot be applied in our setting.
%because
%intersecting  two or more elements in $\FRsubs$  might not produce a finitely representable set:
%them to obtain the result of  order to mount the contractionits maximum con

\begin{restatable}{proposition}{noPM}%
\label{propNoPM}
Given a satisfaction system $\logsys = (\llang, \mUni, \models)$ and set of models $\mSet
\subseteq \mUni$,  if $\mathcal{M} \subseteq \FRsubs(\mSet, \logsys)$ and $|\mathcal{M}|\geq 2$
then \textbf{not necessarily} $(\bigcap_{\mSet \in  \mathcal{M}} \mSet) \in\FRsets(\logsys)$. % is not necessarily finitely representable. % \not \in \FRsets(\logsys)$.
\end{restatable}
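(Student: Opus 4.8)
The plan is to produce a single satisfaction system witnessing the failure. First observe that the phenomenon is intrinsically about \emph{infinite} families: if $\mathcal{M} = \{\mSet_1, \dots, \mSet_n\} \subseteq \FRsubs(\mSet, \logsys)$ is finite, pick $\baseb_k \in \finitepwset(\llang)$ with $\modelsof{\baseb_k} = \mSet_k$; then $\baseb_1 \cup \cdots \cup \baseb_n \in \finitepwset(\llang)$ and $\modelsof{\baseb_1 \cup \cdots \cup \baseb_n} = \bigcap_k \mSet_k$, so the intersection is already finitely representable. Hence any counterexample needs $\FRsubs(\mSet,\logsys)$ infinite, and the construction should exhibit a set $\mSet$ that is \emph{not} finitely representable yet has infinitely many pairwise-incomparable $\subseteq$-maximal finitely representable subsets whose overall intersection escapes $\FRsets(\logsys)$.

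The satisfaction system I would use is a ``negation-free'' one. Let $\mUni \coloneqq \{z_i \mid i \in \mathbb{N}\} \cup \{w\}$ with all these models pairwise distinct, let $\llang \coloneqq \{\beta_i \mid i \in \mathbb{N}\}$, and declare $M \models \{\beta_i \mid i \in I\}$ iff $M \notin \{w\} \cup \{z_i \mid i \in I\}$. A routine computation then gives $\modelsof{\{\beta_i \mid i \in I\}} = \{z_j \mid j \in \mathbb{N}\setminus I\}$ for every non-empty finite $I$, and $\modelsof{\emptyset} = \mUni$; consequently $\FRsets(\logsys) = \{\mUni\} \cup \{\, \{z_j \mid j \notin I\} \mid I \in \finitepwset(\mathbb{N}), I \neq \emptyset \,\}$. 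In particular $\emptyset \notin \FRsets(\logsys)$, since $\mUni \neq \emptyset$ and $\{z_j \mid j \notin I\} = \emptyset$ would force $I = \mathbb{N}$, which is not finite.

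Now take $\mSet \coloneqq \{z_i \mid i \in \mathbb{N}\}$. It is not finitely representable (it is neither $\mUni$, as $w \notin \mSet$, nor of the form $\{z_j \mid j \notin I\}$ for non-empty finite $I$, since those are proper subsets of it). I would then argue that $\FRsubs(\mSet, \logsys) = \{\mSet_i \mid i \in \mathbb{N}\}$ where $\mSet_i \coloneqq \modelsof{\{\beta_i\}} = \{z_j \mid j \neq i\}$: each $\mSet_i$ is finitely representable and contained in $\mSet$; nothing finitely representable lies strictly between $\mSet_i$ and $\mSet$ (such a set would be $\{z_j \mid j \notin I\}$ with $\emptyset \neq I \subset \{i\}$, impossible); and any $\subseteq$-maximal finitely representable subset of $\mSet$ must have the form $\{z_j \mid j \notin I\}$ with $|I| = 1$, because for $|I| \geq 2$ dropping one index from $I$ yields a strictly larger finitely representable subset of $\mSet$. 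Finally, setting $\mathcal{M} \coloneqq \FRsubs(\mSet,\logsys)$, the $\mSet_i$ are pairwise distinct so $\mathcal{M}$ is infinite, and $\bigcap_{\mSet' \in \mathcal{M}} \mSet' = \bigcap_{i \in \mathbb{N}} \{z_j \mid j \neq i\} = \emptyset \notin \FRsets(\logsys)$, which is the desired conclusion.

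The step needing the most care is the exact identification of $\FRsubs(\mSet,\logsys)$ — one must check both that every $\mSet_i$ is genuinely $\subseteq$-maximal and that nothing else is — and this is precisely where the extra model $w$ earns its keep: its only role is to keep $\mSet$ itself out of $\FRsets(\logsys)$ while leaving all of its co-singleton subsets inside. Were $\mSet$ finitely representable, $\FRsubs(\mSet,\logsys)$ would collapse to $\{\mSet\}$ and the whole construction would fail, so this is the delicate design choice the argument hinges on.
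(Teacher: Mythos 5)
Your route is genuinely different from the paper's: the paper exhibits a tiny two-formula system in which $\FRsubs(\mSet, \logsys)$ has exactly two elements whose intersection fails to be finitely representable, whereas you build an infinite antichain of subsets of $\mSet$ (each omitting a single $z_i$) whose global intersection $\emptyset$ lies outside $\FRsets(\logsys)$. The core of the construction is sound, but there is one load-bearing error. As literally defined, your satisfaction relation applied to $I = \emptyset$ gives $\modelsof{\emptyset} = \mUni \setminus \{w\} = \mSet$, not $\mUni$ as you assert. Since $\emptyset \in \finitepwset(\llang)$, this places $\mSet$ itself in $\FRsets(\logsys)$, so $\FRsubs(\mSet, \logsys) = \{\mSet\}$ and no $\mathcal{M}$ with $|\mathcal{M}| \geq 2$ exists: the counterexample collapses at precisely the point you single out as delicate, namely keeping $\mSet$ out of $\FRsets(\logsys)$. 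The fix is immediate---stipulate separately that every model satisfies the empty base, which is legitimate because the framework imposes no axioms on $\models$---but without it the computation of $\FRsets(\logsys)$, and hence of $\FRsubs(\mSet,\logsys)$, is wrong.

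Your opening claim, that finite families $\mathcal{M}$ are always safe and hence ``any counterexample needs $\FRsubs(\mSet,\logsys)$ infinite'', is also false. It presupposes $\modelsof{\baseb_1 \cup \cdots \cup \baseb_n} = \bigcap_k \modelsof{\baseb_k}$, which is exactly the RMBP of \cref{def:RBMP}: a distinguished property that a satisfaction system may or may not enjoy, not a consequence of the definition of satisfaction system. The paper's own witness takes $|\mathcal{M}| = 2$ and exploits precisely the fact that the union of two bases need not represent the intersection of their model sets. This remark is not used in your argument, so it does no harm once the empty-base issue is repaired, but it misdescribes the phenomenon: the failure of closure under intersection can already occur for two-element families.
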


%Therefore,  in order to maintain finite representation,  only one element

\Cref{mFRCon_r} states a characterisation of the finitely representable
\mconnm{} functions based on \frselnm{s}. % via postulates.

\begin{restatable}{theorem}{mFRConRepr}%
\label{mFRCon_r}

%    Let \(\logsys = (\llang, \mUni, \models)\) be
%Given an \mconnm-compatible satisfaction system
    %\(\mCon : \finitepwset(\llang) \times 2^\mUni \to \finitepwset(\llang)\),
    A model change operation  $\mCon$, defined on an \mconnm-compatible
    satisfaction system \(\logsys\),   is a maxichoice \mconnm{} function iff it satisfies the following postulates: %function over \(\logsys\)
    %%based on a \frselnm{} iff it satisfies the following
%    postulates:
    %Let \(\logsys = (\llang, \mUni, \models)\) be a descending satisfaction system.
	%A \mconnm{} function
    %\(\mCon : \finitepwset(\llang) \times \mUni \to \finitepwset(\llang)\)
    %is an edging \mconnm{} function iff
%%     finitely representable \mconnm{} function over \(\logsys\)
    %%based on a finite representation selection function iff
    %$\mCon$ satisfies the following
    %postulates:
    \begin{description}[leftmargin=*]
        \item[(success)] \(\mSet \cap \modelsof{\mCon(\baseb, \mSet)} = \emptyset\).
        \item[(inclusion)] \(\modelsof{\mCon(\baseb, \mSet)} \subseteq
            \modelsof{\baseb}\).
		\item[(vacuity)] If $\mSet \cap \modelsof{\baseb} = \emptyset$,  then \\
		\hspace*{1cm} \hfill$\modelsof{\mCon(\baseb, \mSet)} = \modelsof{\baseb}$.
        \item[(finite retainment)] If
            \(\modelsof{\mCon(\baseb, \mSet)}
            \subset \mSet' \subseteq \modelsof{\baseb} \setminus \mSet\) then
            \(\mSet' \not\in \FRsets(\logsys)\).
        \item[(uniformity)] If $\FRsubs(\modelsof{\baseb} \setminus \mSet,
               \logsys) =\FRsubs(\modelsof{\baseb'} \setminus \mSet',
               \logsys)$ then $\modelsof{\mCon(\baseb, \mSet)} = \modelsof{\mCon(\baseb',  \mSet')}$.
        %\item[(atomic extensionality):] If \(\modelm \equiv^\llang \modelm'\) then
            %\(\modelsof{\mCon(\baseb, \modelm')} = \modelsof{\mCon(\baseb,
            %\modelm')}\).
    \end{description}
\end{restatable}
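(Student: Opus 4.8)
The plan is the usual soundness-plus-completeness argument for a representation theorem; the only real work is seeing how the finiteness-driven postulates \textbf{(finite retainment)} and \textbf{(uniformity)} line up with the set $\FRsubs(\modelsof{\baseb}\setminus\mSet,\logsys)$. \emph{Soundness.} First I would fix a \frselnm{} $\FRsel$ on $\logsys$ and verify that $\mCon_{\FRsel}$ satisfies the five postulates. By \Cref{def:FRsel} and \Cref{def:FRMcon}, $\modelsof{\mCon_{\FRsel}(\baseb,\mSet)} = \FRsel(\FRsubs(\modelsof{\baseb}\setminus\mSet,\logsys))$ is a member of $\FRsubs(\modelsof{\baseb}\setminus\mSet,\logsys)$; hence it is a subset of $\modelsof{\baseb}\setminus\mSet$, which gives \textbf{(inclusion)} and \textbf{(success)} at once, and it is $\subseteq$-maximal among finitely representable subsets of $\modelsof{\baseb}\setminus\mSet$, which is exactly \textbf{(finite retainment)}. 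For \textbf{(vacuity)}, if $\mSet\cap\modelsof{\baseb}=\emptyset$ then $\modelsof{\baseb}\setminus\mSet=\modelsof{\baseb}\in\FRsets(\logsys)$, so $\FRsubs(\modelsof{\baseb}\setminus\mSet,\logsys)=\{\modelsof{\baseb}\}$ and $\FRsel$ has no choice. Finally \textbf{(uniformity)} is immediate, since $\modelsof{\mCon_{\FRsel}(\baseb,\mSet)}$ depends on $(\baseb,\mSet)$ only through $\FRsubs(\modelsof{\baseb}\setminus\mSet,\logsys)$.

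\emph{Completeness.} Now assume $\mCon$ satisfies the five postulates. The key first step is the membership claim
\[ \modelsof{\mCon(\baseb,\mSet)} \in \FRsubs(\modelsof{\baseb}\setminus\mSet,\logsys) \qquad\text{for all } \baseb\in\finitepwset(\llang),\ \mSet\subseteq\mUni. \]
Here $\modelsof{\mCon(\baseb,\mSet)}\in\FRsets(\logsys)$ because $\mCon(\baseb,\mSet)$ is by definition a finite base; \textbf{(inclusion)} together with \textbf{(success)} gives $\modelsof{\mCon(\baseb,\mSet)}\subseteq\modelsof{\baseb}\setminus\mSet$; and \textbf{(finite retainment)} is precisely the maximality clause in \Cref{def:FRsubs}. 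With this in hand I would define $\FRsel\colon\nepwset(\FRsets(\logsys))\to\FRsets(\logsys)$ by $\FRsel(X)\coloneqq\modelsof{\mCon(\baseb,\mSet)}$ whenever $X=\FRsubs(\modelsof{\baseb}\setminus\mSet,\logsys)$ for some pair $(\baseb,\mSet)$, and $\FRsel(X)\coloneqq$ an arbitrarily chosen element of $X$ for all other $X$ (legitimate since $X\neq\emptyset$). Well-definedness of the first clause is exactly \textbf{(uniformity)}, and $\FRsel(X)\in X$ there is the membership claim above, so $\FRsel$ is a genuine \frselnm{}. Finally, because $\logsys$ is \mconnm-compatible, $\FRsubs(\modelsof{\baseb}\setminus\mSet,\logsys)$ is always a nonempty subset of $\FRsets(\logsys)$, hence a legal argument to $\FRsel$, and by construction $\modelsof{\mCon_{\FRsel}(\baseb,\mSet)}=\FRsel(\FRsubs(\modelsof{\baseb}\setminus\mSet,\logsys))=\modelsof{\mCon(\baseb,\mSet)}$, i.e.\ $\mCon$ is the maxichoice \mconnm{} function defined by $\FRsel$.

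\emph{Main obstacle.} The one delicate point is the membership claim: one must read \textbf{(inclusion)}, \textbf{(success)} and \textbf{(finite retainment)} jointly as ``$\modelsof{\mCon(\baseb,\mSet)}$ is one of the $\subseteq$-maximal finitely representable subsets of $\modelsof{\baseb}\setminus\mSet$'', and one must invoke \mconnm-compatibility precisely to ensure $\FRsel$ is never applied to the empty family. Everything else — in particular that the arbitrary choices on arguments not of the form $\FRsubs(\modelsof{\baseb}\setminus\mSet,\logsys)$ do no harm, and that $\mCon_{\FRsel}$ is only pinned down up to $\modelsof{\cdot}$-equivalence, which is all the characterisation asserts — is routine bookkeeping. (In passing, \textbf{(vacuity)} is actually redundant: it follows from \textbf{(inclusion)}, \textbf{(success)} and \textbf{(finite retainment)} via $\FRsubs(\modelsof{\baseb},\logsys)=\{\modelsof{\baseb}\}$, so keeping it in the list does not change the equivalence.)
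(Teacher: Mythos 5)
Your proposal is correct and follows essentially the same route as the paper: the soundness direction checks the five postulates directly from the definition of $\mCon_{\FRsel}$, and the completeness direction first establishes the membership claim $\modelsof{\mCon(\baseb,\mSet)}\in\FRsubs(\modelsof{\baseb}\setminus\mSet,\logsys)$ from \textbf{(success)}, \textbf{(inclusion)} and \textbf{(finite retainment)} (the paper isolates this as an auxiliary proposition), then uses \textbf{(uniformity)} to show the induced choice of representative is well-defined (the paper packages this as a lemma via the maps $\xi^-$ and $\mathcal{C}^-$) before completing the arbitrary choices to a full \frselnm{}. The only difference is presentational: you inline the two auxiliary results rather than naming them.
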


The postulate of \emph{success} ensures that no input model will satisfy
the resulting base, while \emph{inclusion} states that no models will
be introduced. \emph{Vacuity} guarantees that models are removed only when the input set has some models in common with the base.
\emph{Finite retainment} expresses the notion of minimality: we
only lose models (other than the input) if there is no other way of ensuring
success, inclusion and vacuity while keeping the base finite. \emph{Uniformity}
states that the result depends only on \(\FRsubs\).
\textit{Vacuity} is redundant in the presence of \textit{inclusion} and \textit{finite retainment}.
\begin{restatable}{proposition}{vacuity}
If a model change operation $\mCon$ satisfies \textit{inclusion} and \textit{finite retainment},  then it satisfies \textit{vacuity}.
\end{restatable}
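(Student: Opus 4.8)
The plan is to prove \textit{vacuity} by contradiction, exploiting the fact that when \(\mSet\) is disjoint from \(\modelsof{\baseb}\) the set \(\modelsof{\baseb}\) itself becomes an admissible choice of \(\mSet'\) in \textit{finite retainment}, while being finitely representable simply because \(\baseb\) is a finite base.

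First I would fix \(\baseb \in \finitepwset(\llang)\) and \(\mSet \subseteq \mUni\) with \(\mSet \cap \modelsof{\baseb} = \emptyset\), and record the trivial consequence \(\modelsof{\baseb} \setminus \mSet = \modelsof{\baseb}\). By \textit{inclusion} we already have \(\modelsof{\mCon(\baseb, \mSet)} \subseteq \modelsof{\baseb}\), so it remains only to exclude the possibility that this inclusion is strict. Assume, for contradiction, \(\modelsof{\mCon(\baseb, \mSet)} \subsetneq \modelsof{\baseb}\). Then \(\mSet' \coloneqq \modelsof{\baseb}\) witnesses \(\modelsof{\mCon(\baseb, \mSet)} \subset \mSet' \subseteq \modelsof{\baseb} \setminus \mSet\), so \textit{finite retainment} yields \(\modelsof{\baseb} = \mSet' \notin \FRsets(\logsys)\). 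But \(\baseb \in \finitepwset(\llang)\), so \(\modelsof{\baseb} \in \FRsets(\logsys)\) by the very definition of \(\FRsets(\logsys)\) — a contradiction. Hence \(\modelsof{\mCon(\baseb, \mSet)} = \modelsof{\baseb}\), which is \textit{vacuity}.

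There is no real obstacle here; the only points requiring care are bookkeeping ones: setting up the chain of inclusions demanded by \textit{finite retainment} with the correct strict versus non-strict relations (strict on the left, \(\subseteq\) on the right, using \(\modelsof{\baseb}\setminus\mSet = \modelsof{\baseb}\)), and invoking explicitly that \(\baseb\) being a finite base is exactly what places \(\modelsof{\baseb}\) in \(\FRsets(\logsys)\). Note that \textit{success} and \textit{uniformity} are not needed for this implication.
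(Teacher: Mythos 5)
Your proof is correct and follows essentially the same route as the paper's: use \textit{inclusion} to get \(\modelsof{\mCon(\baseb,\mSet)} \subseteq \modelsof{\baseb}\), observe that \(\modelsof{\baseb}\setminus\mSet = \modelsof{\baseb}\) is finitely representable because \(\baseb\) is a finite base, and invoke \textit{finite retainment} with \(\mSet' = \modelsof{\baseb}\) to rule out strict inclusion. Your write-up is merely a more explicit rendering of the paper's terser argument.
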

An analogous to the classical recovery postulate would be desirable:
if a set of models $\mSet$ is evicted from a finite base $\baseb$,  then putting $\mSet$ back should restore all the models of $\baseb$.
This `model-recovery' postulate, however,  cannot be satisfied: %because of the finite requirement put upon the operation:
in order to evict $\mSet$,  some extra models might be purged in order to reach a finite base,  and they cannot be restored by simply putting $\mSet$ back.
%An analogous of the recovery postulate on formulae could be translated in terms of models: removing a set of models, and putting it back should result in the original
%In the AGM paradigm,
%In the AGM paradigm,  the classical postulates \textit{conjunction} and \textit{intersection} \citep{flux} are called supplementary,  because they regard the logical structure of formulae.  %Such postulates,  however,  lose their role in generalisations
%Although
%
%The supplementary postulates \textit{conjunction} and \textit{intersection} are widely  known, in the classical case, to characterise contraction operations based on total pre-orders.
Although the roles of the postulates \textit{conjunction} and \textit{intersection} are well-known within classical logics, understanding their behaviours within non-classical settings are still a challenge \citep{RibeiroNW18, RibeiroNW19AAAI}.
%\citet{RibeiroNW18,RibeiroNW19AAAI} have shown the precise behaviour of this postulates, as long as the underlying logics is closed under classical negation and disjunction. In the general case, however, their behaviour are still
%Since our setting is more general,
While \textit{intersection} follows directly from finite-retainment, we cannot characterise \textit{conjunction} since our framework goes beyond the classical case.  %Interestingly,

\subsection{\Mexnm{}}

%\textcolor{black}{%shortening
\Mexnm{} alters a finite base \(\baseb\) to incorporate all models in
$\mSet$.
%}
In some satisfaction systems, $\modelsof{\baseb} \cup \mSet$ is not finitely representable.
Analogous to the strategy employed in the previous \lcnamecref{sec:evc},
\mexnm{} can be defined using the smallest supersets of
$\modelsof{\baseb} \cup \mSet$.

%We can use the analogous reasoning to construct a \mexnm{} function. Inthis case, we can consider the \(\subseteq\)-minimal finitely representable
%subsets of the ideal set of models (formed by the union of the models of the
%base with the input model).

\begin{definition}%
\label{def:FRsups}
    Let \(\logsys = (\llang, \mUni, \models)\) be a satisfaction system.
    Also,   let \(\mSet \subseteq \mUni\).
    \begin{align*}
        &\FRsups(\mSet, \logsys) \coloneqq \{ \mSet' \in \FRsets(\logsys) \mid
            \mSet \subseteq \mSet' \\
        &\qquad \text{and} \not\exists \mSet^{''} \in \FRsets(\logsys) \text{ with
    } \mSet \subseteq \mSet^{''} \subset \mSet'\}.
    \end{align*}
\end{definition}

There are also satisfaction systems \(\logsys = (\llang, \mUni,
\models)\) with \(\FRsups(\mSet, \logsys) = \emptyset\) for some \(\mSet \subseteq \mUni\).
The causes are dual to the \mconnm{}
case: either \(\mUni \not\in \FRsets(\logsys)\) or there is a \(\mSet \subseteq
\mUni\) without a \(\subseteq\)-minimal superset in \(\FRsets(\logsys)\). %(given a candidate \(\mSet'\), there is
%a \(\mSet^{''} \in \FRsets(\logsys)\) with \(\mSet' \subset \mSet^{''}\)).
{We say that a satisfaction system \(\logsys = (\llang, \mUni, \models)\) is \emph{\mexnm-compatible} iff \(\FRsups(\modelsof{\baseb} \cup \mSet, \logsys) \neq \emptyset\) for all \(\baseb \in \finitepwset(\llang)\) and \(\mSet \subseteq \mUni\).}
\Cref{fig:4d} also shows a situation in which the satisfaction system is \mexnm-compatible.
In such systems, we can design \mexnm{} as follows.
%}

\begin{definition}%
\label{def:FRMexp}
    Let \(\logsys = (\llang, \mUni, \models)\) be a \mexnm-compatible satisfaction system and
    \(\FRsel\) a \frselnm{} on \(\logsys\).
    The \emph{maxichoice model
    \mexnm{} function on \(\logsys\)} defined by \(\FRsel\) is
    a map \( \mExp_{\FRsel}: \finitepwset(\llang) \times \powerset(\mUni) \to \finitepwset(\llang)\)
    such that:
    \begin{displaymath}
         \modelsof{\mExp_\FRsel(\baseb, \mSet)} = \FRsel(\FRsups(\modelsof{\baseb} \cup
         \mSet, \logsys)).
    \end{displaymath}
\end{definition}

An analogous of \cref{propNoPM} also holds for \mexnm{}, as stated in \cref{propNoPMR}.

\begin{restatable}{proposition}{noPMR}%
\label{propNoPMR}
Given a satisfaction system $\logsys = (\llang, \mUni, \models)$ and set of models $\mSet
\subseteq \mUni$,  if $\mathcal{M} \subseteq \FRsups(\mSet, \logsys)$ and $|\mathcal{M}|\geq 2$
then \textbf{not necessarily} $(\bigcup_{\mSet \in  \mathcal{M}} \mSet) \in\FRsets(\logsys)$. % is not necessarily finitely representable. % \not \in \FRsets(\logsys)$.
\end{restatable}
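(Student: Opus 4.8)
The plan is to exhibit a concrete counterexample, mirroring the structure of \Cref{propNoPM} but dualised. Since \cref{propNoPM} already establishes that an intersection of two maximal finitely representable subsets need not be finitely representable, the natural move is to look for a satisfaction system in which the union of two minimal finitely representable supersets escapes \(\FRsets(\logsys)\). I would first recall (or reuse) whatever satisfaction system was used to witness \cref{propNoPM}; if that system is self-dual in the appropriate sense, the same underlying lattice of model-sets gives the witness here by flipping \(\subseteq\) to \(\supseteq\). Failing that, I would build a fresh one: take \(\mUni\) countably infinite, and design \(\llang\) so that \(\FRsets(\logsys)\) consists of the cofinite sets together with two particular infinite-coinfinite sets \(\mSet_1, \mSet_2\) (and \(\mUni\) itself), arranged so that \(\mSet_1\) and \(\mSet_2\) are both \(\subseteq\)-minimal members of \(\FRsups(\mSet,\logsys)\) for a suitable small \(\mSet \subseteq \mSet_1 \cap \mSet_2\), while \(\mSet_1 \cup \mSet_2\) is neither cofinite nor equal to \(\mUni\), hence \(\mSet_1 \cup \mSet_2 \notin \FRsets(\logsys)\).

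The key steps, in order, are: (i) fix \(\mUni\) and a language \(\llang\) whose sentences cut out exactly a prescribed family \(\FRsets(\logsys) \subseteq \powerset(\mUni)\) — this is legitimate because a satisfaction system places no constraints on \(\llang\) or \(\models\), so I can simply take \(\llang\) to index the desired model-sets directly and let \(M \models \varphi\) iff \(M\) lies in the set indexed by \(\varphi\), closing under finite intersections to respect the behaviour of finite bases; (ii) check that every finite base picks out a member of the declared family, so that \(\FRsets(\logsys)\) really is that family; (iii) choose \(\mSet\) and verify that \(\FRsups(\mSet,\logsys) = \{\mSet_1, \mSet_2\}\) — i.e.\ both are supersets of \(\mSet\), both are finitely representable, neither contains the other, and nothing in \(\FRsets(\logsys)\) sits strictly between \(\mSet\) and either of them; (iv) observe \(\mSet_1 \cup \mSet_2 \notin \FRsets(\logsys)\) by construction. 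Then taking \(\mathcal{M} = \{\mSet_1,\mSet_2\}\) (and, if the proposition is to be read as being about a base \(\baseb\) with \(\modelsof{\baseb}\cup\mSet\) in place of \(\mSet\), noting that \(\mSet\) itself can be realised as \(\modelsof{\baseb}\) for the empty base or an equivalent finite base) gives the claim.

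The main obstacle I anticipate is step (iii): ensuring that \(\mSet_1\) and \(\mSet_2\) are genuinely \(\subseteq\)-minimal supersets, which requires that no cofinite set lies between \(\mSet\) and \(\mSet_i\). If \(\mSet\) is itself infinite-coinfinite and \(\mSet_i\) is obtained from \(\mSet\) by adding infinitely many points while still omitting infinitely many, then any set sandwiched strictly between them is again infinite-coinfinite, hence not cofinite; so as long as the only non-cofinite members of \(\FRsets(\logsys)\) are \(\mSet_1\), \(\mSet_2\), \(\mUni\) (plus whatever finite intersections force, which here is \(\mSet_1 \cap \mSet_2\) — chosen to equal \(\mSet\), so no new sets appear), minimality holds. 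The secondary subtlety is closure under finite intersection: I must make sure the declared family is actually closed under the intersections induced by unions of finite bases, or else \(\FRsets(\logsys)\) would be larger than intended and could contain an unwanted set between \(\mSet\) and \(\mSet_i\); picking \(\mSet_1 \cap \mSet_2 = \mSet\) and the cofinite sets (which are closed under intersection) handles this cleanly. Once these bookkeeping checks pass, the failure of \(\mSet_1 \cup \mSet_2\) to be finitely representable is immediate, and the proposition follows.
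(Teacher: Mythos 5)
Your overall strategy---exhibiting a concrete satisfaction system as a counterexample, dually to \cref{propNoPM}---is the same as the paper's, but the specific construction you sketch cannot succeed, and the obstruction is a theorem of the paper rather than a bookkeeping detail. By ``closing under finite intersections to respect the behaviour of finite bases'' you are building in exactly the reverse monotonic bijection property of \cref{def:RBMP}, i.e.\ \(\modelsof{\baseb_1 \cup \baseb_2} = \modelsof{\baseb_1} \cap \modelsof{\baseb_2}\); and \cref{uniqueFR} shows that in any such system \(|\FRsups(\mSet, \logsys)| \leq 1\) for every \(\mSet\), so your step~(iii), arranging \(\FRsups(\mSet,\logsys) = \{\mSet_1, \mSet_2\}\), is unachievable no matter which family you declare. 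You can see the collapse directly: intersection-closure forces \(\mSet_1 \cap \mSet_2 \in \FRsets(\logsys)\). With your choice \(\mSet = \mSet_1 \cap \mSet_2\), the set \(\mSet\) is then itself finitely representable and \(\FRsups(\mSet,\logsys) = \{\mSet\}\) is a singleton; with \(\mSet \subsetneq \mSet_1 \cap \mSet_2\), the representable set \(\mSet_1 \cap \mSet_2\) sits strictly between \(\mSet\) and each \(\mSet_i\) (strictness on the right holds because \(\mSet_1, \mSet_2\) are incomparable), destroying the minimality of both. Including all cofinite sets compounds the problem, since closure then also forces \(\mSet_i \cap (\mUni \setminus \{x\}) \in \FRsets(\logsys)\) for every \(x \in \mSet_i \setminus \mSet\), each of which again lies strictly between \(\mSet\) and \(\mSet_i\).

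A correct witness must therefore have \(\mSet_1 \cap \mSet_2 \notin \FRsets(\logsys)\), i.e.\ the system must fail the very closure you impose: the language has to be too weak to express the conjunction of the bases representing \(\mSet_1\) and \(\mSet_2\). The paper's proof does this with a minimal propositional-style system \(\llang_p = \{\perp, a, b\}\) over two atoms with no connectives: taking \(\mSet = \{(\vtrue,\vtrue)\}\), the two minimal finitely representable supersets are \(\modelsof{\{a\}}\) and \(\modelsof{\{b\}}\), and their union \(\{(\vtrue,\vtrue), (\vtrue,\vfalse), (\vfalse,\vtrue)\}\) is not representable because disjunction is inexpressible. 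If you want to keep your more abstract ``prescribe the family'' approach, the repair is to use the freedom the framework actually grants---\(\models\) relates a model to a \emph{base}, so \(\modelsof{\baseb}\) may be stipulated base by base without any conjunctive coherence---and to verify that no member of \(\FRsets(\logsys)\) other than \(\mSet_1\) and \(\mSet_2\) contains \(\mSet\) and is contained in one of them; as argued above, this forces you to keep \(\mSet_1 \cap \mSet_2\), and hence the intersection closure, out of \(\FRsets(\logsys)\).
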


In \cref{sec:usecases}, we will show that
the usual satisfaction systems for propositional logic and proposition Horn
logic are also \mexnm-compatible.%(the set of all models is finite and there is a tautological finite base).
We will also introduce a
satisfaction system that is \mexnm-compatible but not \mconnm, thus, showing that
\mexnm-compatibility and
\mconnm-compatibility are not always co-occurrent. A satisfaction system
\(\logsys = (\llang, \mUni, \models)\) can be such that \(\emptyset \in \FRsets(\logsys)\) but \(\mUni
\not\in \FRsets(\logsys)\), and vice-versa.
We identify the %\nb{RG:\ shortened, removed ``precise''}
set of rationality postulates that characterise the \mexnm{} function from \cref{def:FRMexp}.
%The \mexnm{} functions from \cref{def:FRMexp} is characterised %can also be characterised by rationalipostulates as shown in \cref{mFRExp_r}.

\begin{restatable}{theorem}{mFRExpRepr}%
\label{mFRExp_r}
A model change operation \(\mExp\), defined on a \mexnm-compatible satisfaction
system \(\logsys\),
    is a maxichoice \mexnm{} function % over \(\logsys\)
    %based on a finite representation selection function
    iff it satisfies the following
    postulates:
    %Let \(\logsys = (\llang, \mUni, \models)\) be an ascending satisfaction system.
    %\(\mExp : \finitepwset(\llang) \times \mUni \to \finitepwset(\llang)\)
    %is a finitely representable \mexnm{} function over \(\logsys\)
    %based on a finite representation selection function iff it satisfies the following
    %postulates:
    \begin{description}[leftmargin=*]
        \item[(success)] \(\mSet \subseteq \modelsof{\mExp(\baseb, \mSet)}\).
        \item[(persistence)] \(\modelsof{\baseb} \subseteq
            \modelsof{\mExp(\baseb, \mSet)} \).
        \item[(vacuity)]  $\modelsof{\mExp(\baseb,\mSet)}
            = \modelsof{\baseb}$, if $\mSet \subseteq \modelsof\baseb$.
        \item[(finite temperance)] If
            \(\modelsof{\baseb} \cup \mSet \subseteq \mSet' \subset
            \modelsof{\mExp(\baseb, \mSet)}\) then \(\mSet'
            \not\in \FRsets(\logsys)\).
            \item[(uniformity)] If $\FRsups(\modelsof{\baseb} \cup \mSet,  \logsys)
                =\FRsups(\modelsof{\baseb'} \cup \mSet',  \logsys)$ then $\modelsof{\mExp(\baseb, \mSet)} =\modelsof{ \mExp(\baseb',
                \mSet')}$.
        %\item[(atomic extensionality):] If \(\modelm \equiv^\llang \modelm'\) then
            %\(\modelsof{\mExp(\baseb, \modelm')} = \modelsof{\mExp(\baseb,
            %\modelm')}\).
    \end{description}
\end{restatable}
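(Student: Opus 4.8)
The proof of Theorem~\ref{mFRExp_r} is a representation theorem, so the plan is to prove both directions of the ``iff''. This is the exact dual of Theorem~\ref{mFRCon_r}, with $\FRsubs$ replaced by $\FRsups$, $\setminus$ by $\cup$, and the inclusions reversed, so I would first carry out the argument in a way that makes the duality with the \mconnm{} case transparent and then remark that the reasoning mirrors it.

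\medskip

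\emph{Soundness direction.} Assume $\mExp = \mExp_\FRsel$ for some \frselnm{} $\FRsel$, so $\modelsof{\mExp(\baseb,\mSet)} = \FRsel(\FRsups(\modelsof{\baseb}\cup\mSet,\logsys))$. Since $\logsys$ is \mexnm-compatible, $\FRsups(\modelsof{\baseb}\cup\mSet,\logsys)$ is non-empty, so $\FRsel$ is applied to a set in its domain and returns some $\mSet^\ast$ with $\modelsof{\baseb}\cup\mSet \subseteq \mSet^\ast$ and $\mSet^\ast$ $\subseteq$-minimal with this property among $\FRsets(\logsys)$. First I would check \textbf{success} and \textbf{persistence}: both are immediate from $\modelsof{\baseb}\cup\mSet\subseteq\mSet^\ast = \modelsof{\mExp(\baseb,\mSet)}$. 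For \textbf{finite temperance}, suppose $\modelsof{\baseb}\cup\mSet \subseteq \mSet' \subset \mSet^\ast$ with $\mSet'\in\FRsets(\logsys)$; this contradicts the $\subseteq$-minimality of $\mSet^\ast$ in $\FRsups(\modelsof{\baseb}\cup\mSet,\logsys)$. For \textbf{vacuity}, if $\mSet\subseteq\modelsof{\baseb}$ then $\modelsof{\baseb}\cup\mSet = \modelsof{\baseb}$, and since $\baseb\in\finitepwset(\llang)$ we have $\modelsof{\baseb}\in\FRsets(\logsys)$, so $\modelsof{\baseb}$ is itself the unique $\subseteq$-minimal element of $\FRsups(\modelsof{\baseb},\logsys)$ (any finitely representable superset of it contains it, and it is already finitely representable), hence $\FRsubs$-style singleton forces $\FRsel$ to return $\modelsof{\baseb}$. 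For \textbf{uniformity}, if the two $\FRsups$ sets are equal then $\FRsel$ is applied to the same argument, giving equal outputs. Note also that $\mExp(\baseb,\mSet)\in\finitepwset(\llang)$ because its model set lies in $\FRsets(\logsys)$, so one should pick a finite base representing $\FRsel(\cdots)$; this is where the operation is only determined up to logical equivalence, which is exactly why all postulates are phrased via $\modelsof{\cdot}$.

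\medskip

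\emph{Completeness direction.} Conversely, assume $\mExp$ satisfies all five postulates; I must construct a \frselnm{} $\FRsel$ with $\modelsof{\mExp(\baseb,\mSet)} = \FRsel(\FRsups(\modelsof{\baseb}\cup\mSet,\logsys))$ for all inputs. The key first step is to show that for every $\baseb\in\finitepwset(\llang)$ and $\mSet\subseteq\mUni$ we have $\modelsof{\mExp(\baseb,\mSet)} \in \FRsups(\modelsof{\baseb}\cup\mSet,\logsys)$: it is finitely representable since $\mExp(\baseb,\mSet)\in\finitepwset(\llang)$; it contains $\modelsof{\baseb}\cup\mSet$ by \textbf{persistence} and \textbf{success}; and \textbf{finite temperance} rules out any strictly smaller finitely representable set sandwiched above $\modelsof{\baseb}\cup\mSet$, giving $\subseteq$-minimality. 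Then I would define $\FRsel$ on an arbitrary non-empty $X\subseteq\FRsets(\logsys)$: if $X = \FRsups(\modelsof{\baseb}\cup\mSet,\logsys)$ for some input $(\baseb,\mSet)$, set $\FRsel(X) := \modelsof{\mExp(\baseb,\mSet)}$; otherwise pick any element of $X$ arbitrarily. \textbf{Uniformity} is precisely what makes this well-defined — different pairs $(\baseb,\mSet),(\baseb',\mSet')$ yielding the same $\FRsups$ set also yield the same $\modelsof{\mExp(\cdot,\cdot)}$ — and the previous step guarantees $\FRsel(X)\in X$, so $\FRsel$ is a genuine \frselnm{}. By construction $\modelsof{\mExp(\baseb,\mSet)} = \FRsel(\FRsups(\modelsof{\baseb}\cup\mSet,\logsys))$, so $\mExp$ is the maxichoice \mexnm{} function defined by $\FRsel$. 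One subtlety to address: a single set $X$ might arise as $\FRsups(\modelsof{\baseb}\cup\mSet,\logsys)$ from many pairs, but by uniformity they all agree, so no conflict; and whether a given $X$ is of this form need not be decidable — it only matters that the definition is consistent, which it is.

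\medskip

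The main obstacle is the well-definedness of $\FRsel$ in the completeness direction, and more precisely the interplay between \textbf{uniformity} and the step $\modelsof{\mExp(\baseb,\mSet)}\in\FRsups(\modelsof{\baseb}\cup\mSet,\logsys)$: one must be careful that \textbf{finite temperance} as stated (a strict sandwich $\modelsof{\baseb}\cup\mSet \subseteq \mSet' \subset \modelsof{\mExp(\baseb,\mSet)}$) genuinely yields $\subseteq$-minimality in the sense of Definition~\ref{def:FRsups}, including the edge case where $\modelsof{\baseb}\cup\mSet$ is itself already finitely representable (then temperance forces $\modelsof{\mExp(\baseb,\mSet)} = \modelsof{\baseb}\cup\mSet$, consistent with vacuity when additionally $\mSet\subseteq\modelsof{\baseb}$). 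Everything else is routine once the $\FRsubs$/$\FRsups$ duality is set up, so I would state the lemma ``$\modelsof{\mExp(\baseb,\mSet)}\in\FRsups(\modelsof{\baseb}\cup\mSet,\logsys)$'' explicitly, prove it carefully, and then let the rest follow as in Theorem~\ref{mFRCon_r}.
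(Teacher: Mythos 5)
Your proposal is correct and follows essentially the same route as the paper: soundness by verifying each postulate directly from the definition of $\mExp_\FRsel$, and completeness by first establishing the key lemma that $\modelsof{\mExp(\baseb,\mSet)}\in\FRsups(\modelsof{\baseb}\cup\mSet,\logsys)$ (the paper isolates this as an auxiliary proposition) and then defining the selection function on the preimage of each $\FRsups$-value, with \textbf{uniformity} guaranteeing well-definedness (the paper packages this via auxiliary maps $\xi^+$ and $\mathcal{C}^+$, but the content is identical). The only cosmetic difference is that you leave the well-definedness bookkeeping informal where the paper makes it explicit with named functions.
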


The postulates presented in \cref{mFRExp_r} are straightforward
translations of the classical framework of Belief Change expansion, being
finite temperance the only which deviates w.r.t.\ its classical correspondent.
\emph{Success} guarantees that the input models will satisfy
the resulting base, while \emph{persistence} determines that no model will
be lost.  \textit{Vacuity} ensures that models will be added only when the input set brings new models.
\emph{Finite temperance} expresses the notion of minimality: we
only gain models (other than the input) if there is no other way of ensuring
success and persistence while keeping the base finitely representable.
\emph{Uniformity} states that the result depends only on \(\FRsups\).
\textit{Vacuity} is redundant in the presence of \textit{finite temperance} and \textit{persistence}.
\begin{restatable}{proposition}{vacuityExp}
If a model change operation $\mExp$ satisfies \textit{persistence} and \textit{finite temperance}, then it satisfies \textit{vacuity}.
\end{restatable}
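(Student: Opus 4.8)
The plan is to fix a finite base $\baseb \in \finitepwset(\llang)$ and a set of models $\mSet \subseteq \mUni$ with $\mSet \subseteq \modelsof{\baseb}$, and to establish the two inclusions that together yield \emph{vacuity}, namely $\modelsof{\baseb} \subseteq \modelsof{\mExp(\baseb,\mSet)}$ and $\modelsof{\mExp(\baseb,\mSet)} \subseteq \modelsof{\baseb}$. The first inclusion is immediate: it is exactly \emph{persistence}, which holds by hypothesis (and does not even use $\mSet \subseteq \modelsof{\baseb}$).

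For the second inclusion I would argue by contradiction. Suppose $\modelsof{\mExp(\baseb,\mSet)} \not\subseteq \modelsof{\baseb}$; combined with \emph{persistence}, this gives the proper inclusion $\modelsof{\baseb} \subsetneq \modelsof{\mExp(\baseb,\mSet)}$. Now take $\mSet' \coloneqq \modelsof{\baseb}$. Since $\mSet \subseteq \modelsof{\baseb}$, we have $\modelsof{\baseb} \cup \mSet = \modelsof{\baseb} = \mSet'$, so in particular $\modelsof{\baseb} \cup \mSet \subseteq \mSet'$; and by the displayed proper inclusion, $\mSet' = \modelsof{\baseb} \subset \modelsof{\mExp(\baseb,\mSet)}$. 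Hence the antecedent of \emph{finite temperance} is satisfied, and we conclude $\mSet' = \modelsof{\baseb} \notin \FRsets(\logsys)$. But $\baseb \in \finitepwset(\llang)$, so $\modelsof{\baseb} \in \FRsets(\logsys)$ directly from the definition of $\FRsets$ — a contradiction. Therefore $\modelsof{\mExp(\baseb,\mSet)} \subseteq \modelsof{\baseb}$, which together with the first inclusion gives $\modelsof{\mExp(\baseb,\mSet)} = \modelsof{\baseb}$, i.e.\ \emph{vacuity}.

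The argument is routine; the only point requiring a little care is lining up the strict versus non-strict inclusions in the statement of \emph{finite temperance} (the left one is non-strict, the right one strict), which is why we first invoke \emph{persistence} to upgrade the failure of $\modelsof{\mExp(\baseb,\mSet)} \subseteq \modelsof{\baseb}$ to a proper inclusion $\modelsof{\baseb} \subsetneq \modelsof{\mExp(\baseb,\mSet)}$. I expect no deeper obstacle; the dual claim for \emph{contraction} (\emph{inclusion} together with \emph{finite retainment} imply \emph{vacuity}) goes through by the symmetric argument, taking $\mSet' \coloneqq \modelsof{\baseb}$ and using $\mSet \cap \modelsof{\baseb} = \emptyset$ to get $\modelsof{\baseb} = \modelsof{\baseb} \setminus \mSet$.
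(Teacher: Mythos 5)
Your proof is correct and follows essentially the same route as the paper's: use \emph{persistence} for one inclusion, then instantiate \emph{finite temperance} with $\mSet' = \modelsof{\baseb}$ (which equals $\modelsof{\baseb}\cup\mSet$ under the hypothesis $\mSet\subseteq\modelsof{\baseb}$ and lies in $\FRsets(\logsys)$ because $\baseb$ is a finite base) to rule out a strict inclusion. The paper's version is just terser; you have made explicit the choice of $\mSet'$ and the contradiction it yields.
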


We can also translate the postulate \textit{monotony} from classical expansion
to our setting as follows: if $\modelsof{\baseb} \subseteq \modelsof{\baseb'}$ then \(\modelsof{\mExp(\baseb,
\mSet)} \subseteq \modelsof{\mExp(\baseb', \mSet)}\). However, \(\mExp\) does not satisfy this
postulate and enforcing it means imposing monotonicity on the
operation $\mExp$ similar to what happens to the \textit{update} operations of
\citet{Katsuno1991}. We would have to constrain \frselnm{} to only pick certain elements of
\(\FRsups\).
%The main issue with expansion is that incorporating a new piece of information might result in the inconsistent theory.
A third operation of Belief Change on formulae is belief revision whose purpose is to incorporate a new piece of information and guarantee that the new theory is consistent.
In terms of models as input,  we could define the model revision operation
whose purpose would be to remove models but avoiding that the inconsistent state is reached.
	To avoid the inconsistent state,  the agent would need to select a `closest' finitely representable set of models according to its underlying epistemic preference relation.
	%It is not clear, however, which class of epistemic preference relations would yield rational criteria.
	We leave such investigation as future work.

\section{\label{sec:uniq}Uniqueness and Characterisation}

%\subsection{Uniqueness\label{sec:uniq}}

%\textcolor{black}{%
	In some satisfaction systems, the result of any \mconnm{} is uniquely
	determined by the input models and initial base, regardless of the underlying
	\frselnm{}. The same holds for \mexnm{} in some systems. Many well-known satisfaction systems such as the
	traditional ones for propositional logic and propositional Horn logic have the \emph{reverse
		monotonic bijection property (RMBP)}. %\textcolor{black}{

		%}. %added, no need to repeat the abbrev ?
	%}

\begin{definition}%
	\label{def:RBMP}
	%\textcolor{black}{%
		A  satisfaction system \(\logsys = (\llang, \mUni, \models)\)
		has the %\textcolor{black}{
			RMBP %\emph{reverse monotonic bijection property (RMBP)}
			if
			for every \(\baseb_1, \baseb_2 \subseteq \llang\) and every \(\modelm \in
			\mUni\): \(\modelm \in \modelsof{\baseb_1}\) and  \(\modelm \in
			\modelsof{\baseb_2}\) iff \(\modelm \in \modelsof{\baseb_1 \cup
				\baseb_2}\).
			%}
		%whenever
		%\(\baseb \subseteq \baseb' \subseteq \llang\), then \(\modelsof{\baseb'}
		%\subseteq \modelsof{\baseb}\).
	\end{definition}

	\Cref{uniqueFR} states the RMBP is a sufficient condition for this determinism.
	\begin{restatable}{proposition}{UniqueFR}\label{uniqueFR}
		Let \(\logsys = (\llang, \mUni, \models)\) be a satisfaction system with the
		RMBP.\@ Then \(|\FRsups(\mSet, \logsys)| \leq 1\) and \(|\FRsubs(\mSet,
		\logsys)| \leq 1\) for all \(\mSet \subseteq \mUni\).
	\end{restatable}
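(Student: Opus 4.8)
The plan is to show the contrapositive-flavoured statement directly: assuming the RMBP, I will argue that any two members of $\FRsubs(\mSet,\logsys)$ must coincide, and dually for $\FRsups$. First I would unpack what the RMBP buys us at the level of model sets: for any bases $\baseb_1,\baseb_2$ we have $\modelsof{\baseb_1\cup\baseb_2}=\modelsof{\baseb_1}\cap\modelsof{\baseb_2}$. Since every $\mSet'\in\FRsets(\logsys)$ is $\modelsof{\baseb'}$ for some $\baseb'\in\finitepwset(\llang)$, this immediately tells me that $\FRsets(\logsys)$ is closed under finite intersections: given $\mSet_1=\modelsof{\baseb_1}$ and $\mSet_2=\modelsof{\baseb_2}$ in $\FRsets(\logsys)$, the set $\mSet_1\cap\mSet_2=\modelsof{\baseb_1\cup\baseb_2}$ is again finitely representable because $\baseb_1\cup\baseb_2\in\finitepwset(\llang)$. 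This closure property is the key structural consequence of the RMBP that drives the whole argument.

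Next I would handle $\FRsubs$. Suppose $\mSet_1,\mSet_2\in\FRsubs(\mSet,\logsys)$ with $\mSet_1\neq\mSet_2$. Both are finitely representable subsets of $\mSet$, and by $\subseteq$-maximality within $\mSet$ neither can be strictly contained in the other. Now consider $\mSet_1\cup\mSet_2$; I would like this to be finitely representable and a subset of $\mSet$, which would contradict maximality of either $\mSet_1$ or $\mSet_2$. The subset part is immediate. For finite representability I cannot use intersection-closure directly — I need union. Here is where I expect the main obstacle, and the fix is to reformulate using the $\FRsets$-closure I already have: the RMBP says $\modelm\in\modelsof{\baseb_1}$ and $\modelm\in\modelsof{\baseb_2}$ iff $\modelm\in\modelsof{\baseb_1\cup\baseb_2}$, i.e. intersection on the model side corresponds to union of bases. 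So $\FRsets(\logsys)$ is closed under intersection, not union. I would therefore argue with a different maximal-element: rather than trying to take the union of two candidates, I show there can be at most one maximal finitely representable subset because the family of finitely representable subsets of $\mSet$, being closed under pairwise intersection, is a downward-and-intersection-directed family whose members that are $\subseteq$-maximal must be comparable — but two comparable maximal elements are equal. More carefully: if $\mSet_1$ and $\mSet_2$ are both maximal and $\mSet_1\cap\mSet_2\in\FRsets(\logsys)$, that alone does not force them equal. The genuine argument must instead observe that the whole point of the RMBP is that membership in $\modelsof{\cdot}$ is determined pointwise and monotonically, so $\modelsof{\baseb}$ for a finite base behaves like a closure operator; concretely, for a model set $\mSet$ define $\mSet^\star=\bigcap\{\modelsof{\baseb'}\mid \baseb'\in\finitepwset(\llang),\ \mSet\subseteq\modelsof{\baseb'}\}$. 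Under the RMBP this intersection is itself in $\FRsets(\logsys)$ when the index set is finite, and more to the point it is the unique $\subseteq$-minimal finitely representable superset of $\mSet$, which is exactly the assertion $|\FRsups(\mSet,\logsys)|\le 1$. Dually, $\bigcup\{\mSet'\in\FRsets(\logsys)\mid\mSet'\subseteq\mSet\}$ — when this union is itself finitely representable — is the unique maximal finitely representable subset; and the RMBP guarantees finite representability of the relevant combinations so that no strictly increasing chain of distinct maximal elements can exist.

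So the clean route I would actually write is: (i) from the RMBP derive $\modelsof{\baseb_1\cup\baseb_2}=\modelsof{\baseb_1}\cap\modelsof{\baseb_2}$, hence $\FRsets(\logsys)$ is closed under finite intersections; (ii) for $\FRsups$: if $\mSet_1,\mSet_2\in\FRsups(\mSet,\logsys)$ then $\mSet_1\cap\mSet_2\in\FRsets(\logsys)$, it contains $\mSet$ (since $\mSet\subseteq\mSet_1$ and $\mSet\subseteq\mSet_2$), and it is contained in both $\mSet_1$ and $\mSet_2$; minimality of $\mSet_1$ forces $\mSet_1\cap\mSet_2=\mSet_1$, and minimality of $\mSet_2$ forces $\mSet_1\cap\mSet_2=\mSet_2$, so $\mSet_1=\mSet_2$, giving $|\FRsups(\mSet,\logsys)|\le 1$; (iii) for $\FRsubs$: the dual requires $\FRsets(\logsys)$ closed under \emph{unions} of the two candidates, which the RMBP does not directly give, so instead I note that $\modelsof{\cdot}$ applied to finite bases, combined with the RMBP, makes the map $\mSet\mapsto$(smallest finitely representable superset) a well-defined closure operator by part (ii); then a finitely representable subset $\mSet'$ of $\mSet$ is $\subseteq$-maximal iff it equals $\mSet\cap\bigl(\bigcap\{\modelsof{\baseb'}\mid\mSet'\subseteq\modelsof{\baseb'}\}\bigr)$ restricted appropriately, and one checks two maximal such sets have the same closure and hence are equal. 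The step in (iii) reconciling the lack of union-closure with the need for a unique maximal subset is the part I expect to demand the most care; the $\FRsups$ half in (ii) is essentially a one-line consequence of intersection-closure.
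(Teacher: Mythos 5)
Your treatment of $\FRsups$ (your part (ii)) is correct and coincides with the paper's own proof: the RMBP gives $\modelsof{\baseb_1\cup\baseb_2}=\modelsof{\baseb_1}\cap\modelsof{\baseb_2}$, so if $\mSet_1,\mSet_2\in\FRsups(\mSet,\logsys)$ are witnessed by finite bases $\baseb_1,\baseb_2$, then $\modelsof{\baseb_1\cup\baseb_2}$ is a finitely representable superset of $\mSet$ contained in $\mSet_1\cap\mSet_2$, and $\subseteq$-minimality of each $\mSet_i$ forces $\mSet_1=\mSet_2$. The paper writes exactly this argument and then dismisses the other half with the remark that ``the case for $\FRsubs$ is analogous.''

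Your part (iii) is where the genuine gap sits, but your diagnosis of the difficulty is exactly right and is the valuable content of your attempt: the dual argument needs $\FRsets(\logsys)$ to be closed under the \emph{union} of the two maximal candidates, and the RMBP only delivers closure under intersection. What does not work is the proposed repair. The step asserting that two $\subseteq$-maximal finitely representable subsets of $\mSet$ ``have the same closure and hence are equal'' cannot be verified, because the $\FRsubs$ half of the statement fails under the paper's own definitions: propositional Horn logic has the RMBP (a model satisfies $\baseb_1\cup\baseb_2$ iff it satisfies both $\baseb_1$ and $\baseb_2$), yet \cref{fig:4d} exhibits $\FRsubs(\{\hmodb,\hmodc\},\logsyshorn)=\{\{\hmodb\},\{\hmodc\}\}$, and the system $\Lambda_t$ in the proof of \cref{propNoPM} yields another two-element $\FRsubs$ in an RMBP system. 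So no amount of care will complete your step (iii); intersection-closure gives uniqueness of minimal supersets only, and the $\FRsubs$ claim would need an extra hypothesis (e.g.\ closure of $\FRsets(\logsys)$ under unions, which holds in full propositional logic where every set of models is representable, but not in Horn logic). The honest course is to prove the $\FRsups$ half as you did and explicitly flag that the claimed analogy for $\FRsubs$ breaks down, rather than to gesture at a closure-operator argument.
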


	%{\color{black}
		Due to \cref{uniqueFR}, if \(\logsys = (\llang, \mUni, \models)\) has the RBMP
		then every \frselnm{} will yield the same result when applied over
		\(\FRsups(\mSet, \logsys)\) for any \(\mSet \subseteq \mUni\), and the same
		holds for \(\FRsubs(\mSet, \logsys)\). %\nb{RG:\ rephrased}
		%}

%\input{exRational.tex}

We devote the rest of this section to prove
a characterization of eviction- and reception-compatibility
based on the notion of partial orders.
The intuitive idea is that \mconnm-compatibility of a satisfaction system \(\logsys\) depends on the ability of finding at least one subset which can be seen as the `immediate predecessor' when adding a set of models to the partially ordered set (poset) \((\FRsets(\logsys), \subset)\).

\begin{definition}%
	\label{def:immediate}
	Let \((P, \preceq)\), \(x, y, z \in P\) and \(\prec\) the strict version of \(\preceq\).
    We say that \(x\) is an \emph{immediate predecessor of \(y\)} if \(x \prec y\) and there is no \(x^\prime \in P\) with \(x \prec x^\prime \prec y\).
    Analogously, we say that \(z\) is an \emph{immediate successor of \(y\)} if \(y \prec z\) and there is no \(z^\prime \in P\) with \(y \prec z^\prime \prec z\).
\end{definition}

If a satisfaction system \(\logsys = (\llang, \mUni, \models)\) guarantees that for any \(\baseb \in \finitepwset(\baseb)\) and \(\mSet \subseteq \mUni\), \(\modelsof{\baseb} \setminus \mSet\) will have a finitely representable immediate predecessor regarding set inclusion (\(\subseteq\)), then it is \mconnm-compatible.
%If a satisfaction system can guarantee that this immediate predecessor exists and it is finitely representable, for every subset of the universe of models, then it is \mconnm-compatible.
Some satisfaction systems, do not guarantee this because the empty set of models is not representable (there is no inconsistent base).
That would be case for propositional Horn logic if we removed the constant \(\perp\).
On the other hand, some satisfaction system (as we will see in \cref{sec:usecases}) have non-finitely representable sets of models for which there are arbitrarily close approximations.
Hence, none of the infinitely many candidates is an immediate predecessor w.r.t.\ set inclusion.
%have sets of models that are not finitely representable and for which there are infinitely many subsets of it that are finitely representable, but none a ``immediate predecessor''.
%This counter-intuitive situation happens if for every finitely representable subset of the ideal, there will be another proper subset in between, forming a ``continuous chain''.
%Therefore, although there are infinitely finite bases as possible candidates, there is no set of models in that sequence is a immediate predecessor for the desired set of models.
The analogous notions and observations hold for \mexnm-compatibility.
\Cref{ex:rational} illustrates one such satisfaction system.

\begin{example}%
\label{ex:rational}
    Let \(\logsys_{q} = (\llang_q, \mUni_q, \models_q)\) be such that
    \(\llang_q = \{[x, y] \mid x, y \in \mathbb{Q} \text{ and } x \leq y\}\),
    \(\mUni_q = \mathbb{Q}\) and
    \(Q \models_q \baseb\) (with \(Q \subseteq \mathbb{Q}\)) iff for all \(z \in Q\), \(x \leq z \leq y\) for every \([x, y] \in \baseb\).

    Intuitively, every finite base either has no models, or corresponds to a closed interval on the rationals.
    However, the target set of models produced by an eviction or reception can correspond to an open interval.
    For eviction, take the base \(\{[0, 1]\}\) and the set of models \(\{1\}\) and for reception, take the base \(\{\mathopen[0.5,1\mathclose]\}\) and the set of models \(\mathopen(0, 1\mathclose]\).
    In both cases, one can find arbitrarily close approximations, thus there might be no maximal subset for eviction nor a minimal superset for reception.
\end{example}

Given a satisfaction system \(\logsys = (\llang, \mUni, \models)\), it is not only the density of \((\FRsets(\logsys), \subset)\) that determines compatibility.
Even when the poset is dense, if every set of models is finitely representable (that is, \(\FRsets(\Lambda) = 2^\mUni\)) then \(\logsys\) is clearly \mconnm- and \mexnm-compatible.
%Next, we formalise the relationship between the existence of immediate successors and predecessors and compatibility with the operations proposed.
%In the previous sections, we considered set inclusion as the usual way to organise sets of models.
%However, the observations above may also occur if we replace set inclusion other partial order between subsets of the universe of models.
%This means that we can easily adapt the operations and constructions from \cref{sec:modelChange} and replace the set inclusion by any other partial order over the universe models.
%Hence, in the remaining of this \lcnamecref{sec:uniq}, we will consider an arbitrary partial order \(\preceq\) (and its strict version \(\prec\)) defined over \(2^\mUni\).
%Such replacement can be useful depending on the setting in Belief Change and satisfaction system.
Using \cref{def:immediate} we can finally characterise \mconnm- and \mexnm-compatibility with the following
theorem.
%\lcnamecrefs{tauFRsups}.

\begin{restatable}{theorem}{IncFRsubs}%
	\label{incFRsubs}
	A satisfaction system \( \logsys = (\llang, \mUni, \models)\)
	%be .
    %\(\logsys\)
    is
    \begin{itemize}
    	\item  \mconnm-compatible iff for every \(\mSet \subseteq \mUni\) either (i) \(\mSet \in \FRsets(\logsys)\), (ii) \(\mSet\) has an immediate predecessor in
    	\((\FRsets(\logsys) \cup \{\mSet\}, \subset)\), or (iii) there is no \(\mSet' \in \FRsets(\Lambda)\) with \(\mSet \subseteq \mSet'\); and
    	\item \mexnm-compatible iff for every \(\mSet \subseteq \mUni\) either (i) \(\mSet \in \FRsets(\logsys)\), (ii) \(\mSet\) has an immediate successor in
    	\((\FRsets(\logsys) \cup \{\mSet\}, \subset)\), or (iii) there is no \(\mSet' \in \FRsets(\Lambda)\) with \(\mSet' \subseteq \mSet\).
    \end{itemize}

\end{restatable}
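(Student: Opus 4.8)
The plan is to reduce both equivalences to a single structural observation about the poset \((\FRsets(\logsys), \subset)\), after which \mexnm{} falls out as the formal dual of \mconnm{}. First I would pin down exactly which sets of models occur as the argument of \(\FRsubs\) inside the definition of \mconnm-compatibility: I claim that \(\{\modelsof{\baseb}\setminus\mSet \mid \baseb \in \finitepwset(\llang),\ \mSet \subseteq \mUni\}\) is precisely the family of all \(T \subseteq \mUni\) that have a finitely representable superset. The inclusion ``\(\subseteq\)'' is immediate since \(\modelsof{\baseb}\setminus\mSet \subseteq \modelsof{\baseb} \in \FRsets(\logsys)\); for ``\(\supseteq\)'', given \(T \subseteq \modelsof{\baseb}\) with \(\baseb \in \finitepwset(\llang)\) I would take \(\mSet \coloneqq \modelsof{\baseb}\setminus T\), so that \(\modelsof{\baseb}\setminus\mSet = T\). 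Dually, \(\{\modelsof{\baseb}\cup\mSet \mid \baseb \in \finitepwset(\llang),\ \mSet\subseteq\mUni\}\) is exactly the family of \(T\subseteq\mUni\) that have a finitely representable subset (for ``\(\supseteq\)'' take \(\mSet \coloneqq T\)). Consequently, \mconnm-compatibility is equivalent to ``\(\FRsubs(T,\logsys) \neq \emptyset\) for every \(T\) with a finitely representable superset'' and \mexnm-compatibility to ``\(\FRsups(T,\logsys)\neq\emptyset\) for every \(T\) with a finitely representable subset''.

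Next I would rephrase non-emptiness of \(\FRsubs(T,\logsys)\) in the language of \cref{def:immediate}. If \(T \in \FRsets(\logsys)\) then \cref{def:FRsubs} gives \(\FRsubs(T,\logsys) = \{T\}\), which is non-empty; this matches clause~(i). If \(T \notin \FRsets(\logsys)\) then every \(\mSet'' \in \FRsets(\logsys)\) with \(\mSet'' \subseteq T\) is in fact strictly below \(T\), and unwinding \cref{def:FRsubs} shows that \(\mSet' \in \FRsubs(T,\logsys)\) iff \(\mSet'\) is an immediate predecessor of \(T\) in \((\FRsets(\logsys)\cup\{T\},\subset)\) --- the adjoined element \(T\) is inert, since it can never lie strictly between two sets that are both strictly below \(T\). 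So, for \(T\notin\FRsets(\logsys)\), \(\FRsubs(T,\logsys)\neq\emptyset\) iff clause~(ii) holds. The same reasoning with \(\FRsups\), immediate successors, and the dual half of \cref{def:FRsubs}/\cref{def:FRsups} yields the analogue for \mexnm{}.

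Finally I would assemble the equivalence: combining the two steps, \(\logsys\) is \mconnm-compatible iff for every \(T \subseteq \mUni\) either \(T\) has no finitely representable superset (clause~(iii)) or \(\FRsubs(T,\logsys)\neq\emptyset\), and by the previous step the latter holds iff clause~(i) or~(ii) holds; so the condition reduces to ``for every \(T\): (i) or (ii) or (iii)'', which is the first item, and the second item is its verbatim dual (swap \(\setminus \leftrightarrow \cup\), \(\FRsubs \leftrightarrow \FRsups\), superset \(\leftrightarrow\) subset, immediate predecessor \(\leftrightarrow\) immediate successor). Each step is short; the care needed is mostly bookkeeping --- checking that the three clauses cover their overlaps (a finitely representable \(T\) that may or may not also have an immediate predecessor; the precise effect of adjoining \(\{T\}\) to \((\FRsets(\logsys),\subset)\) when \(T \notin \FRsets(\logsys)\)) and that degenerate cases such as \(\emptyset \notin \FRsets(\logsys)\) or \(\mUni \notin \FRsets(\logsys)\) are subsumed uniformly by the single case distinction on whether \(T \in \FRsets(\logsys)\).
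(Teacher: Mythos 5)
Your proposal is correct and follows essentially the same route as the paper: your characterisation of the sets of the form \(\modelsof{\baseb}\setminus\mSet\) (resp.\ \(\modelsof{\baseb}\cup\mSet\)) as exactly those with a finitely representable superset (resp.\ subset) is the paper's auxiliary \cref{thirdEvcCond} (resp.\ \cref{thirdRcpCond}), proved by the same witnesses, and your identification of \(\FRsubs(T,\logsys)\neq\emptyset\) with clause~(i) or~(ii) matches the paper's case analysis on whether \(T\in\FRsets(\logsys)\). The only difference is presentational: you package the argument as a chain of equivalences plus duality, whereas the paper writes out the two directions of each claim separately.
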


%\begin{restatable}{theorem}{TauFRsups}%
%	\label{tauFRsups}
%	Let \( \logsys = (\llang, \mUni, \models)\) be a satisfaction system.
%    \(\logsys\) is \mexnm-compatible iff for every \(\mSet \subseteq \mUni\) either (i) \(\mSet \in \FRsets(\logsys)\), (ii) \(\mSet\) has an immediate successor in
%	\((\FRsets(\logsys) \cup \{\mSet\}, \subset)\), or (iii) there is no \(\mSet' \in \FRsets(\Lambda)\) with \(\mSet' \subseteq \mSet\).
%\end{restatable}

While verifying compatibility can be very cumbersome in general, \cref{evcRcpSuff} displays a simpler sufficient condition when \(\FRsets(\logsys)\) is finite.

\begin{restatable}{corollary}{EvcRcpSuff}%
	\label{evcRcpSuff}
	Let \(\logsys = (\llang, \mUni, \models)\) be satisfaction system in which
	\(\FRsets(\logsys)\) is finite. Then:
	\begin{itemize}
		\item \(\logsys\) is \mconnm-compatible iff \(\emptyset \in \FRsets(\logsys)\).
		\item \(\logsys\) is \mexnm-compatible iff \(\mUni \in \FRsets(\logsys)\).
	\end{itemize}
\end{restatable}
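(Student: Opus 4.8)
The plan is to derive both equivalences from \cref{incFRsubs}, using nothing beyond the fact that a non-empty \emph{finite} poset has maximal and minimal elements. I will spell out the \mconnm-compatibility statement; the \mexnm-compatibility one is entirely dual (subsets become supersets, $\emptyset$ becomes $\mUni$, immediate predecessors become immediate successors).

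For the ``only if'' direction of the \mconnm{} part, I would instantiate the definition of \mconnm-compatibility with the always-available base $\baseb = \emptyset \in \finitepwset(\llang)$ and the set of models $\mSet = \mUni$. Then $\modelsof{\baseb} \setminus \mSet = \emptyset$, so \mconnm-compatibility gives $\FRsubs(\emptyset, \logsys) \neq \emptyset$. But the only subset of $\emptyset$ is $\emptyset$ itself, so $\FRsubs(\emptyset, \logsys)$ equals $\{\emptyset\}$ if $\emptyset \in \FRsets(\logsys)$ and $\emptyset$ otherwise; hence $\emptyset \in \FRsets(\logsys)$. (Note this half does not use finiteness of $\FRsets(\logsys)$; the analogous argument with $\baseb=\emptyset$, $\mSet=\mUni$ gives $\FRsups(\mUni,\logsys)\ne\emptyset$, forcing $\mUni\in\FRsets(\logsys)$ for the \mexnm{} part.)

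For the ``if'' direction, assume $\emptyset \in \FRsets(\logsys)$ and $\FRsets(\logsys)$ finite; I will verify the criterion of \cref{incFRsubs}. Fix $\mSet \subseteq \mUni$. If $\mSet \in \FRsets(\logsys)$ we are in case (i). Otherwise consider $D \coloneqq \{\mSet' \in \FRsets(\logsys) \mid \mSet' \subseteq \mSet\}$, which contains $\emptyset$ and is finite, hence has a $\subseteq$-maximal element $\mSet^\star$. Since $\mSet^\star \in \FRsets(\logsys)$ while $\mSet \notin \FRsets(\logsys)$, we have $\mSet^\star \subset \mSet$. I then claim $\mSet^\star$ is an immediate predecessor of $\mSet$ in $(\FRsets(\logsys) \cup \{\mSet\}, \subset)$: any $x$ with $\mSet^\star \subset x \subset \mSet$ satisfies $x \neq \mSet$, hence $x \in \FRsets(\logsys)$ and $x \subseteq \mSet$, so $x \in D$, contradicting maximality of $\mSet^\star$. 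Thus case (ii) holds, and \cref{incFRsubs} yields \mconnm-compatibility. The \mexnm{} case is symmetric, taking $U \coloneqq \{\mSet' \in \FRsets(\logsys) \mid \mSet \subseteq \mSet'\}$, which contains $\mUni$ and hence has a $\subseteq$-minimal element that serves as an immediate successor of $\mSet$.

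I do not expect a serious obstacle: the only point requiring (mild) care is that being a $\subseteq$-maximal element of $D$ below $\mSet$ coincides with being an \emph{immediate} predecessor of $\mSet$ in the augmented poset, and this is exactly where the finiteness of $\FRsets(\logsys)$ is used, namely to guarantee that $D$ (resp.\ $U$) is non-empty and therefore has a maximal (resp.\ minimal) element. Everything else is bookkeeping with the definitions of $\FRsubs$, $\FRsups$, and \mconnm/\mexnm-compatibility.
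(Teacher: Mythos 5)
Your proof is correct and follows essentially the same route as the paper: both reduce the corollary to \cref{incFRsubs} and use the fact that a non-empty finite poset has maximal (resp.\ minimal) elements to produce the required immediate predecessor (resp.\ successor). You are in fact slightly more complete than the paper's own two-line argument, since you explicitly verify the ``only if'' direction via the instantiation $\baseb=\emptyset$, $\mSet=\mUni$, and you make explicit that $\emptyset\in\FRsets(\logsys)$ (resp.\ $\mUni\in\FRsets(\logsys)$) is what guarantees the candidate set $D$ (resp.\ $U$) is non-empty.
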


\section{Compatibility: Use Cases}\label{sec:usecases}

In this \lcnamecref{sec:usecases}, we analyse some satisfaction systems
and establish whether they are (or not) eviction- and
reception-compatible.
The framework we presented in \cref{sec:modelChange} is general enough to cover
several satisfaction systems without imposing much constraints upon the
logics being used to represent an agent's beliefs.
In particular, it covers propositional logic (\cref{propcompat}).
However,  there are interesting fragments of first-order logic used for knowledge representation that
are %not
neither eviction nor
reception-compatible,  as it is the case of some DLs
(\cref{ALCfecompat}).
%such as  %for instance %This includes descriptions logics such as
%$\mathcal{ALC}$ and $\mathcal{EL}$.
%In this \lcnamecref{sec:alc-case},
%we investigate how to extend model change
%operations to one such logic as a study case.
%propose some minor modification on the eviction and reception operations in order to perform model change operations in such other logics.  %extend mo identify properties that allow us to perform model operations in such other logics.  %in order to cover such logics.
%slightly modify In order to perform model change operations in such logics,  we investigate in this section,  how to slightly modify the eviction and reception operations can we constraints the model change operations in these logics.
%%%%
%We look precisely at the\todo{RG:\ Not only \ALC{} now}
%logic \ALC{}.
%%%%%
%\footnote{\ALCfo is also called
 %\(\ALC\)-formula
%in~\citep{Gabbay2003a},
%the former nomenclature facilitates the distinction
%between the logic and its formulae.},
%which corresponds to the DL \(\mathcal{ALC}\)
%enriched with boolean operators over \ALC{} axioms.
%The DL\ALCfo
%We look here at the cases of classical propositional logic and propositional Horn
%logic (\cref{use:prop}), Kleene and Priest 3-valued logics (\cref{use:3vl}),  Gödel logic (\cref{use:fuzzy}), a fragment of
%Linear Temporal Logic (\cref{use:ltlnext}) and the DL \ALC{} (\cref{use:alc}).
\Cref{tab:compats} summarises the results of compatibility proved in this
\lcnamecref{sec:usecases}.

% Please add the following required packages to your document preamble:
% \usepackage{booktabs}
% \usepackage{multirow}
\begin{table}[tb]
    \centering
    \begin{tabular}{@{}lll@{}}
        \toprule
        \multirow{2}{*}{Satisfaction System} & \multicolumn{2}{c}{Compatible} \\ \cmidrule(l){2-3}
                                             & \Mconnm{}      & \Mexnm{}      \\ \midrule
        \(\logsysprop\)                    & Yes            & Yes           \\
        \(\logsyshorn\)                    & Yes            & Yes           \\
        \(\logsyskleene\)                      & Yes            & Yes           \\
        \(\logsyspriest\)                      & No            & Yes           \\
        \(\logsysfuzzy\)                     & Yes            & Yes           \\
        \(\logsysltlx\)                       & No             & Yes           \\
        \(\logsysDLABox\)                  & Yes             & No            \\
        \(\logsysDLLITE\)\textsuperscript{$\dagger$}                  & Yes             & Yes            \\
                        \(\logsysALC\)                  & No             & No            \\
        \bottomrule
    \end{tabular}
    \caption{\Mconnm- and \mexnm-compatibility of different satisfaction systems\label{tab:compats}. $\dagger$: only with finite signature}
\end{table}

% \footnotetext[2]{only with finite signature}

\subsection{The Case of Propositional Logic\label{use:prop}}

We start by analysing the simplest case: that of propositional classical logic.
We denote by
	\(\logsysprop\) the satisfaction system with
	the entailment relation given by the standard
	semantics of propositional logic with finite signature.
As one can express inconsistency with a finite base, %there are inconsistent formulas,
tautologies, and there is only a finite number of valuations, we obtain the
following result for \(\logsysprop\).
%we can apply
\begin{restatable}{theorem}{propcompat}%
	\label{propcompat}
	%The satisfaction system
	\(\logsysprop\)
	is   %\mconnm-compatible nor
	\mexnm-compatible and \mconnm-compatible.
\end{restatable}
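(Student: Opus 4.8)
The plan is to invoke \Cref{evcRcpSuff}, which reduces both claims to checking (a) that \(\FRsets(\logsysprop)\) is finite and (b) that \(\emptyset \in \FRsets(\logsysprop)\) and \(\mUni \in \FRsets(\logsysprop)\). So the proof splits into three short observations.

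First I would argue finiteness of \(\FRsets(\logsysprop)\). Since the signature is finite, say with atoms \(\{p_1,\dots,p_n\}\), the set of valuations \(\mUni\) has exactly \(2^n\) elements, hence \(\powerset(\mUni)\) is finite with \(2^{2^n}\) elements. As \(\FRsets(\logsysprop) \subseteq \powerset(\mUni)\), it is finite as well. (In fact, by classical propositional completeness every set of valuations is the model set of some formula — a disjunction of the corresponding complete conjunctions, with \(\bot\) for the empty set — so \(\FRsets(\logsysprop) = \powerset(\mUni)\), but finiteness alone is what \Cref{evcRcpSuff} needs.)

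Second, I would exhibit the two distinguished finite bases: \(\modelsof{\{\bot\}} = \emptyset\), witnessing \(\emptyset \in \FRsets(\logsysprop)\), and \(\modelsof{\{\top\}} = \mUni\) (or \(\modelsof{\emptyset} = \mUni\)), witnessing \(\mUni \in \FRsets(\logsysprop)\); both \(\{\bot\}\) and \(\{\top\}\) are finite subsets of \(\llang\). Then \Cref{evcRcpSuff} immediately yields that \(\logsysprop\) is \mconnm-compatible (from \(\emptyset \in \FRsets(\logsysprop)\)) and \mexnm-compatible (from \(\mUni \in \FRsets(\logsysprop)\)).

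There is essentially no obstacle here: the content of the theorem is entirely front-loaded into \Cref{evcRcpSuff} and \Cref{incFRsubs}, and what remains for propositional logic are the three elementary facts above. The only thing to be careful about is making the appeal to \Cref{evcRcpSuff} precise, i.e.\ confirming that the ``\(\FRsets(\logsys)\) finite'' hypothesis genuinely holds for \(\logsysprop\) with a finite signature, which is the finite-signature counting argument in the first step. The detailed (appendix) proof might instead argue directly via \Cref{incFRsubs}, checking that every \(\mSet \subseteq \mUni\) falls into case (i) because \(\FRsets(\logsysprop) = \powerset(\mUni)\); this is an equally routine route and I would mention it as the alternative.
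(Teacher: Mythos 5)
Your proposal is correct and matches the paper's own proof essentially verbatim: both count the at most \(2^{2^n}\) sets of valuations to get finiteness of \(\FRsets(\logsysprop)\), note that \(\emptyset\) and \(\mUni\) are finitely representable (via an inconsistent base and a tautology), and then conclude by \cref{evcRcpSuff}. The extra remark that \(\FRsets(\logsysprop) = \powerset(\mUni)\) is true but not needed for the argument.
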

%\begin{proof}[Sketch]
 %   Since there are inconsistent formulas and tautologies, and only a finite
  %  number of values, we can apply \cref{evcRcpSuff}.
%\end{proof}

\Cref{propOps} demonstrates how to formulate \mconnm{} and \mexnm{} in propositional logic.

\begin{restatable}{proposition}{rPropOps}%
\label{propOps}
The functions \(\evcprop\) and \(\rcpprop\) defined next are, respectively, maxichoice \mconnm{} and \mexnm{} functions on \(\logsysprop\).
\begin{align*}
    \evcprop(\baseb,\mSet) &= \bigvee_{v \in \modelsof{\baseb} \setminus \mSet} \left(\bigwedge_{v(a) = \vtrue} a \wedge \bigwedge_{v(a) = \vfalse} \neg{a} \right)\\
    \rcpprop(\baseb,\mSet) &= \bigvee_{v \in \modelsof{\baseb} \cup \mSet} \left(\bigwedge_{v(a) = \vtrue} a \wedge \bigwedge_{v(a) = \vfalse} \neg{a} \right).
\end{align*}
As usual, \(\vfalse\) stands for `false' and \(\vtrue\) stands for `true'.
\end{restatable}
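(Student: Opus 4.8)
The plan is to reduce \Cref{propOps} to two elementary facts about $\logsysprop$: (a) every set of valuations over the (finite) signature is finitely representable, i.e.\ $\FRsets(\logsysprop) = \powerset(\mUni)$; and (b) the two displayed formulae have model sets exactly $\modelsof{\baseb}\setminus\mSet$ and $\modelsof{\baseb}\cup\mSet$, respectively. Once these are in place, maxichoice-ness is immediate from \Cref{def:FRMcon} and \Cref{def:FRMexp}: the relevant $\FRsubs$ and $\FRsups$ turn out to be singletons, so the displayed formula is forced to be the unique output of the maxichoice operation induced by any \frselnm{}.

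First I would establish (a) and (b) together. For a valuation $v \in \mUni$ write $t_v \coloneqq \bigwedge_{v(a)=\vtrue} a \wedge \bigwedge_{v(a)=\vfalse}\neg a$ for its characteristic term (the empty conjunction read as $\vtrue$). By the standard propositional semantics, a valuation $w$ satisfies $t_v$ iff $w$ and $v$ agree on every atom of the signature, i.e.\ iff $w = v$; hence $\modelsof{t_v} = \{v\}$. Since $w \models \varphi \vee \psi$ iff $w\models\varphi$ or $w\models\psi$, and the empty disjunction is $\vfalse$ with $\modelsof{\vfalse}=\emptyset$, for any $W \subseteq \mUni$ the formula $\bigvee_{v\in W} t_v$ has model set exactly $W$. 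As the signature is finite, $\mUni$ is finite and $W$ ranges over all of $\powerset(\mUni)$, which gives (a), with $\emptyset$ represented by $\vfalse$ and $\mUni$ by $\vtrue$. Instantiating $W \coloneqq \modelsof{\baseb}\setminus\mSet$ and $W \coloneqq \modelsof{\baseb}\cup\mSet$ (both well-defined finite sets, since $\mUni$ is finite) yields (b): $\modelsof{\evcprop(\baseb,\mSet)} = \modelsof{\baseb}\setminus\mSet$ and $\modelsof{\rcpprop(\baseb,\mSet)} = \modelsof{\baseb}\cup\mSet$; moreover each output is a single formula, hence (viewed as a singleton base) an element of $\finitepwset(\llang)$.

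Next I would pin down the auxiliary sets. Because $\modelsof{\baseb}\setminus\mSet \in \FRsets(\logsysprop)$ by (a), it is trivially the $\subseteq$-largest finitely representable subset of itself, so $\FRsubs(\modelsof{\baseb}\setminus\mSet,\logsysprop) = \{\,\modelsof{\baseb}\setminus\mSet\,\}$; dually $\FRsups(\modelsof{\baseb}\cup\mSet,\logsysprop) = \{\,\modelsof{\baseb}\cup\mSet\,\}$. (This can also be read off \Cref{uniqueFR}, since $\logsysprop$ has the RMBP --- $v\models\baseb_1$ and $v\models\baseb_2$ iff $v\models\baseb_1\cup\baseb_2$ is exactly the semantics of set union --- and both target sets are nonempty members of $\FRsets(\logsysprop)$.) Since both sets are singletons, every \frselnm{} $\FRsel$ on $\logsysprop$ returns their unique element, whence $\FRsel(\FRsubs(\modelsof{\baseb}\setminus\mSet,\logsysprop)) = \modelsof{\baseb}\setminus\mSet = \modelsof{\evcprop(\baseb,\mSet)}$ and $\FRsel(\FRsups(\modelsof{\baseb}\cup\mSet,\logsysprop)) = \modelsof{\baseb}\cup\mSet = \modelsof{\rcpprop(\baseb,\mSet)}$. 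By \Cref{propcompat}, $\logsysprop$ is \mconnm- and \mexnm-compatible, so \Cref{def:FRMcon} and \Cref{def:FRMexp} apply, and the last two equalities say exactly that $\evcprop = \mCon_{\FRsel}$ and $\rcpprop = \mExp_{\FRsel}$ for an arbitrary \frselnm{} $\FRsel$. Hence $\evcprop$ is a maxichoice \mconnm{} function and $\rcpprop$ is a maxichoice \mexnm{} function.

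The argument is almost entirely bookkeeping; the only place where care is genuinely needed is the degenerate case $\modelsof{\baseb}\setminus\mSet = \emptyset$, where the disjunction defining $\evcprop(\baseb,\mSet)$ is empty and must be interpreted as $\vfalse$ so that $\modelsof{\evcprop(\baseb,\mSet)} = \emptyset$ still holds --- this is precisely where one uses that $\emptyset \in \FRsets(\logsysprop)$, one of the ingredients behind \Cref{propcompat}. A fully equivalent alternative to the route above is to bypass \Cref{def:FRMcon} and \Cref{def:FRMexp} and instead verify the postulates of \Cref{mFRCon_r} and \Cref{mFRExp_r} directly for the two formulae (success, inclusion/persistence, vacuity, finite retainment/temperance, and uniformity), but given (a) and (b) the definitional route is shorter.
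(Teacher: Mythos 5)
Your proof is correct, but it takes a genuinely different route from the paper's. The paper proves this proposition by invoking the representation theorems (\cref{mFRCon_r} and \cref{mFRExp_r}) and checking, one by one, that \(\evcprop\) satisfies success, inclusion, vacuity, finite retainment and uniformity, and that \(\rcpprop\) satisfies the corresponding reception postulates. You instead argue directly from the constructive definitions (\cref{def:FRMcon} and \cref{def:FRMexp}): since every subset of \(\mUni\) is finitely representable in \(\logsysprop\) via its disjunction of characteristic terms, the sets \(\FRsubs(\modelsof{\baseb}\setminus\mSet,\logsysprop)\) and \(\FRsups(\modelsof{\baseb}\cup\mSet,\logsysprop)\) are the singletons \(\{\modelsof{\baseb}\setminus\mSet\}\) and \(\{\modelsof{\baseb}\cup\mSet\}\), so any \frselnm{} is forced to return exactly the model set of the displayed formula, which is the defining equation of a maxichoice operation. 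Your route is shorter and avoids relying on the (harder) postulates-to-construction direction of the representation theorems, reducing everything to the observation that \(\FRsets(\logsysprop)=\powerset(\mUni)\); the paper's route is longer but doubles as a concrete illustration of how each postulate plays out in the propositional setting. You also explicitly handle the degenerate empty-disjunction case (reading it as \(\vfalse\)), which the paper's proof leaves implicit. One small nit: the appeal to \cref{uniqueFR} via the RMBP is redundant given your direct singleton argument, and on its own it would only give \(|\FRsubs|\leq 1\), so you would still need \mconnm-compatibility to rule out emptiness --- but since you present it only as an aside, nothing in the main argument depends on it.
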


Horn logic limits the language of propositional logic to only facts and implications.
Let $\propatoms$ be a set of propositional atoms containing $\perp$ (\textit{falsum}), the language of Horn logic, denoted $\llanghorn$, is given % and it is given
by the following BNF grammar. %\nb{Ana: sentence seems broken, typo?}
\begin{align*}
\varphi &:= \varphi \land \varphi \mid H \mid T \to H \\
T &:= T \land T \mid H &
H &:= p
\end{align*}%
where $p \in \propatoms$.
%consists of the propositional language confined to the following sublanguage:

The universe of models and satisfaction system in (propositional) Horn logic coincide with those of classical propositional logic.
The compatibilities of the resulting satisfying system with our setting is given in \cref{horncompat}, which can be proved in a similar way as \cref{propcompat}.

\begin{restatable}{theorem}{rhorncompat}\label{horncompat}
    $\logsyshorn = (\llanghorn, \mUni_{\text{Prop}}, \models_{\text{Prop}})$ $\logsyshorn$, is both eviction- and reception-compatible.
\end{restatable}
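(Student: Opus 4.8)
The plan is to invoke \cref{evcRcpSuff}, mirroring the proof of \cref{propcompat}. The first step is to observe that, because the set of atoms $\propatoms$ is finite, there are only finitely many valuations, so $\mUni_{\text{Prop}}$ is finite and hence $\powerset(\mUni_{\text{Prop}})$ is finite; since $\FRsets(\logsyshorn) \subseteq \powerset(\mUni_{\text{Prop}})$, the set $\FRsets(\logsyshorn)$ is finite. By \cref{evcRcpSuff} it therefore remains only to check that $\emptyset \in \FRsets(\logsyshorn)$ and $\mUni_{\text{Prop}} \in \FRsets(\logsyshorn)$.

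For the first, I would note that $\propatoms$ contains $\perp$, and $\perp$ is a legal Horn formula (it is an $H$, hence a $\varphi$ in the grammar); no valuation satisfies $\perp$ under $\models_{\text{Prop}}$, so $\{\perp\}$ is a finite Horn base with $\modelsof{\{\perp\}} = \emptyset$. For the second, the empty base $\emptyset \in \finitepwset(\llanghorn)$ is vacuously satisfied by every valuation, so $\modelsof{\emptyset} = \mUni_{\text{Prop}}$; if one prefers a non-empty witness, the Horn tautology $p \to p$ for any $p \in \propatoms$ works equally well. With both extreme sets finitely representable, \cref{evcRcpSuff} yields that $\logsyshorn$ is \mconnm-compatible and \mexnm-compatible.

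I do not expect a genuine obstacle here; the argument is essentially that of \cref{propcompat}, the only extra care being to confirm that $\perp$ and the empty base (or $p \to p$) are expressible within the restricted Horn grammar. It is worth stressing what is \emph{not} needed: $\FRsets(\logsyshorn)$ is a proper subset of $\powerset(\mUni_{\text{Prop}})$ — roughly, a set of valuations is Horn-representable precisely when it is closed under conjunction, as the remark accompanying \cref{fig:4d} illustrates — and one might worry that this restriction obstructs compatibility. But \cref{evcRcpSuff} shows that once $\FRsets(\logsyshorn)$ is finite, compatibility hinges only on representability of $\emptyset$ and $\mUni_{\text{Prop}}$, so the closure characterization never enters the proof.
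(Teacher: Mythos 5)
Your proposal is correct and follows essentially the same route as the paper's proof: finiteness of \(\FRsets(\logsyshorn)\) from the finite signature, the witnesses \(\{\perp\}\) for \(\emptyset\) and a Horn tautology (the paper uses \(a \to a\)) for \(\mUni_{\text{Prop}}\), and then \cref{evcRcpSuff}. Your closing remark that the conjunction-closure characterization of Horn-representable sets is not needed is a correct and worthwhile observation, but it is not part of the paper's argument either.
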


%We remark that if we remove \(\perp\) from \(\propatoms\) then \(\logsyshorn\) loses \mconnm-compatibility since \(\emptyset\) would not be representable.

 \subsection{The Case of Kleene and Priest 3-valued Logics\label{use:3vl}}

Now, we look at examples of 3-valued logics which are only slightly more complex than propositional logic.
 The 3-valued logics of Kleene~\citep{Kleene1952} and Priest~\citep{Priest} consist of the classical propositional logic %augmented with a third value of truth $U$
 in which the formulae might be assigned one of the following three truth values:
true ($\vtrue$), false (\(\vfalse\)) and unknown (\(\vunk\)).
% third value $U$ which can be interpreted as either \textit{unknown} or \textit{both}.
 Consider the following total order on the three values: $\vfalse < \vunk < \vtrue$.
The satisfaction system for Kleene's 3-valued logics is
\(\logsyskleene = (\llangprop,  \mUniKleene,  \models_{K3})\),  and for
Priest's 3-valued logics is \(\logsyspriest = (\llangprop,  \mUniPiest,  \models_{P3})\)
where $\llangprop$ is the language of the classical propositional logic,  and
\begin{itemize}
	\item $\mUni_{3} $ is the set of all functions $v: \llang \to \{\vfalse, \vunk, \vtrue\}$  s.t  %such that
	\begin{itemize}
		\item
			$v(\neg \varphi) = \vtrue$,  if $v(\varphi) = \vfalse$;
			$v(\neg \varphi) = \vunk$,  if $v(\varphi) =\vunk$;
			$v(\neg \varphi) = \vfalse$,  if $v(\varphi) = \vtrue$.
		\item $v(\varphi \land \psi) = \min_<(\{v(\varphi), v(\psi)\})$.
		\item $v(\varphi \lor \psi) = \max_<(\{v(\varphi), v(\psi)\})$.
	\end{itemize}
%	\item $v \models_{K3} \varphi$ iff $v(\varphi) = \vtrue$.
\end{itemize}

The main difference between Kleene's and Priest's 3-valued logics lies on the satisfaction relation: for Kleene, $v \models_{K3} \varphi$ iff $v(\varphi) = \vtrue$; while for Priest,  $v \models_{P3} \varphi$ iff $v(\varphi) = \vtrue$ or $v(\varphi) = \vunk$.
\Cref{3vlcompat} states the compatibility results for these systems.
%This difference on the satisfaction relations makes with the three valued logic of Klenne is both reception and eviction compatible,  while Priest's 3-valued logics is only reception compatible.

\begin{restatable}{theorem}{rthreevlcompat}\label{3vlcompat}
    $\logsyskleene$ and $\logsyspriest$ are reception-compatible but \(\logsyskleene\) is \mexnm-compatible, while \(\logsyspriest\) is not.
\end{restatable}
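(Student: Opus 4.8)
The plan is to reduce the whole statement to \cref{evcRcpSuff}. Since the propositional signature is finite, any $v \in \mUni_3$ is determined by its restriction to the (finitely many) atoms, so $\mUni_3$ is finite, and therefore $\FRsets(\logsyskleene)$ and $\FRsets(\logsyspriest)$ are finite as well. By \cref{evcRcpSuff} it then suffices, for each of the two systems, to settle whether $\mUni_3 \in \FRsets$ (which governs \mexnm-compatibility, i.e.\ reception-compatibility) and whether $\emptyset \in \FRsets$ (which governs \mconnm-compatibility, i.e.\ eviction-compatibility). So the theorem amounts to three facts: $\mUni_3 \in \FRsets$ in both systems; $\emptyset \in \FRsets(\logsyskleene)$; and $\emptyset \notin \FRsets(\logsyspriest)$.

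For reception, the empty base $\emptyset \in \finitepwset(\llangprop)$ is vacuously satisfied by every valuation, so $\modelsof{\emptyset} = \mUni_3$ in $\logsyskleene$ and in $\logsyspriest$; hence $\mUni_3 \in \FRsets$ in both, and \cref{evcRcpSuff} gives \mexnm-compatibility of both $\logsyskleene$ and $\logsyspriest$. For eviction in $\logsyskleene$ I exhibit an inconsistent finite base: fixing any atom $p$, take $\baseb = \{p, \neg p\}$ (or $\{p \wedge \neg p\}$). A valuation with $v \models_{K3} \baseb$ would need $v(p) = \vtrue$ and $v(\neg p) = \vtrue$, i.e.\ $v(p) = \vfalse$; since no valuation does both, $\modelsof{\baseb} = \emptyset$, so $\emptyset \in \FRsets(\logsyskleene)$ and, by \cref{evcRcpSuff}, $\logsyskleene$ is \mconnm-compatible.

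The remaining, and only substantive, point is that $\logsyspriest$ admits no inconsistent base at all. Let $v_{\vunk}$ be the valuation assigning $\vunk$ to every atom. A routine induction on the structure of $\varphi \in \llangprop$ — base case the definition of $v_{\vunk}$, and inductive steps using $v_{\vunk}(\neg\varphi) = \vunk$ whenever $v_{\vunk}(\varphi) = \vunk$ together with $\min_<\{\vunk,\vunk\} = \max_<\{\vunk,\vunk\} = \vunk$ — shows $v_{\vunk}(\varphi) = \vunk$ for every formula $\varphi$. Since $\vunk$ is a designated value for $\models_{P3}$, this gives $v_{\vunk} \models_{P3} \varphi$ for every $\varphi$, hence $v_{\vunk} \models_{P3} \baseb$ for every base $\baseb$. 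Consequently no finite base has an empty model set in $\logsyspriest$, so $\emptyset \notin \FRsets(\logsyspriest)$, and \cref{evcRcpSuff} yields that $\logsyspriest$ is not \mconnm-compatible. I do not anticipate a real obstacle: the heart of the matter is the contrast between the two satisfaction relations — $v_{\vunk}$ is a universal model in $\logsyspriest$ but not in $\logsyskleene$ — so that Kleene can, and Priest cannot, express inconsistency. The only points to be careful about are that invoking \cref{evcRcpSuff} depends on the finite-signature assumption inherited from the propositional setting, that the inconsistent Kleene base uses only connectives of $\llangprop$, and that $\llangprop$ — as is standard for Kleene's and Priest's logics — has no falsum constant, which would otherwise make $\{\perp\}$ inconsistent in $\logsyspriest$.
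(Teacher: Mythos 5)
Your proposal is correct and takes essentially the same route as the paper's own proof: both reduce the theorem to \cref{evcRcpSuff} by noting that \(\mUni_{3}\) is finite (hence \(\FRsets\) is finite), that the empty base represents \(\mUni_{3}\) in both systems, that \(\logsyskleene\) admits an inconsistent finite base, and that the all-\(\vunk\) valuation satisfies every base under \(\models_{P3}\), so \(\emptyset \notin \FRsets(\logsyspriest)\). Your only deviations are cosmetic---you exhibit the explicit base \(\{p, \neg p\}\) where the paper argues via \(\modelsof{\llangprop} = \emptyset\) and equivalence classes of formulae, and you spell out the induction for \(v_{\vunk}\) that the paper merely asserts---and you correctly read the theorem's second occurrence of \mexnm{} as the intended \mconnm{} (eviction), in agreement with the paper's own proof and its compatibility table.
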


%\begin{proof}[Sketch]
%    As in the propositional case, \(\mUni_3\) is finite and \(\mUni_3\) are finitely representable in both systems. However, \(\emptyset\) is finitely representable in \(\logsyskleene\) but not in \(\logsyspriest\).
%    Hence, the \lcnamecref{3vlcompat} is a consequence of \cref{evcRcpSuff}.
%\end{proof}

%\begin{theorem}
%\end{theorem}
%\begin{proof}
%As $\mUniKleene$ is finite,  we get that
%\(\FRsets(\logsyskleene)\) and \(\FRsets(\logsyspriest)\) are  finite.
%From  \cref{prop:emptyandmset} we get that both $\emptyset$ and $\mUniKleene$ are finitely representable on \(\FRsets(\logsyskleene)\) and \(\FRsets(\logsyspriest)\).
%\end{proof}

%\begin{proposition}
%The functions \(\evcprop\) and \(\rcpprop\) defined for classical propositional logics are respectively, maxichoice \mconnm{} and \mexnm{} functions on \(\logsyskleene\).
%\begin{align*}
%    \evcprop(\baseb,\mSet) &= \bigvee_{v \in \modelsof{\baseb} \setminus \mSet} \left(\bigwedge_{v(a) = 1} a \wedge \bigwedge_{v(a) = 0} \neg{a} \right)\\
%    \rcpprop(\baseb,\mSet) &= \bigvee_{v \in \modelsof{\baseb} \cup \mSet} \left(\bigwedge_{v(a) = 1} a \wedge \bigwedge_{v(a) = 0} \neg{a} \right).
%\end{align*}
%\end{proposition}
%%As usual, 0 stands for ``false'' and 1 stands for ``true''.

\subsection{The Case of Propositional Gödel Logic\label{use:fuzzy}}

All satisfaction systems studied earlier in this \lcnamecref{sec:usecases} had only finitely many models.
This is not the case in (propositional) Gödel logic, one of the most important fuzzy logics~\cite{Hajek1998,Bergmann2008}.
We will analyse the compatibilities for Gödel logic's satisfaction system next.

Let \(\theta \in {\left(0, 1\right]}\) and \(\logsysfuzzy = (\llangfuzzy, \mUnifuzzy, \modelsfuzzy)\) be a satisfaction system in which

\begin{itemize}
    \item \(\llangfuzzy\) consists of propositional formulae defined over a non-empty finite set of propositional atoms \(\propatoms\);
    \item \(\mUnifuzzy\) is the set of all functions \(v : \llang \to [0, 1]\) respecting the standard Gödel semantics for the boolean connectives (see~\cite[page 20]{Bergmann2008}); and
    \item \(v \modelsfuzzy \baseb\) iff \(v(\bigwedge_{\varphi \in \baseb \cup \{\neg{(\neg{a} \land a)}\}} \varphi) \geq \theta\), %with some
     where \(a \in \propatoms\).
\end{itemize}

We say that \(\logsysfuzzy\) is the satisfaction system for propositional Gödel logic with threshold \(\theta\).
\Cref{LFuzzyCompat} states a positive result for \(\logsysfuzzy\).
Despite \(\mUnifuzzy\) being infinite, the models can be grouped into finitely many equivalence classes w.r.t.\ satisfaction of bases.

\begin{restatable}{theorem}{rLFuzzyCompat}\label{LFuzzyCompat}
    The satisfaction system \(\logsysfuzzy\) is \mconnm- and \mexnm-compatible.
\end{restatable}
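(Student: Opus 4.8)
The plan is to reduce the statement to \cref{evcRcpSuff}: if $\FRsets(\logsysfuzzy)$ is finite, then $\logsysfuzzy$ is \mconnm-compatible iff $\emptyset \in \FRsets(\logsysfuzzy)$ and \mexnm-compatible iff $\mUnifuzzy \in \FRsets(\logsysfuzzy)$. The last two conditions are easy. Fixing any $a \in \propatoms$ (possible since $\propatoms \neq \emptyset$), every $v \in \mUnifuzzy$ gives $v(a \land \neg a) = \min(v(a), v(\neg a)) = 0 < \theta$, so $\modelsof{\{a \land \neg a\}} = \emptyset$ and hence $\emptyset \in \FRsets(\logsysfuzzy)$; and $v \modelsfuzzy \emptyset$ reduces to $v(\neg(\neg a \land a)) = 1 \geq \theta$, which always holds, so $\modelsof{\emptyset} = \mUnifuzzy$ and $\mUnifuzzy \in \FRsets(\logsysfuzzy)$. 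So the whole content is the finiteness of $\FRsets(\logsysfuzzy)$.

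To prove finiteness, I would introduce the equivalence relation $v \sim v'$ on $\mUnifuzzy$ defined by: $v(\varphi) \geq \theta \iff v'(\varphi) \geq \theta$ for every formula $\varphi \in \llangfuzzy$. Since $\FRsets$ only concerns finite bases, for $\baseb \in \finitepwset(\llangfuzzy)$ satisfaction reduces to a single formula: writing $\psi_\baseb \coloneqq \bigwedge_{\varphi \in \baseb \cup \{\neg(\neg a \land a)\}} \varphi$, we have $v \modelsfuzzy \baseb$ iff $v(\psi_\baseb) \geq \theta$. Hence $v \sim v'$ implies $v \in \modelsof{\baseb} \iff v' \in \modelsof{\baseb}$, so every $\modelsof{\baseb}$ with $\baseb$ finite is a union of $\sim$-classes. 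Consequently $|\FRsets(\logsysfuzzy)| \le 2^{k}$, where $k$ is the number of $\sim$-classes, and it remains to show $k$ is finite.

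For this I would use the standard structural fact about Gödel semantics over the finite atom set $\propatoms$: for every $v$ and every $\varphi$, $v(\varphi) \in \{0,1\} \cup \{v(a) \mid a \in \propatoms\}$, and, more precisely, which of these it equals is dictated by the total preorder $W(v)$ that $v$ induces on the finite set $\propatoms \cup \{\mathbf 0, \mathbf 1\}$ (the atoms together with two markers for the constants $0$ and $1$, which always sit in the bottom and top blocks respectively). Concretely, for each such preorder $W$ one defines by structural recursion a map $\beta_W$ from formulae to $W$-blocks — $\beta_W(\varphi \land \psi)$ is the smaller of $\beta_W(\varphi), \beta_W(\psi)$; dually for $\lor$; $\beta_W(\varphi \to \psi)$ is the top block if $\beta_W(\varphi) \leq \beta_W(\psi)$ and $\beta_W(\psi)$ otherwise; $\beta_W(\neg\varphi)$ is the top block if $\beta_W(\varphi)$ is the bottom block and the bottom block otherwise — and proves by induction on $\varphi$ that $v(\varphi) \in \beta_{W(v)}(\varphi)$. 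If one now enriches each $W$ with the (necessarily up-closed) set of its blocks whose value is $\geq \theta$, obtaining a ``$\theta$-refined order type'', then the $\theta$-cut $\{\varphi \mid v(\varphi) \geq \theta\}$ depends only on the $\theta$-refined order type of $v$. Since $\propatoms$ is finite, there are only finitely many total preorders on $\propatoms \cup \{\mathbf 0,\mathbf 1\}$ and hence finitely many $\theta$-refined order types, so $k$ is finite; this gives $\FRsets(\logsysfuzzy)$ finite, and \cref{evcRcpSuff} finishes both compatibilities.

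The main obstacle is the inductive lemma of the last paragraph: getting the abstract map $\beta_W$ and the claim $v(\varphi) \in \beta_{W(v)}(\varphi)$ exactly right, in particular handling Gödel negation (which is discontinuous, caring only whether the value is exactly $0$, i.e.\ whether the block is the bottom one) and implication (which jumps to the top block precisely when $\beta_W(\varphi) \le \beta_W(\psi)$). Everything else — the reduction to $\psi_\baseb$, the saturation argument, counting order types, and the two distinguished bases — is routine bookkeeping.
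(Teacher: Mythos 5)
Your proposal is correct and follows essentially the same route as the paper: reduce to \cref{evcRcpSuff} by exhibiting finite bases for $\emptyset$ (via $\{a\wedge\neg a\}$) and for $\mUnifuzzy$ (via the empty base), and prove finiteness of $\FRsets(\logsysfuzzy)$ by showing that satisfaction of any finite base depends only on a finite order-type invariant of the valuation. The paper encodes the threshold by adjoining a single marker $s_\theta$ to the induced total preorder on the atoms rather than your markers $\mathbf{0},\mathbf{1}$ plus an up-closed set of blocks, but this is only a cosmetic difference in bookkeeping; the key inductive lemma (the block of a formula is determined recursively by the order type on atoms) is the same.
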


\subsection{The LTL NeXt Fragment\label{use:ltlnext}}

In the previous \lcnamecrefs{use:ltlnext}, we focused on languages which had only boolean connectives and whose models were valuations on propositional atoms.
Here, we consider the LTL logic \citep{ClarkeBook} with the language confined only to the operator $X$ (NeXt) as an example of satisfaction system which differs considerably in language and in semantics from the other systems presented before.  
For clarity,  the language of this logic $\llangltlx$ is given by the following grammar in BNF
$ \varphi := p \mid X \varphi$,  where $p \in \propatoms$ for some fixed non empty set of  propositional symbols $\propatoms$. 
 We write $X^m p$ as a shorthand for the nesting of $X$ $m$ times.   The formula $X^0p$ stands for $p$.  
A model of this logic is a pair $(M,s)$ where $M$ is a Kripke structure (see definition at \citep{ClarkeBook}), and $s$ is a initial state of $M$,  called the initial state.   
Let $\mUniltlx$ be the set of all such models.  
A model $(M,s)$ satisfies a formula $X^i p$ iff $p$ is labelled at the $i$-th state of all paths from $M$ starting from $s$ (see \citep{ClarkeBook},  for a detailed definition).   Let $\modelsltlx$ be the satisfaction relation between models and formulae as just defined.   
The satisfaction system of this logic is the system $\logsysltlx = (\llang_{X},  \mUni_X,  \modelsltlx)$.  
Within this section,  we will write $A \models_{X} \varphi$ as a shorthand 
for $(M,s) \modelsltlx \varphi$, for all $(M,s) \in A$.  

For \mexnm-compatibility we define the function \(\rcpx\) and prove its relation to the \mexnm{} construction in \cref{prop:ltlxrecep}.
% $\rcpx: \finitepwset(\llangltlx) \times \powerset(\mUni_X) \to \finitepwset(\llangltlx)$  such that
%which we will prove  that is a maxichoice reception operator:
%\[ \rcpx(\baseb, \mSet) = \{ \varphi \in \baseb \mid \mSet \models \varphi \} .\]

%This occurs because,   every  theory in this logic is finite.  %system is both

\begin{restatable}{proposition}{rltlxrecep}\label{prop:ltlxrecep}
    Let \(\baseb \in \finitepwset(\llangltlx)\), \(\mSet \subseteq \mUniltlx\) and
    $\rcpx: \finitepwset(\llangltlx) \times \powerset(\mUni_X) \to \finitepwset(\llangltlx)$ defined as 
    \[ \rcpx(\baseb, \mSet) = \{ \varphi \in \baseb \mid \mSet \models \varphi \} .\]
    It holds that $\rcpx(\baseb, \mSet) \in \FRsups(\modelsof{\baseb} \cup \mSet)$.  
\end{restatable}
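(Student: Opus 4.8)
The plan is to show that $\rcpx(\baseb, \mSet)$ is a $\subseteq$-minimal finitely representable superset of $\modelsof{\baseb} \cup \mSet$. Since $\rcpx(\baseb, \mSet)$ is by definition a finite subset of $\baseb$, it lies in $\finitepwset(\llangltlx)$, so $\modelsof{\rcpx(\baseb, \mSet)} \in \FRsets(\logsysltlx)$ automatically. Thus I only need two things: (a) $\modelsof{\baseb} \cup \mSet \subseteq \modelsof{\rcpx(\baseb, \mSet)}$, and (b) minimality, i.e. no element of $\FRsets(\logsysltlx)$ sits strictly between $\modelsof{\baseb}\cup\mSet$ and $\modelsof{\rcpx(\baseb,\mSet)}$.

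For (a): any model $A \in \mSet$ satisfies every $\varphi$ with $\mSet \models_X \varphi$, hence $A \models_X \rcpx(\baseb,\mSet)$, giving $\mSet \subseteq \modelsof{\rcpx(\baseb,\mSet)}$; and any $A \in \modelsof{\baseb}$ satisfies all of $\baseb$, in particular the subset $\rcpx(\baseb,\mSet)$, so $\modelsof{\baseb} \subseteq \modelsof{\rcpx(\baseb,\mSet)}$.

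For (b): suppose $\modelsof{\baseb}\cup\mSet \subseteq \mSet' \subseteq \modelsof{\rcpx(\baseb,\mSet)}$ with $\mSet' = \modelsof{\baseb'}$ for some $\baseb' \in \finitepwset(\llangltlx)$. I want to conclude $\mSet' = \modelsof{\rcpx(\baseb,\mSet)}$. The key structural fact about this fragment I intend to exploit is that a formula $X^i p$ is satisfied by a model $(M,s)$ precisely by a "closed" condition on the labelling, and crucially that for any set of models $A$, $A \models_X X^i p$ iff every member of $A$ does — so the set of formulae entailed by a set of models behaves well. I will argue: since $\mSet' \subseteq \modelsof{\rcpx(\baseb,\mSet)}$, every model in $\mSet'$ satisfies $\rcpx(\baseb,\mSet)$, so $\mSet' \models_X \varphi$ for each $\varphi \in \rcpx(\baseb,\mSet)$; it remains to show every $\psi \in \baseb'$ is, up to equivalence, implied — the cleanest route is to show $\modelsof{\baseb'} \supseteq \modelsof{\rcpx(\baseb,\mSet)}$ by checking that each $\psi \in \baseb'$ is satisfied by all of $\modelsof{\rcpx(\baseb,\mSet)}$. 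Since $\modelsof{\baseb}\cup\mSet \subseteq \modelsof{\baseb'}$, we get $\baseb \cup \mSet \models_X \psi$; in the NeXt fragment $\psi$ is (a conjunction of formulae) of the form $X^i p$, and one shows that $\modelsof{\baseb}\cup\mSet \models_X X^i p$ forces $X^i p \in \rcpx(\baseb,\mSet)$ whenever $X^ip$ appears in $\baseb$ — but $\psi$ need not be in $\baseb$, so here I will use that $\modelsof{\baseb} \models_X X^i p$ means $X^i p$ is a logical consequence of the finite conjunction $\bigwedge \baseb$, and in this fragment logical consequence of a satisfiable conjunction of $X^j q$'s is again such a conjunction "entailed pointwise", so every model satisfying $\rcpx(\baseb,\mSet)$ (which has the same models as $\baseb$ minus those killed by $\mSet$-incompatibility... ) also satisfies $\psi$.

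The main obstacle is precisely this last minimality step: making rigorous that $\modelsof{\rcpx(\baseb,\mSet)} \subseteq \modelsof{\baseb'}$, i.e. that the models explicitly retained by the syntactic filter already satisfy every consequence $\psi$ of $\baseb \cup \mSet$. This needs a genuine lemma about the LTL NeXt fragment — essentially that entailment between (conjunctions of) $X^i p$ formulae is determined by the finite set of conjuncts, so that $\baseb \cup \mSet \models_X \psi$ combined with "$\mSet$ satisfies $\psi$" and "$\baseb$'s relevant conjuncts survive in $\rcpx$" yields $\rcpx(\baseb,\mSet) \models_X \psi$. I expect to isolate and prove that entailment lemma first (it follows from the semantics of $X^ip$: $A \models_X X^ip$ iff $p$ holds at step $i$ on every path from every model in $A$, a condition that is preserved under taking supersets of the "constraint set" in the right direction), and then the result falls out.
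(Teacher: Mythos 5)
Your overall skeleton is right and matches the paper's in spirit: part (a) (that $\modelsof{\rcpx(\baseb,\mSet)}$ is a finitely representable superset of $\modelsof{\baseb}\cup\mSet$) is correct and cleanly argued, and you have correctly located where all the work lies, namely the minimality step (b). But the proposal stops exactly at the load-bearing point. The lemma you say you ``expect to isolate and prove'' is precisely the paper's auxiliary result that in $\logsysltlx$ every finite base is a theory: for every $\baseb\in\finitepwset(\llangltlx)$ and every $\varphi\in\llangltlx$, if $\modelsof{\baseb}\models_X\varphi$ then $\varphi\in\baseb$. With that in hand, minimality is a two-line argument: if $\modelsof{\baseb}\cup\mSet\subseteq\modelsof{\baseb'}\subseteq\modelsof{\rcpx(\baseb,\mSet)}$, then each $\psi\in\baseb'$ is satisfied by all of $\modelsof{\baseb}$, hence $\psi\in\baseb$ by the lemma, and is satisfied by all of $\mSet$, hence $\psi\in\rcpx(\baseb,\mSet)$; so $\baseb'\subseteq\rcpx(\baseb,\mSet)$ and therefore $\modelsof{\rcpx(\baseb,\mSet)}\subseteq\modelsof{\baseb'}$. (The paper packages the same idea as a proof by contradiction, using the theory property of both $\rcpx(\baseb,\mSet)$ and the hypothetical intermediate base.)

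The gap is that your gestured justification for the lemma --- ``a condition that is preserved under taking supersets of the constraint set in the right direction'' --- is not a proof and does not obviously yield one. What is actually needed is a counter-model construction: given a finite $\baseb$ and $X^i q\notin\baseb$, exhibit a pointed Kripke structure satisfying every formula of $\baseb$ but not $X^i q$. The paper does this explicitly by taking a finite linear structure $M=(S,R,\lambda)$ of length $m=\max\{k\mid X^k p\in\baseb\cup\{X^iq\}\}$ with a self-loop at the end and $\lambda(s_j)=\{p\in\propatoms\mid X^j p\in\baseb\}$, so that exactly the members of $\baseb$ hold at the initial state. Without this construction (or an equivalent one), nothing rules out a finite base $\baseb'$ containing a formula $\psi\notin\baseb$ that is nonetheless entailed by $\modelsof{\baseb}\cup\mSet$ and whose models sit strictly between $\modelsof{\baseb}\cup\mSet$ and $\modelsof{\rcpx(\baseb,\mSet)}$, which is exactly the scenario minimality must exclude. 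Also, a minor point: $\llangltlx$ contains only formulae of the form $X^i p$, with no conjunction connective (bases play the role of conjunctions), so your phrasing of $\psi$ as ``a conjunction of formulae'' should be dropped; the sentence in your step (b) trailing off in an ellipsis should likewise be replaced by the argument above.
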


%Using \(\rcpx\) we can prove that \(\logsysltlx\) is \mexnm-compatible.
Even though this logic is \mexnm-compatible, it is not \mconnm-compatible.

\begin{restatable}{theorem}{rltlxcompats}\label{ltlxcompats}
    $\logsysltlx$ is \mexnm-compatible but it is not \mconnm-compatible.
\end{restatable}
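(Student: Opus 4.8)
\textbf{Proof plan for \cref{ltlxcompats}.}

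The plan is to establish the two halves separately. The \mexnm-compatibility half is essentially immediate from \cref{prop:ltlxrecep}: that proposition already exhibits, for every finite base $\baseb$ and every $\mSet \subseteq \mUniltlx$, a finite base (namely $\rcpx(\baseb,\mSet)$) whose model set lies in $\FRsups(\modelsof{\baseb} \cup \mSet)$, so this set is non-empty, which is exactly \mexnm-compatibility. So the real work is the negative half: exhibiting a finite base $\baseb$ and a set of models $\mSet$ for which $\FRsubs(\modelsof{\baseb} \setminus \mSet, \logsysltlx) = \emptyset$. By \cref{incFRsubs}, to rule out \mconnm-compatibility it suffices to produce an $\mSet^\star \subseteq \mUniltlx$ (of the form $\modelsof{\baseb}\setminus\mSet$) that is not finitely representable, has some finitely representable subset, and yet has no immediate predecessor in $(\FRsets(\logsysltlx) \cup \{\mSet^\star\}, \subset)$ — i.e.\ a set that can be approximated arbitrarily closely from below by finitely representable sets but never exactly, and with no maximal such approximation.

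The key step is the choice of witness. I would take $\baseb = \emptyset$, so $\modelsof{\baseb} = \mUniltlx$, and choose $\mSet$ so that $\mUniltlx \setminus \mSet$ is the set of models satisfying the \emph{infinite} theory $\{X^m p \mid m \in \mathbb{N}\}$ for a fixed atom $p$ — concretely, $\mSet$ is the set of all $(M,s)$ that falsify $X^m p$ for at least one $m$. Each finite base over $\llangltlx$ is (up to model equivalence) a finite conjunction of formulas $X^{m} p$, so a finite base can force $p$ at only finitely many depths; hence $\mUniltlx \setminus \mSet$ is not finitely representable. Its finitely representable subsets are exactly the model sets of the finite bases $\{X^0 p, \dots, X^k p\}$ for $k \in \mathbb{N}$; these form a strictly decreasing chain $\modelsof{\{X^0 p\}} \supsetneq \modelsof{\{X^0 p, X^1 p\}} \supsetneq \cdots$ all containing $\mUniltlx \setminus \mSet$, so no one of them is $\subseteq$-maximal below $\mUniltlx\setminus\mSet$, i.e.\ there is no immediate predecessor. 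I would also need to check that \emph{every} finitely representable subset of $\mUniltlx\setminus\mSet$ is contained in one of these (so that the chain is genuinely cofinal and there is no maximal element lurking elsewhere): this follows because any finite base $\baseb'$ with $\modelsof{\baseb'} \subseteq \mUniltlx\setminus\mSet$ must, in particular, have all its models satisfy arbitrarily deep $X^m p$ only if\ldots\ actually one shows $\modelsof{\baseb'} \subseteq \modelsof{\{X^0p,\dots,X^kp\}}$ for $k$ the maximal nesting depth occurring in $\baseb'$, using that depths of $X$ not mentioned in $\baseb'$ are unconstrained. Then clauses (i)–(iii) of \cref{incFRsubs} all fail for this $\mSet^\star$: it is not finitely representable (not (i)), it has no immediate predecessor (not (ii)), and it has a finitely representable subset, e.g.\ the empty set is finitely representable via $X^0\perp$ — wait, $\perp$ is not in $\llangltlx$; instead note $\modelsof{\{p\}} \cap \modelsof{\{X p\}} \cap \cdots$ is nonempty, so some nonempty finitely representable subset exists, ruling out (iii). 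Hence $\logsysltlx$ is not \mconnm-compatible.

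The main obstacle I expect is the bookkeeping around which sets of models are finitely representable in $\logsysltlx$, i.e.\ a clean normal-form lemma stating that every finite base over $\llangltlx$ is model-equivalent to a finite set $\{X^{m_1}p_1, \dots, X^{m_k}p_k\}$ and that such a set constrains only the labelling at the depths and atoms explicitly appearing — together with the claim that distinct such sets give distinct model sets (so the chain above is strictly decreasing). Once that structural picture is pinned down, verifying the failure of all three disjuncts of \cref{incFRsubs} is routine, and the positive \mexnm-compatibility half is a one-line appeal to \cref{prop:ltlxrecep}.
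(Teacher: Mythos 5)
Your \mexnm{} half is exactly the paper's argument: \cref{prop:ltlxrecep} exhibits an element of $\FRsups(\modelsofx{\baseb}{}\cup\mSet)$ for every input, so that set is never empty. The problem is in the \mconnm{} half, where your central step rests on a subset/superset confusion. The sets $\modelsof{\{X^0p,\dots,X^kp\}}$ are \emph{supersets} of $\mSet^\star=\bigcap_{m}\modelsof{\{X^mp\}}$, not subsets: you say yourself that they all \emph{contain} $\mUniltlx\setminus\mSet$, so none of them lies ``below'' $\mSet^\star$ and the advertised strictly decreasing chain of finitely representable subsets with no maximal element simply does not exist. In fact $\mSet^\star$ has \emph{no} finitely representable subset at all: by the paper's \cref{obs:allfinite}, for any finite base $\baseb'$ and any $X^mp\notin\baseb'$ there is a model of $\baseb'$ falsifying $X^mp$, so no nonempty $\modelsof{\baseb'}$ is contained in $\mSet^\star$, and $\emptyset$ itself is not finitely representable since every finite base is satisfiable. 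Your treatment of clause (iii) of \cref{incFRsubs} is also off target: for \mconnm{} that clause asks for the absence of a finitely representable \emph{superset} of $\mSet^\star$ (cf.\ \cref{thirdEvcCond}), and it is refuted by noting that $\modelsof{\{p\}}\supseteq\mSet^\star$ is finitely representable --- not by exhibiting a subset. Your witness does happen to defeat all three clauses, but for the reason just given (no finitely representable subset whatsoever, yet a finitely representable superset exists), not for the reason you state.

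Once you see that the real obstruction is ``no finitely representable subset, i.e.\ $\emptyset\notin\FRsets(\logsysltlx)$,'' the detour through \cref{incFRsubs} and the infinite theory $\{X^mp\mid m\in\mathbb{N}\}$ becomes unnecessary. The paper's proof takes $\baseb=\{p\}$ and $\mSet=\modelsof{\baseb}$, so that $\modelsof{\baseb}\setminus\mSet=\emptyset$; since every finite base in $\llangltlx$ is satisfied by the one-state reflexive Kripke structure labelled with $p$, the empty set of models is not finitely representable, hence $\FRsubs(\emptyset,\logsysltlx)=\emptyset$ directly from \cref{def:FRsubs} and the definition of \mconnm-compatibility. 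I recommend replacing your witness and chain argument with this direct one, or, if you keep your $\mSet^\star$, rewriting the justification of clauses (ii) and (iii) along the lines above.
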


\subsection{The Case of Description Logic %\ALC{}
	\label{use:alc}}

To analyse the case of Description Logic (DL),
we study  \(\ALC\), which is a prototypical DL that shares many  similarities with other expressive logics in the DL family. Here we use the term \emph{ontology} to refer to a finite set of formulae---a finite base.
Let \NC{}, \NR{} and \NI{} be countably infinite and pairwise disjoint sets of concept, role, and individual names, respectively.
%Finally,
%We define it in the following.
\ALC{}
\emph{concepts} are built according to the rule:
\[
C::=  A \mid \neg C \mid (C \sqcap C) \mid \exists r.C,
\]
where $A \in \NC$.
An \(\ALC{}\) \emph{ontology}
is a set of expressions %$\alpha$
%,
%where $\varphi$ is
of the form
%formulae are defined as expressions $\varphi$ of the form
%\begin{align*}
%\[\alpha ::=
\[C(a) \mid r(a,b) \mid C \sqsubseteq D,\]
% \quad \varphi  ::= \alpha \mid \neg(\varphi) \mid %(\varphi \wedge \varphi),
%\]
%
%\end{align*}
where $C,D$ are \ALC{} concepts, $a, b \in \NI$, and $r \in \NR$. %\footnote{
	The semantics of the DLs considered here %\(\ALC{}\)
	is %and the definitions related to quasimodels are
	standard~\cite{baader_horrocks_lutz_sattler_2017}.

\begin{restatable}{theorem}{allALCcompat}%
	\label{ALCfecompat}
	%The satisfaction system
	\(\logsysALC\) is neither %\mconnm-compatible nor
	\mexnm-compatible nor \mconnm-compatible.
\end{restatable}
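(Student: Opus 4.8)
The goal is to show that $\logsysALC$ — the satisfaction system whose language is the set of $\ALC$ ontologies, whose models are the standard DL interpretations, and whose satisfaction relation is the standard one — fails both compatibility conditions. For this I would exhibit, in each case, a finite base $\baseb$ and a set of models $\mSet$ such that $\FRsubs(\modelsof{\baseb}\setminus\mSet,\logsysALC)=\emptyset$ (for \mconnm) and $\FRsups(\modelsof{\baseb}\cup\mSet,\logsysALC)=\emptyset$ (for \mexnm). By \Cref{incFRsubs}, it actually suffices to produce a single $\mSet'\subseteq\mUni$ that is not finitely representable, has no finitely representable immediate predecessor (resp.\ successor), and is sandwiched between finitely representable sets in the appropriate direction; the ``sandwich'' part is automatic because I will take $\baseb$ with $\modelsof{\baseb}=\mUni$ (the empty ontology) so that $\modelsof{\baseb}\setminus\mSet$ and $\modelsof{\baseb}\cup\mSet$ are themselves the witnesses. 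So the whole proof reduces to: find a set of $\ALC$ interpretations admitting arbitrarily close finitely-representable over-approximations but no least one, and dually a set admitting arbitrarily close under-approximations but no greatest one.

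The natural device is the well-known fact that $\ALC$ cannot express ``the role $r$ is well-founded'' or ``every $r$-chain terminates'', while it can bound the length of $r$-chains by any fixed finite number using nested $\exists r.\top$. Concretely, fix one role name $r$ and consider, for each $n\in\mathbb{N}$, the $\ALC$ sentence $\varphi_n \coloneqq \neg\exists r.\underbrace{\exists r.\cdots\exists r}_{n}.\top$, which says ``there is no $r$-path of length $n+1$''. For the \mexnm failure I would let $\mSet$ be the set of all interpretations that contain an $r$-path of every finite length but no infinite $r$-path — or, more simply, take $\mSet = \bigcup_n \modelsof{\{\varphi_n\}}^{c}$-style construction and set the target $\mSet^\ast \coloneqq \{I \mid I$ has finite but unbounded $r$-depth$\}$; then each $\modelsof{\{\varphi_n\}}^{\complement}$-type set over-approximates, the family $\bigcup_{n}\modelsof{\{\neg\varphi_n\}}$ gives a strictly increasing chain of finitely representable sets whose union is a non-finitely-representable set with no minimal finitely representable superset. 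For the \mconnm failure I take the dual: a strictly decreasing chain of finitely representable sets (e.g.\ $\modelsof{\{\varphi_n\}}$, which is the set of interpretations of $r$-depth $\le n$) whose intersection is ``$r$ is well-founded with bounded depth $0$'' — no, rather I want the intersection to be the non-finitely-representable ``every $r$-chain is finite'' set, obtained as $\bigcap_n \modelsof{\{\psi_n\}}$ for suitable $\psi_n$, and then argue there is no $\subseteq$-maximal finitely representable subset of it. I would then check the non-finite-representability by a compactness / pigeonhole argument: if a finite ontology $\mathcal{O}$ had exactly that set of models, $\mathcal{O}$ would have bounded ``modal depth'' $d$, and two interpretations agreeing up to $r$-depth $d$ but differing beyond it would be indistinguishable by $\mathcal{O}$, contradicting that $\mathcal{O}$ separates them.

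The key steps, in order: (1) recall from \Cref{incFRsubs} the exact combinatorial criterion, so that the task becomes constructing one bad $\mSet$ for each direction; (2) define the $\ALC$-definable ``depth-bounded'' sets $\modelsof{\varphi_n}$ and verify each is in $\FRsets(\logsysALC)$; (3) define the two limit sets (unbounded-finite-depth for \mexnm, all-chains-finite for \mconnm) and prove each is \emph{not} finitely representable via the bounded-modal-depth argument; (4) show the approximating chains are strictly monotone and cofinal/coinitial in the relevant ideal/filter of $\FRsets(\logsysALC)$, so no immediate predecessor/successor exists; (5) conclude via \Cref{incFRsubs} that clauses (i), (ii), (iii) all fail — in particular (iii) fails because $\mUni$ itself (resp.\ $\emptyset$, which is $\ALC$-definable as $\modelsof{A\sqcap\neg A\sqsubseteq \top}$-no, as $\modelsof{\top\sqsubseteq A\sqcap\neg A}$) witnesses a finitely representable super-/sub-set.

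The main obstacle I expect is step (3): proving rigorously that the limit set is not finitely representable. The clean way is a \emph{bounded modal/role depth} lemma for $\ALC$ — every $\ALC$ ontology $\mathcal{O}$ has a finite maximal nesting depth $d(\mathcal{O})$ of $\exists$, and satisfaction of $\mathcal{O}$ in a pointed interpretation depends only on the $d(\mathcal{O})$-neighbourhood — together with an explicit pair of interpretations that are $d$-bisimilar for every $d$ only after truncation, so that for every candidate $\mathcal{O}$ one can find an interpretation it wrongly classifies. Handling the TBox/ABox parts (not just concept expressions) requires a little care, since $\sqsubseteq$ quantifies over all individuals; but since one can always take disjoint-union or pointed-generated submodels, the standard $\ALC$ (un)definability machinery goes through. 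Everything else is routine lattice bookkeeping on $(\FRsets(\logsysALC),\subseteq)$.
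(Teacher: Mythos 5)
There is a genuine gap, and it is concentrated in the \mexnm{} half. The paper's proof that \(\logsysALC\) is not \mexnm-compatible (\cref{ALCecompat}) does not use role depth at all: it relies essentially on the \emph{infinite signature}. Taking the single model \(M\) with \(A_i^M=\{i\}\) and \(a_i^M=i\) for all \(i\in\mathbb{N}\), any finite ontology \(\baseb\) with \(M\in\modelsof{\baseb}\) must leave some assertion \(A_k(a_k)\) undetermined (a finite ontology cannot fix infinitely many of them), and then \(\baseb\cup\{A_k(a_k)\}\) is a strictly smaller finitely representable superset of \(\{M\}\); hence \(\FRsups(\{M\},\logsysALC)=\emptyset\). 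Your proposed witness for \mexnm{} --- the union \(\bigcup_n\modelsof{\{\varphi_n\}}\) of the bounded-\(r\)-depth sets, i.e.\ the interpretations of finite \(r\)-depth from \(a\) --- very likely fails: by a truncation argument (cut any countermodel at depth \(d(\Omc)+1\) from the point witnessing the violated axiom and interpret \(a\) as an isolated element), no non-tautological ontology is entailed by ``finite \(r\)-depth from \(a\)'', so the \emph{only} finitely representable superset of that set is \(\mUni\) itself, and \(\FRsups\) of it is \(\{\mUni\}\neq\emptyset\). This is consistent with the paper's own remark that non-\mexnm-compatibility of \(\logsysALC\) ``is essentially due to having an infinite signature'' and with its switch to DL-Lite for the finite-signature reception question; your depth-based route would, if it worked, prove the stronger finite-signature statement, which the paper conspicuously does not claim. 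You need a different construction here, and the infinite-signature one is the natural candidate.

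For the \mconnm{} half your idea matches the paper's (\cref{ALCrcompat}): the target is the set \(\mSet\) of models with no infinite \(r\)-chain from \(a\), approximated by the depth-\(n\) models \(M^n\). But your step (4) hides the actual work. To refute clause (ii) of \cref{incFRsubs} you must show that \emph{every} finitely representable subset of \(\mSet\) extends to a strictly larger one inside \(\mSet\); asserting that your increasing chain is ``cofinal in the ideal of f.r.\ subsets'' is exactly the unproved claim, and it does not follow from the chain being strictly increasing with non-f.r.\ union. The paper argues by cases on an arbitrary candidate \(\baseb\) with \(\modelsof{\baseb}\subseteq\mSet\): if some \(M^n\not\models\baseb\), one explicitly disjoins a description of \(M^n\) (using \(\exists r^{i}.\top\sqcap\neg\exists r^{i+1}.\top\)) to obtain a strictly larger f.r.\ subset of \(\mSet\); otherwise all \(M^n\models\baseb\), and a structural-induction limit lemma (your bounded-depth/bisimulation lemma plays this role) forces \(M^{\infty}\models\baseb\), contradicting \(\modelsof{\baseb}\subseteq\mSet\). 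Your sketch supplies the second horn but omits the first, which is indispensable.
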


% we start with a very basic satisfaction system containing only assertions.
Not being \mexnm-compatible is essentially due
to having an infinite signature. Indeed,
this is already the case for
 the satisfaction system where the language
allows only (positive and negative) \emph{assertions}, which are expressions of the form $A(a),r(a,b),\neg A(a),\neg r(a,b)$, where $A\in\NC$, $r\in\NR$, and $a,b\in\NI$. We denote it by \(\logsysDLABox\).

\begin{restatable}{theorem}{allABoxcompat}%
	\label{dlaboxcompat}
	%The satisfaction system
	\(\logsysDLABox\) is  not %\mconnm-compatible nor
	\mexnm-compatible but it is \mconnm-compatible.
\end{restatable}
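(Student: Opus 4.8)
The plan is to prove the two halves separately, using throughout that in \(\logsysDLABox\) the sentences are exactly the (positive and negative) assertions, so a base behaves like a set of propositional literals over the infinite alphabet of facts \(A(a)\) and \(r(a,b)\); that there is an inconsistent finite base, e.g.\ \(\{A(a),\neg A(a)\}\), so \(\emptyset\in\FRsets(\logsysDLABox)\); and that \(\modelsof{\emptyset}=\mUni\in\FRsets(\logsysDLABox)\). A recurring lemma, which I would establish first, is: for a \emph{consistent} finite base \(\baseb'\) and an assertion \(\ell\notin\baseb'\) there is a model of \(\baseb'\) falsifying \(\ell\); equivalently, a consistent base entails exactly the assertions it contains. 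To see this, start from the ``minimal'' model \(M\) with domain \(\NI\), \(a^M=a\), \(B^M=\{a\mid B(a)\in\baseb'\}\), \(r^M=\{(a,b)\mid r(a,b)\in\baseb'\}\), which satisfies \(\baseb'\); if \(\ell\) is not already false in \(M\), toggle the single fact named by \(\ell\); since all individual names are interpreted distinctly and \(\ell\notin\baseb'\), this disturbs no other assertion of \(\baseb'\).

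\emph{\(\logsysDLABox\) is not \mexnm-compatible.} Fix pairwise distinct \(a_0,a_1,\dots\in\NI\) and \(A\in\NC\), and let \(Y:=\{M\in\mUni\mid M\not\models A(a_i)\text{ for every }i\}\). With the inconsistent base \(\baseb_0=\{A(a_0),\neg A(a_0)\}\) and input \(Y\) we have \(\modelsof{\baseb_0}\cup Y=Y\), so it suffices to show \(\FRsups(Y,\logsysDLABox)=\emptyset\). First, \(Y\notin\FRsets(\logsysDLABox)\): if \(\modelsof{\baseb'}=Y\) for a finite \(\baseb'\) then \(\baseb'\) is consistent (as \(Y\neq\emptyset\)); picking \(j\) with \(a_j\) not occurring in \(\baseb'\) and applying the lemma to \(\ell=\neg A(a_j)\) yields a model of \(\baseb'\) satisfying \(A(a_j)\), hence outside \(Y\) --- a contradiction. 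Second, given any \(\mSet'\in\FRsets(\logsysDLABox)\) with \(\mSet'\supseteq Y\), write \(\mSet'=\modelsof{\baseb'}\) and set \(Y_n:=\modelsof{\{\neg A(a_0),\dots,\neg A(a_n)\}}\); by the RMBP, \(\mSet'\cap Y_n=\modelsof{\baseb'\cup\{\neg A(a_0),\dots,\neg A(a_n)\}}\in\FRsets(\logsysDLABox)\) and \(Y\subseteq\mSet'\cap Y_n\subseteq\mSet'\). If \(\mSet'\cap Y_n=\mSet'\) for every \(n\) then \(\mSet'\subseteq\bigcap_n Y_n=Y\), so \(\mSet'=Y\notin\FRsets(\logsysDLABox)\), a contradiction; hence some \(\mSet'\cap Y_n\) is a strictly smaller finitely representable superset of \(Y\), so \(\mSet'\) is not \(\subseteq\)-minimal. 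Thus \(\FRsups(Y,\logsysDLABox)=\emptyset\). (Equivalently, none of clauses (i)--(iii) of \cref{incFRsubs} holds for \(\mSet=Y\).) This is the infinite-signature obstruction anticipated in the text: no finite base pins down ``\(a_i\notin A\) for all \(i\)'', but the \(Y_n\) approximate \(Y\) arbitrarily closely from above.

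\emph{\(\logsysDLABox\) is \mconnm-compatible.} Here the plan is to show that \((\FRsets(\logsysDLABox),\subseteq)\) satisfies the ascending chain condition; then the nonempty family \(\{\mSet'\in\FRsets(\logsysDLABox)\mid\mSet'\subseteq\modelsof{\baseb}\setminus\mSet\}\) (nonempty since \(\emptyset\in\FRsets(\logsysDLABox)\)) has a \(\subseteq\)-maximal element, i.e.\ \(\FRsubs(\modelsof{\baseb}\setminus\mSet,\logsysDLABox)\neq\emptyset\), which is exactly \mconnm-compatibility. For the ACC, take a strictly ascending chain \(\modelsof{\baseb'_0}\subsetneq\modelsof{\baseb'_1}\subsetneq\cdots\) of finitely representable sets. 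An inconsistent \(\baseb'_i\) can occur only as \(\modelsof{\baseb'_0}=\emptyset\) at the bottom, so we may assume all \(\baseb'_i\) consistent. By the lemma each \(\baseb'_i\) entails exactly the (finitely many) assertions in it, and for consistent bases \(\modelsof{\baseb'_i}\subseteq\modelsof{\baseb'_{i+1}}\) iff every assertion entailed by \(\baseb'_{i+1}\) is entailed by \(\baseb'_i\); hence the chain induces a strictly descending chain of subsets of the finite set of assertions entailed by \(\baseb'_0\), which must be finite. Therefore the ACC holds.

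The step I expect to be the real obstacle --- and the only genuinely non-routine one --- is the lemma on assertion-only bases used in both halves: that a consistent set of assertions entails no assertion beyond those it literally contains. The delicate point is the absence of a unique-name assumption: to run the ``toggle one fact'' argument cleanly one must first pass to a model in which the finitely many individual names occurring in the base are interpreted by pairwise distinct domain elements, so that changing the status of the element denoted by \(a_j\) cannot accidentally break a \emph{negative} assertion \(\neg A(c)\) of the base. Once that lemma is in place, both halves are bookkeeping with the RMBP and with the finiteness of the signature of any finite base.
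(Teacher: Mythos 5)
Your proposal is correct and follows essentially the same route as the paper: for \mconnm-compatibility you establish the ascending chain condition on \(\FRsets(\logsysDLABox)\) via the antitone correspondence between consistent assertion sets and their model sets (the paper's ``\(\modelsof{\Omc}\subset\modelsof{\Omc'}\) iff \(\Omc'\subset\Omc\)''), merely making explicit the key lemma that a consistent assertion base entails exactly its members; for non-\mexnm-compatibility you exploit the infinite signature exactly as the paper does (via its reference to the \(\ALC\) argument), differing only in the cosmetic choice of target set (your \(Y\) cut out by infinitely many negative literals over one concept name, versus the paper's singleton \(\{M\}\) over infinitely many concept names), with the same strict-refinement mechanism showing no minimal finitely representable superset exists.
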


%Not being \mexnm-compatible in such a
%simple setting is due to having
%infinite $\NC, \NR, \NI$.
%From now on, we analyse the case

Finally, we  consider the case in which
the signature is finite, that is, the sets
$\NC, \NR, \NI$ are
disjoint, non-empty, and \emph{finite} (but models can still be infinite).
Our result that \(\logsysALC\)  is not \mconnm-compatible already holds is this case.
So we consider a simpler but popular DL called DL-Lite$_\Rmc$.
 %a simpler DL called, DL-Lite$_\Rmc$ over disjoint and \emph{finite} sets $\NC, \NR, \NI$. \nb{Ana: add ref}
%which is a simpler but  popular
%DL %description logic
%but nevertheless expressive enough
%for expressing ontologies
%applied for %query answering and
%ontology-based data integration.
DL-Lite$_\Rmc$ \emph{role and concept
	inclusions} are expressions of the form $S \sqsubseteq T$ and $B \sqsubseteq C$,
respectively, where $S, T$ are role expressions and $B, C$ are
concept expressions built through the  rules
\[S ::= r \mid r^-, T ::= S \mid \neg S, B ::= A \mid \exists S, C ::= B \mid \neg B,\]
with $r \in \NR$ and $A \in \NC$. A \emph{DL-Lite$_\Rmc$ ontology} is a set of role and concept inclusions and (positive) assertions, as defined above.
% expressions as defined for \ALC{}, except that %axiom is a DL-LiteR
%role and concept inclusions are as defined for DL-Lite$_\Rmc$.
We denote by \(\logsysDLLITE\) the satisfaction system
with the entailment relation given by the standard semantics of DL-Lite$_\Rmc$~\citep{baader_horrocks_lutz_sattler_2017}.
\begin{restatable}{theorem}{allDLLitecompat}%
	\label{dllitecompat}
	%The satisfaction system
	\(\logsysDLLITE\) (with finite signature) is  %not %\mconnm-compatible nor
	\mexnm-compatible and
	%but it is
	\mconnm-compatible.
\end{restatable}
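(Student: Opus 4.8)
The plan is to reduce the statement to \cref{evcRcpSuff}. Concretely, I will show that, when the signature $\NC,\NR,\NI$ is finite, $\FRsets(\logsysDLLITE)$ is finite, and then exhibit finite bases witnessing $\emptyset\in\FRsets(\logsysDLLITE)$ and $\mUni\in\FRsets(\logsysDLLITE)$. The two compatibility claims then follow at once from \cref{evcRcpSuff}.

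For the finiteness of $\FRsets(\logsysDLLITE)$, the key observation is that the DL-Lite$_\Rmc$ grammar has bounded nesting depth: role expressions are only $r$ or $r^-$ with $r\in\NR$; basic concepts are only $A\in\NC$ or $\exists S$; and general concepts are only $B$ or $\neg B$ for a basic concept $B$ (in particular, there are no qualified existential restrictions). Hence, over a finite signature there are only finitely many role inclusions $S\sqsubseteq T$, concept inclusions $B\sqsubseteq C$, and assertions $A(a)$, $r(a,b)$; that is, the language $\llang$ of $\logsysDLLITE$ is finite. Consequently $\finitepwset(\llang)=\powerset(\llang)$ is finite, and since $\FRsets(\logsysDLLITE)=\{\modelsof{\baseb}\mid \baseb\in\finitepwset(\llang)\}$, this set is finite as well.

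It remains to place $\emptyset$ and $\mUni$ in $\FRsets(\logsysDLLITE)$. Since $\NC$ and $\NI$ are non-empty, fix $A\in\NC$ and $a\in\NI$: the ontology $\{A(a),\,A\sqsubseteq\neg A\}$ uses only a positive assertion and one admissible concept inclusion, and it is unsatisfiable (any model would have to place $a$ in the interpretation of both $A$ and $\neg A$), so $\modelsof{\{A(a),A\sqsubseteq\neg A\}}=\emptyset$, witnessing $\emptyset\in\FRsets(\logsysDLLITE)$. The empty ontology is satisfied by every interpretation, so $\modelsof{\emptyset}=\mUni$, witnessing $\mUni\in\FRsets(\logsysDLLITE)$. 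Applying \cref{evcRcpSuff} to the finite $\FRsets(\logsysDLLITE)$ yields that $\logsysDLLITE$ is both \mconnm-compatible and \mexnm-compatible.

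The only genuinely delicate point is the claim that the language is finite, which rests entirely on DL-Lite$_\Rmc$ forbidding nesting inside concepts — exactly what the grammar stipulates; once that is granted, the remaining two memberships are a routine check that a finite signature suffices to write one inconsistent ontology and that the empty ontology is globally satisfied.
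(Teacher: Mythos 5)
Your proof is correct, and it reaches the conclusion by a genuinely different and more economical route than the paper. The paper does not pass through \cref{evcRcpSuff} at all: it verifies \mconnm- and \mexnm-compatibility directly from the definitions, by showing that neither failure mode can occur --- (a) the absence of \emph{any} finitely representable subset (resp.\ superset) is excluded because DL-Lite$_\Rmc$ can express contradiction (resp.\ tautology), and (b) the absence of a \emph{maximal} subset (resp.\ \emph{minimal} superset) is excluded by ruling out infinite strictly increasing (resp.\ decreasing) chains in \(\FRsets(\logsysDLLITE)\), which the paper does via two claims: the deductive closure \(\Omc^t\) of a satisfiable ontology over a finite signature is finite, and \(\modelsofx{\Omc}{} \subset \modelsofx{\Omc'}{}\) forces \(\Omc^t \supset \Omc'^t\), so an infinite chain of model sets would yield an infinite strictly decreasing chain of finite sets. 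You instead observe the stronger fact that the entire language is finite (no nesting in the DL-Lite$_\Rmc$ grammar), hence \(\FRsets(\logsysDLLITE)\) is finite, which makes the chain argument unnecessary and lets \cref{evcRcpSuff} do all the work once \(\emptyset\) and \(\mUni\) are witnessed --- your witnesses \(\{A(a), A \sqsubseteq \neg A\}\) and the empty ontology are both legal in the grammar and do what you claim (the paper only asserts expressibility of contradiction and tautology without exhibiting witnesses). Both arguments ultimately rest on the same structural fact about DL-Lite$_\Rmc$, namely that a finite signature yields only finitely many axioms; your version exploits it more directly, while the paper's chain-based argument has the advantage of being reusable for logics where the language is infinite but deductive closures remain finite.
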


\section{Related Work}\label{sec:relatedWorks}

%Previously we mentioned that 
%Belief Change is a research discipline within AI that has significantly developed in the last decades\citep{fermeBook, ferme30}. 
%an area which has alreadydeveloped considerably.  
%
%It investigates how a rational agent should autonomously modify its body of knowledge when it faces new pieces of information.  An agent's knowledge is usually represented as a set of formulae specified in an underlying logic,  and there are two main ways of representing such sets: 
%An agent's knowledge is usually represented as a set of formulae specified in an underlying logic.  
%There are two main approaches to represent an agent's knowledge: 
%In the foundational paradigm of Belief Change, the AGM theory, 
%a theory or belief set
%is a logically closed set of formulae % (known as a theory or belief set),
%initially proposed within the AGM paradigm; 
%and a base is a set of formulae where the logical
%closure is not required.   
%Theories are used to represent an idealisation
%of how the knowledge of an agent could be understood, while 

Finite representation of epistemic states have been addressed in Belief Change literature by representing an agent's knowledge via a finite set of formulae known as a finite belief base \citep{Nebel1991,Dixon1993}. %
Belief change operations on belief bases,  however,  are syntax sensitive: they preserve the syntactic form of the original belief base as much as possible.  
This syntax sensitivity also appears in traditional approaches for Ontology Repair and Ontology Evolution \citep{Kalyanpur2006,Suntisrivaraporn2009}.  
%preserve the syntactic form of the ontology as much aspossible~\citep{Kalyanpur2006,Suntisrivaraporn2009}.   have been explored not only within Belief Change itself  but in other areas such as Ontology Repair and Ontology Evolution \citep{or,oe} which tries to preserve as much as possible 
Although finite bases trivially guarantee finite representability, syntax sensitivity might compel drastic loss of information as noticed by \citet{Hansson1993}.  
The main reason is that applying an operation in the finite base is not equivalent to applying an operation on the epistemic state generated by the same base, in general.
%The main issue is that contraction /expansion on finite bases do not necessarily correspond to contraction/expansion on the epistemic state generated by the finite base.   
The new paradigm we defined performs \mconnm{} and \mexnm{} on the epistemic state generate from the finite base, that is, it is not sensitive to syntax.
The problem of loss of information due to syntax sensitivity has been studied in Belief Change pseudo-contraction~\citep{Santos2018}. 
Thus, our paradigm approaches the concept of pseudo-contraction with the extra condition of finite representability.  

To minimize the drastic loss of syntax sensitive operations, \citet{Troquard2018} proposed to repair DL ontologies by
weakening axioms using refinement operators.  
Building on this study, \citet{Baader2018} devised the theory of \emph{gentle repairs}, which also aims at
keeping most of the information within the ontology upon repair.  In fact,  
gentle repairs are type of pseudo-contractions~\citep{Matos2019}. In
this same category, we include the Belief Change operations
based on concept relaxation~\citep{Aiguier2018}.
These studies, however, do not answer the question of finding an optimal solution.
Meanwhile, we give conditions that guarantee that our operations perform minimal changes on epistemic states.
\Citet{Baader2022} propose to repair \(\EL\) ontologies by modifying only their ABox,   %in a specific setting (namely, that of ABox repair) 
 preserving as many entailments as possible.  %of a given language as possible.   
Still, in this approach,  one cannot contract all necessary kinds of information, as the TBox cannot be modified.  %contraction in these setting not every kind of inalways consider the input as a finite set of formulas.

%In all these approaches the contraction is postponed to a bigger finite belief base,  in the hope of preserving more information, and minimising a bit the loss of information.  However,  these approaches suffers from the drawback that contraction is not performed in the epistemic state represented by the finite base,  which means they still incur in unnecessary loss of information. \todo{RG:\ I think these last two sentences can be a bit confusing (regarding postponing) and risky (are we sure that there will be loss of informationn, in particular in~\cite{Baader2018,Hieke2021} can we say that the loss is unnecessary?)} 

%%\Citet{Katsuno1991} formalise traditional belief revision operations using a single
%formula to represent the whole belief set. This is possible because they only
%consider finitary propositional languages. The revision operation constructed
%satisfies all the AGM postulates for revision. Furthermore, they investigate how
%different proposals for revision functions relate to preorders between models.
%These proposals include Dalal's revision \cite{Dalal1989}, who also established
%the principle of ``Irrelevance of Syntax'' which states that the syntactical
%form of the body of knowledge should not affect the result of a Belief Change
%operation.

Other works in Belief Change that consider finite representability are:  %in which the body of knowledge is represented in a finite way include the formalisation of revision due to
(i) revision by \citet{Katsuno1991} and (ii) base-generated operations by \citet{Hansson1996}.  In
the former, \Citet{Katsuno1991} assumes an agent's epistemic state is represented as %formalise AGM belief revision operations on 
a single formula.  %to represent the whole belief set. 
This is possible because they only
consider finitary propositional languages. \Citet{Hansson1996} provides
a characterisation of Belief Change operations over finite bases but restricted
for logics which satisfy all the AGM assumptions (such as classical propositional logic), while we have shown that our approach works in other logics as well.
%
%Although \citet{Katsuno1991} employs a finite representation, the computability
%of Belief Change is not one of the main motivations. \Citet{Nebel1991}
%mentions the particular case of finite bases which can be seen as propositional
%databases. \Citet{Hansson1996} defines Belief Change operations on belief sets
%by looking at bases which represent them. \citeauthor{Hansson1996} provides
%a characterisation of base-generated operations for classical propositional
%logic but just mentions that obtaining finite bases for belief sets is possible.
%Another approach to Belief Change focusing on finite bases is due to
%\citeauthor{Dixon1993}, which provide an implementation for FOL (giving up
%completeness of reasoners). They address other computational issues and provide
%a characterisation of the resulting functions using epistemic
%entrenchments (which can also be used to characterise AGM-style operations).
%
%It is also possible to have models as the representation of the agent's knowledge. 

As for Belief Change operation on models,  
\Citet{Guerra2019} consider modifying a single Kripke model into a new one %the 
%light of a 
that satisfies a given formula in 
%that an agent's epistemic state is represented as a single finite model,  while  new information are represented as a single formula in the 
Linear Temporal Logics (LTL) \citep{ClarkeBook}.
While they provide an AGM-style characterisation, there is no guarantee of finite representability.
\Citet{Hieke2021} devise an approach for contraction by formula in DL \(\EL\) ontologies that employs the notion of counter-models.
Even so, while a model is employed to derive the final outcome of the contraction, the input is still a single formula.
Hence, despite using finite bases, our framework is more general because we accept arbitrary sets of models as input.

\section{Conclusion and Future Work}%
\label{sec:conc}

%In this work, we have
We introduced a new paradigm of Belief Change: an agent's
epistemic state is represented as a finite base,  while incoming information are
represented as a set of models.
The agent can either incorporate the  incoming
models (via \mexnm{}) or remove them (via \mconnm{}).
In either case,  the resulting belief base must be finitely representable.
%
%When confronted with a new information in the format of a model $M$,  an agent might present one of the following actions:
%In this setting,  an agent might present three action
%In this work,  we have introduced a new kind of Belief Change operation: Belief Change via models.
% In our approach,  an agent is confronted with a new piece of information in the format of a finite model,
%  and it is compelled to modify its current epistemic state,
%   represented as a single finite formula,  either incorporating the new model,  called \mexnm{}; or
%   removing it,  called \mconnm{}.    %We show that in the classical propositional case,  \mexnm{} and contraction is a trivial task and is equivalent to simply performing an expansion or contraction by a finite formula.  We have shown that, in the general non-classical case,  these operations are more complicated,  when one requires the new epistemic state to be finitely represented.
%Due to the finite representation requirement,
The standard rationality postulates of Belief Change do not guarantee finite representability.
%Due to the finite representation constraint, the standard rationality postulates of Belief Change are not suitable for this setting,  mainly because
%, however,  presents a tough challenge for minimal change:
%in general the input model
%an incoming model cannot be uniquely incorporated or removed without adding or removing some extra models.
Hence, we proposed new postulates that capture a notion of minimal change in this setting for both \mconnm{} and \mexnm{}.
We also presented two constructive classes of model change operations that are precisely
characterised by such sets of rationality postulates.
As a case study, we  investigated how this new paradigm works in various logics.%\ALC{} and in propositional logic.
%

%Unlike the standard paradigms of Belief Change, we make almost no assumption
%about the underlying logic being used to express an agent's knowledge base.
%In fact,  similar in spirit to \citep{Aiguier2018} and \citep{DelgrandePW18},
%we require only that the underlying logics has a \textit{satisfaction system}.

%\textcolor{black}{%Compressed
%~ As future work, we mention the
%~ generalisation of the framework presented here by allowing the incoming
%~ information to be expressed as partial
%~ interpretations or as sets of models, instead of a single fully specified model.
%~ On the other hand, we can also instantiate model change
%~ operations to other logics.
%As
%mentioned in \Cref{sec:modelChange},
Eviction can lead to an inconsistent belief
base,  in the case that all models are removed.  If consistency is required,
then a more sophisticated model operation could be defined with the caveat that,
in behalf of consistency,  other models can be assimilated during the removal of an input model.  This third
%,  but quite more complicated model operation,
model operation is similar in spirit to formula revision.
%Proposing minimal change for \mexnm{} and \mconnm{} %as we are about to see,
%is already a challenging task. %; complicating matters in order to cover model revision  is even more challenging.
%For this reason,
We leave model revision as a future work.  We envisage that the
results we obtain for \mconnm{} and \mexnm{} shall shed light towards this other
operation.
%
%we leave the definition a `model
%revision' as future work.
Another line of research concerns
the effects of partially constraining the structure of the resulting base,
in the spirit of
pseudo-contractions.

% Use \bibliography{yourbibfile} instead or the References section will not appear in your paper
\section*{Acknowledgements}

Part of this work has been done in the context of CEDAS (Center for Data
Science, University of Bergen, Norway).
The first author is supported by the ERC project ``Lossy Preprocessing''
(LOPRE), grant number 819416, led by Prof Saket Saurabh.
The second author is supported by the NFR project ``Learning Description Logic
Ontologies'', grant number 316022.
The third author is supported by the
German Research Association (DFG), project number 424710479.

\bibliography{references.bib}

\begin{thebibliography}{37}
\providecommand{\natexlab}[1]{#1}

\bibitem[{Aiguier et~al.(2018)Aiguier, Atif, Bloch, and Hudelot}]{Aiguier2018}
Aiguier, M.; Atif, J.; Bloch, I.; and Hudelot, C. 2018.
\newblock Belief revision, minimal change and relaxation: A general framework
  based on satisfaction systems, and applications to description logics.
\newblock \emph{Artificial Intelligence}, 256: 160--180.

\bibitem[{Alchourr{\'{o}}n, G{\"{a}}rdenfors, and
  Makinson(1985)}]{Alchourron1985}
Alchourr{\'{o}}n, C.~E.; G{\"{a}}rdenfors, P.; and Makinson, D. 1985.
\newblock {On the Logic of Theory Change: Partial Meet Contraction and Revision
  Functions}.
\newblock \emph{Journal of Symbolic Logic}, 50(2): 510--530.

\bibitem[{Arias, Khardon, and Maloberti(2007)}]{Arias2007}
Arias, M.; Khardon, R.; and Maloberti, J. 2007.
\newblock Learning Horn Expressions with {LOGAN-H}.
\newblock \emph{J. Mach. Learn. Res.}, 8: 549--587.

\bibitem[{Baader et~al.(2017)Baader, Horrocks, Lutz, and
  Sattler}]{baader_horrocks_lutz_sattler_2017}
Baader, F.; Horrocks, I.; Lutz, C.; and Sattler, U. 2017.
\newblock \emph{An Introduction to Description Logic}.
\newblock Cambridge University Press.

\bibitem[{Baader et~al.(2022)Baader, Koopmann, Kriegel, and
  Nuradiansyah}]{Baader2022}
Baader, F.; Koopmann, P.; Kriegel, F.; and Nuradiansyah, A. 2022.
\newblock {Optimal {ABox} Repair w.r.t. Static EL TBoxes: From Quantified
  ABoxes Back to ABoxes}.
\newblock In \emph{The Semantic Web}. Springer International Publishing.

\bibitem[{Baader et~al.(2018)Baader, Kriegel, Nuradiansyah, and
  Pe{\~{n}}aloza}]{Baader2018}
Baader, F.; Kriegel, F.; Nuradiansyah, A.; and Pe{\~{n}}aloza, R. 2018.
\newblock {Making Repairs in Description Logics More Gentle}.
\newblock In \emph{{KR} 2018}. {AAAI} Press.

\bibitem[{Bergmann(2008)}]{Bergmann2008}
Bergmann, M. 2008.
\newblock \emph{{An Introduction to Many-Valued and Fuzzy Logic}}.
\newblock Cambridge University Press.

\bibitem[{Clarke et~al.(2018)Clarke, Grumberg, Kroening, Peled, and
  Veith}]{ClarkeBook}
Clarke, E.~M.; Grumberg, O.; Kroening, D.; Peled, D.~A.; and Veith, H. 2018.
\newblock \emph{Model checking, 2nd Edition}.
\newblock {MIT} Press.
\newblock ISBN 978-0-262-03883-6.

\bibitem[{Dalal(1988)}]{Dalal88}
Dalal, M. 1988.
\newblock Investigations into a Theory of Knowledge Base Revision.
\newblock In \emph{Proceedings of the 7th National Conference on Artificial
  Intelligence}, 475--479. {AAAI} Press / The {MIT} Press.

\bibitem[{{De Raedt}(1997)}]{DeRaedt1997}
{De Raedt}, L. 1997.
\newblock Logical settings for concept-learning.
\newblock \emph{Artificial Intelligence}, 95(1): 187--201.

\bibitem[{Delgrande, Peppas, and Woltran(2018)}]{DelgrandePW18}
Delgrande, J.~P.; Peppas, P.; and Woltran, S. 2018.
\newblock General Belief Revision.
\newblock \emph{J. {ACM}}, 65(5): 29:1--29:34.

\bibitem[{Delgrande and Wassermann(2010)}]{DelgrandeW10}
Delgrande, J.~P.; and Wassermann, R. 2010.
\newblock Horn Clause Contraction Functions: Belief Set and Belief Base
  Approaches.
\newblock In \emph{Principles of Knowledge Representation and Reasoning:
  Proceedings of the Twelfth International Conference, {KR} 2010, Toronto,
  Ontario, Canada, May 9-13, 2010}. {AAAI} Press.

\bibitem[{Delgrande and Wassermann(2013)}]{DelgrandeW13}
Delgrande, J.~P.; and Wassermann, R. 2013.
\newblock Horn Clause Contraction Functions.
\newblock \emph{J. Artif. Intell. Res.}, 48: 475--511.

\bibitem[{Dixon and Wobcke(1993)}]{Dixon1993}
Dixon, S.; and Wobcke, W. 1993.
\newblock {The Implementation of a First-Order Logic AGM Belief Revision
  System}.
\newblock In \emph{{ICTAI} 1993}, 40--47. {IEEE} Computer Society.

\bibitem[{Dixon(1994)}]{dix94}
Dixon, S.~E. 1994.
\newblock \emph{Belief revision: A computational approach}.
\newblock Ph.D. thesis, University of Sydney.

\bibitem[{Guerra and Wassermann(2019)}]{Guerra2019}
Guerra, P.~T.; and Wassermann, R. 2019.
\newblock {Two AGM-style characterizations of model repair}.
\newblock \emph{Ann. Math. Artif. Intell.}, 87(3): 233--257.

\bibitem[{Hansson(1997)}]{Hansson1997}
Hansson, S. 1997.
\newblock Semi-revision.
\newblock \emph{Journal of Applied Non-Classical Logics}, 7(1-2): 151--175.

\bibitem[{Hansson(1993{\natexlab{a}})}]{Hansson1993}
Hansson, S.~O. 1993{\natexlab{a}}.
\newblock Changes of disjunctively closed bases.
\newblock \emph{Journal of Logic, Language and Information}, 2(4): 255--284.

\bibitem[{Hansson(1993{\natexlab{b}})}]{Hansson93}
Hansson, S.~O. 1993{\natexlab{b}}.
\newblock Reversing the Levi identity.
\newblock \emph{J. Philos. Log.}, 22(6): 637--669.

\bibitem[{Hansson(1996)}]{Hansson1996}
Hansson, S.~O. 1996.
\newblock {Knowledge-Level Analysis of Belief Base Operations}.
\newblock \emph{Artificial Intelligence}, 82(1-2): 215--235.

\bibitem[{Hansson(1999)}]{Hansson1999}
Hansson, S.~O. 1999.
\newblock \emph{{A Textbook of Belief Dynamics: Theory Change and Database
  Updating}}.
\newblock Applied Logic Series. Kluwer Academic Publishers.

\bibitem[{Hieke, Kriegel, and Nuradiansyah(2021)}]{Hieke2021}
Hieke, W.; Kriegel, F.; and Nuradiansyah, A. 2021.
\newblock Repairing $\mathcal{EL}$ {TBoxes} by Means of Countermodels Obtained
  by Model Transformation.
\newblock In Homola, M.; Ryzhikov, V.; and Schmidt, R.~A., eds.,
  \emph{Proceedings of the 34th International Workshop on Description Logics
  {(DL} 2021), Bratislava, Slovakia, September 19-22, 2021}, volume 2954 of
  \emph{{CEUR} Workshop Proceedings}. CEUR-WS.org.

\bibitem[{Hájek(1998)}]{Hajek1998}
Hájek, P. 1998.
\newblock \emph{{Metamathematics of Fuzzy Logic}}.
\newblock Springer Netherlands.
\newblock ISBN 9789401153003.

\bibitem[{Kalyanpur(2006)}]{Kalyanpur2006}
Kalyanpur, A. 2006.
\newblock \emph{Debugging and repair of {OWL} ontologies}.
\newblock Ph.D. thesis, University of Maryland.

\bibitem[{Katsuno and Mendelzon(1991)}]{Katsuno1991}
Katsuno, H.; and Mendelzon, A.~O. 1991.
\newblock Propositional knowledge base revision and minimal change.
\newblock \emph{Artificial Intelligence}, 52(3): 263--294.

\bibitem[{Kleene(1952)}]{Kleene1952}
Kleene, S. 1952.
\newblock \emph{Introduction to Metamathematics}.
\newblock Princeton, NJ, USA: North Holland.

\bibitem[{Matos et~al.(2019)Matos, Guimar{\~{a}}es, Santos, and
  Wassermann}]{Matos2019}
Matos, V.~B.; Guimar{\~{a}}es, R.; Santos, Y.~D.; and Wassermann, R. 2019.
\newblock {Pseudo-contractions as Gentle Repairs}.
\newblock In \emph{Lecture Notes in Computer Science}, 385--403. Springer
  International Publishing.

\bibitem[{Nebel(1990)}]{Nebel90}
Nebel, B. 1990.
\newblock \emph{Reasoning and Revision in Hybrid Representation Systems},
  volume 422 of \emph{Lecture Notes in Computer Science}.
\newblock Springer.

\bibitem[{Nebel(1991)}]{Nebel1991}
Nebel, B. 1991.
\newblock {Belief Revision and Default Reasoning: Syntax-Based Approaches}.
\newblock In \emph{KR 1991}, 417--428. Morgan Kaufmann.

\bibitem[{Priest(1979)}]{Priest}
Priest, G. 1979.
\newblock The Logic of Paradox.
\newblock \emph{Journal of Philosophical Logic}, 8(1): 219--241.

\bibitem[{Ribeiro, Nayak, and Wassermann(2018)}]{RibeiroNW18}
Ribeiro, J.~S.; Nayak, A.; and Wassermann, R. 2018.
\newblock {Towards Belief Contraction without Compactness}.
\newblock In \emph{{KR} 2018}, 287--296. {AAAI} Press.

\bibitem[{Ribeiro, Nayak, and Wassermann(2019{\natexlab{a}})}]{RibeiroNW19AAAI}
Ribeiro, J.~S.; Nayak, A.; and Wassermann, R. 2019{\natexlab{a}}.
\newblock {Belief Change and Non-Monotonic Reasoning Sans Compactness}.
\newblock In \emph{AAAI 2019}, 3019--3026. {AAAI} Press.

\bibitem[{Ribeiro, Nayak, and Wassermann(2019{\natexlab{b}})}]{RibeiroNW19}
Ribeiro, J.~S.; Nayak, A.; and Wassermann, R. 2019{\natexlab{b}}.
\newblock {Belief Update without Compactness in Non-finitary Languages}.
\newblock In \emph{{IJCAI} 2019}, 1858--1864. ijcai.org.

\bibitem[{Ribeiro(2013)}]{Ribeiro2013}
Ribeiro, M.~M. 2013.
\newblock \emph{{Belief Revision in Non-Classical Logics}}.
\newblock Springer London.

\bibitem[{Santos et~al.(2018)Santos, Matos, Ribeiro, and
  Wassermann}]{Santos2018}
Santos, Y.~D.; Matos, V.~B.; Ribeiro, M.~M.; and Wassermann, R. 2018.
\newblock Partial meet pseudo-contractions.
\newblock \emph{International Journal of Approximate Reasoning}, 103: 11--27.

\bibitem[{Suntisrivaraporn(2009)}]{Suntisrivaraporn2009}
Suntisrivaraporn, B. 2009.
\newblock \emph{Polynomial time reasoning support for design and maintenance of
  large-scale biomedical ontologies}.
\newblock Ph.D. thesis, Dresden University of Technology, Germany.

\bibitem[{Troquard et~al.(2018)Troquard, Confalonieri, Galliani,
  Pe{\~{n}}aloza, Porello, and Kutz}]{Troquard2018}
Troquard, N.; Confalonieri, R.; Galliani, P.; Pe{\~{n}}aloza, R.; Porello, D.;
  and Kutz, O. 2018.
\newblock {Repairing Ontologies via Axiom Weakening}.
\newblock In \emph{AAAI 2018}, 1981--1988. {AAAI} Press.

\end{thebibliography}

\appendix
\clearpage
\pagebreak

\renewcommand{\thelemma}{\Alph{section}.\arabic{lemma}}
\setcounter{lemma}{0}

\renewcommand{\thedefinition}{\Alph{section}.\arabic{definition}}
\setcounter{definition}{0}

\renewcommand{\theproposition}{\Alph{section}.\arabic{proposition}}
\setcounter{proposition}{0}

\renewcommand{\theobservation}{\Alph{section}.\arabic{observation}}
\setcounter{observation}{0}

\renewcommand{\thecorollary}{\Alph{section}.\arabic{corollary}}
\setcounter{corollary}{0}

\renewcommand{\thetheorem}{\Alph{section}.\arabic{theorem}}
\setcounter{theorem}{0}

\section{Proofs for \Cref{sec:modelChange}}

%\ELprob*

%\begin{proof}
    %The empty set of models cannot be represented in this satisfaction system.
    %This is because there is no inconsistent \(\EL\) ontology if one cannot
    %use the bottom concept constructor (\(\bot\)) with its usual meaning of always being the empty
    %set by any interpretation.  As there is
    %no strict subset of the empty set, \(\FRsubs(\emptyset, \logsys)
    %= \emptyset\).\qedhere
%\end{proof}

%\begin{proposition}
    %Let \(\logsys = (\llang, \mUni, \models)\) be a satisfaction system.
    %The relation \(\FRsubs\) is a function from \(2^{\mUni}\) to
    %\(2^{\FRsets(\logsys)}\) iff every the partially ordered set
    %\((\FRsets(\Lambda) \cap 2^{\mSet}, \subseteq)\) has at least one maximal
    %element for every \(\mSet \subseteq \mUni\).
%\end{proposition}

%\begin{proof}
    %\(\Rightarrow\): Let \(\mSet \subseteq \mUni\) and assume that \(\FRsubs\)
    %is a function as proposed. Then, by \cref{def:FRsubs}, there is a subset of
    %\(\mSet\) that is finitely representable and \(\subseteq\)-maximal. As the
    %choice of \(\mSet\) is arbitrary, this holds for every \(\mSet \subseteq
    %\mUni\).

    %\(\Leftarrow\): Assume that \(\FRsets(\Lambda) \cap 2^{\mSet}, \subseteq\) has at least one maximal
    %element for every \(\mSet \subseteq \mUni\). This means that \dots
%\end{proof}

\noPM*

\begin{proof}
    As an example, we consider the satisfaction \(\Lambda_t
    = (\llang_t, \mUni_t, \models_t)\) where: \(\llang_t = \{a,
    b\}\) with \(a, b\) being propositional atoms; \(\mUni_t\) the boolean valuations (\(\vtrue\) for `true' and \(\vfalse\) for `false') to the pair \((a,
    b)\), and the satisfaction relation \(\models_t\) defined as usual.

    We have that 
    \begin{align*}
        \FRsubs(\{(\vtrue, \vtrue), (\vtrue, \vfalse), (\vfalse, \vtrue)\}, \Lambda_\land) =  \\ 
        \qquad \{\{(\vfalse, \vtrue), (\vtrue, \vtrue)\}, \{(\vtrue, \vfalse), (\vtrue, \vtrue)\}\}.
    \end{align*}
     
    The intersection of the resulting subsets is \(\{(\vtrue, \vtrue)\}\), which
    cannot be represented in \(\Lambda_t\).
\end{proof}

\noPMR*

\begin{proof}
 As an example, we consider the satisfaction \(\Lambda_p
    = (\llang_p, \mUni_p, \models_p)\) where: \(\llang_p = \{\perp,  a,
    b\}\) with \(a, b\) being propositional atoms and \(\perp\) \textit{falsum}; 
%    usual conjunction; 
\(\mUni_p\) the boolean valuations (\(\vtrue\) for `true' and \(\vfalse\) for `false') to the pair \((a,
    b)\), and the satisfaction relation \(\models_p\) defined as usual.

    We have that 
    \begin{align*}
        \FRsups(\{(\vtrue, \vtrue)\}, \Lambda_p) =  \\ 
        \qquad \{\{  (\vtrue, \vtrue), (\vtrue, \vfalse)\}, \{(\vtrue, \vtrue),  (\vfalse, \vtrue)\}\}.
    \end{align*}
     
    The union of the resulting supersets is the set  $ \{\{  (\vtrue, \vtrue) (\vtrue, \vfalse),   (\vfalse, \vtrue)\}\}$ , which
    cannot be represented in \(\Lambda_p\),  as we cannot express disjunction.  
\end{proof}

To prove the representation theorem for \mconnm{},  we will need some auxiliary tools.  Recall that we write $\powerset^*(A)$ as a shorthand for $\powerset( 
A) \setminus \{\emptyset\}$, that is,  the power set of $A$ without the empty
set $\emptyset$.  Given an \mconnm{} function  $\mCon$ on an \mconnm-compatible satisfaction system $\logsys$,  we define the function $\xi^-: \powerset^*(\FRsets(\logsys)) \to \finitepwset(\llang) \times \powerset(\mUni) $,   such that 
\[ \xi^-(X) = \{  (\baseb,  \mSet)  \mid  \FRsubs(\modelsof{\baseb} \setminus \mSet, \logsys) = X \}.\]

Intuitively,  $\xi^-(X)$ holds all the pairs $(\baseb, \mSet)$ such that $X$
contains exactly all finite representable sets of models closest to
$\modelsof{\baseb} \setminus \mSet$.  %In order to evict $\mSet$ from $\modelsof{\baseb}$,  an eviction function shall choose to keep the closest finite representable sets in .... 
We  also define the function $\mathcal{C}^-:  \powerset^*(\FRsets(\logsys)) \to \powerset(\mUni)$ such that
\[\mathcal{C}^-(X) = \{ \modelsof{\mCon(\baseb, \mSet)} \mid  (\baseb, \mSet) \in \xi^-(X) \}.\]

%prove that C above is a singleton when .... and then we can define  
%$\mathcal{C}^-^{proj}:$
%$$ \mathcal{C}^-^{proj}(X) = Z,  \mbox{ s.t.  }  \mathcal{C}^-(X) = \{ Z\} $$
%
%lemma that is a selection function. 
%
%that maps each collection of sets of models to a pair consisting of a finite representable base and a set of models,  such that 

\begin{lemma}\label{lem:app:modevecxi}
Let $\logsys$ be an \mconnm-compatible satisfaction system. If a model change operation $\mCon$ satisfies \textit{uniformity} then for all $X \in \powerset(\FRsets(\logsys))$: 

\begin{enumerate}[label= (\roman*), leftmargin=*]
    \item $\modelsof{\mCon(\baseb, \mSet)} = \modelsof{\mCon(\baseb', \mSet')}$
for all $(\baseb, \mSet),  (\baseb', \mSet') \in \xi^-(X)$;
    \item $\mathcal{C}^-(X)$ is a singleton,  if $\xi^-(X) \neq \emptyset$.
\end{enumerate}
\begin{proof}
Let $\mCon$ be a model change operation satisfying \textit{uniformity},  and $X
\in \FRsets(\logsys)$, where $\logsys$ is an \mconnm-compatible satisfaction system.  
\begin{enumerate}[label= (\roman*), leftmargin=*]
	\item Let $(\baseb, \mSet),  (\baseb', \mSet') \in \xi^-(X)$.  Thus, by definition of $\xi^-$, we have that: 
        \begin{align*} 
            X &= \FRsubs(\modelsof{\baseb} \setminus \mSet, \logsys)\\ 
            {} &= \FRsubs(\modelsof{\baseb'} \setminus \mSet', \logsys).
        \end{align*}

Hence, from \textit{uniformity}, we get 
\[ \modelsof{\mCon(\baseb, \mSet)}  =  \modelsof{\mCon(\baseb', \mSet')}.\]

    \item Let us suppose that $\xi^-(X) \neq \emptyset$. Let us fix such
        a $(\baseb, \mSet) \in \xi^-(X)$.  
	By definition of $\xi^-$, we have that 
	\[ X = \FRsubs(\modelsof{\baseb} \setminus \mSet, \logsys).\]
	By definition of $\mathcal{C}^-$:  
\[ \modelsof{\mCon(\baseb,\mSet)} \in \mathcal{C}^-(X).\]	

Hence, to show that $\mathcal{C}^-(X)$ is a singleton,  we need to prove
that:  $Y = \modelsof{\mCon(\baseb,\mSet)}$,  for all $Y \in \mathcal{C}^-(X)$.   Let $Y \in \mathcal{C}^-(X)$.  By definition of $\mathcal{C}^-$, we have that for some \((\baseb', \mSet') \in \xi^-(X)\) it holds that 
\(Y = \modelsof{\mCon(\baseb', \mSet')}\).
Thus,  as both pairs $(\baseb, \mSet),  (\baseb', \mSet') \in \xi^-(X)$,  we get from item (i) above that $\modelsof{\mCon(\baseb, \mSet)} = \modelsof{\mCon(\baseb', \mSet')}$.  Therefore,  
$Y = \modelsof{\mCon(\baseb, \mSet)}$.  This concludes the proof.\qedhere   
\end{enumerate}
\end{proof}
\end{lemma}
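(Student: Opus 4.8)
The plan is to peel off the definitions of $\xi^-$, $\mathcal{C}^-$ and the \textit{uniformity} postulate and compose them; the statement is essentially a bookkeeping consequence of how $\xi^-$ was set up, so I expect the argument to be short and the only ``obstacle'' to be getting the quantifiers and the empty-set edge cases right.

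For item (i), I would fix an arbitrary $X \in \powerset(\FRsets(\logsys))$ and two pairs $(\baseb,\mSet), (\baseb',\mSet') \in \xi^-(X)$. By the defining equation of $\xi^-$, each of them satisfies $\FRsubs(\modelsof{\baseb}\setminus\mSet,\logsys) = X$ and $\FRsubs(\modelsof{\baseb'}\setminus\mSet',\logsys) = X$; transitivity of equality then gives $\FRsubs(\modelsof{\baseb}\setminus\mSet,\logsys) = \FRsubs(\modelsof{\baseb'}\setminus\mSet',\logsys)$, which is exactly the hypothesis of \textit{uniformity}. Applying that postulate yields $\modelsof{\mCon(\baseb,\mSet)} = \modelsof{\mCon(\baseb',\mSet')}$. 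Note that if $\xi^-(X)$ has at most one element the claim is vacuous, so no separate case analysis is needed.

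For item (ii), I would assume $\xi^-(X) \neq \emptyset$ and pick some witness $(\baseb,\mSet) \in \xi^-(X)$. By the definition of $\mathcal{C}^-$, $\modelsof{\mCon(\baseb,\mSet)} \in \mathcal{C}^-(X)$, so $\mathcal{C}^-(X)$ is non-empty. It remains to show every $Y \in \mathcal{C}^-(X)$ equals $\modelsof{\mCon(\baseb,\mSet)}$: unfolding the definition of $\mathcal{C}^-$ produces some $(\baseb',\mSet') \in \xi^-(X)$ with $Y = \modelsof{\mCon(\baseb',\mSet')}$, and item (i) (applied to the two pairs $(\baseb,\mSet)$ and $(\baseb',\mSet')$, both in $\xi^-(X)$) gives $\modelsof{\mCon(\baseb',\mSet')} = \modelsof{\mCon(\baseb,\mSet)}$, hence $Y = \modelsof{\mCon(\baseb,\mSet)}$. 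Therefore $\mathcal{C}^-(X)$ is a singleton. The main things to watch are the order of the two items --- (ii) must be proved after (i) so that it can be invoked --- and keeping the $\xi^-(X) = \emptyset$ case explicitly excluded in (ii); \mconnm-compatibility of $\logsys$ plays no role here beyond ensuring that $\mCon$ is a well-defined model change operation.
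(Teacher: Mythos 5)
Your proof is correct and follows essentially the same route as the paper's: unfold the definition of $\xi^-$ to obtain the equality of the two $\FRsubs$ sets, apply \textit{uniformity} for item (i), and then reduce item (ii) to item (i) via the definition of $\mathcal{C}^-$. Your added observations (the vacuity of (i) when $|\xi^-(X)|\leq 1$ and the inessential role of \mconnm-compatibility) are accurate and do not change the argument.
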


\begin{proposition}\label{obs:_finiteMAXFR}
Let $\logsys = (\llang, \mUni, \models)$ be an \mconnm-compatible satisfaction system.  
If a model change operation $\mCon$ satisfies \textit{success, inclusion} and
\textit{finite retainment},  then $\modelsof{\mCon(\baseb, \mSet)} \in
\FRsubs(\modelsof{\baseb} \setminus \mSet, \logsys)$ for all \(\baseb \in
\finitepwset(\llang)\) and \(\mSet \subseteq \mUni\). 
\begin{proof}
Let us suppose for contradiction that there is a model change operation that
satisfies  \textit{success, inclusion} and \textit{finite retainment},  
but $\modelsof{\mCon(\baseb, \mSet)} \not \in \FRsubs(\modelsof{\baseb}
\setminus \mSet, \logsys)$,  for some finite base $\baseb$ and set of models $\mSet$.  
Let us fix such a base $\baseb$ and set $\mSet$. 

From \textit{success} and \textit{inclusion}, we have that
 \[\modelsof{\mCon(\baseb, \mSet)} \subseteq \modelsof{\baseb} \setminus \mSet. \]
 
 By construction,  $\mCon(\baseb, \mSet)$ is a finite base,  which means 
 \begin{align}
 \modelsof{\mCon(\baseb, \mSet)} \in \FRsets(\logsys). \label{eq:mConsys}
 \end{align}
  We know that $\FRsubs(\modelsof{\baseb} \setminus
  \mSet, \logsys) \neq \emptyset$ as $\logsys$ is \mconnm-compatible.  
%
%Given a set of models $X$,  
%let $\mathsf{SUBS_{FR}}(X) = \{ Y \subseteq X \mid  Y \in \ \}  $  denote the finite representable sets of modelo
  Let 
  \[Y = \{ X \in \FRsets(\logsys) \mid X \subseteq (\modelsof{\baseb} \setminus \mSet)\}. \] 
  $\modelsof{\mCon(\baseb, \mSet)} \not \in
  \FRsubs(\modelsof{\baseb} \setminus \mSet, \logsys)$ from hypothesis, which means that either $\modelsof{\mCon(\baseb, \mSet)} \not \in \FRsets(\logsys)$ or 
  $\modelsof{\mCon(\baseb, \mSet)}$ is not $\subseteq$-maximal within 
  $Y$.  
  This fact combined with \cref{eq:mConsys} implies that 
  $\modelsof{\mCon(\baseb, \mSet)}$ is not $\subseteq$-maximal within $Y$.  
  Therefore,   there is some $\mSet' \in \FRsets(\logsys)$ such that $\mSet' \subseteq (\modelsof{\baseb} \setminus \mSet)$ and  $\modelsof{\mCon(\baseb, \mSet)} \subset \mSet'$. 
  But \textit{finite retainment} states that $\mSet' \not \in \FRsets(\logsys)$, which is a contradiction.\qedhere 
\end{proof}
\end{proposition}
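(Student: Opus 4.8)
The plan is to verify directly that $\modelsof{\mCon(\baseb, \mSet)}$ meets each of the three requirements that \cref{def:FRsubs} imposes on the members of $\FRsubs(\modelsof{\baseb} \setminus \mSet, \logsys)$: being finitely representable, being a subset of $\modelsof{\baseb} \setminus \mSet$, and being $\subseteq$-maximal with those two properties. The first requirement is immediate from the typing of a model change operation: its codomain is $\finitepwset(\llang)$, so $\mCon(\baseb, \mSet)$ is a finite base, whence $\modelsof{\mCon(\baseb, \mSet)} \in \FRsets(\logsys)$ by definition of $\FRsets(\logsys)$.

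Next I would establish the containment $\modelsof{\mCon(\baseb, \mSet)} \subseteq \modelsof{\baseb} \setminus \mSet$ by combining the two structural postulates. \emph{Inclusion} gives $\modelsof{\mCon(\baseb, \mSet)} \subseteq \modelsof{\baseb}$, while \emph{success} gives $\mSet \cap \modelsof{\mCon(\baseb, \mSet)} = \emptyset$, i.e.\ $\modelsof{\mCon(\baseb, \mSet)}$ is disjoint from $\mSet$. Conjoining the two inclusions yields exactly $\modelsof{\mCon(\baseb, \mSet)} \subseteq \modelsof{\baseb} \setminus \mSet$.

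For $\subseteq$-maximality I would argue by contradiction. If $\modelsof{\mCon(\baseb, \mSet)}$ were not maximal among the finitely representable subsets of $\modelsof{\baseb} \setminus \mSet$, there would be some $\mSet' \in \FRsets(\logsys)$ with $\modelsof{\mCon(\baseb, \mSet)} \subset \mSet' \subseteq \modelsof{\baseb} \setminus \mSet$. But this situation is precisely the antecedent of \emph{finite retainment}, whose consequent $\mSet' \notin \FRsets(\logsys)$ contradicts the choice of $\mSet'$. Hence no such $\mSet'$ exists, $\modelsof{\mCon(\baseb, \mSet)}$ is $\subseteq$-maximal, and all three conditions of \cref{def:FRsubs} are met, so $\modelsof{\mCon(\baseb, \mSet)} \in \FRsubs(\modelsof{\baseb} \setminus \mSet, \logsys)$.

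I do not expect a genuine obstacle: the whole argument is an unfolding of \cref{def:FRsubs} against the postulates, and notably the eviction-compatibility assumption is not needed for any step — it is a standing hypothesis ensuring $\mCon$ is meaningful on $\logsys$, and a fortiori the proof re-establishes that $\FRsubs(\modelsof{\baseb} \setminus \mSet, \logsys)$ is non-empty. The only place that calls for a little care is aligning the strict and non-strict inclusions so that the configuration produced in the contradiction step ($\modelsof{\mCon(\baseb, \mSet)} \subset \mSet' \subseteq \modelsof{\baseb} \setminus \mSet$) matches the hypothesis of \emph{finite retainment} verbatim.
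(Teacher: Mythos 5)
Your proof is correct and is essentially the paper's argument: finite representability from the codomain of $\mCon$, the containment from \emph{success} plus \emph{inclusion}, and maximality from \emph{finite retainment}; the paper merely packages the whole thing as one global contradiction rather than a direct verification with a local contradiction for the maximality clause. Your side remark is also accurate — the paper's proof states that \mconnm-compatibility gives $\FRsubs(\modelsof{\baseb}\setminus\mSet,\logsys)\neq\emptyset$ but never actually uses that fact.
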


\mFRConRepr*

\begin{proof}
    ``$\Rightarrow$''    Let \(\mCon_\FRsel\) be a maxichoice \mconnm{} 
    function over \(\logsys\) based on a \frselnm{} \(\FRsel\),  and $\mSet$ be a set of models.

    The function \(\mCon_\FRsel\) satisfies success and inclusion  since \(\modelsof{\mCon_\FRsel(\baseb, \mSet)} \subseteq \FRsubs(\modelsof{\baseb}
    \setminus \mSet, \logsys)\).  
    
    \begin{description}[leftmargin=*]
		\item[(vacuity)]    Assume that $\mSet \cap \modelsof{\baseb} = \emptyset$.  
	Then,    $\FRsubs(\modelsof{\baseb}\setminus \mSet, \logsys)= \{ \{ \modelsof{\baseb}\} \}$, which implies that 
	\[\mCon(\baseb, \mSet) = \FRsel(\FRsubs(\modelsof{\baseb}\setminus \mSet, \logsys)) = \baseb', \] 
	such that $\baseb' = \modelsof{\baseb}$.  
	Therefore,  $\modelsof{\mCon(\baseb, \mSet)} = \modelsof{\baseb}$.   
    
    	\item[(finite retainment)]  Suppose that   
    \(\modelm' \in \modelsof{\baseb} \setminus \modelsof{\mCon_\FRsel(\baseb,
    \mSet)}\), then, by construction, there is no \(\mSet' \in \FRsets(\logsys)\) which
    contains \(\modelm'\) and \(\modelsof{\mCon_\FRsel(\baseb,
    \mSet)} \subset \mSet'\).  
    
    	\item[(uniformity)]  Let  $\FRsubs(\modelsof\baseb \setminus \mSet,  \logsys) =\FRsubs(\modelsof{\baseb'} \setminus \mSet',  \logsys)$.  By definition,  
    	\begin{align*}
    	\modelsof{\mCon(\baseb,\mSet)} &= \FRsel(\FRsubs(\modelsof\baseb \setminus \mSet, \logsys))\\
        \text{and}\\
    	\modelsof{\mCon(\baseb',\mSet')} &= \FRsel(\FRsubs(\modelsof{\baseb'} \setminus \mSet', \logsys)).
    	\end{align*}
    	Therefore, as $\FRsubs(\modelsof{\baseb} \setminus \mSet,  \logsys) =\FRsubs(\modelsof{\baseb'} \setminus \mSet',  \logsys)$,  we can conclude that $\modelsof{\mCon(\baseb,\mSet)} = \modelsof{\mCon(\baseb',\mSet')}$.  

    \end{description}
    Hence, every maxichoice eviction
    function based on a \frselnm{} satisfies all
    postulates stated.

``$\Leftarrow$''    Let \(\mCon : \finitepwset(\llang) \times \powerset(\mUni) \to
    \finitepwset(\llang)\) be a function satisfying the postulates stated. 
%    By hypothesis,  the underlying logic is eviction compatible, and therefore the empty set of models has a finite base $\baseb_{\epsilon}$. Let us fix such a one finite base to the empty set of models.  
	%We will first construct a selection function $\FRsel^*$.  
	As $\mCon$ satisfies uniformity,  we known from \cref{lem:app:modevecxi} that $\mathcal{C}^-(X)$ is a singleton for every $X \in \powerset(\FRsets(\logsys))$.  Thus,  we can construct the function 
	$\FRsel: \powerset^*(\FRsets(\logsys)) \to \FRsets(\logsys)$ such that
		\begin{align*}
            \FRsel(X) = 
            \begin{dcases}
                Z \text{ s.t. } \mathcal{C}^-(X) = \{Z\} & \text{if } \xi^-(X) \neq \emptyset,\\
                Y \mbox{ s.t.  } Y \in X & \text{otherwise.}
            \end{dcases}
		\end{align*}
	We will prove that: (i) $\FRsel$ is indeed a \frselnm{},  and (ii) 
	that $\modelsof{\mCon(\baseb, \mSet)} = \FRsel(\FRsubs(\modelsof{\baseb} \setminus \mSet, \logsys))$.
	
\begin{enumerate}[label= (\roman*), leftmargin=*]
	\item \textit{$\mathbf{\FRsel}$ is indeed a selection function.  }
        Let $X \in \nepwset(\FRsets(\logsys))$.
		We only need to show that $\FRsel(X) \in X$.  
		The case that $\xi^-(X) = \emptyset$  is trivial,  as $\FRsel$ chooses an arbitrary $Y \in X$ (by the axiom of choice).  Let us focus on the case $\xi^-(X) \neq \emptyset$.  
		From above,  we have that $\FRsel(X) = Z$,  where $\mathcal{C}^-(X) = \{Z \}$.  
		By definition of $\mathcal{C}^-$,  we have that
		 there is a pair  $(\baseb, \mSet) \in \xi^-(X)$ such that
		\[ Z = \modelsof{\mCon(\baseb,\mSet)}. \]
		Let us fix such a $(\baseb, \mSet) \in \xi^-(X)$.  
		Thus,  by definition of $\xi^-$, we get
		\[ X = \FRsubs(\modelsof{\baseb} \setminus \mSet, \logsys).\]
		
Additionally, we know that $\modelsof{\mCon(\baseb,\mSet)} \in  \FRsubs(\modelsof{\baseb} \setminus \mSet, \logsys)$ as a consequence of \cref{obs:_finiteMAXFR}. Thus,  from the identities above we get that $Z \in X$,  which means $\FRsel(X) \in X$. 
%		
%		Thus,  as  $Z = \mathcal{C}^-(X)$,  it follows from the definition of $\mathcal{C}^-$ that 
%		$$ Z \in \FRsubs(\baseb \setminus \mSet).$$
%		This implies that 
%		
		
	\item $\modelsof{\mCon(\baseb, \mSet)} = \FRsel(\FRsubs(\modelsof{\baseb} \setminus \mSet, \logsys))$.  
		Let $X = \FRsubs(\modelsof{\baseb}\setminus \mSet, \logsys)$.  
        We know that \(X \neq \emptyset\) due to \mconnm-compatibility.
			By definition of $\xi$, we get that 
			\[(\baseb, \mSet) \in \xi^-(X). \]
	%
%		By construction,  we have that $\FRsel(X) = Z$ such that  	$
			By construction,  we have that $\FRsel(X) = Z$ such that  	$\mathcal{C}^-(X) = \{ Z\}$,  which implies  from definition of $\mathcal{C}^-$ that $Z = \modelsof{\mCon(\baseb', \mSet')}$,  for some $(\baseb', \mSet') \in \xi^-(X)$.  
			Therefore,  as 	$(\baseb, \mSet) \in \xi^-(X)$, we get 
			from \cref{lem:app:modevecxi},  that for all  $(\baseb', \mSet') \in \xi^-(X),  $
		\[  \modelsof{\mCon(\baseb, \mSet)} = \modelsof{\mCon(\baseb', \mSet')}. \]
	Thus, 
$\modelsof{\mCon(\baseb, \mSet)} = Z$.  
As $\FRsel(X) = Z$ and $X = \FRsubs(\modelsof{\baseb}\setminus \mSet, \logsys)$, we have that 
\(\modelsof{\mCon(\baseb, \mSet)}\) is equal to \(\FRsel(\FRsubs(\modelsof{\baseb} \setminus \mSet, \logsys))\).\qedhere
\end{enumerate}	

\end{proof}

\vacuity*
\begin{proof}
Assume that $\mCon$ satisfies finite retainment and inclusion,  and that $\modelsof{\baseb} \cap \mSet = \emptyset$.  This means that $\modelsof{\baseb}$ is the closest finite representable set of models disjoint with $\mSet$.  
From inclusion,  $\modelsof{\mCon(\baseb, \mSet)} \subseteq \modelsof{\baseb}$.  
Thus,  from finite retainment,  we get that $\modelsof{\mCon(\baseb, \mSet)} = \modelsof{\baseb}$.  
\end{proof}

To prove the representation theorem for maxichoice \mexnm{},  we will need some auxiliary tools.
The auxiliary tools are analagous to the ones defined for the representation theorem of eviction.  
Given a \mexnm{} function  $\mExp$ on a \mexnm-compatible satisfaction system $\logsys$,  we define the function $\xi^+: \powerset^*(\FRsets(\logsys)) \to \finitepwset(\llang) \times \powerset(\mUni) $ as 
\[ \xi^+(X) = \{  (\baseb,  \mSet)  \mid  \FRsubs(\modelsof{\baseb} \cup \mSet, \logsys) = X \}.\]

Intuitively,  $\xi^+(X)$ holds all the pairs $(\baseb, \mSet)$ such that $X$ contains exactly all finite representable sets of models closest to $\modelsof{\baseb} \cup \mSet$.  %In order to evict $\mSet$ from $\modelsof{\baseb}$,  an eviction function shall choose to keep the closest finite representable sets in .... 

We  also define the function $\mathcal{C}^+:  \powerset^*(\FRsets(\logsys)) \to \powerset(\mUni)$ as
\[\mathcal{C}^+(X) = \{ \modelsof{\mExp(\baseb, \mSet)} \mid  (\baseb, \mSet) \in \xi^+(X) \}.\]

\begin{lemma}\label{lem:app:modExpxi}
Let $\logsys$ be a \mexnm-compatible satisfaction system.  
If a model change operation $\mExp$ satisfies \textit{uniformity} then for all $X \in \powerset(\FRsets(\logsys))$: 

\begin{enumerate}[label= (\roman*), leftmargin=*]
    \item $\modelsof{\mExp(\baseb, \mSet)} = \modelsof{\mExp(\baseb', \mSet')}$
        for all $(\baseb, \mSet),  (\baseb', \mSet') \in \xi^+(X)$; and
    \item $\mathcal{C}^+(X)$ is a singleton,  if $\xi^+(X) \neq \emptyset$.
\end{enumerate}
\begin{proof}
Let $\mExp$ be a model change operation satisfying \textit{uniformity},  and $X \in \FRsets(\logsys)$, where $\logsys$ is a reception compatible satisfaction system.  

\begin{enumerate}[label= (\roman*), leftmargin=*]
	\item Let $(\baseb, \mSet),  (\baseb', \mSet') \in \xi^+(X)$.  Thus by definition of $\xi^+$, we have that 
	\begin{align*} 
        X   &= \FRsubs(\modelsof{\baseb} \cup \mSet, \logsys)\\ 
        {}  &= \FRsubs(\modelsof{\baseb'} \cup \mSet', \logsys).
    \end{align*}
Thus,  from \textit{uniformity}, we get 
\[ \modelsof{\mExp(\baseb, \mSet)}  =  \modelsof{\mExp(\baseb', \mSet')}.\]

\item Let us suppose that $\xi^+(X) \neq \emptyset$.  Then, there is some $(\baseb, \mSet) \in \xi^+(X)$.  Let us fix such a $(\baseb, \mSet)$.  
	By definition of $\xi^+$, we have that 
	\[ X = \FRsubs(\baseb \cup \mSet, \logsys)\]
	By definition of $\mathcal{C}^+$,  
\[ \modelsof{\mExp(\baseb,\mSet)} \in \mathcal{C}^+(X)\]	
Thus,  to show $\mathcal{C}^+(X)$ is a singleton,  we need to show that for all $Y \in \mathcal{C}^+(X)$,  $Y = \modelsof{\mExp(\baseb,\mSet)}$.   Let $Y \in \mathcal{C}^+(X)$.  By definition of $\mathcal{C}^+$, we have that 
\[ Y = \modelsof{\mExp(\baseb', \mSet')},  \mbox{ for some }  (\baseb', \mSet') \in \xi^+(X) \]
Thus,  as both pairs $(\baseb, \mSet),  (\baseb', \mSet') \in \xi^+(X)$,  we get from item (i) above that $\modelsof{\mExp(\baseb, \mSet)} = \modelsof{\mExp(\baseb', \mSet')}$.  Thus,  
$Y = \modelsof{\mExp(\baseb, \mSet)}$.  This concludes the proof. \qedhere  
\end{enumerate}
\end{proof}
\end{lemma}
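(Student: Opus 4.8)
The plan is to mirror, essentially line for line, the proof of the eviction analogue \Cref{lem:app:modevecxi}, replacing eviction by reception throughout: $\modelsof{\baseb}\setminus\mSet$ becomes $\modelsof{\baseb}\cup\mSet$, $\FRsubs$ becomes $\FRsups$ (the closest finitely representable \emph{supersets}), $\xi^-$ becomes $\xi^+$, $\mathcal{C}^-$ becomes $\mathcal{C}^+$, and $\mCon$ becomes $\mExp$. The only postulate invoked is \emph{uniformity}, whose antecedent for reception is phrased in terms of $\FRsups$, so the whole argument reduces to matching the defining condition of $\xi^+$ to that antecedent. Neither \mexnm-compatibility nor any structural property of $\logsys$ is needed beyond guaranteeing that the objects involved are well defined.

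For item (i), I would fix two pairs $(\baseb,\mSet),(\baseb',\mSet')\in\xi^+(X)$. By the definition of $\xi^+$, membership of a pair $(\baseb,\mSet)$ in $\xi^+(X)$ means exactly that $\FRsups(\modelsof{\baseb}\cup\mSet,\logsys)=X$; applying this to both pairs yields $\FRsups(\modelsof{\baseb}\cup\mSet,\logsys)=X=\FRsups(\modelsof{\baseb'}\cup\mSet',\logsys)$. This equality is precisely the hypothesis of \emph{uniformity} for $\mExp$ (as stated in \Cref{mFRExp_r}), so uniformity immediately gives $\modelsof{\mExp(\baseb,\mSet)}=\modelsof{\mExp(\baseb',\mSet')}$, which is the claim.

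For item (ii), I would assume $\xi^+(X)\neq\emptyset$ and fix a witness $(\baseb,\mSet)\in\xi^+(X)$. By the definition of $\mathcal{C}^+$, the value $\modelsof{\mExp(\baseb,\mSet)}$ belongs to $\mathcal{C}^+(X)$, so the set is non-empty. To see it has exactly one element, I would take an arbitrary $Y\in\mathcal{C}^+(X)$; by the definition of $\mathcal{C}^+$ there is $(\baseb',\mSet')\in\xi^+(X)$ with $Y=\modelsof{\mExp(\baseb',\mSet')}$. Since both pairs lie in $\xi^+(X)$, item (i) gives $Y=\modelsof{\mExp(\baseb',\mSet')}=\modelsof{\mExp(\baseb,\mSet)}$. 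Hence every element of $\mathcal{C}^+(X)$ coincides with the fixed value, so $\mathcal{C}^+(X)=\{\modelsof{\mExp(\baseb,\mSet)}\}$ is a singleton.

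I do not expect a genuine obstacle: the lemma is a direct bookkeeping consequence of \emph{uniformity} together with the definitions of $\xi^+$ and $\mathcal{C}^+$, and item (ii) is the routine ``non-empty and all elements coincide, hence singleton'' pattern resting on item (i). The one point to handle carefully is notational rather than mathematical: the reception objects must be read with $\FRsups$ (closest supersets) instead of $\FRsubs$, so that the defining condition of $\xi^+$ lines up with the $\FRsups$-based antecedent of \emph{uniformity}; once that correspondence is made explicit, the argument is purely formal.
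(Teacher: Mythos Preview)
Your proposal is correct and follows essentially the same approach as the paper: both arguments unfold the definition of $\xi^+$ to obtain the hypothesis of \emph{uniformity} for item (i), and then derive item (ii) by fixing a witness and showing every element of $\mathcal{C}^+(X)$ coincides with it via (i). You even correctly flag the $\FRsups$/$\FRsubs$ notational issue that the paper's own proof glosses over.
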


\begin{proposition}\label{obs:_finiteMINFR}
Given a \mexnm-compatible satisfaction system $\logsys = (\llang, \mUni, \models)$.  
If a model change operation $\mExp$ satisfies \textit{success, persistence} and \textit{finite temperance},  then $\modelsof{\mExp(\baseb, \mSet)} \in \FRsups(\modelsof{\baseb} \cup \mSet, \logsys)$ for all \(\baseb \in \finitepwset(\logsys)\) and \(\mSet \subseteq \mUni\). 
\begin{proof}
Let us suppose for contradiction that there is a model change operation that satisfies  \textit{success, inclusion} and \textit{finite temperance},  
but $\modelsof{\mExp(\baseb, \mSet)} \not \in \FRsups(\modelsof{\baseb} \cup \mSet, \logsys)$,  for some finite base $\baseb$ and set of models $\mSet$.  
Let us fix such a base $\baseb$ and set $\mSet$.  

From \textit{success} and \textit{inclusion}, we have that
 \[ \modelsof{\baseb} \cup \mSet \subseteq \modelsof{\mExp(\baseb, \mSet)} \]
 
 From construction,  $\mExp(\baseb, \mSet)$ is a finite base,  which means 
 \begin{align}
 \modelsof{\mExp(\baseb, \mSet)} \in \FRsets(\logsys). \label{eq:mExpsys}
 \end{align}
  As $\logsys$ is eviction compatible,  $\FRsups(\modelsof{\baseb} \cup \mSet, \logsys) \neq \emptyset$.  
  Let 
  \[Y = \{ X \in \FRsets(\logsys) \mid  (\modelsof{\baseb} \cup \mSet) \subseteq X\}. \] 
  We have  $\modelsof{\mExp(\baseb, \mSet)} \not \in \FRsups(\modelsof{\baseb} \cup \mSet, \logsys)$ from hypothesis,  which means that either $\modelsof{\mExp(\baseb, \mSet)} \not \in \FRsets(\logsys)$ or 
  $\modelsof{\mExp(\baseb, \mSet)}$ is not $\subseteq$-minimal within 
  $Y$.  
  This fact taken together with \cref{eq:mExpsys} implies that 
  $\modelsof{\mExp(\baseb, \mSet)}$ is not $\subseteq$-minimal within $Y$.  
  Therefore,   there is some $\mSet' \in \FRsets(\logsys)$ such that 
  \begin{align}
  (\modelsof{\baseb} \cup \mSet) \subseteq \mSet'  \mbox { and } \mSet' \subset \modelsof{\mExp(\baseb, \mSet)} \label{eq:auxFRMin}
  \end{align}
Note that $\mSet \subseteq \mSet'$ and $\mSet \subseteq \modelsof{\mExp(\baseb,M)}$ which implies from \cref{eq:auxFRMin} above that 
$\mSet' \subset \modelsof{\mExp(\baseb, \mSet)} \cup \mSet$.  
  This implies from \textit{finite temperance} that $\mSet' \not \in \FRsets(\logsys)$, which is a contradiction.  
\end{proof}
\end{proposition}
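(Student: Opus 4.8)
The plan is to prove the membership \(\modelsof{\mExp(\baseb, \mSet)} \in \FRsups(\modelsof{\baseb} \cup \mSet, \logsys)\) by checking directly the two conditions that define \(\FRsups\) in \Cref{def:FRsups}: that \(\modelsof{\mExp(\baseb, \mSet)}\) is a finitely representable \emph{superset} of \(\modelsof{\baseb} \cup \mSet\), and that it is \(\subseteq\)-\emph{minimal} among all finitely representable supersets of \(\modelsof{\baseb} \cup \mSet\). This is the dual of the eviction-side argument of \Cref{obs:_finiteMAXFR}, with minimal supersets playing the role of maximal subsets and with \emph{finite temperance} playing the role of finite retainment.

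First I would establish the representability and superset parts. Since \(\mExp\) is by definition a model change operation, \(\mExp(\baseb, \mSet) \in \finitepwset(\llang)\), so \(\modelsof{\mExp(\baseb, \mSet)} \in \FRsets(\logsys)\) follows immediately from the definition of \(\FRsets\). For the inclusion \(\modelsof{\baseb} \cup \mSet \subseteq \modelsof{\mExp(\baseb, \mSet)}\), I would combine \emph{persistence}, which gives \(\modelsof{\baseb} \subseteq \modelsof{\mExp(\baseb, \mSet)}\), with \emph{success}, which gives \(\mSet \subseteq \modelsof{\mExp(\baseb, \mSet)}\); taking the union of the two containments yields the claim. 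For minimality I would argue by contradiction: assume \(\modelsof{\mExp(\baseb, \mSet)}\) is \emph{not} \(\subseteq\)-minimal among the finitely representable supersets of \(\modelsof{\baseb} \cup \mSet\), so there exists \(\mSet' \in \FRsets(\logsys)\) with \(\modelsof{\baseb} \cup \mSet \subseteq \mSet' \subset \modelsof{\mExp(\baseb, \mSet)}\). This is precisely the antecedent of \emph{finite temperance}, whose conclusion \(\mSet' \notin \FRsets(\logsys)\) directly contradicts \(\mSet' \in \FRsets(\logsys)\). Hence no such \(\mSet'\) exists, and \(\modelsof{\mExp(\baseb, \mSet)}\) is minimal, giving the desired membership in \(\FRsups(\modelsof{\baseb} \cup \mSet, \logsys)\).

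I do not expect a deep obstacle, since the result is essentially an unfolding of the three postulates against \Cref{def:FRsups}. The one point requiring care is matching the inclusion patterns: \Cref{def:FRsups} phrases minimality through a forbidden intermediate set satisfying \(\mSet \subseteq \mSet'' \subset \mSet'\), whereas \emph{finite temperance} is stated directly in terms of \(\modelsof{\baseb} \cup \mSet\) and \(\modelsof{\mExp(\baseb, \mSet)}\). I would make sure the strict and non-strict inclusions line up so that the hypothetical intermediate set in the minimality condition is exactly the \(\mSet'\) to which finite temperance is applied, so the contradiction fires cleanly. I would also remark that \mexnm-compatibility guarantees \(\FRsups(\modelsof{\baseb} \cup \mSet, \logsys) \neq \emptyset\); this is not needed to establish membership but confirms that the target set is well-defined, so the conclusion is non-vacuous.
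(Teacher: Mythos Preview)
Your proposal is correct and follows essentially the same argument as the paper's proof: both derive the superset condition from \emph{success} and \emph{persistence}, note that the output is in \(\FRsets(\logsys)\) because \(\mExp\) returns a finite base, and then obtain minimality by applying \emph{finite temperance} to any hypothetical strictly smaller finitely representable superset. The only cosmetic difference is that the paper wraps the entire argument in a single contradiction, whereas you verify the defining conditions of \(\FRsups\) directly and use contradiction only for the minimality clause; your observation that \mexnm-compatibility is not actually needed for the membership claim is also accurate (the paper mentions it but does not use it).
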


\mFRExpRepr*

\begin{proof}
``$\Rightarrow$''
    Let \(\mExp_\FRsel\) be a maxichoice reception function
    over \(\logsys\) based on a \frselnm{}
    \(\FRsel\). 
%    The function \(\mExp_\FRsel\) satisfies 
Success follows directly from the construction of $\mExp$.  
For persistence,  note that 
    \(\modelsof{\mExp_\FRsel(\baseb, \mSet)} \in \FRsups(\modelsof{\baseb}
    \cup\mSet, \logsys)\),  which implies that $\modelsof{\baseb} \subseteq \modelsof{\mExp_{\FRsel}(\baseb, \mSet)}$.
 \begin{description}
 	\item[(vacuity)] Assume that $\mSet \subseteq \modelsof{\baseb}$.  Thus,  $\modelsof{\baseb} = \modelsof{\baseb} \cup \mSet$. %which implies that % is the closest finite representable superset of $\baseb$ that contains $\mSet$.  
	Thus,  $\FRsups(\modelsof{\baseb} \cup \mSet, \logsys) = \{ \modelsof{\baseb}\}$ which implies that 
	\[ \mExp(\baseb, \mSet) = \FRsel(\FRsups(\modelsof{\baseb} \cup \mSet, \logsys)) = \baseb',\]
	such that $\modelsof{\baseb'} = \modelsof{\baseb}$.  Thus,  
	\[\modelsof{\mExp(\baseb, \mSet)} = \modelsof{\baseb}\]
	
	\item[(finite temperance)] Suppose that
    \(\mSet' \not\in \modelsof{\baseb} \cup \mSet\) but \(\mSet' \in
    \modelsof{\mExp(\baseb, \mSet)}\),  then,  by construction, there is no
    \(\mSet \in \FRsets(\logsys)\) which contains \(\modelsof{\baseb} \cup
    \mSet\) and \(\mSet \subset \modelsof{\mExp_\FRsel(\baseb, \mSet)}\).
    
    \item[(uniformity)] 
    Let  $\FRsups(\modelsof{\baseb} \cup \mSet,  \logsys) =\FRsups(\modelsof{\baseb'} \cup \mSet',  \logsys)$.  By definition,  
    \begin{align*}
        \modelsof{\mExp(\baseb,\mSet)} &= \FRsel(\FRsups(\mathbb{Y}, \logsys))\\
        \text{and}\\
        \modelsof{\mExp(\baseb',\mSet')} &= \FRsel(\FRsups(\mathbb{Y}', \logsys)),
    \end{align*}
    where \(\mathbb{Y} = \modelsof{\baseb} \cup \mSet\) and \(\mathbb{Y'} = \modelsof{\baseb'} \cup \mSet'\).
    	
    	Thus,  as $\FRsups(\modelsof{\baseb} \cup \mSet, \logsys) = \FRsups(\modelsof{\baseb'} \cup \mSet',  \logsys)$,  we get that \[\modelsof{\mExp(\baseb,\mSet)} = \modelsof{\mExp(\baseb',\mSet')}.\]  
 \end{description}
    
    Hence, every maxichoice reception function based on
    a \frselnm{} satisfies all
     postulates stated above.

``$\Leftarrow$'' 
Let \(\mExp : \finitepwset(\llang) \times \powerset(\mUni) \to
    \finitepwset(\llang)\) be a function satisfying the postulates stated. 
	As $\mExp$ satisfies uniformity,  we known from \cref{lem:app:modExpxi} that $\mathcal{C}^+(X)$ is a singleton for every $X \in \powerset(\FRsets(\logsys))$.  Thus,  we can construct the function 
	$\FRsel: \powerset^*(\FRsets(\logsys)) \to \FRsets(\logsys)$ such that
		\begin{align*}
            \FRsel(X) =
            \begin{dcases}
                Z \text{ s.t. } \mathcal{C}^+(X) = \{Z\} & \text{if } \xi^+(X) \neq \emptyset,\\
                Y \mbox{ s.t.  } Y \in X & \text{otherwise.}
            \end{dcases}
		\end{align*}
	We will prove that: (i) $\FRsel$ is indeed a selection function,  and (ii) 
	that $\modelsof{\mExp(\baseb, \mSet)} = \FRsel(\FRsups(\modelsof{\baseb} \cup \mSet, \logsys))$.
	
\begin{enumerate}[label= (\roman*), leftmargin=*]
	\item \textit{$\mathbf{\FRsel}$ is indeed a selection function.  }
        Let $X \in \nepwset(\FRsets(\logsys))$.
		We only need to show that $\FRsel(X) \in X$.  
		The case that $\xi^+(X) = \emptyset$  is trivial,  as $\FRsel$ chooses an arbitrary $Y \in X$ (by the axiom of choice).  Let us focus on the case $\xi^+(X) \neq \emptyset$.  
		From above,  we have that $\FRsel(X) = Z$,  where $\mathcal{C}^+(X) = \{Z \}$.  
		By definition of $\mathcal{C}^+$,  we have that
		 there is a pair  $(\baseb, \mSet) \in \xi^+(X)$ such that
		\[ Z = \modelsof{\mExp(\baseb,\mSet)}. \]
		Let us fix such a $(\baseb, \mSet) \in \xi^+(X)$.  
		Thus,  by definition of $\xi^+$, we get
		\[ X = \FRsups(\modelsof{\baseb} \cup \mSet, \logsys).\]
		
From \cref{obs:_finiteMINFR}, we get that $\modelsof{\mExp(\baseb,\mSet)} \in  \FRsups(\modelsof{\baseb} \cup \mSet, \logsys)$. Thus,  from the identities above we get that $Z \in X$,  which means $\FRsel(X) \in X$. 
	\item $\modelsof{\mExp(\baseb, \mSet)} = \FRsel(\FRsups(\modelsof{\baseb} \cup \mSet, \logsys))$.  
		Let $X = \FRsups(\modelsof{\baseb}\cup \mSet, \logsys)$.  
        We know that \(X \neq \emptyset\) due to \mexnm-compatibility.
			By definition of $\xi^+$, we get that 
			\[(\baseb, \mSet) \in \xi^+(X). \]
			By construction,  we have that $\FRsel(X) = Z$ such that  	$\mathcal{C}^+(X) = \{ Z\}$,  which implies  from definition of $\mathcal{C}^+$ that $Z = \modelsof{\mExp(\baseb', \mSet')}$,  for some $(\baseb', \mSet') \in \xi^+(X)$.  
			Therefore,  as 	$(\baseb, \mSet) \in \xi^+(X)$, we get 
			from \cref{lem:app:modExpxi},  that for all  $(\baseb', \mSet') \in \xi^+(X),  $
		\[  \modelsof{\mExp(\baseb, \mSet)} = \modelsof{\mExp(\baseb', \mSet')}. \]
	Thus,  
$\modelsof{\mExp(\baseb, \mSet)} = Z$.  
Thus as $\FRsel(X) = Z$ and $X = \FRsups(\modelsof{\baseb}\cup \mSet, \logsys)$, we have that 
\begin{align*} 
    &\modelsof{\mExp(\baseb, \mSet)} =\\
    &\qquad\FRsel(\FRsups(\modelsof{\baseb} \cup \mSet, \logsys)).
\end{align*}%
\end{enumerate}

%    Let \(\mExp : \finitepwset(\llang) \times \mUni \to
%    \finitepwset(\llang)\) be a function satisfying the three postulates stated.
%    Also, let \(\mSet = \modelsof{\mExp(\baseb, \modelm)}\). 
%    First, if \(\modelm' \not\in \mSet\) then, by success, \(\modelm' \neq \modelm\)
%    and by persistence \(\modelm \not\in \modelsof{\baseb}\). Hence, \(\mSet\)
%    is a finitely representable superset of \(\modelsof{\baseb} \cup
%    \{\modelm\}\). If we assume that \(\mSet\) is not \(\subseteq\)-minimal,
%    there must be a \(\mSet' \in \FRsups(\modelsof{\baseb} \cup \{\modelm\}),
%    \logsysALCf\)
%    with \(\mSet' \subset \mSet\). Let \(\modelm' \in \mSet \cup \mSet'\),
%    as \(\modelm' \not\in \modelsof{\baseb} \cup \{\modelm\}\), finite
%    temperance implies that there is no \(\mSet^{''} \in \FRsets(\logsysALCf)\) such
%    that \(\modelsof{\baseb} \cup \{\modelm\} \subseteq \mSet^{''} \subset
%    \mSet\) and \(\modelm' \not\in \mSet^{''}\). However, we could take
%    \(\mSet^{''} = \mSet'\), a contradiction. Therefore, no such \(\mSet'\)
%    could exist, and thus, \(\mSet \in \FRsups(\modelsof{\baseb} \cup
%    \{\modelm\}, \logsysALCf)\). This means that there is a finite representation selection
%    function \(\FRsel'\) such that \(\mExp = \mExp_{\FRsel'}\).
\end{proof}

\vacuityExp*
\begin{proof}
Assume that $\mExp$ satisfies finite temperance and persistence,  and that $\mSet \subseteq \modelsof{\baseb}$.  
This means that $\modelsof{\baseb}$ is the closest finite representable superset of $\baseb$ containing $\mSet$.  
From persistence,  $\modelsof{\baseb} \subseteq \modelsof{\mExp(\baseb, \mSet)}$.  Thus,  from finite temperance,  
we get that $\modelsof{\mExp(\baseb, \mSet)} = \modelsof{\baseb}$.  
\end{proof}

\UniqueFR*

\begin{proof}
%    {\color{blue}
    We only prove the result for \(\FRsups\), as the case for \(\FRsubs\) is
    analogous.
    Assume that \(\mSet_1, \mSet_2 \in \FRsups(\mSet, \logsys)\) with  \(\mSet_1
    \neq \mSet_2\).
    Since \(\mSet_1\) and \(\mSet_2\) are finitely representable
    in \(\llang\), there are two
    finite bases \(\baseb_1\), \(\baseb_2\) such that \(\modelsof{\baseb_1}
    = \mSet_1\) and \(\modelsof{\baseb_2} = \mSet_2\). 

    First, let \(\mSet' = \modelsof{\baseb_1 \cup \baseb_2}\) and
    \(\modelm \in \mSet\). We know that \(\modelm \in
    \mSet_1 \cap \mSet_2\) because both \(\mSet_1\) and \(\mSet_2\) are
    supersets of \(\mSet\). From the RMBP, it
    holds that \(\modelm \in \modelsof{\baseb_1 \cup \baseb_2}\).
    Since the choice of \(\modelm\) was
    arbitrary, we can conclude that \(\mSet \subseteq \mSet'\).

    Now, let \(\modelm \in \mSet'\). Due to the RMBP we have that \(\modelm \in
    \modelsof{\baseb_1} = \mSet_1\) and \(\modelm \in \modelsof{\baseb_2}
    = \mSet_2\). That is, \(\mSet' \subseteq \mSet_1 \cap \mSet_2\).
    Therefore, \(\mSet'\) is a finitely representable (just take \(\baseb_1 \cup
    \baseb_2\) as the base) superset of \(\mSet\). However, since we assume that 
    \(\mSet_1, \mSet_2 \in \FRsups(\mSet, \logsys)\), by minimality we get that 
    \(\mSet_1 = \mSet' \subseteq \mSet_1 \cap \mSet_2\) which implies
    \(\mSet_1 = \mSet_2\), a contradiction. Hence, there can be
    at most one set of models in \(\FRsups(\mSet, \logsys)\).
%}
\end{proof}

%\section{Proofs for \Cref{AGMlink}}
%
%\rAGMRcp*
%
%\begin{proof}
%    Let $\baseb$ be a finite base,  and $\mExp$ be a {\mexnm} operation satisfying both success and persistence.  We will show that $\mExp$ satisfies $M^+_6$ iff it satisfies \textit{finite temperance}
%    \begin{itemize}
%        \item[]``$\Rightarrow$'' Let it satisfy $M^+_6$.   Let $\mSet$ be a set of models.  We have that $\modelsof{\mExp(\baseb,\mSet)}$ is then the least finitely representable superset of $\modelsof{\baseb}$ that contains $\mSet$.  % and $\mExp$ is a finite base.  Therefore,  for each $\mSet'$ between $\modelsof\baseb$ and $\modelsof{\mExp(\baseb,\mSet)}$, 
%            Therefore,  \textit{finite temperance} is satisfied. 
%        \item[] ``$\Leftarrow$'' Let it satisfy \textit{finite temperance}.  Then
%            $\modelsof{\mExp(\baseb,\mSet)}$ is a superset of $\modelsof{ \baseb}
%            \cup\mSet$, and for any set of models $\mSet' $,  if $\modelsof{\baseb}
%            \cup \mSet \subset \mSet' \subseteq \modelsof{\mExp(\baseb, \mSet)}$ then $\mSet'$ is not finitely representable.  
%            Therefore,   $\mSet$ if the least finitely representable superset of
%            $\modelsof{\baseb}$ containing $\mSet$, that is, it satisfies $M^+_6$
%            \qedhere
%    \end{itemize}
%\end{proof}

\section{Proofs for \Cref{sec:uniq}}

First, we prove in our claim about \mconnm- and \mexnm-compatibility of \(\logsys_q\) from \cref{ex:rational} with \cref{prop:rational}.

\begin{proposition}%
\label{prop:rational}
    Let \(\logsys_{q} = (\llang_q, \mUni_q, \models_q)\) be such that
    \(\llang_q = \{[x, y] \mid x, y \in \mathbb{Q} \text{ and } x \leq y\}\),
    \(\mUni_q = \mathbb{Q}\) and
    \(Q \models_q \baseb\) (with \(Q \subseteq \mathbb{Q}\)) iff for all \(z \in Q\), \(x \leq z \leq y\) for every \([x, y] \in \baseb\).
\end{proposition}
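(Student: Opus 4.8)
The plan is to prove that \(\logsys_q\) is neither eviction-compatible nor reception-compatible, by exhibiting in each case a finite base \(\baseb\) and a set of models \(\mSet\) for which the relevant family of optimal finitely representable approximations is empty. The preliminary step is to pin down \(\FRsets(\logsys_q)\). For any \(\baseb\in\finitepwset(\llang_q)\), the set \(\modelsof{\baseb}\) is the intersection of the closed rational intervals occurring in \(\baseb\) with \(\mathbb{Q}\); a finite such intersection is again a closed interval with rational endpoints, or \(\emptyset\) (witnessed e.g.\ by \(\baseb=\{[0,0],[1,1]\}\)), while the empty base yields all of \(\mathbb{Q}\). Conversely each of these sets is realised, so \(\FRsets(\logsys_q)=\{\emptyset,\mathbb{Q}\}\cup\{[a,b]\cap\mathbb{Q}\mid a,b\in\mathbb{Q},\,a\le b\}\). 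The structural fact I will reuse is that, apart from \(\emptyset\) and \(\mathbb{Q}\), the only finitely representable sets are those with both a rational minimum and a rational maximum.

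For eviction-incompatibility I take \(\baseb=\{[0,1]\}\) and \(\mSet=\{1\}\), so that \(\modelsof{\baseb}\setminus\mSet=[0,1)\cap\mathbb{Q}\). Any finitely representable subset of \([0,1)\cap\mathbb{Q}\) is either \(\emptyset\) or of the form \([a,b]\cap\mathbb{Q}\) with \(0\le a\le b<1\). In the first case \(\{0\}\) is a strictly larger finitely representable subset of \([0,1)\cap\mathbb{Q}\); in the second, choosing a rational \(b'\) with \(b<b'<1\) (density of \(\mathbb{Q}\)), the set \([a,b']\cap\mathbb{Q}\) is a strictly larger finitely representable subset of \([0,1)\cap\mathbb{Q}\). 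Hence the inclusion-ordered collection of finitely representable subsets of \([0,1)\cap\mathbb{Q}\) has no maximal element, i.e.\ \(\FRsubs(\modelsof{\baseb}\setminus\mSet,\logsys_q)=\emptyset\), and \(\logsys_q\) is not eviction-compatible.

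For reception-incompatibility, a naive attempt with a rational open interval fails: the closure in \(\mathbb{Q}\) of, say, \((0,1)\cap\mathbb{Q}\) is the finitely representable set \([0,1]\cap\mathbb{Q}\), which is already its unique \(\subseteq\)-minimal finitely representable superset. The obstruction must instead come from an irrational bound. I fix an irrational \(\alpha\) (e.g.\ \(\alpha=\sqrt{2}\)) and take \(\baseb=\{[1,1]\}\) together with \(\mSet=\{q\in\mathbb{Q}\mid 1<q<\alpha\}\), so that \(\modelsof{\baseb}\cup\mSet=[1,\alpha)\cap\mathbb{Q}\). A finitely representable superset of \([1,\alpha)\cap\mathbb{Q}\) is either \(\mathbb{Q}\) or a set \([a,b]\cap\mathbb{Q}\) with \(a\le 1\) and \(b\ge\alpha\); since \(b\) is rational and \(\alpha\) is not, in fact \(b>\alpha\). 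Each such superset can be strictly shrunk while still containing \([1,\alpha)\cap\mathbb{Q}\): replace \(\mathbb{Q}\) by \([1,2]\cap\mathbb{Q}\), and replace \([a,b]\cap\mathbb{Q}\) by \([a,b']\cap\mathbb{Q}\) for a rational \(b'\) with \(\alpha<b'<b\). So there is no \(\subseteq\)-minimal finitely representable superset of \([1,\alpha)\cap\mathbb{Q}\), i.e.\ \(\FRsups(\modelsof{\baseb}\cup\mSet,\logsys_q)=\emptyset\), and \(\logsys_q\) is not reception-compatible. (Equivalently, both conclusions can be read off \cref{incFRsubs}: once adjoined to \((\FRsets(\logsys_q),\subset)\), the set \([0,1)\cap\mathbb{Q}\) has no immediate predecessor and \([1,\alpha)\cap\mathbb{Q}\) no immediate successor, and neither set meets the remaining two alternatives of that theorem.)

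I expect the only real subtlety to be the reception half: one has to notice that taking topological closures inside \(\mathbb{Q}\) never creates the obstruction, because such a closure is a rational interval and hence finitely representable, and that the genuine source of trouble is the clash between the irrational infimum/supremum of the target set and the rationality of the interval endpoints available in \(\llang_q\). The eviction half, and the computation of \(\FRsets(\logsys_q)\), are then direct unwindings of the definitions of \(\FRsubs\), \(\FRsups\), and eviction-/reception-compatibility.
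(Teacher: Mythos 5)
Your proof is correct, and the eviction half is essentially the paper's argument: take \(\baseb=\{[0,1]\}\), \(\mSet=\{1\}\), and observe that the finitely representable subsets of \([0,1)\cap\mathbb{Q}\) (namely \(\emptyset\) and the sets \([a,b]\cap\mathbb{Q}\) with \(0\le a\le b<1\)) form a chain-unbounded family with no \(\subseteq\)-maximal element, so \(\FRsubs(\modelsofx{\baseb}{\logsys_q}\setminus\mSet,\logsys_q)=\emptyset\). The reception half is where you genuinely diverge, and for good reason: the paper's own witness is the base \(\{[0.5,1]\}\) with input models \((0,1]\cap\mathbb{Q}\), whose target set \((0,1]\cap\mathbb{Q}\) in fact \emph{does} have a unique \(\subseteq\)-minimal finitely representable superset, namely \([0,1]\cap\mathbb{Q}\) (any representable superset \([a,b]\cap\mathbb{Q}\) forces \(a\le 0\) and \(b\ge 1\), so it contains \([0,1]\cap\mathbb{Q}\); the paper's intermediate claim that candidates have the form \([x',1]\) with \(x'>0\) confuses subsets with supersets). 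Your observation that the closure in \(\mathbb{Q}\) of a rationally-bounded set is itself representable pinpoints exactly why that example cannot work, and your repair --- targeting \([1,\sqrt{2})\cap\mathbb{Q}\), whose supremum is irrational, so that every representable superset \([a,b]\cap\mathbb{Q}\) has \(b>\sqrt{2}\) and can be strictly shrunk to \([a,b']\cap\mathbb{Q}\) with \(\sqrt{2}<b'<b\) --- is a valid construction showing \(\FRsups=\emptyset\). The explicit computation of \(\FRsets(\logsys_q)\) up front also makes the case analysis cleaner than the paper's. In short: same strategy of exhibiting an infinite strictly improving chain of approximations, but your choice of witness for the reception direction is the one that actually establishes the claim.
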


\begin{proof}

    We will show that this system is not \mconnm-compatible.
    Consider the base \(\{[0, 1]\}\) and the set of models \(\{1\}\). 
    Since the language only admits closed intervals and by definition of \(\models_q\), any finite base in \(\llang_q\) is either inconsistent or equivalent to a single continuous interval.
    Therefore, for any \(\baseb' \in \finitepwset(\llang_q)\) that does not include \(\{1\}\) there will always be a finite base that has more models.
    More precisely, let \(\mathopen[x', y'\mathclose]\) be the interval corresponding to a candidate finite base \(\baseb'\).
    We can assume without loss of generality that \(y' < 1\) and we know that there are infinitively many rational numbers between \(y'\) and \(1\). 
    Thus, we can always extend the interval to a new rational, capturing more models than before, without losing finite representability or including \(1\) in the models of the base.
    Therefore \(\FRsubs(\left[0, 1\right), \logsys_q) = \emptyset\), that is, \(\logsys_q\) is not \mconnm-compatible.

    Now, we will prove that \(\logsys_q\) is not \mexnm-compatible.
    Consider the base \(\{\mathopen[0.5,1\mathclose]\}\) and the set of models \(\mathopen(0, 1\mathclose]\).
    Using the same argument as before, we can conclude that \(\FRsups(\mSet, \logsys_q)\) corresponds to either the smallest closed interval containing \(\left(0, 1\right]\).
    Since \(\{[0, 1]\}\) is finitely representable, any candidate must be equivalent to a closed interval \([x', y']\) such that \(0 < x' < y' = 1\).
    Otherwise, either it would not be a superset of \(\left(0, 1\right]\), or would include too many models, losing minimality.
    However, for any \(x' \in \mathbb{Q}\) with \(0 < x' < 1\) there is a \(x^{\prime\prime}\) with \(0 < x^{\prime\prime} < x^\prime\).
    This means that we can always find a candidate finite base that has fewer models. 
    Therefore, \(\FRsups(\left(0, 1\right], \logsys_q) = \emptyset\), that is, \(\logsys_q\) is not \mexnm-compatible.
\end{proof}

\begin{proposition}\label{thirdEvcCond}
    Let \(\logsys = (\llang, \mUni, \models)\) and \(\mSet \subseteq \mUni\).
    Then, there are \(\baseb \in \finitepwset(\llang)\) and \(\mSet' \subseteq \mUni\) such that \(\modelsof{\baseb} \setminus \mSet' = \mSet\), iff there is a \(\mSet^{\prime\prime} \in \FRsets(\logsys)\) with \(\mSet \subseteq \mSet^{\prime\prime}\).
\end{proposition}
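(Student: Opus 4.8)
The plan is to prove the biconditional directly by unfolding the definition of $\FRsets(\logsys)$ in both directions, with the witness bases and model sets produced explicitly.

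For the forward direction, I would assume that there exist $\baseb \in \finitepwset(\llang)$ and $\mSet' \subseteq \mUni$ with $\modelsof{\baseb} \setminus \mSet' = \mSet$. Since removing $\mSet'$ from $\modelsof{\baseb}$ only shrinks the set, we immediately get $\mSet \subseteq \modelsof{\baseb}$. Moreover, $\baseb$ being a finite base means $\modelsof{\baseb} \in \FRsets(\logsys)$ by definition of $\FRsets$. Hence $\mSet^{\prime\prime} \coloneqq \modelsof{\baseb}$ is the required finitely representable superset of $\mSet$.

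For the backward direction, I would assume there is $\mSet^{\prime\prime} \in \FRsets(\logsys)$ with $\mSet \subseteq \mSet^{\prime\prime}$. By definition of $\FRsets(\logsys)$, pick $\baseb \in \finitepwset(\llang)$ with $\modelsof{\baseb} = \mSet^{\prime\prime}$, and set $\mSet' \coloneqq \mSet^{\prime\prime} \setminus \mSet$. Then a short set-theoretic computation gives $\modelsof{\baseb} \setminus \mSet' = \mSet^{\prime\prime} \setminus (\mSet^{\prime\prime} \setminus \mSet) = \mSet^{\prime\prime} \cap \mSet = \mSet$, where the last equality uses $\mSet \subseteq \mSet^{\prime\prime}$. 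This exhibits the required $\baseb$ and $\mSet'$.

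There is no real obstacle here: the statement is essentially a reformulation of the third alternative in \cref{incFRsubs} and the proof is a direct manipulation of definitions. The only point requiring a modicum of care is the choice of $\mSet'$ in the backward direction — one must take the complement of $\mSet$ relative to $\mSet^{\prime\prime}$ (not relative to $\mUni$), so that the subtraction leaves exactly $\mSet$ rather than a smaller set; the hypothesis $\mSet \subseteq \mSet^{\prime\prime}$ is exactly what makes this work.
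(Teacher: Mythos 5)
Your proof is correct and follows essentially the same route as the paper's: both directions are handled by exhibiting explicit witnesses directly from the definition of \(\FRsets(\logsys)\), with \(\mSet^{\prime\prime} = \modelsof{\baseb}\) in the forward direction and a base for \(\mSet^{\prime\prime}\) in the backward direction. The only difference is cosmetic: in the backward direction the paper chooses \(\mSet' = \mUni \setminus \mSet\) rather than your \(\mSet^{\prime\prime} \setminus \mSet\), and---contrary to your closing remark that the complement must be taken relative to \(\mSet^{\prime\prime}\)---that choice also works, since \(\mSet^{\prime\prime} \setminus (\mUni \setminus \mSet) = \mSet^{\prime\prime} \cap \mSet = \mSet\) under the hypothesis \(\mSet \subseteq \mSet^{\prime\prime}\).
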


\begin{proof}
    \(\Rightarrow\): If we suppose that there are \(\baseb \in \finitepwset(\llang)\) and \(\mSet' \subseteq \mUni\) such that \(\modelsof{\baseb} \setminus \mSet' = \mSet\), then we can take \(\mSet^{\prime\prime} = \modelsof{\baseb}\).
    
    \(\Leftarrow\): Assuming that there is a \(\mSet^{\prime\prime} \in \FRsets(\logsys)\) with \(\mSet \subseteq \mSet^{\prime\prime}\), we can take \(\baseb \in \finitepwset(\llang)\) such that \(\modelsof{\baseb} = \mSet^{\prime\prime}\) and \(\mSet' = (\mUni \setminus \mSet)\). Then \(\modelsof{\baseb} \setminus \mSet' = \mSet^{\prime\prime} \setminus \mUni \setminus \mSet\), and as \(\mSet \subseteq \mSet^{\prime\prime} \subseteq \mUni\), we get \(\modelsof{\baseb} \setminus \mSet' = \mSet\).
\end{proof}

\IncFRsubs*
\begin{proof} 
	We split the statement of the theorem into
	the following two claims, which directly imply the theorem. 
	\begin{claim}
A satisfaction system \( \logsys = (\llang, \mUni, \models)\)  is \mconnm-compatible iff for every \(\mSet \subseteq \mUni\) either (i) \(\mSet \in \FRsets(\logsys)\), (ii) \(\mSet\) has an immediate predecessor in
\((\FRsets(\logsys) \cup \{\mSet\}, \subset)\), or (iii) there is no \(\mSet' \in \FRsets(\Lambda)\) with \(\mSet \subseteq \mSet'\).
	\end{claim}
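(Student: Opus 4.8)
The plan is to derive the claim from two ingredients already available: \cref{thirdEvcCond}, which identifies exactly which sets of models arise in the form $\modelsof{\baseb}\setminus\mSet$ with $\baseb$ finite, and a pointwise restatement of when $\FRsubs$ is non-empty. So first I would prove the auxiliary observation: for every $\mSet\subseteq\mUni$, $\FRsubs(\mSet,\logsys)\neq\emptyset$ if and only if $\mSet\in\FRsets(\logsys)$ or $\mSet$ has an immediate predecessor in the poset $(\FRsets(\logsys)\cup\{\mSet\},\subset)$. If $\mSet\in\FRsets(\logsys)$ this is immediate, since then $\mSet$ is itself the $\subseteq$-largest element of $\FRsubs(\mSet,\logsys)$. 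If $\mSet\notin\FRsets(\logsys)$, then unwinding \cref{def:FRsubs}, a set $\mSet_1$ lies in $\FRsubs(\mSet,\logsys)$ exactly when $\mSet_1\in\FRsets(\logsys)$, $\mSet_1\subsetneq\mSet$, and there is no $\mSet''\in\FRsets(\logsys)$ with $\mSet_1\subset\mSet''\subseteq\mSet$; since $\mSet\notin\FRsets(\logsys)$, this last condition is equivalent to there being no $x\in\FRsets(\logsys)\cup\{\mSet\}$ with $\mSet_1\subset x\subset\mSet$, i.e.\ to $\mSet_1$ being an immediate predecessor of $\mSet$ in $(\FRsets(\logsys)\cup\{\mSet\},\subset)$ in the sense of \cref{def:immediate}.

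For the direction from right to left, I would fix $\baseb\in\finitepwset(\llang)$ and $\mSet\subseteq\mUni$ and show $\FRsubs(\modelsof{\baseb}\setminus\mSet,\logsys)\neq\emptyset$. Applying the hypothesis to the set $\mSet_0=\modelsof{\baseb}\setminus\mSet$, option (iii) is ruled out because $\mSet_0\subseteq\modelsof{\baseb}\in\FRsets(\logsys)$, so (i) or (ii) holds for $\mSet_0$, and the auxiliary observation then yields $\FRsubs(\mSet_0,\logsys)\neq\emptyset$. Hence $\logsys$ is \mconnm-compatible. For the converse, assume $\logsys$ is \mconnm-compatible and fix $\mSet\subseteq\mUni$. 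If no $\mSet'\in\FRsets(\logsys)$ satisfies $\mSet\subseteq\mSet'$, then (iii) holds and we are done. Otherwise \cref{thirdEvcCond} provides $\baseb\in\finitepwset(\llang)$ and $\mSet''\subseteq\mUni$ with $\modelsof{\baseb}\setminus\mSet''=\mSet$, so $\FRsubs(\mSet,\logsys)\neq\emptyset$ by \mconnm-compatibility, and the auxiliary observation again forces (i) or (ii) for $\mSet$.

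The main subtlety, and the step I expect to need the most care, is the equivalence in the auxiliary observation: one has to check that adjoining the single (possibly non-representable) point $\mSet$ to the poset does not create spurious intermediate elements, which is exactly why the case split on whether $\mSet\in\FRsets(\logsys)$ is needed — when $\mSet\in\FRsets(\logsys)$ clause (i) applies directly, and when $\mSet\notin\FRsets(\logsys)$ the adjoined point can never sit strictly below $\mSet$. Everything else is a direct unfolding of \cref{def:FRsubs}, \cref{def:immediate} and \cref{thirdEvcCond}, so no further obstacle is anticipated.
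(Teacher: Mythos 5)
Your proposal is correct and follows essentially the same route as the paper's proof: both reduce the claim to the observation that, for $\mSet\notin\FRsets(\logsys)$, the elements of $\FRsubs(\mSet,\logsys)$ are exactly the immediate predecessors of $\mSet$ in $(\FRsets(\logsys)\cup\{\mSet\},\subset)$, and both invoke \cref{thirdEvcCond} to identify condition (iii) with $\mSet$ not being expressible as $\modelsof{\baseb}\setminus\mSet'$. Extracting this as an explicit auxiliary observation is a slightly cleaner packaging, but the substance, including the case split on $\mSet\in\FRsets(\logsys)$, is identical to the paper's argument.
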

\begin{proof} 
   % To simplify this proof, we will use the notion of immediate predecessor.
	%Given a descending discrete chain \(C\), we say that \(x\) is the \emph{immediate 
		%predecessor of \(y\) in \(C\)} if \(x, y \in C\), \(x < y\) and no other \(z \in C\) is
	%such that \(x < z < y\).
	
    \(\Rightarrow\): Suppose that \(\logsys\) is \mconnm-compatible, that is, \(\FRsups(\modelsof{\baseb} \setminus \mSet, \logsys) \neq \emptyset\) for all \(\baseb \in \finitepwset(\llang)\) and \(\mSet \subseteq \mUni\).

    Let \(\mSet_1 \subseteq \mUni\). If \(\mSet_1 \in \FRsets(\logsys)\) then the \lcnamecref{incFRsubs} holds trivially.

    Now, we consider two cases \(\FRsubs(\mSet_1, \logsys) \neq \emptyset\) and \(\FRsubs(\mSet_1, \logsys) = \emptyset\).
	
	In the first case, we know there is a \(\mSet_2 \in \FRsubs(\mSet_1, \logsys)\). 
    We will show that \(\mSet_2\) is an immediate predecessor of \(\mSet_1\).
    Since \(\mSet_2 \in \FRsubs(\mSet_1, \logsys)\), \(\mSet_2 \subseteq \mSet_1\) and by \cref{def:FRsubs} there is no \(\mSet_2^{\prime} \in \FRsets(\logsys)\) such that \(\mSet_2 \subset \mSet_2^{\prime} \subset \mSet_1\).
    Consequently, \(\mSet_2\) is an immediate predecessor of \(\mSet_1\) in
    \((\FRsets(\logsys) \cup \{\mSet_1\}, \subset)\).

    In the second case, due to \mconnm-compatibility, we know that there is no \(\baseb \in \finitepwset(\llang)\) and \(\mSet_3 \subseteq \mUni\) such that \(\mSet_1 = \modelsof{\baseb} \setminus \mSet_3\). 
    Therefore, we can use \cref{thirdEvcCond} to conclude that there is no \(\mSet' \in \FRsets(\Lambda)\) with \(\mSet_1 \subseteq \mSet'\).
	
	\(\Leftarrow\): Assume that for all \(\mSet \subseteq \mUni\), \(\mSet \in \FRsets(\logsys)\), \(\mSet\) has an immediate predecessor in \((\FRsets(\logsys) \cup \{\mSet\}, \subset)\), or there is no \(\mSet' \in \FRsets(\Lambda)\) with \(\mSet \subseteq \mSet'\).
    Let \(\mSet_1 \subseteq \mUni\). 
    We consider following cases.

    \begin{enumerate}[label= (\roman*), leftmargin=*]
        \item \(\mSet_1 \in \FRsets(\logsys)\): by \cref{def:FRsubs} we have that \(\FRsubs(\mSet_1, \logsys) = \{\mSet_1\} \neq \emptyset\).
        \item \(\mSet_1\) has an immediate predecessor in \((\FRsets(\logsys) \cup \{\mSet_1\}, \subset)\): then there is a \(\mSet_2 \in \FRsets(\logsys)\) such that \(\mSet_2 \subset \mSet_1\) and there is no \(\mSet_2^{\prime} \in \FRsets(\logsys)\) such that \(\mSet_2 \subset \mSet_2^{\prime} \subset \mSet_1\). In other words, \(\mSet_2 \in \FRsubs(\mSet_1, \logsys)\).
        \item There is no \(\mSet' \in \FRsets(\Lambda)\) with \(\mSet_1 \subseteq \mSet'\): then, we know from \cref{thirdEvcCond} that there is no \(\baseb \in \finitepwset(\llang)\) and \(\mSet^{\prime\prime} \subseteq \mUni\) such that \(\mSet_1 = \modelsof{\baseb} \setminus \mSet^{\prime\prime}\).
    \end{enumerate}
    Hence, if there are \(\baseb \in \finitepwset(\logsys)\) and \(\mSet \in \mUni\) such that \(\modelsof{\baseb} \setminus \mSet = \mSet_1\), then \(\FRsubs(\mSet_1, \logsys) \neq \emptyset\).
    Since the choice of \(\mSet_1\) was arbitrary, we can conclude that \(\logsys\) is \mconnm-compatible.
\end{proof}

\begin{proposition}\label{thirdRcpCond}
    Let \(\logsys = (\llang, \mUni, \models)\) and \(\mSet \subseteq \mUni\).
    There are \(\baseb \in \finitepwset(\llang)\) and \(\mSet' \subseteq \mUni\) such that \(\modelsof{\baseb} \cup \mSet' = \mSet\), iff there is a \(\mSet^{\prime\prime} \in \FRsets(\logsys)\) with \(\mSet^{\prime\prime} \subseteq \mSet\).
\end{proposition}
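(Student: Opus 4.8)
The plan is to mirror the proof of \cref{thirdEvcCond}, since the reception condition is its exact dual, with unions replacing set differences. The statement is a biconditional, so I would prove the two directions separately, each by exhibiting an explicit witness.

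For the forward direction (\(\Rightarrow\)), I assume there are \(\baseb \in \finitepwset(\llang)\) and \(\mSet' \subseteq \mUni\) with \(\modelsof{\baseb} \cup \mSet' = \mSet\). Then the required witness is simply \(\mSet^{\prime\prime} := \modelsof{\baseb}\): it lies in \(\FRsets(\logsys)\) by definition of \(\FRsets\), and \(\mSet^{\prime\prime} = \modelsof{\baseb} \subseteq \modelsof{\baseb} \cup \mSet' = \mSet\), so \(\mSet^{\prime\prime} \subseteq \mSet\) as needed.

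For the backward direction (\(\Leftarrow\)), I assume there is some \(\mSet^{\prime\prime} \in \FRsets(\logsys)\) with \(\mSet^{\prime\prime} \subseteq \mSet\). By definition of \(\FRsets\) there is a finite base \(\baseb \in \finitepwset(\llang)\) with \(\modelsof{\baseb} = \mSet^{\prime\prime}\); I then set \(\mSet' := \mSet\) (equivalently one could take \(\mSet' := \mSet \setminus \mSet^{\prime\prime}\), since \(\mSet'\) is unconstrained). Then \(\modelsof{\baseb} \cup \mSet' = \mSet^{\prime\prime} \cup \mSet = \mSet\), where the last equality uses \(\mSet^{\prime\prime} \subseteq \mSet\). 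This gives the required pair \((\baseb, \mSet')\).

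There is no real obstacle here: the proposition is essentially bookkeeping about the shape of sets of the form \(\modelsof{\baseb} \cup \mSet'\). The only point worth flagging is that \(\mSet'\) ranges over arbitrary subsets of \(\mUni\) (just as in \cref{thirdEvcCond}), which is precisely what makes both directions immediate — the finitely representable ``core'' must come from \(\modelsof{\baseb}\), while \(\mSet'\) can absorb everything else.
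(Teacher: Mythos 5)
Your proof is correct and follows essentially the same route as the paper's: in the forward direction the paper also takes \(\mSet''=\modelsof{\baseb}\), and in the backward direction it likewise picks a finite base for \(\mSet''\) and sets \(\mSet'=\mSet\), using \(\mSet''\subseteq\mSet\) to conclude \(\modelsof{\baseb}\cup\mSet'=\mSet\). Your additional remarks (the explicit check that \(\mSet''\subseteq\mSet\) and the alternative choice \(\mSet'=\mSet\setminus\mSet''\)) are harmless elaborations of the same argument.
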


\begin{proof}
    \(\Rightarrow\): If we suppose that there are \(\baseb \in \finitepwset(\llang)\) and \(\mSet' \subseteq \mUni\) such that \(\modelsof{\baseb} \cup \mSet' = \mSet\), then we can take \(\mSet^{\prime\prime} = \modelsof{\baseb}\).
    
    \(\Leftarrow\): Assuming that there is a \(\mSet^{\prime\prime} \in \FRsets(\logsys)\) with \(\mSet^{\prime\prime} \subseteq \mSet\), we can take \(\baseb \in \finitepwset(\llang)\) such that \(\modelsof{\baseb} = \mSet^{\prime\prime}\) and \(\mSet' = \mSet\).
    Then \(\modelsof{\baseb} \cup \mSet' = \mSet^{\prime\prime} \cup \mSet\), and as \(\mSet^{\prime\prime} \subseteq \mSet\), we get \(\modelsof{\baseb} \cup \mSet' = \mSet\).
\end{proof}

%\TauFRsups*
\begin{claim}
A satisfaction system \( \logsys = (\llang, \mUni, \models)\) is \mexnm-compatible iff for every \(\mSet \subseteq \mUni\) either (i) \(\mSet \in \FRsets(\logsys)\), (ii) \(\mSet\) has an immediate successor in
\((\FRsets(\logsys) \cup \{\mSet\}, \subset)\), or (iii) there is no \(\mSet' \in \FRsets(\Lambda)\) with \(\mSet' \subseteq \mSet\).
\end{claim}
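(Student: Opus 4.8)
The plan is to mirror, step for step, the proof of the \mconnm{} claim that precedes it, dualising throughout: replace $\FRsubs$ by $\FRsups$, the set difference $\modelsof{\baseb}\setminus\mSet$ by the union $\modelsof{\baseb}\cup\mSet$, ``largest finitely representable subset / immediate predecessor'' by ``smallest finitely representable superset / immediate successor'', and invoke \cref{thirdRcpCond} wherever the earlier argument used \cref{thirdEvcCond}. Unfolding the definitions, \mexnm-compatibility asserts that $\FRsups(\modelsof{\baseb}\cup\mSet,\logsys)\neq\emptyset$ for every $\baseb\in\finitepwset(\llang)$ and every $\mSet\subseteq\mUni$, and \cref{thirdRcpCond} tells us that a set $\mSet_1\subseteq\mUni$ is of the form $\modelsof{\baseb}\cup\mSet$ for some finite base $\baseb$ and some $\mSet\subseteq\mUni$ exactly when some element of $\FRsets(\logsys)$ is contained in $\mSet_1$.

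For the forward direction I would assume $\logsys$ is \mexnm-compatible, fix an arbitrary $\mSet_1\subseteq\mUni$, and first dispose of the case $\mSet_1\in\FRsets(\logsys)$, which is clause (i). Otherwise I split on whether $\FRsups(\mSet_1,\logsys)$ is empty. If it is non-empty, any $\mSet_2\in\FRsups(\mSet_1,\logsys)$ satisfies, by \cref{def:FRsups}, $\mSet_1\subseteq\mSet_2$ with no element of $\FRsets(\logsys)$ strictly between $\mSet_1$ and $\mSet_2$; hence $\mSet_2$ is an immediate successor of $\mSet_1$ in $(\FRsets(\logsys)\cup\{\mSet_1\},\subset)$, giving clause (ii). If it is empty, then by \mexnm-compatibility $\mSet_1$ cannot equal $\modelsof{\baseb}\cup\mSet_3$ for any $\baseb\in\finitepwset(\llang)$ and $\mSet_3\subseteq\mUni$, so by the contrapositive of \cref{thirdRcpCond} no $\mSet'\in\FRsets(\logsys)$ has $\mSet'\subseteq\mSet_1$, which is clause (iii).

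For the converse I would assume the trichotomy holds for every subset of $\mUni$, fix an arbitrary $\mSet_1\subseteq\mUni$, and show that $\FRsups(\mSet_1,\logsys)\neq\emptyset$ whenever $\mSet_1$ is of the form $\modelsof{\baseb}\cup\mSet$: in case (i) \cref{def:FRsups} gives $\FRsups(\mSet_1,\logsys)=\{\mSet_1\}$; in case (ii) an immediate successor $\mSet_2$ of $\mSet_1$ is by definition a minimal finitely representable superset, so $\mSet_2\in\FRsups(\mSet_1,\logsys)$; and in case (iii) \cref{thirdRcpCond} shows $\mSet_1$ is not of the required form at all, so there is nothing to verify. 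Since $\mSet_1$ was arbitrary, $\FRsups(\modelsof{\baseb}\cup\mSet,\logsys)\neq\emptyset$ for all $\baseb$ and $\mSet$, i.e.\ $\logsys$ is \mexnm-compatible. The argument involves no genuinely new idea once \cref{thirdRcpCond} is in hand; the only point needing a little care --- exactly as in the \mconnm{} proof --- is keeping the quantifiers straight in clause (iii), so that ``$\mSet_1$ has no finitely representable subset'' is used precisely as the negation of ``$\mSet_1=\modelsof{\baseb}\cup\mSet$ for some finite base $\baseb$ and some $\mSet\subseteq\mUni$''.
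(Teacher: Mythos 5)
Your proposal is correct and follows essentially the same route as the paper's own proof: both directions are handled by the same case split on whether \(\mSet_1\) is finitely representable and whether \(\FRsups(\mSet_1,\logsys)\) is empty, with \cref{thirdRcpCond} supplying exactly the equivalence between ``\(\mSet_1\) has a finitely representable subset'' and ``\(\mSet_1=\modelsof{\baseb}\cup\mSet'\) for some finite base \(\baseb\)''. The dualisation of the \mconnm{} argument that you describe is precisely what the paper does.
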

\begin{proof} 
	%To simplify this proof, we will use the notion of immediate successor.
	%Given an ascending discrete chain \(C\), we say that \(y\) is the \emph{immediate 
		%successor of \(x\) in \(C\)} if \(x, y \in C\), \(x < y \) and no other \(z \in C\) is
	%such that \(x < z < y\).
    
    \(\Rightarrow\): Suppose that \(\logsys\) is \mexnm-compatible, that is, \(\FRsubs(\modelsof{\baseb} \cup \mSet, \logsys) \neq \emptyset\) for all \(\baseb \in \finitepwset(\llang)\) and \(\mSet \subseteq \mUni\).

    Let \(\mSet_1 \subseteq \mUni\). If \(\mSet_1 \in \FRsets(\logsys)\) then the theorem %\lcnamecref{tauFRsups}
     holds trivially.

    Now, we consider two cases \(\FRsups(\mSet_1, \logsys) \neq \emptyset\) and \(\FRsups(\mSet_1, \logsys) = \emptyset\).
	
	In the first case, we know there is a \(\mSet_2 \in \FRsups(\mSet_1, \logsys)\). 
    We will show that \(\mSet_2\) is an immediate successor of \(\mSet_1\).
    Since \(\mSet_2 \in \FRsups(\mSet_1, \logsys)\), \(\mSet_1 \subseteq \mSet_2\) and by \cref{def:FRsups} there is no \(\mSet_2^{\prime} \in \FRsets(\logsys)\) such that \(\mSet_1 \subset \mSet_2^{\prime} \subset \mSet_2\).
    Consequently, \(\mSet_2\) is an immediate successor of \(\mSet_1\) in
    \((\FRsets(\logsys) \cup \{\mSet_1\}, \subset)\).

    In the second case, due to \mexnm-compatibility, we know that there is no \(\baseb \in \finitepwset(\llang)\) and \(\mSet_3 \subseteq \mUni\) such that \(\mSet_1 = \modelsof{\baseb} \cup \mSet_3\). 
    Therefore, we can use \cref{thirdRcpCond} to conclude that there is no \(\mSet' \in \FRsets(\Lambda)\) with \(\mSet' \subseteq \mSet_1\).
	
	\(\Leftarrow\): Assume that for all \(\mSet \subseteq \mUni\), \(\mSet \in \FRsets(\logsys)\), \(\mSet\) has an immediate successor in \((\FRsets(\logsys) \cup \{\mSet\}, \subset)\), or there is no \(\mSet' \in \FRsets(\Lambda)\) with \(\mSet' \subseteq \mSet\).
    Let \(\mSet_1 \subseteq \mUni\). 
    We consider following cases.

    \begin{enumerate}[label= (\roman*), leftmargin=*]
        \item \(\mSet_1 \in \FRsets(\logsys)\): by \cref{def:FRsups} we have that \(\FRsups(\mSet_1, \logsys) = \{\mSet_1\} \neq \emptyset\).
        \item \(\mSet_1\) has an immediate successor in the poset \((\FRsets(\logsys) \cup \{\mSet_1\}, \subset)\): then there is a \(\mSet_2 \in \FRsets(\logsys)\) such that \(\mSet_1 \subset \mSet_2\) and there is no \(\mSet_2^{\prime} \in \FRsets(\logsys)\) such that \(\mSet_1 \subset \mSet_2^{\prime} \subset \mSet_2\). In other words, \(\mSet_2 \in \FRsups(\mSet_1, \logsys)\).
        \item There is no \(\mSet' \in \FRsets(\Lambda)\) with \(\mSet' \subseteq \mSet_1\): then, we know from \cref{thirdRcpCond} that there is no \(\baseb \in \finitepwset(\llang)\) and \(\mSet^{\prime\prime} \subseteq \mUni\) such that \(\mSet_1 = \modelsof{\baseb} \cup \mSet^{\prime\prime}\).
    \end{enumerate}
    Hence, if there are \(\baseb \in \finitepwset(\logsys)\) and \(\mSet \in \mUni\) such that \(\modelsof{\baseb} \cup \mSet = \mSet_1\), then \(\FRsups(\mSet_1, \logsys) \neq \emptyset\).
    By the arbitrariety of \(\mSet_1\) we can conclude that \(\logsys\) is \mexnm-compatible.
\end{proof}
\end{proof}

\EvcRcpSuff*

\begin{proof}
    Since \(\FRsets(\logsys)\) is finite, the existence of an immediate
    predecessor is guaranteed for all \(\emptyset \neq \mSet \subseteq \mUni\)
    and so is ensured the existence of an immediate successor for all \(\mSet
    \subset \mUni\). Therefore, this result is a direct consequence of \cref{incFRsubs} (Item 1)
    for the first point and of \cref{incFRsubs} (Item 2)
    %\cref{tauFRsups} 
    for the second point.
\end{proof}

\section{Proofs for \Cref{sec:usecases}}

\subsection{Proofs for \cref{use:prop}}

\propcompat*

\begin{proof}
    Since we need only to consider finitely many symbols, there are finitely many possible valuations.
    If there are \(n\) propositional atoms, there are at most \(2^n\) distinct models, meaning that there are at most 
    \(2^{m}\) distinct sets of valuations where \(m = 2^n\).
    Consequently, \(\FRsets(\logsysprop)\) is finite.
    Additionally, since both the empty set and the set of all valuations are representable in this satisfaction system, we obtain as a consequence of \cref{evcRcpSuff} that \(\logsysprop\) is both \mconnm- and \mexnm-compatible.
\end{proof}

\begin{proposition}\label{evcProp}
    Let \(\logsysprop\) be the satisfaction system with the entailment relation given by the standard semantics of propositional logic with finite signature.
    The function \(\evcprop\) defined next is a maxichoice \mconnm{} on \(\logsysprop\).
    \begin{align*}
        \evcprop(\baseb,\mSet) = \bigvee_{v \in \modelsof{\baseb} \setminus \mSet} \left(\bigwedge_{v(a) = \vtrue} a \wedge \bigwedge_{v(a) = \vfalse} \neg{a} \right)
    \end{align*}
\end{proposition}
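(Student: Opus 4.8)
The plan is to reduce to the characterisation in \cref{mFRCon_r}. The first step, which does essentially all the work, is to establish that $\modelsof{\evcprop(\baseb,\mSet)} = \modelsof{\baseb}\setminus\mSet$. For a valuation $v$ write $\chi_v \coloneqq \bigl(\bigwedge_{v(a)=\vtrue} a \wedge \bigwedge_{v(a)=\vfalse}\neg a\bigr)$; since the signature is finite this is a single finite formula, and a standard fact is that a valuation $w$ satisfies $\chi_v$ iff $w = v$. Hence $w$ satisfies the disjunction $\evcprop(\baseb,\mSet)$ iff $w = v$ for some $v \in \modelsof{\baseb}\setminus\mSet$, i.e.\ iff $w \in \modelsof{\baseb}\setminus\mSet$. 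The degenerate case $\modelsof{\baseb}\setminus\mSet = \emptyset$ is consistent with this, as the empty disjunction is $\vfalse$ and $\modelsof{\vfalse} = \emptyset$. In particular $\evcprop(\baseb,\mSet)$ is a finite base, so $\evcprop$ is a model change operation; and since the empty set and $\mUni$ are both finitely representable, \cref{propcompat} gives that $\logsysprop$ is \mconnm-compatible, so \cref{mFRCon_r} applies.

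Second, I would note that over a finite signature every set of valuations $\mathcal V \subseteq \mUni$ equals $\modelsof{\bigvee_{v\in\mathcal V}\chi_v}$, so $\FRsets(\logsysprop) = \powerset(\mUni)$ and therefore $\FRsubs(\mathcal V, \logsysprop) = \{\mathcal V\}$ for every $\mathcal V$ (this also follows from the RMBP via \cref{uniqueFR}). With the identity $\modelsof{\evcprop(\baseb,\mSet)} = \modelsof{\baseb}\setminus\mSet$ in hand, the five postulates of \cref{mFRCon_r} become routine: \emph{success} and \emph{inclusion} hold because $\modelsof{\baseb}\setminus\mSet$ is disjoint from $\mSet$ and contained in $\modelsof{\baseb}$; \emph{vacuity} holds because $\mSet\cap\modelsof{\baseb}=\emptyset$ gives $\modelsof{\baseb}\setminus\mSet = \modelsof{\baseb}$; \emph{finite retainment} holds vacuously, since its hypothesis would require a set strictly between $\modelsof{\baseb}\setminus\mSet$ and itself; and \emph{uniformity} holds because, by the singleton observation, $\FRsubs(\modelsof{\baseb}\setminus\mSet,\logsysprop) = \FRsubs(\modelsof{\baseb'}\setminus\mSet',\logsysprop)$ forces $\modelsof{\baseb}\setminus\mSet = \modelsof{\baseb'}\setminus\mSet'$, whence the two outputs have the same models.

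A shorter, more direct finish bypasses the postulate check altogether: since $\FRsubs(\modelsof{\baseb}\setminus\mSet,\logsysprop)$ is always the singleton $\{\modelsof{\baseb}\setminus\mSet\}$, \emph{every} \frselnm{} $\FRsel$ on $\logsysprop$ satisfies $\FRsel(\FRsubs(\modelsof{\baseb}\setminus\mSet,\logsysprop)) = \modelsof{\baseb}\setminus\mSet = \modelsof{\evcprop(\baseb,\mSet)}$, which is exactly the defining condition in \cref{def:FRMcon} of a maxichoice \mconnm{} function. The only points that need care are the empty-disjunction corner case and the implicit convention that the characterisation is read up to semantic equivalence of outputs (identifying finite bases with the same model set); neither is a genuine obstacle, so the proof is essentially finished once the model computation is carried out.
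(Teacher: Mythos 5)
Your proposal is correct. The core of your argument is the same as the paper's: everything reduces to the identity \(\modelsof{\evcprop(\baseb,\mSet)} = \modelsof{\baseb}\setminus\mSet\), which follows from the standard fact that each disjunct \(\chi_v\) is satisfied by exactly the valuation \(v\). The difference is organisational, and in one respect substantive. The paper verifies the five postulates of \cref{mFRCon_r} one by one, re-deriving the ``each disjunct has exactly one model'' observation locally inside the proofs of \emph{success}, \emph{inclusion}, and \emph{vacuity}; you instead isolate the model computation as a single upfront lemma, after which the postulate checks collapse to one-liners. More interestingly, your final paragraph bypasses the representation theorem entirely: since \(\FRsets(\logsysprop) = \powerset(\mUni)\), the set \(\FRsubs(\modelsof{\baseb}\setminus\mSet,\logsysprop)\) is always the singleton \(\{\modelsof{\baseb}\setminus\mSet\}\), so \emph{any} \frselnm{} witnesses that \(\evcprop\) satisfies the defining equation of \cref{def:FRMcon} directly. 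This is shorter and arguably the ``right'' proof for this degenerate setting, since it makes explicit why no choices are involved; what the paper's postulate-by-postulate route buys instead is a worked illustration of how the characterisation in \cref{mFRCon_r} is applied, which is presumably why the authors chose it. Your care with the empty-disjunction case and with reading the characterisation up to semantic equivalence of outputs is appropriate and matches the conventions the paper uses implicitly.
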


\begin{proof}
    We will use \cref{mFRCon_r} to prove this result, by showing that \(\evcprop\) satisfies each of the postulates stated.
    Recall that each model is a valuation over a finite number of propositional atoms, and therefore, the set of all models is finite.
    \begin{description}[leftmargin=*]
        \item[(success)] {%
                Let \(v \in \mSet\). 
                Clearly, \(v \not\in \modelsof{\baseb} \setminus \mSet\).
                We know that \(v\) does not satisfy any of the disjuncts that compose \(\evcprop\), as each is satisfied by exactly one valuation.
                It follows from the standard semantics of proposition logic with finite signature that \(v \not\in \modelsof{\evcprop(\baseb, \mSet)}\).
                As we only assumed that \(v \in \mSet\), we can conclude that \(\mSet \cap \modelsof{\evcprop(\baseb, \mSet)} = \emptyset\).
            }
        \item[(inclusion)]{%
                Let \(v \not\in \modelsof{\baseb}\).
                Consequently, \(v\) does not satisfy any of the disjuncts that compose \(\evcprop\), as each is satisfied by exactly one valuation.
                It follows from the standard semantics of proposition logic with finite signature that \(v \not\in \modelsof{\evcprop(\baseb, \mSet)}\).
                Since \(v\) was arbitrarily chosen, we obtain \(\modelsof{\evcprop(\baseb, \mSet)} \subseteq \modelsof{\baseb}\).
            } 
        \item[(vacuity)]{%
                If $\mSet \cap \modelsof{\baseb} = \emptyset$ then 
                \begin{align*}
                    \evcprop(\baseb,\mSet) = \bigvee_{v \in \modelsof{\baseb}} \left(\bigwedge_{v(a) = \vtrue} a \wedge \bigwedge_{v(a) = \vfalse} \neg{a} \right).
                \end{align*}
                Since each disjunct is associated to exactly one model, every model of \(\baseb\) will also be a model of \(\evcprop(\baseb, \mSet)\), and exactly those, i.e., \(\modelsof{\evcprop(\baseb, \mSet)} = \modelsof{\baseb}\).
            }
        \item[(finite retainment)]{
                Each disjunct of \(\evcprop(\modelsof{\baseb}, \mSet)\) is associated to exactly one model in \(\modelsof{\baseb} \setminus \mSet\), hence \(\modelsof{\evcprop(\baseb, \mSet)} = \modelsof{\baseb} \setminus \mSet\).
                Therefore, there is no \(\mSet^\prime \in \FRsets(\logsysprop)\) such that \(\modelsof{\evcprop(\baseb, \mSet)} \subset \mSet^\prime \subseteq \modelsof{\baseb} \setminus \mSet\).
        }
        \item[(uniformity)] {%
                In \(\logsysprop\) every set of models is finitely representable, thus, $\FRsubs(\modelsof{\baseb} \setminus \mSet, \logsysprop) = \modelsof{\baseb} \setminus \mSet$.
                Therefore, if $\FRsubs(\modelsof{\baseb} \setminus \mSet, \logsysprop) = \FRsubs(\modelsof{\baseb'} \setminus \mSet', \logsysprop)$ then \(\modelsof{\baseb} \setminus \mSet = \modelsof{\baseb'} \setminus \mSet'\). 
                In this case, we have that \(\evcprop(\baseb, \mSet) = \evcprop(\baseb', \mSet')\) which implies $\modelsof{\evcprop(\baseb, \mSet)} = \modelsof{\evcprop(\baseb',  \mSet')}$.
            }
    \end{description}
    Since \(\evcprop\) satisfies all the postulates from \cref{mFRCon_r}, it follows that it is a maxichoice \mconnm{} function over \(\logsysprop\).
\end{proof}

\begin{proposition}\label{rcpProp}
    Let \(\logsysprop\) be the satisfaction system with the entailment relation given by the standard semantics of propositional logic with finite signature.
    The function \(\rcpprop\) defined next is a maxichoice \mexnm{} on \(\logsysprop\).
    \begin{align*}
        \rcpprop(\baseb,\mSet) = \bigvee_{v \in \modelsof{\baseb} \cup \mSet} \left(\bigwedge_{v(a) = \vtrue} a \wedge \bigwedge_{v(a) = \vfalse} \neg{a} \right)
    \end{align*}
\end{proposition}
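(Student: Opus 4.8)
The plan is to invoke the representation theorem for maxichoice \mexnm{} functions (\cref{mFRExp_r}) and verify that $\rcpprop$ satisfies its five postulates, mirroring the argument already used for $\evcprop$ in \cref{evcProp}. Note first that \cref{mFRExp_r} is applicable because $\logsysprop$ is \mexnm-compatible by \cref{propcompat}. The single observation that drives everything is that each disjunct $\bigwedge_{v(a)=\vtrue} a \wedge \bigwedge_{v(a)=\vfalse} \neg a$ is a complete minterm over the finite signature: it is satisfied by the valuation $v$ and by no other valuation. Hence $\rcpprop(\baseb,\mSet)$, viewed as the finite base $\{\rcpprop(\baseb,\mSet)\}$, has exactly $\modelsof{\baseb}\cup\mSet$ as its set of models (in the degenerate case $\modelsof{\baseb}\cup\mSet=\emptyset$, i.e.\ $\baseb$ inconsistent and $\mSet=\emptyset$, the empty disjunction is read as $\vfalse$, which also has the empty set of models, so the identity still holds). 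I would establish this identity $\modelsof{\rcpprop(\baseb,\mSet)}=\modelsof{\baseb}\cup\mSet$ first, as a standalone claim, since every postulate then reduces to a one-line set-theoretic consequence.

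With that identity in hand: \emph{success} is immediate from $\mSet\subseteq\modelsof{\baseb}\cup\mSet$; \emph{persistence} from $\modelsof{\baseb}\subseteq\modelsof{\baseb}\cup\mSet$; and \emph{vacuity} from the fact that $\mSet\subseteq\modelsof{\baseb}$ forces $\modelsof{\baseb}\cup\mSet=\modelsof{\baseb}$. For \emph{finite temperance}, there is simply no $\mSet'$ with $\modelsof{\baseb}\cup\mSet\subseteq\mSet'\subset\modelsof{\rcpprop(\baseb,\mSet)}$, since $\modelsof{\rcpprop(\baseb,\mSet)}=\modelsof{\baseb}\cup\mSet$, so the implication holds vacuously. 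For \emph{uniformity}, I would use that in $\logsysprop$ every subset of $\mUni$ is finitely representable (finitely many valuations, hence finitely many sets of valuations, each a disjunction of minterms), so $\FRsets(\logsysprop)=2^{\mUni}$ and therefore $\FRsups(\modelsof{\baseb}\cup\mSet,\logsysprop)=\{\modelsof{\baseb}\cup\mSet\}$; consequently $\FRsups(\modelsof{\baseb}\cup\mSet,\logsysprop)=\FRsups(\modelsof{\baseb'}\cup\mSet',\logsysprop)$ gives $\modelsof{\baseb}\cup\mSet=\modelsof{\baseb'}\cup\mSet'$, hence the same disjunction up to reordering of disjuncts, and in particular the same set of models. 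Having checked all five postulates, \cref{mFRExp_r} concludes that $\rcpprop$ is a maxichoice \mexnm{} function on $\logsysprop$.

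There is no genuine obstacle here: this result is the \mexnm{} twin of \cref{evcProp}, and the only points needing a modicum of care are the minterm semantics (each conjunctive clause pins down exactly one valuation, so the models of the whole disjunction are exactly the indexing set of valuations) and the empty-disjunction edge case; both are routine. If anything, the mild subtlety is keeping the bookkeeping straight between ``$\rcpprop(\baseb,\mSet)$ as a formula'' and ``$\{\rcpprop(\baseb,\mSet)\}$ as the finite base whose models we compute'', so that the statement $\modelsof{\rcpprop(\baseb,\mSet)}=\modelsof{\baseb}\cup\mSet$ is phrased unambiguously.
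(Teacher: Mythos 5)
Your proposal is correct and follows essentially the same route as the paper: both invoke \cref{mFRExp_r} and verify the five postulates via the observation that each disjunct is a minterm satisfied by exactly one valuation, so that $\modelsofx{\rcpprop(\baseb,\mSet)}{\logsysprop}=\modelsofx{\baseb}{\logsysprop}\cup\mSet$. Your version is marginally tidier in isolating that identity as a standalone claim first and in noting the empty-disjunction edge case, but the substance is identical.
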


\begin{proof}
    We will use \cref{mFRExp_r} to prove this result by showing that \(\rcpprop\) satisfies each of the postulates stated.
    Recall that each model is a valuation over a finite number of propositional atoms, and therefore, the set of all models is finite.
    \begin{description}[leftmargin=*]
        \item[(success)] {%
                Let \(v \in \mSet\). 
                Clearly, \(v \in \modelsof{\baseb} \cup \mSet\).
                Consequently, \(v\) satisfies one of the disjuncts that compose \(\rcpprop\), as each is satisfied by exactly one valuation.
                It follows from the standard semantics of proposition logic with finite signature that \(v \in \modelsof{\rcpprop(\baseb, \mSet)}\).
                As we only assumed that \(v \in \mSet\), we can conclude that \(\mSet \subseteq \modelsof{\rcpprop(\baseb, \mSet)}\).
            }
        \item[(persistence)]{%
                Let \(v \in \modelsof{\baseb}\).
                We know that \(v\) satisfies one of the disjuncts that compose \(\rcpprop\), as each is satisfied by exactly one valuation.
                It follows from the standard semantics of proposition logic with finite signature that \(v \in \modelsof{\rcpprop(\baseb, \mSet)}\).
                Since \(v\) was arbitrarily chosen, we obtain \(\modelsof{\baseb} \subseteq \modelsof{\rcpprop(\baseb, \mSet)}\).
            } 
        \item[(vacuity)]{%
                If $\mSet \subseteq \modelsof{\baseb} = \emptyset$ then 
                \begin{align*}
                    \rcpprop(\baseb,\mSet) = \bigvee_{v \in \modelsof{\baseb}} \left(\bigwedge_{v(a) = \vtrue} a \wedge \bigwedge_{v(a) = \vfalse} \neg{a} \right).
                \end{align*}
                Since each disjunct is associated to exactly one model, only models of \(\baseb\) will be a models of \(\rcpprop(\baseb, \mSet)\), that is, \(\modelsof{\rcpprop(\baseb, \mSet)} = \modelsof{\baseb}\).
            }
        \item[(finite temperance)]{
                Each disjunct of \(\rcpprop(\modelsof{\baseb}, \mSet)\) is associated to exactly one model in \(\modelsof{\baseb} \cup \mSet\), hence \(\modelsof{\rcpprop(\baseb, \mSet)} = \modelsof{\baseb} \cup \mSet\).
                Therefore, there is no \(\mSet^\prime \in \FRsets(\logsysprop)\) such that \(\modelsof{\baseb} \cup \mSet \subset \mSet^\prime \subseteq \modelsof{\rcpprop(\baseb, \mSet)}\).
        }
        \item[(uniformity)] {%
                In \(\logsysprop\) every set of models is finitely representable, thus, $\FRsups(\modelsof{\baseb} \cup \mSet, \logsysprop) = \modelsof{\baseb} \cup \mSet$.
                Therefore, if $\FRsups(\modelsof{\baseb} \cup \mSet, \logsysprop) = \FRsubs(\modelsof{\baseb'} \cup \mSet', \logsysprop)$ then \(\modelsof{\baseb} \cup \mSet = \modelsof{\baseb'} \cup \mSet'\). 
                In this case, we have that \(\rcpprop(\baseb, \mSet) = \rcpprop(\baseb', \mSet')\) which implies $\modelsof{\rcpprop(\baseb, \mSet)} = \modelsof{\mCon(\rcpprop',  \mSet')}$.
            }
    \end{description}
    Since \(\rcpprop\) satisfies all the postulates from \cref{mFRExp_r}, it follows that it is a maxichoice \mexnm{} function over \(\logsysprop\).
\end{proof}

\rPropOps*

\begin{proof}
    Direct consequence of \cref{evcProp,rcpProp}.
\end{proof}

\rhorncompat*

\begin{proof}
As for classical propositional logics,  we have that \(\FRsets(\logsyshorn)\) is finite.  
%The set $\mSet$ is finitely representable.  %
Observe that $\modelsof{\{a \to a\}} = \mUni$,  as $a \to a$ is tautological.  
Moreover,  the set $\modelsof{\{\perp\}} = \emptyset$.  Thus,  
both $\emptyset$ and $\mUni$ are finitely representable.  Therefore,  according to \cref{evcRcpSuff},  \(\logsyshorn\) is both eviction and reception compatible. 
%the base $\{a \to a\} \subseteq \llang_H$ and is tautological.  Thus,  $\modelsof{}$ 
\end{proof}

 \subsection{Proofs for \Cref{use:3vl}}

 \rthreevlcompat*

\begin{proof}[Sketch]
    As in the propositional case, \(\mUni_3\) is finite and \(\mUni_3\) are finitely representable in both systems. However, \(\emptyset\) is finitely representable in \(\logsyskleene\) but not in \(\logsyspriest\).
    Hence, the \lcnamecref{3vlcompat} is a consequence of \cref{evcRcpSuff}.
\end{proof}

\begin{proof}
    In both systems,  we have exactly the same set of models which is finite,  precisely we have  $3^{|\propatoms|}$ models,   where $\propatoms$ is the set of propositional symbols (which is assumed to be finite).
    Thus,  we have $2^m$ classes of equivalences of formulae,  where $m = 3^{|\propatoms|}$.  Thus,  for every $K \subseteq \llangprop$, there is a finite base \(\baseb \in \finitepwset(\llangprop)\) such that \(\modelsof{K} = \modelsof{\baseb}\).
    Observe that in both systems $\modelsof{\emptyset} = \mUniKleene$,  %and
    %$\modelsof{\emptyset} = \mUniKleene$,
    %\mUniKleene \models_{3P} \emptyset
    which means that $\mUniKleene$ is finitely representable in both $\logsyskleene$ and $\logsyspriest$.
    Also, $\modelsof{\llang} = \emptyset$,  in \(\logsyskleene\).
    %\emptyset \models_{3K} \llang and  $\emptyset \models_{3P} \llang$..
    Thus, as every set of formulae has  a finite base,  we get that $\llangprop$ also has a finite base in \(\logsyskleene\).
    However, the model that assigns \(\vunk\) to every propositional formula will satisfy any base according to \(\models_{P3}\).
    Thus, $\emptyset$ is finitely representable in $\logsyskleene$ but not in $\logsyspriest$.
    Therefore, it follows directly from \cref{evcRcpSuff} that both systems are and reception-compatible but \(\logsyskleene\) is \mconnm-compatible, while \(\logsyspriest\) is not.
\end{proof}

\subsection{Proofs for \cref{use:fuzzy}}

\begin{definition}%
    \label{def:fgodelss}
    Let \(\theta \in {\left(0, 1\right]}\). The satisfaction system of the propositional Gödel logic, in symbols, \(\logsysfuzzy\) is defined as \(\logsysfuzzy = (\llangfuzzy, \mUnifuzzy, \modelsfuzzy)\) in which
    \begin{itemize}
        \item \(\llangfuzzy\) consists of propositional formulas defined over a non-empty finite set of propositional atoms \(\propatoms\) and the connectives \(\land\), \(\lor\), \(\neg\), and \(\rightarrow\);
        \item{% 
               \(\mUnifuzzy\) is the set of all functions \(v : \llang \to [0, 1]\) respecting the standard Gödel semantics for the boolean connectives given below
                \begin{align*}
                    v(\neg{\varphi}) &= \begin{cases}
                                            1 & \text{if } v(\varphi) = 0,\\
                                            0 & \text{otherwise;}
                                        \end{cases}\\
                    v(\varphi \land \psi) &= \min(v(\varphi), v(\psi));\\
                    v(\varphi \lor \psi) &= \max(v(\varphi), v(\psi));\\
                    v(\varphi \rightarrow \psi) &= \begin{cases}
                                                    1 & \text{if } v(\varphi) \leq v(\psi),\\
                                                    v(\psi) & \text{otherwise; and}
                                                \end{cases}%\\
                    %v(\varphi \leftrightarrow \psi) &= \begin{cases}
                                                        %1 & \text{if } v(\varphi) = v(\psi),\\
                                                        %\min(v(\varphi), v(\psi)) & \text{otherwise.}
                                                    %\end{cases}
                \end{align*}
        }
        \item \(v \modelsfuzzy B\) iff \(v(\bigwedge_{\varphi \in B \cup \{(\neg{a} \lor a)\}} \varphi) \geq \theta\), with some \(a \in \propatoms\).
    \end{itemize}
\end{definition}

Henceforth, given \(v \in \mUnifuzzy\) and \(\varphi \in \llangfuzzy\), we will abuse the notation and write \(v \modelsfuzzy \varphi\) as a shorthand for \(v \modelsfuzzy \{\varphi\}\).

\begin{definition}%
\label{def:thetaext}
    Let \(\theta \in \left(0,1\right]\), \(\propatoms\) be a non-empty finite set of propositional atoms, \(\llangfuzzy\) defined over \(\propatoms\) and \(v : \llangfuzzy \to [0, 1]\).
    Also let \(s_\theta \not\in \propatoms\).
    We define the \(\theta\)-extension of \(v\) as:
    \(v^\ast : \llang \cup \{s_\theta\} \to [0, 1]\) defined as
    \begin{align*}
        v^\ast(\varphi) =
        \begin{cases}
            v(\varphi) & \text{if } \varphi \in \llangfuzzy,\\
            \theta & \text{if } \varphi \text{ is } s_\theta.\\
        \end{cases}
    \end{align*}
\end{definition}

\begin{definition}%
\label{def:inducedpreorder}
    Let \(\theta \in \left(0,1\right]\), \(\propatoms\) be a non-empty finite set of propositional atoms, \(\llangfuzzy\) defined over \(\propatoms\) and \(v : \llangfuzzy \to [0, 1]\).
    From the \(\theta\)-extension of \(v\) we define the following total preorders\footnote{A preorder is a binary relation that is reflexive and transitive.}

    \begin{itemize}
        \item \(\preceq_v \subseteq {(\llangfuzzy \cup \{s_\theta\}) \times (\llangfuzzy \cup \{s_\theta\})}\) such that \(\varphi \preceq_v \psi\) iff \(v^\ast(\varphi) \leq v^\ast(\psi)\); and
        \item \(\preceq^\prime_v \subseteq {(\propatoms \cup \{s_\theta\}) \times (\propatoms \cup \{s_\theta\}) }\) such that \(\varphi \preceq^\prime_v \psi\) iff \(v^\ast(\varphi) \leq v^\ast(\psi)\).
    \end{itemize}
\end{definition}

\begin{lemma}%
\label{preorderdep}
    %If \(\varphi \in \llangfuzzy\) is of one of the forms \(\varphi_1 \land \varphi_2\), \(\varphi_1 \lor \varphi_2\), \(\varphi_1 \rightarrow \varphi_2\), or \(\varphi_1 \leftrightarrow \varphi_2\), then \(v(\varphi)\) depends solely on \(\varphi_1 \preceq_v \varphi_2\) and \(\varphi_2 \preceq_v \varphi_1\).
    Let \(v, w \in \mUnifuzzy\) with \(\preceq_v = \preceq_w\), then, for all \(\varphi \in \llangfuzzy\), \(v \modelsfuzzy \varphi\) iff \(w \modelsfuzzy \varphi\).
\end{lemma}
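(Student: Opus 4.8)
The plan is to peel back the definition of $\modelsfuzzy$ on singletons and notice that whether $v$ satisfies $\varphi$ reduces to two order comparisons against $s_\theta$, both of which are entirely readable off the preorder $\preceq_v$; the hypothesis $\preceq_v = \preceq_w$ then finishes the job immediately.

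First I would recall that, under the abuse of notation fixed right after \cref{def:fgodelss}, $v \modelsfuzzy \varphi$ abbreviates $v \modelsfuzzy \{\varphi\}$, so by \cref{def:fgodelss} it holds exactly when $v\bigl(\varphi \land (\neg a \lor a)\bigr) \geq \theta$ for the fixed atom $a \in \propatoms$. By the Gödel clause for $\land$ this is $\min\bigl(v(\varphi),\, v(\neg a \lor a)\bigr) \geq \theta$, i.e.\ the conjunction ``$v(\varphi) \geq \theta$ and $v(\neg a \lor a) \geq \theta$''. Next comes the key observation: for every $\psi \in \llangfuzzy$ we have $v^\ast(\psi) = v(\psi)$ and $v^\ast(s_\theta) = \theta$ by \cref{def:thetaext}, so by \cref{def:inducedpreorder}, $v(\psi) \geq \theta$ holds iff $v^\ast(s_\theta) \leq v^\ast(\psi)$ holds iff $s_\theta \preceq_v \psi$. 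Instantiating this once with $\psi := \varphi$ and once with $\psi := (\neg a \lor a)$ — which is a legitimate formula of $\llangfuzzy$ and hence lies in the domain of $\preceq_v$ — yields that $v \modelsfuzzy \varphi$ iff ($s_\theta \preceq_v \varphi$ and $s_\theta \preceq_v (\neg a \lor a)$).

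Both of these conditions mention only the relation $\preceq_v$, so under the assumption $\preceq_v = \preceq_w$ they hold for $v$ precisely when they hold for $w$, giving $v \modelsfuzzy \varphi$ iff $w \modelsfuzzy \varphi$, as required. There is no genuine obstacle inside this lemma; the only point demanding a moment's care is to notice that the auxiliary conjunct $(\neg a \lor a)$ occurring in the definition of $\modelsfuzzy$ is itself an element of $\llangfuzzy$, so that its value being $\geq \theta$ is also captured by $\preceq_v$. The substantive work in this subsection lies not here but in the companion step — that $\preceq_v$ is in turn fully determined by its restriction $\preceq^\prime_v$ to the finite set $\propatoms \cup \{s_\theta\}$ — which would be established by structural induction on formulas using the Gödel truth tables for $\neg, \land, \lor, \rightarrow$, and from which, together with the present lemma, the finiteness of the number of $\modelsfuzzy$-equivalence classes (and hence \cref{LFuzzyCompat}) follows.
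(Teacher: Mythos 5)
Your proof is correct, and it takes a genuinely different (and shorter) route than the paper's. The paper proves \cref{preorderdep} by structural induction on \(\varphi\), treating each connective (\(\neg\), \(\land\), \(\lor\), \(\rightarrow\)) separately via the G\"odel truth conditions and invoking \(\preceq_v = \preceq_w\) in every case. You observe that no induction is needed at all: because \cref{def:inducedpreorder} defines \(\preceq_v\) on \emph{all} of \(\llangfuzzy \cup \{s_\theta\}\), not merely on the atoms, the condition \(v(\psi) \geq \theta\) is literally the single comparison \(s_\theta \preceq_v \psi\), so unfolding \(\modelsfuzzy\) on a singleton reduces satisfaction of \(\varphi\) to two such comparisons (one for \(\varphi\), one for the auxiliary conjunct), both of which are read off \(\preceq_v\) directly; equality of the preorders then transfers satisfaction immediately. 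This is the more economical argument, and you correctly locate where the inductive, connective-by-connective work actually belongs --- in \cref{prepre}, where one must show that the restriction \(\preceq^\prime_v\) to the finite set \(\propatoms \cup \{s_\theta\}\) already determines \(\preceq_v\). The paper's induction in \cref{preorderdep} in effect anticipates part of that later lemma. One minor presentational point: the auxiliary conjunct appears as \((\neg{a} \lor a)\) in the appendix and as \(\neg{(\neg{a} \land a)}\) in the main text; your argument is indifferent to which is intended, since either is a formula of \(\llangfuzzy\) and hence its comparison with \(s_\theta\) is recorded in \(\preceq_v\).
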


\begin{proof}
    We prove this \lcnamecref{preorderdep} by induction on the structure of the formula \(\varphi\).

    \textbf{Base case:} if \(\varphi \in \propatoms\) then \(v \modelsfuzzy \varphi\) iff \(v(\varphi) \geq \theta\). 
    And by \cref{def:inducedpreorder} \(v(\varphi) \geq \theta\) iff \(\varphi \preceq_v s_\theta\).
    As we assume \(\preceq_v = \preceq_w\), we have that \(\varphi \preceq_w s_\theta\).
    Using again \cref{def:inducedpreorder} and the definition of \(\modelsfuzzy\), we can conclude that \(\varphi \preceq_w s_\theta\) iff  \(v \modelsfuzzy \varphi\). 
    Therefore, if \(\varphi \in \propatoms\) then \(v \modelsfuzzy \varphi\) iff \(w \modelsfuzzy \varphi\).

    \textbf{Induction step:} Now, we assume that for all formulas \(\psi \in \llangfuzzy\) with length (number of connectives) at most \(n\), it holds that whenever \(\preceq_v = \preceq_w\) then \(v \modelsfuzzy \psi\) iff \(w \modelsfuzzy \psi\).
     We will consider now a formula \(\varphi \in \llangfuzzy\) that has length \(n + 1\), and treat each of the following cases separately.
    \begin{description}[leftmargin=*]
        \item[\(\varphi = \neg{\psi}\):] First, we remark that as a consequence \cref{def:fgodelss}, every valuation in \(\mUnifuzzy\) must assign 0 to \(a \land \neg{a}\) for \(a \in \propatoms\) (\(\propatoms \neq \emptyset\)).
            Consequently, every minimal element in the induced total preorder must be assigned 0 by the corresponding valuation.
            On the other hand, every formula assigned 0 by a valuation will be a minimal element in the induced total preorder.
            Thus, due to the semantics of negation in \(\logsysfuzzy\), \(v \modelsfuzzy \neg{\psi}\) iff \(\psi\) is a minimal element in \(\preceq_v\).
            By our assumption that \(\preceq_v = \preceq_w\), we can use the same argument to conclude that \(w \modelsfuzzy \neg{\psi}\) iff \(\psi\) is a minimal element in \(\preceq_w\).
            Hence, \(v \modelsfuzzy \neg{\psi}\) iff \(w \modelsfuzzy \neg{\psi}\).
        \item[\(\varphi = \psi_1 \land \psi_2\):] We know that \(v \modelsfuzzy \psi_1 \land \psi_2\) iff \(v(\psi_1) \geq \theta\) and \(v(\psi_2) \geq \theta\).
            In other words, \(v \modelsfuzzy \psi_1 \land \psi_2\) iff \(\psi_1 \preceq_v s_\theta\) and \(\psi_2 \preceq_v s_\theta\).
            Using the assumption that \(\preceq_v = \preceq_w\) and the induction hypothesis, we get that \(v \modelsfuzzy \psi_1 \land \psi_2\) iff \(w \modelsfuzzy \psi_1 \land \psi_2\).
        \item[\(\varphi = \psi_1 \lor \psi_2\):] We know that \(v \modelsfuzzy \psi_1 \lor \psi_2\) iff \(v(\psi_1) \geq \theta\) or \(v(\psi_2) \geq \theta\).
            In other words, \(v \modelsfuzzy \psi_1 \lor \psi_2\) iff \(s_\theta \preceq_v \psi_1 \) or \(s_\theta \preceq_v \psi_2\).
            Using the assumption that \(\preceq_v = \preceq_w\) and the induction hypothesis, we get that \(v \modelsfuzzy \psi_1 \lor \psi_2\) iff \(w \modelsfuzzy \psi_1 \lor \psi_2\).
        \item[\(\varphi = \psi_1 \rightarrow \psi_2\):] \(v \modelsfuzzy \psi_1 \rightarrow \psi_2\) iff (i) \(v(\psi_1) \leq v(\psi_2)\) or (ii) \(v(\psi_2) \geq \theta\).
            In other words, \(v \modelsfuzzy \psi_1 \rightarrow \psi_2\) iff \(\psi_1 \preceq_v \psi_2\) or \(s_\theta \preceq_v \psi_2\).
            As in the case of \(\land\) and \(\lor\), we can employ the assumption that \(\preceq_v = \preceq_w\) together with the induction hypothesis to conclude \(v \modelsfuzzy \psi_1 \rightarrow \psi_2\) iff \(w \modelsfuzzy \psi_1 \rightarrow \psi_2\).
        %\item[\(\varphi = \psi_1 \leftrightarrow \psi_2\):] \(v \modelsfuzzy \psi_1 \leftrightarrow \psi_2\) iff (i) \(v(\psi_1) = v(\psi_2)\), or (ii) \(v(\psi_1) \geq \theta\) and  \(v(\psi_2) \geq \theta\).
            %In other words, \(v \modelsfuzzy \psi_1 \leftrightarrow \psi_2\) iff \(\psi_1 \preceq_v \psi_2\) or both \(s_\theta \preceq_v \psi_1\) and \(s_\theta \preceq_v \psi_2\).
            %Case (i) is similar to the requirement for \(\rightarrow\) and case (ii) is exactly the case of conjunction,
            %In both cases, we can employ the assumption that \(\preceq_v = \preceq_w\) together with the induction hypothesis to conclude \(v \modelsfuzzy \psi_1 \leftrightarrow \psi_2\) iff \(w \modelsfuzzy \psi_1 \leftrightarrow \psi_2\).
    \end{description}
    Hence, if \(\preceq_v = \preceq_w\) then \(v \modelsfuzzy \varphi\) iff \(w \modelsfuzzy \varphi\).
\end{proof}

\begin{proposition}%
\label{possiblevalues}
    Let \(\theta \in \left(0,1\right]\), \(\propatoms\) be a non-empty finite set of propositional atoms, \(\logsysfuzzy = (\llangfuzzy, \mUnifuzzy, \modelsfuzzy)\) as in \cref{def:fgodelss}.
    Then, for any \(v \in \mUnifuzzy\) and \(\varphi \in \llangfuzzy\), \(v(\varphi) \in \{0, 1\} \cup \{v(a) \mid a \in \propatoms\}\).
\end{proposition}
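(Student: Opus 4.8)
The plan is to prove the statement by structural induction on \(\varphi\), reading off each inductive case directly from the Gödel semantics in \cref{def:fgodelss}. Write \(V_v \coloneqq \{0, 1\} \cup \{v(a) \mid a \in \propatoms\}\); the goal is to show \(v(\varphi) \in V_v\) for every \(\varphi \in \llangfuzzy\) and every \(v \in \mUnifuzzy\).

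For the base case, if \(\varphi\) is an atom \(a \in \propatoms\) then \(v(\varphi) = v(a) \in V_v\) by definition of \(V_v\). For the inductive step, fix \(\varphi\) and assume the claim holds for all strict subformulas of \(\varphi\); I would then split on the outermost connective. If \(\varphi = \neg\psi\), the semantics gives \(v(\varphi) \in \{0, 1\} \subseteq V_v\) regardless of \(v(\psi)\). If \(\varphi = \psi_1 \land \psi_2\) then \(v(\varphi) = \min(v(\psi_1), v(\psi_2))\), which equals either \(v(\psi_1)\) or \(v(\psi_2)\), both in \(V_v\) by the induction hypothesis; the case \(\varphi = \psi_1 \lor \psi_2\) is identical with \(\max\) in place of \(\min\). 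Finally, if \(\varphi = \psi_1 \rightarrow \psi_2\), the semantics gives \(v(\varphi) = 1 \in V_v\) when \(v(\psi_1) \leq v(\psi_2)\), and \(v(\varphi) = v(\psi_2) \in V_v\) (by the induction hypothesis) otherwise. In all cases \(v(\varphi) \in V_v\), which completes the induction.

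There is no real obstacle here: the proof is a routine case analysis, and the only thing to be careful about is that each clause of the Gödel semantics outputs either a fixed constant (\(0\) or \(1\)) or one of the subformula values verbatim, never a new intermediate value (which is what distinguishes Gödel semantics from, say, \L{}ukasiewicz). This \lcnamecref{possiblevalues} is the key finiteness ingredient: together with \cref{preorderdep} it bounds the number of satisfaction-equivalence classes of valuations, which is what will ultimately be used to invoke \cref{evcRcpSuff} in the proof of \cref{LFuzzyCompat}.
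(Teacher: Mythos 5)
Your proof is correct and takes essentially the same approach as the paper: the paper's own argument is exactly this induction, stated more tersely (``for all of the connectives, the valuation is either one of the values of the subformulas, 0 or 1''). Your version simply spells out the case analysis that the paper leaves implicit.
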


\begin{proof}
    This clearly holds for \(\varphi \in \propatoms\).
    For complex formulas we just need to consider the possible valuations defined in the semantics of the connectives in \cref{def:fgodelss}.
    For all of the connectives, the valuation is either one of the values of the subformulas, 0 or 1.
\end{proof}

\begin{lemma}%
\label{prepre}
    Let \(v, w \in \mUnifuzzy\) with \(\preceq^\prime_v = \preceq^\prime_w\), then \(\preceq_v = \preceq_w\).
\end{lemma}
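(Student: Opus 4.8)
The plan is to prove the statement by structural induction, in a slightly strengthened form that actually carries through. Concretely I would show: for all \(\varphi,\psi \in \llangfuzzy \cup \{s_\theta\}\), \(v^\ast(\varphi) \le v^\ast(\psi)\) iff \(w^\ast(\varphi) \le w^\ast(\psi)\) — i.e.\ \(\preceq_v = \preceq_w\) — by induction on the total number of connectives in \(\varphi\) and \(\psi\). The base case is when both \(\varphi\) and \(\psi\) lie in \(\propatoms \cup \{s_\theta\}\), and there the claim is exactly the hypothesis \(\preceq^\prime_v = \preceq^\prime_w\). The organizing idea is that, by \cref{possiblevalues}, \(v^\ast(\chi) \in \{0,1\} \cup \{v(a) \mid a \in \propatoms\}\) for every subformula \(\chi\), so it suffices to check that \(\preceq^\prime_v\) pins down, for each \(\chi\), whether \(v^\ast(\chi)=0\), whether \(v^\ast(\chi)=1\), and how \(v^\ast(\chi)\) compares to each \(v(a)\) and to \(\theta\).

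For the inductive step I would case on the outermost connective and use the Gödel clauses from \cref{def:fgodelss}. The conjunction and disjunction cases are routine: since \(v^\ast(\psi_1 \land \psi_2) = \min(v^\ast(\psi_1), v^\ast(\psi_2))\) and dually for \(\lor\), every comparison against \(\psi_1 \land \psi_2\) reduces to comparisons against the (smaller) subformulas \(\psi_1,\psi_2\), so the induction hypothesis applies directly. The implication and negation cases require more: \(v^\ast(\psi_1 \to \psi_2)\) is \(1\) when \(v^\ast(\psi_1) \le v^\ast(\psi_2)\) (decided by the induction hypothesis) and equals \(v^\ast(\psi_2)\) otherwise, while \(v^\ast(\neg\psi)\) is \(1\) when \(v^\ast(\psi)=0\) and \(0\) otherwise. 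To close these I would establish, as a companion claim proved by a parallel induction on \(\chi\), that \(\preceq^\prime_v\) determines whether \(v^\ast(\chi)=0\) and whether \(v^\ast(\chi)=1\): here one just tracks how these two predicates propagate through the connectives (\(v^\ast(\chi_1 \land \chi_2)=0\) iff one conjunct is \(0\); \(v^\ast(\chi_1 \lor \chi_2)=1\) iff one disjunct is \(1\); \(v^\ast(\neg\chi')=1\) iff \(v^\ast(\chi')=0\); \(v^\ast(\chi_1 \to \chi_2)=1\) iff \(v^\ast(\chi_1)\le v^\ast(\chi_2)\) or \(v^\ast(\chi_2)=1\); and \(v^\ast(s_\theta)=1\) exactly when \(\theta=1\)), using \cref{possiblevalues} to keep the case split finite. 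With the companion claim in hand, the \(\to\) and \(\neg\) cases of the main induction follow by comparing the now fully-determined value of the compound formula against the inductively controlled values of its proper subformulas.

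The step I expect to be the main obstacle is the atomic base case of the companion claim: deciding, from \(\preceq^\prime_v\) alone, which atoms are assigned \(0\) and which are assigned \(1\). This is the only point where the mere relative order of \(\propatoms \cup \{s_\theta\}\) might look too weak, and it is where the argument must be pinned down most carefully — presumably exploiting that \(\preceq^\prime_v=\preceq^\prime_w\), that the minimum of \(\preceq_v\) is always witnessed inside the language (as in the negation case of the proof of \cref{preorderdep}), and, if needed, an analogous witness for value \(1\). Once that is settled, the result is assembled from the three ingredients: the base case from the hypothesis, \(\land\)/\(\lor\) from the induction hypothesis, and \(\to\)/\(\neg\) from the companion claim, yielding \(\preceq_v = \preceq_w\).
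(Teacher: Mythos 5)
You take essentially the same route as the paper---structural induction on formulas, driven by \cref{possiblevalues} and the observation that the G\"odel connectives consume only order information about their arguments---but you organize it more carefully, and in doing so you isolate the real crux: the atomic base case of your companion claim, i.e., recovering from \(\preceq^\prime_v\) alone which atoms are assigned exactly \(0\) (and exactly \(1\)). That step cannot be closed, because the lemma is false as stated. Take \(\propatoms = \{p\}\), \(\theta = 1/2\), \(v(p) = 0\) and \(w(p) = 0.3\). Then \(v^\ast\) and \(w^\ast\) order \(\propatoms \cup \{s_\theta\}\) identically (\(p\) strictly below \(s_\theta\) in both), so \(\preceq^\prime_v = \preceq^\prime_w\); but \(v(\neg p) = 1\) while \(w(\neg p) = 0\), hence \(s_\theta \preceq_v \neg p\) whereas \(s_\theta \not\preceq_w \neg p\), so \(\preceq_v \neq \preceq_w\). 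The rescue you gesture at---that the minimum of the preorder is witnessed inside the language, as in the negation case of the proof of \cref{preorderdep}---works for \(\preceq_v\), whose domain contains \(a \land \neg a\) (always evaluated to \(0\)), but not for \(\preceq^\prime_v\), whose domain is only \(\propatoms \cup \{s_\theta\}\): the \(\preceq^\prime_v\)-minimal atom need not have value \(0\).

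For what it is worth, the paper's own proof has exactly the same gap: it asserts that for \(\varphi = \neg\psi\) the value of \(\varphi\) depends on whether \(\psi\) is minimal in \(\preceq_v\), and then concludes that \(\preceq^\prime_v\) determines \(\preceq_v\), silently conflating ``minimal among the atoms'' with ``assigned \(0\)''. So your instinct about where the argument is thin is correct; the proposal's failure is in treating the obstacle as a presumably surmountable technicality rather than recognizing it as a counterexample. The statement (and, downstream, the proof of \cref{fuzzyfinite}) can be repaired by enriching \(\preceq^\prime_v\) with the sets \(\{a \in \propatoms \mid v(a) = 0\}\) and \(\{a \in \propatoms \mid v(a) = 1\}\); over a finite \(\propatoms\) there are still only finitely many such enriched signatures, so the finiteness of \(\FRsets(\logsysfuzzy)\) survives, and with that stronger hypothesis your two-induction scheme does go through.
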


\begin{proof}
    From \cref{def:fgodelss}, the values assigned to all formulas in \(\llangfuzzy\) depend only on the valuations on \(\propatoms\).
    This means that there is only one possible way to extend a valuation on \(\propatoms\) to \(\llangfuzzy\).
    Moreover, it follows from \cref{possiblevalues} that every formula in \(\llangfuzzy\) can only assume values in \(\in \{0, 1\} \cup \{v(a) \mid a \in \propatoms\}\).
    Furthermore, as a consequence of the semantics of the connectives, for every \(\varphi \in \llangfuzzy\), if \(\varphi = \neg{\psi}\), then \(v(\varphi)\) depends on whether \(\psi\) is a minimal element in \(\preceq_v\), otherwise, if \(\varphi = \psi_1 \circ \psi_2\) with \(\circ \in \{\land, \lor, \rightarrow\}\) then \(v(\varphi)\) depends only on the relation between \(\psi_1\) and \(\psi_2\) according to \(\preceq_v\).
    As each formula will receive values from a finite set depending only on the total preorder induced on the propositional atoms, for any valuation \(v\), \(\preceq^\prime_v\) determines \(\preceq_v\).
\end{proof}

\begin{theorem}\label{fuzzyfinite}
    \(\FRsets(\logsysfuzzy)\) is finite.
\end{theorem}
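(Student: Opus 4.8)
The plan is to bound $\FRsets(\logsysfuzzy)$ by the number of unions of equivalence classes of a suitable finite partition of $\mUnifuzzy$. Recall that, by definition, $\FRsets(\logsysfuzzy) = \{\modelsof{\baseb} \mid \baseb \in \finitepwset(\llangfuzzy)\}$, so it suffices to show this image is finite.

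First I would count the total preorders $\preceq^\prime_v$ on $\propatoms \cup \{s_\theta\}$. Since $\propatoms$ is finite, $\propatoms \cup \{s_\theta\}$ is a finite set, and every total preorder on a finite set is given by a linear order on the blocks of an ordered set partition; there are only finitely many such partitions. Hence the relation $v \equiv w$ defined by $\preceq^\prime_v = \preceq^\prime_w$ has finitely many classes, say $N$ of them.

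Next I would show that each $\modelsof{\baseb}$ with $\baseb \in \finitepwset(\llangfuzzy)$ is a union of $\equiv$-classes. Suppose $v \in \modelsof{\baseb}$ and $v \equiv w$. By \cref{prepre}, $\preceq^\prime_v = \preceq^\prime_w$ gives $\preceq_v = \preceq_w$, and then \cref{preorderdep} yields that $v$ and $w$ satisfy exactly the same formulas of $\llangfuzzy$. Because $\baseb$ is finite, $\bigwedge_{\varphi \in \baseb \cup \{\neg a \lor a\}} \varphi$ is a single formula of $\llangfuzzy$, so $v \modelsfuzzy \baseb$ iff $w \modelsfuzzy \baseb$; hence $w \in \modelsof{\baseb}$. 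Consequently the map $\baseb \mapsto \modelsof{\baseb}$ takes values among the unions of the $N$ classes, of which there are at most $2^N$, so $\FRsets(\logsysfuzzy)$ is finite.

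The conceptual obstacle is already discharged by \cref{possiblevalues}, \cref{prepre}, and \cref{preorderdep}: the point is that in Gödel logic the truth value of a compound formula is determined purely by the order relations among the values of its subformulas (and, for $\neg$, by whether the argument's value is $0$), never by genuine arithmetic, which is what makes the finite invariant $\preceq^\prime_v$ enough to pin down satisfaction. Given those lemmas, the remaining argument is a straightforward counting argument; the only subtlety worth flagging explicitly is that reducing satisfaction of a base to satisfaction of a single formula of $\llangfuzzy$ uses finiteness of the base, which is precisely the regime relevant to $\FRsets$.
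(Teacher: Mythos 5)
Your proposal is correct and follows essentially the same route as the paper's proof: both reduce satisfaction of a finite base to dependence on the finite invariant \(\preceq^\prime_v\) via \cref{prepre} and \cref{preorderdep}, and then count the finitely many induced total preorders on \(\propatoms \cup \{s_\theta\}\) to bound the number of representable sets of models. If anything, you are slightly more explicit than the paper in spelling out that a finite base collapses to a single conjunction (so that the single-formula lemma applies) and that each \(\modelsof{\baseb}\) is a union of equivalence classes.
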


\begin{proof}
    From \cref{preorderdep,prepre} and the definition of \(\modelsfuzzy\), we can conclude that whether \(v \modelsfuzzy B\), for \(v \in \mUnifuzzy\) and \(\baseb \in \finitepwset(\llangfuzzy)\), depends only on \(\preceq_v\).
    However, as the induced total preorders over \(\propatoms \cup \{s_\theta\}\) are defined over a finite set, there is a finite amount of distinct ones.
    In fact, there are at most \(\sum_{i = 0}^{|\propatoms| + 1} k!S(n, i)\) such preorders, where \(S(n, k)\) denotes the Stirling partition number.
    This implies that while there infinitely many valuations in \(\mUnifuzzy\), there is only a finite number subsets of \(\mUnifuzzy\) that can be represented via a base in \(\llangfuzzy\).
    Therefore, \(\FRsets(\logsysfuzzy)\) must be finite.
\end{proof}

\rLFuzzyCompat*

\begin{proof}
    It follows from \cref{evcRcpSuff} and \cref{fuzzyfinite} that we only need to prove that \(\emptyset, \mUnifuzzy \in \FRsets(\logsysfuzzy)\).
    Let \(\baseb_\bot = \{\neg{a} \land a\}\) for some \(a \in \propatoms\) and also let \(\baseb_\top = \emptyset\).
    As a consequence of \cref{def:fgodelss}, for any \(\theta \in \left(0, 1\right]\) and valuation \(v \in \mUnifuzzy\): \(v(\baseb_\bot) = 0 < \theta\) and \(v(\baseb_\top) = 1 \geq \theta\).
    Therefore, \(\logsysfuzzy\) is \mconnm-compatible and \mexnm-compatible.
\end{proof}

\subsection{Proofs for \Cref{use:ltlnext}}

%\rallfinite*

\begin{restatable}{proposition}{rallfinite}\label{obs:allfinite}
    In $\logsysltlx$,  every finite set of formulae is a theory, that is, for every \(\baseb \in \finitepwset(\llangltlx)\) and \(\varphi \in \llangltlx\), if \(M \in \modelsof{\baseb}\), then \(M \models \{\varphi\}\) iff \(\varphi \in \baseb\).  
\end{restatable}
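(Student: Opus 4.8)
The plan is to prove the stated biconditional in its two directions, the right-to-left implication being immediate from the semantics and the left-to-right implication (i.e.\ deductive closure of $\baseb$) being obtained by constructing a single canonical model of $\baseb$ that satisfies nothing beyond $\baseb$. First I would record a preliminary observation: by the grammar $\varphi := p \mid X\varphi$, every formula of $\llangltlx$ has the form $X^{m}p$ for some $m \geq 0$ and $p \in \propatoms$ (a one-line structural induction), so a finite base is a finite set $\baseb = \{X^{m_1}p_1, \dots, X^{m_k}p_k\}$, and the maximum of the indices $m_i$ is a well-defined natural number $N$. The easy direction is then just unfolding definitions: if $\varphi \in \baseb$ then every $M \in \modelsof{\baseb}$ satisfies $\varphi$, since satisfying $\baseb$ means satisfying each of its members; equivalently, $\varphi \in \baseb$ implies that every model of $\baseb$ satisfies $\varphi$.

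For the converse I would argue contrapositively: fix $\varphi = X^{m}p \notin \baseb$ and exhibit a model in $\modelsof{\baseb}$ that does not satisfy $\varphi$. Take the Kripke structure $K^\ast$ consisting of a single infinite path $s_0, s_1, s_2, \dots$ with initial state $s_0$ and successor relation $s_j \mapsto s_{j+1}$ (this relation is total, as required of Kripke structures), and labelling $L(s_j) = \{ q \in \propatoms \mid X^{j}q \in \baseb \}$, which is empty for all $j > N$. Because $K^\ast$ is linear, its only path starting at $s_0$ is $s_0 s_1 s_2 \cdots$, so the semantics of the $X$ operator gives, for every $j \geq 0$ and $q \in \propatoms$, that $(K^\ast, s_0) \models X^{j}q$ iff $q \in L(s_j)$ iff $X^{j}q \in \baseb$. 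Hence $(K^\ast, s_0)$ satisfies precisely the formulas of $\baseb$: in particular it belongs to $\modelsof{\baseb}$, and it fails $\varphi$ because $\varphi \notin \baseb$. This shows that not every model of $\baseb$ satisfies $\varphi$, which is the contrapositive of the remaining implication; combining the two directions yields that $\baseb$ coincides with its set of consequences, i.e.\ $\baseb$ is a theory.

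The argument is short, and the only points that require care are bookkeeping rather than conceptual. One must make sure the canonical structure $K^\ast$ genuinely qualifies as a Kripke structure under the referenced definition (chiefly that its transition relation is total, which the single infinite path satisfies) and must invoke the semantics of $X^{j}p$ correctly---namely that ``$p$ is labelled at the $j$-th state of all paths from the initial state'' collapses to ``$p \in L(s_j)$'' in a linear model. Since $\llangltlx$ has no Boolean connectives, no further cases on the shape of $\varphi$ arise, so I expect the main (minor) obstacle to be simply stating the construction precisely enough that these two checks are visibly discharged.
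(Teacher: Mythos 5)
Your proposal is correct and follows essentially the same route as the paper's proof: reduce to showing that for each $\varphi\notin\baseb$ there is a counter-model, and build a linear Kripke structure whose labelling at position $j$ consists exactly of the atoms $q$ with $X^{j}q\in\baseb$, so that it satisfies precisely the formulas of $\baseb$. The only (immaterial) difference is that you use a single infinite path while the paper uses a finite chain closed off by a self-loop at the last state; both yield a total transition relation and the same satisfaction pattern.
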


\begin{proof}
    It suffices to show that for every finite set $\baseb$,  and formula $\varphi \not \in \baseb$,  
    there is a model $(M,s)$ such that $(M,s) \models \baseb$ but $(M,s) \not \models \{\varphi\}$. 
    % for every formula $\varphi \not \in \baseb$.   
    By definition,  every formula in this logics is of the form $X^n p$,  where $p$ is an atomic propositional formula and $n \in \mathbb{N}$.  
    Let $m = \max(\{ k \in \mathbb{N} \mid X^k p \in  \baseb \cup \{\varphi\}\})$.  
    The value $m$ contains the highest value of $X^k$ of the formulae in $\baseb \cup \{\varphi\}$. 
    This works as an upper bound on the size of the model $M$,  we will construct.  
    Let us construct the model $M = (S,  R,  \lambda)$ where
    \begin{itemize}
        \item $S = \{s_1, \dots,  s_m\}, $
        \item $R = \{  (s_i,  s_{i+1}) \mid  i < m  \} \cup \{(s_m, s_m)\}$ 
        \item $\lambda(s_i) = \{  p \in \propatoms \mid X^i p \in \baseb \}$ 
    \end{itemize}
    Observe that $M$ is indeed a Kripke structure.   
    We only need to show that (1) $(M,s_1) \modelsltlx \baseb$ and (2) $(M,s_1) \not \modelsltlx \{\varphi\}$. 
    \begin{enumerate}[label= (\arabic*), leftmargin=*]
        \item let $\psi \in \baseb$.  Thus,  $\psi = X^i p$,  for some $i \geq 0$.  
            By definition of $M$,  $p  \in \lambda(s_i)$, which means that,   $(M,s_1) \modelsltlx X^i p$. 	
            Therefore,  $(M,s_1) \modelsltlx \psi$,  for all $\psi \in \baseb$,  that is,  $(M,s_1) \modelsltlx \baseb$.  

        \item   $\varphi = X^i q$,  for some $i \geq 0$.  By hypothesis,   $\varphi \not \in X$.  Thus,  by definition of $\lambda$,  we get  $q \not \in \lambda(i)$.  
            Thus,  $(M,  s_1) \not \modelsltlx X^i q$,   that is,   $(M,  s_1) \not \modelsltlx \varphi$.
    \end{enumerate}
\end{proof}

\rltlxrecep*

\begin{proof}

    Let us suppose for contradiction that $\rcpx(\baseb, \mSet) \not \in \FRsups(\baseb \cup  \mSet, \logsysltlx)$.  
    Thus,  there is some finite representable $Y \subseteq \mUni$ such that 
    \begin{align}
        \modelsof{\baseb} \cup \mSet \subseteq Y \subset \modelsof{\rcpx(\baseb, \mSet)},  \label{eq:modelsubXY}
    \end{align}
    and $Y = \modelsof{\baseb_Y}$,  for some finite $\baseb_Y$.  
    From \cref{obs:allfinite}, every finite set is a theory,   which implies that  both $\rcpx(\baseb, \mSet)$ and $\baseb_Y$ are finite theories.  
    Thus,  as the logic is monotonic,  we get that $\rcpx(\baseb, \mSet) \subset \baseb_Y$. 
    Thus there is some $\varphi \in \baseb_Y$ such that $\varphi \not \in \rcpx(\baseb, \mSet)$.  
    By definition, 
    \begin{align*}
        \psi \in \rcpx(\baseb, \mSet) &\mbox{ iff } \modelsof{\baseb} \cup \mSet \modelsltlx \psi
    \end{align*}  
    Thus,  as $\varphi \not \in \rcpx(\baseb, \mSet)$,  we get that $\modelsof{\baseb} \cup \mSet \not \modelsltlx \varphi$.  
    However,   from~\eqref{eq:modelsubXY},  we have that  $\modelsof{\baseb} \cup \mSet \subseteq Y$.  
    Thus,  as $\varphi \in \baseb_Y$,  %\nb{Ana:typo in $\varphi \in Y$?} fixed
    we have that $\modelsof{\baseb} \cup \mSet  \modelsltlx \varphi$,  which is a contradiction.   
\end{proof}

\begin{proposition}\label{ltlxrcompat}
    $\logsysltlx$ is \mexnm-compatible.  
\end{proposition}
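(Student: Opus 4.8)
The plan is to obtain \mexnm-compatibility as an immediate corollary of \cref{prop:ltlxrecep}. Recall that, by definition, $\logsysltlx$ is \mexnm-compatible precisely when $\FRsups(\modelsof{\baseb} \cup \mSet, \logsysltlx) \neq \emptyset$ for every finite base $\baseb \in \finitepwset(\llangltlx)$ and every set of models $\mSet \subseteq \mUniltlx$. So the first step is simply to fix an arbitrary such pair $(\baseb, \mSet)$.

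Next, I would produce a concrete witness for the non-emptiness of $\FRsups(\modelsof{\baseb} \cup \mSet, \logsysltlx)$, namely the base $\rcpx(\baseb, \mSet) = \{\varphi \in \baseb \mid \mSet \models \varphi\}$ from \cref{prop:ltlxrecep}. Two quick observations make this legitimate: (i) $\rcpx(\baseb, \mSet)$ is a subset of the finite set $\baseb$, hence it lies in $\finitepwset(\llangltlx)$, so $\modelsof{\rcpx(\baseb, \mSet)} \in \FRsets(\logsysltlx)$; and (ii) the construction is well defined even when $\mSet$ is infinite, since it only filters the finitely many formulae of $\baseb$. Then \cref{prop:ltlxrecep} gives $\modelsof{\rcpx(\baseb, \mSet)} \in \FRsups(\modelsof{\baseb} \cup \mSet, \logsysltlx)$, so that set is non-empty. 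Since $(\baseb, \mSet)$ was arbitrary, this finishes the argument.

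I expect essentially no obstacle here: all the real work is already contained in \cref{prop:ltlxrecep}, which in turn rests on \cref{obs:allfinite} (every finite set of $\llangltlx$-formulae is already a theory). The only points worth a sentence of care are the mild abuse of notation that identifies the finite base $\rcpx(\baseb, \mSet)$ with the set of models it represents, and the remark that the construction does not break for infinite $\mSet$.

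\begin{proof}
    This is a direct consequence of \cref{prop:ltlxrecep}. By definition, $\logsysltlx$ is \mexnm-compatible iff $\FRsups(\modelsof{\baseb} \cup \mSet, \logsysltlx) \neq \emptyset$ for every $\baseb \in \finitepwset(\llangltlx)$ and every $\mSet \subseteq \mUniltlx$. Fix such a $\baseb$ and $\mSet$, and consider $\rcpx(\baseb, \mSet) = \{\varphi \in \baseb \mid \mSet \models \varphi\}$. This is a subset of the finite set $\baseb$, hence $\rcpx(\baseb, \mSet) \in \finitepwset(\llangltlx)$, and it is well defined even when $\mSet$ is infinite, as it only filters the finitely many formulae of $\baseb$. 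By \cref{prop:ltlxrecep}, $\modelsof{\rcpx(\baseb, \mSet)} \in \FRsups(\modelsof{\baseb} \cup \mSet, \logsysltlx)$, so this set is non-empty. As the choice of $\baseb$ and $\mSet$ was arbitrary, $\logsysltlx$ is \mexnm-compatible.
\end{proof}
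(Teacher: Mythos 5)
Your proof is correct and follows essentially the same route as the paper: both arguments rest entirely on \cref{prop:ltlxrecep}, which exhibits $\rcpx(\baseb,\mSet)$ as a witness that $\FRsups(\modelsof{\baseb}\cup\mSet,\logsysltlx)$ is non-empty. If anything, your wrap-up is cleaner than the paper's, which detours through the representation theorem \cref{mFRExp_r} (a theorem stated only for \mexnm-compatible systems, so invoking it here is slightly awkward), whereas you conclude directly from the definition of \mexnm-compatibility.
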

\begin{proof}
    It follows from \cref{prop:ltlxrecep} that
    $ \rcpx(\baseb, \mSet) \in \FRsups(\baseb \cup \mSet, \logsysltlx)$,  which means that 
    $\rcpx$ is a maxichoice reception function. Therefore,  according to \cref{mFRExp_r}, $\rcpx$ satisfies all rationality postulates for reception.  Thus, $\logsysltlx$ is reception-compatible.  
\end{proof}

\begin{proposition}\label{ltlxecompat}
    $\logsysltlx$ is not \mconnm-compatible.  
\end{proposition}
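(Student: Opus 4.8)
The plan is to exploit the fact that this logic has no unsatisfiable finite base, i.e.\ $\emptyset \notin \FRsets(\logsysltlx)$, and then produce a pair $(\baseb, \mSet)$ for which $\modelsof{\baseb} \setminus \mSet = \emptyset$, so that $\FRsubs$ of that set is forced to be empty.

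First I would establish that every finite base in $\llangltlx$ is satisfiable, hence $\emptyset \notin \FRsets(\logsysltlx)$. This is essentially the linear-model construction already used in the proof of \Cref{obs:allfinite}: given a finite base $\baseb = \{X^{m_1}p_1, \dots, X^{m_k}p_k\} \in \finitepwset(\llangltlx)$, let $m = \max_j m_j$ and build the Kripke structure $M$ on states $s_0, \dots, s_m$ with transitions $s_i \to s_{i+1}$ for $i < m$ and $s_m \to s_m$, labelling $s_i$ with $\{p \in \propatoms \mid X^i p \in \baseb\}$; then $(M, s_0) \modelsltlx \baseb$. Since the only subset of $\emptyset$ is $\emptyset$ itself, this shows that no subset of $\emptyset$ lies in $\FRsets(\logsysltlx)$, i.e.\ $\FRsubs(\emptyset, \logsysltlx) = \emptyset$.

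Next I would instantiate the counterexample. Fix any $p \in \propatoms$ and take $\baseb = \{p\}$ and $\mSet = \modelsof{\baseb}$. The base $\baseb$ does have models (e.g.\ the single-state structure labelling its state with $\{p\}$), and $\modelsof{\baseb} \setminus \mSet = \modelsof{\baseb} \setminus \modelsof{\baseb} = \emptyset$. Combining this with the previous paragraph, $\FRsubs(\modelsof{\baseb} \setminus \mSet, \logsysltlx) = \FRsubs(\emptyset, \logsysltlx) = \emptyset$, which contradicts \mconnm-compatibility. Hence $\logsysltlx$ is not \mconnm-compatible.

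There is no real obstacle here: the argument is short and the only point requiring care is the justification that $\emptyset \notin \FRsets(\logsysltlx)$, which is immediate from the monotone, purely positive nature of the language (finitely many constraints of the form $X^m p$ never conflict), and which we can simply borrow from the model construction in \Cref{obs:allfinite}. Alternatively, one could phrase the example with the empty base $\baseb = \emptyset$ and $\mSet = \mUniltlx$; I would keep the $\baseb = \{p\}$ version since it presents a genuine (attempted) eviction rather than a degenerate one.
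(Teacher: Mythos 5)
Your proposal is correct and follows essentially the same route as the paper: the same counterexample $\baseb = \{p\}$, $\mSet = \modelsof{\baseb}$, reduced to showing that $\emptyset \notin \FRsets(\logsysltlx)$ because every finite base of $X^m p$ formulae is satisfiable. The only cosmetic difference is that you reuse the linear-chain model from \cref{obs:allfinite} whereas the paper exhibits a single self-looping state; both witness satisfiability of an arbitrary finite base.
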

\begin{proof}
    Let $\baseb = \{p\}$ and \(\mSet = \modelsof{\baseb}\).
    Thus,  eviction of $\baseb$ by $\mSet$ must result in a finite base $\baseb'$ such that $\modelsof{\baseb'} = \emptyset$.
    However, the empty set of models is not finitely representable in this logic.
    To prove this, it is enough to show that every finite base \(\baseb \in \finitepwset(\logsysltlx)\) has a model.
    In fact, for any \(\baseb \in \finitepwset(\llang)\), the model \(M = (S, R, \lambda)\) such that
    \(S = \{s\}\), \(R  = \{(s, s)\}\) and \(\lambda(s) = \{p\}\) will satisfy any finite set of formulae of the form \(X^{k} p\) with \(k \in \mathbb{N}\). 
    %\todo{RG:\ Changed the proof, the previous one was incomplete/incorret}
    %Because,   $\emptyset \modelsltlx \llangltlx$,  that is,  the empty set of models satisfies all formulae of the language. \nb{Ana: Not sure if we have defined sat from a set of models, but we could just say that all formulas are satisfiable}
    %However,  without the  temporal operator Globally $(G)$,  we cannot represent the infinity chain $\{p, Xp, \dots X^{k}p, \dots\}$.  
\end{proof}

\rltlxcompats*

\begin{proof}
    Follows directly from \cref{ltlxrcompat,ltlxecompat}.
\end{proof}

\subsection{Proofs for \cref{use:alc}}

In the following proofs, we will consider in this work standard abbreviations
for concept constructors in \ALC{} that were not describe in \cref{use:alc}. 
For example  $\bot$ is interpreted as the empty set %or, equivalently,
and $\top$
is interpreted as the whole domain.
In some of the proofs, we will also employ the fact that
usual concept inclusions $C \sqsubseteq D$ can be expressed equivalently
as  $\top \sqsubseteq \neg C\sqcup D$ and $\neg C\sqcup D \sqsubseteq \top$.
We will also write \(\exists r^m.C\) to denote the nesting of the existential
restriction \(\exists r\) \(m\) times over the concept $C$.
We establish in Theorem~\ref{ALCrcompat} that
     \(\logsysALC\) is not %\mconnm-compatible nor 
 \mconnm-compatible. Our proof
 holds both in the case in which the disjoint sets
 $\NC,\NR,\NI$ are assumed to be finite
 or (countably) infinite.
\begin{restatable}{theorem}{rcpALCcompat}%
\label{ALCrcompat}
    %The satisfaction system 
    \(\logsysALC\) is not %\mconnm-compatible nor 
    \mconnm-compatible. 
\end{restatable}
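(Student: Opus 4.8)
The plan is to contradict \mconnm-compatibility directly, by producing a finite base $\baseb$ and a set of models $\mSet$ for which $\FRsubs(\modelsof{\baseb}\setminus\mSet,\logsysALC)=\emptyset$. I would take $\baseb=\emptyset$, so that $\modelsof{\baseb}=\mUni$. Fix one role name $r$ and one individual $a$, and for $n\ge 0$ let $C_n$ be the \ALC{} concept $\exists r^{n}.\top$ ($n$ nested existential restrictions, $C_0=\top$); note that $C_{n+1}(a)$ entails $C_n(a)$, since an $r$-path of length $n+1$ contains one of length $n$. Let $\mSet\coloneqq\{\mathcal{I}\in\mUni\mid \mathcal{I}\models C_n(a)\text{ for every }n\ge 0\}$, i.e.\ the interpretations in which $a$ has outgoing $r$-paths of every finite length. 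Then $\modelsof{\baseb}\setminus\mSet=\bigcup_{n\ge 1}\modelsof{\{(\neg C_n)(a)\}}$, and since $(\neg C_n)(a)$ entails $(\neg C_{n+1})(a)$ this is a union of an $\subseteq$-increasing chain of finitely representable sets.

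Two facts then do the work. (a) \emph{The chain is strictly increasing}: for each $N$, the interpretation whose domain is a single $r$-path $a=x_0\to x_1\to\cdots\to x_N$, with all concept names interpreted as empty, satisfies $(\neg C_{N+1})(a)$ but not $(\neg C_N)(a)$, so $\modelsof{\{(\neg C_N)(a)\}}\subsetneq\modelsof{\{(\neg C_{N+1})(a)\}}\subseteq\modelsof{\baseb}\setminus\mSet$. (b) \emph{Every finitely representable subset of $\modelsof{\baseb}\setminus\mSet$ lies below some member of the chain}: if $\baseb'$ is a finite \ALC{} ontology with $\modelsof{\baseb'}\subseteq\bigcup_{n\ge 1}\modelsof{\{(\neg C_n)(a)\}}$, then $\baseb'\cup\{C_n(a)\mid n\ge 1\}$ is unsatisfiable; via the standard semantics-preserving translation of \ALC{} into first-order logic and compactness, some finite subset is already unsatisfiable, and since $C_N(a)$ entails $C_n(a)$ for $n\le N$ this yields that $\baseb'\cup\{C_N(a)\}$ is unsatisfiable, i.e.\ $\modelsof{\baseb'}\subseteq\modelsof{\{(\neg C_N)(a)\}}$ for some $N$. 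Combining (a) and (b): any finitely representable $\mathcal{N}\subseteq\modelsof{\baseb}\setminus\mSet$ satisfies $\mathcal{N}\subseteq\modelsof{\{(\neg C_N)(a)\}}\subsetneq\modelsof{\{(\neg C_{N+1})(a)\}}\subseteq\modelsof{\baseb}\setminus\mSet$ for some $N$, hence is not $\subseteq$-maximal; therefore $\FRsubs(\modelsof{\baseb}\setminus\mSet,\logsysALC)=\emptyset$, and $\logsysALC$ is not \mconnm-compatible.

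Finally I would note that this construction uses only a single role, a single individual, and --- through the abbreviations $\top=\neg(A\sqcap\neg A)$ and $\forall r.C=\neg\exists r.\neg C$ --- a single concept name, so the argument goes through verbatim whether $\NC,\NR,\NI$ are finite or countably infinite, as claimed. The step I expect to be the main obstacle is (b): its only nontrivial ingredient is compactness, so it is important to invoke it as compactness of first-order logic via the translation of \ALC{}, with no appeal to finite- or tree-model properties, since we work over arbitrary interpretations; everything else is a routine unfolding of $\FRsubs$ and of the semantics of nested existential restrictions. Alternatively, one may obtain the same conclusion through \cref{incFRsubs}: $\modelsof{\baseb}\setminus\mSet$ is not itself finitely representable, it does have finitely representable subsets, and by (a) it has no immediate predecessor in $(\FRsets(\logsysALC)\cup\{\modelsof{\baseb}\setminus\mSet\},\subset)$.
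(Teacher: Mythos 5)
Your proof is correct, and it targets essentially the same counterexample as the paper: both arguments reduce to showing that the set of interpretations in which the individual $a$ admits only $r$-paths of bounded length (your $\modelsof{\baseb}\setminus\mSet$, the paper's $\mSet$ obtained by evicting $\mUni\setminus\mSet$ from a tautological base) has no $\subseteq$-maximal finitely representable subset. Where you genuinely diverge is in the machinery for the key step. The paper introduces the concrete chain models $M^n$ and their limit $M^\infty$, proves by a fairly long structural induction on \ALC{} concepts that a finite base satisfied by cofinitely many $M^n$ is also satisfied by $M^\infty$ (its Claims 1--3), and separately constructs, via an explicit disjunction of ``chain length exactly $i$'' concepts, a strictly larger finitely representable subset whenever some $M^n$ is missed. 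You replace all of this with two shorter observations: the sets $\modelsof{\{(\neg C_n)(a)\}}$ form a strictly increasing chain exhausting the target set, and by compactness of first-order logic (through the standard translation of \ALC{}) every finitely representable subset already sits inside one link of that chain, hence is never maximal. Your route buys brevity and transparency --- the induction over concept constructors, which is the bulk of the paper's proof, is absorbed into a single black-box appeal to compactness, and your step (a) witness is a trivial finite path model rather than an argument about $M^\infty$. The paper's route buys self-containedness (no detour through the first-order translation) and produces the explicit limit model $M^\infty$ that witnesses failure of maximality, which is arguably more informative about \emph{why} \ALC{} fails here. Both correctly cover the finite- and infinite-signature cases, since only one concept name, one role, and one individual are used.
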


\begin{proof}
    Let \(\logsysALC = (\llang_\ALC, \mUni_\ALC, \models_\ALC)\) be the
usual satisfaction system for \(\ALC\). 
For conciseness, we will write \(\models\) instead of \(\models_\ALC\) within this proof.  
   %Throughout this proof 
%   Consider the  signature \(\NC = \emptyset\), \(\NR = \{r\}\), and \(\NI = \{a\}\).
%Let \(\NC,\NR,\NI\) be countably infinite
%mutually disjoint sets of concept, role, and individual names.
Let \(\baseb_\top =\{\bot\sqsubseteq \top\}\), that is, \(\modelsof{\baseb_\top} = \mUni\). 
Also, given a fixed but arbitrary $a\in\NI$ and $r\in\NR$, we define models of the form
%\begin{itemize}
%	\item 
\(M^n=(\mathbb{N},\cdot^{M^n})\)
where 
\[r^{M^n} = \{(i,i+1)\mid i\in \mathbb{N}, 0\leq i < n\}\]
and $a^{M_n}=0$, and similarly
%\item 
\(M^{\infty}=(\mathbb{N},\cdot^{M^{\infty}})\)
where \[r^{M^{\infty}} = \{(i,i+1)\mid i\in \mathbb{N}\}\]
and $a^{M^{\infty}}=0$.
Let $\mSet$ be the set of all models $M$
such that for some $n\in\mathbb{N}$
we have that $a^M\in (\forall r^n.\bot)^M$.
That is, there is no loop or infinite chain
of elements connected via the role $r$ starting
from $a^M$.
%$M^n$ with $n\in \mathbb{N}$ %, the model $M^{\infty}$
 %and
%all models $M$ bimilar to  a model of the form $M^n$
%where the bisimilation contains
%$(a^M,a^{M^n})$.
%Note that, 
By definition of $\mSet$, we have that
$M^{\infty}\not\in \mSet$ since
this model has an
infinite chain
of elements connected via the role $r$ starting
from $a^M$, while $M^{n}\in \mSet$ for all $n\in\mathbb{N}$.
%$M^{\infty}$ is not bisimilar with any 
%model 
%$M^{n}$,
%where the bisimilation contains
%$(a^{M^{\infty}},a^{M^n})$. 
%$Z$
%with a model of the form $M^n$ such that $(a^M)$.
% or $M^{\infty}$.  

 To prove that \(\logsysALC\) is not \mconnm-compatible, we need to prove that there is no \(\baseb \in \finitepwset(\llang_\ALC)\) such that \(\modelsof{\baseb} \in \FRsubs(\mSet, \logsysALC)\), that is, \(\FRsubs(\mSet, \logsysALC) = \emptyset\).
 Intuitively, we want to show that 
 we cannot find a maximal \ALC{} ontology 
 that finitely represents the result of removing the models in $\mUni \setminus \mSet$ 
 from $\baseb_\top$.
 First, we  show 
 the following claims. 
% In other words,  for $\mSet':=\mUni \setminus \mSet$.
\begin{claim}\label{cl:eviction-aux-aux}
	For every \ALC{} concept $C$ 
	if $M^{\infty}\models C(a)$
	then there is $n\in\mathbb{N}$ 
	such that for all $m\geq n$, with $m\in\mathbb{N}$, 
	we have that $M^m\models C(a)$.
\end{claim}

The proof is by structural induction.
We assume w.l.o.g. that $C$
is in negation normal form,
which means that we need to deal with 
expressions of the form
$\exists r.D_1$, $\forall r.D_1$, $D_1\sqcap D_2$,
$D_1\sqcup D_2$ (but we can disregard $\neg D_1$). 
In the base case we have $C=\exists r.\top$
and $C=\forall r.\top$.
The claim holds in the base case 
% for all $n\in\mathbb{N}$
%and all $m\geq n$, with $m\in\mathbb{N}$,
since, by definition of $M^{n}$, we have that  
$M^{n}\models \exists r.\top(a)$,
 for all $n\in\mathbb{N}$,
and the premisse is violated for $\forall r.\bot$ (that is, $M^{\infty}\not\models \forall r.\bot(a)$).
%(there is no $n\in\mathbb{N}$ such that
%$M^n\models \forall r.\bot(a)$).
Suppose that the claim holds for $D\in\{D_1,D_2\}$,
that is,	if $M^{\infty}\models D(a)$ then there is $n\in\mathbb{N}$
such that for all $m\geq n$, we have that 
$M^m\models D(a)$.
We now consider the following cases.
\begin{itemize}
	\item  $\exists r.D_1$: 
	Suppose that $M^{\infty}\models \exists r.D_1(a)$.
	By definition of $M^{\infty}$, we have that 
	$M^{\infty}\models D_1(a)$ and so,
	by the inductive hypothesis, 
	there is  $n\in\mathbb{N}$
	such that for all $m\geq n$, we have that
	$M^m\models D_1(a)$. %, for all $n\in\mathbb{N}$.
	By definition of $M^m$, for all $m\geq n$, with $m\in\mathbb{N}$,
	we have that $M^{m+1}\models \exists r.D_1(a)$. 
	%	Finally, by definition of 
	%$M^{\infty}$, if   $M^{\infty}\models D_1(a)$ then  $M^{\infty}\models \exists r.D_1(a)$.
	\item  $\forall r.D_1$: 
		Suppose that $M^{\infty}\models \forall r.D_1(a)$.
	By definition of $M^{\infty}$, we have that 
	$M^{\infty}\models D_1(a)$ and so,
	by the inductive hypothesis, 
	there is  $n\in\mathbb{N}$
	such that for all $m\geq n$, we have that
	$M^m\models D_1(a)$. %, for all $n\in\mathbb{N}$.
	By definition of $M^m$, for all $m\geq n$, with $m\in\mathbb{N}$,
	we have that $M^{m+1}\models \forall r.D_1(a)$. 
	\item $D_1\sqcap D_2$: 	
	Suppose that $M^{\infty}\models D_1\sqcap D_2(a)$.
	Then, $M^{\infty}\models D_1(a)$
	and $M^{\infty}\models D_2(a)$.
	%By definition of $M^m$, for all $m\geq n >0$,
	%we have that $M^{m-1}\models D_1(a)$ (note that we 
	%can assume w.l.o.g. that there is such $n>0$). 
	By the inductive hypothesis,
	%if, for all $n\in\mathbb{N}$, 
	%we have that $M^n\models D_1(a)$
	%then 
	there are $n_1,n_2$ such that 
	for all $m_1\geq n_1$ and all $m_2\geq n_2$, we have that
	$M^{m_1}\models D_1(a)$ and $M^{m_2}\models D_2(a)$.
	Assume w.l.o.g. that $n_1\geq n_2$.
%	$M^{\infty}\models D_1(a)$ and $M^{\infty}\models D_2(a)$.
	Then, 	%there is  $n\in\mathbb{N}$
	%such that 
	for all $m\geq n_1$, we have that
	$M^m\models D_1\sqcap D_2(a)$. %, for all $n\in\mathbb{N}$.
	\item $D_1\sqcup D_2$:
		Suppose that $M^{\infty}\models D_1\sqcup D_2(a)$.
	Then, $M^{\infty}\models D_1(a)$
	or $M^{\infty}\models D_2(a)$.
	Assume w.l.o.g. that $M^{\infty}\models D_1(a)$.
	%By definition of $M^m$, for all $m\geq n >0$,
	%we have that $M^{m-1}\models D_1(a)$ (note that we 
	%can assume w.l.o.g. that there is such $n>0$). 
	By the inductive hypothesis,
	%if, for all $n\in\mathbb{N}$, 
	%we have that $M^n\models D_1(a)$
	%then 
	there is $n$ such that 
	for all $m\geq n$, we have that
	$M^{m}\models D_1(a)$.
	% and $M^{m_2}\models D_2(a)$.
	%Assume w.l.o.g. that $n_1\geq n_2$.
	%	$M^{\infty}\models D_1(a)$ and $M^{\infty}\models D_2(a)$.
	Then, 	%there is  $n\in\mathbb{N}$
	%such that 
	for all $m\geq n$, we have that
	$M^m\models D_1\sqcup D_2(a)$. %, for all $n\in\mathbb{N}$.
\end{itemize}

\begin{claim}\label{cl:eviction-aux}
	For every \ALC{} concept $C$ 
	if there is $n\in\mathbb{N}$ 
	such that for all $m\geq n$, with $m\in\mathbb{N}$, 
	we have that $M^m\models C(a)$
	then $M^{\infty}\models C(a)$.
\end{claim}
The proof is by structural induction but we 
do not use negation normal form in this proof.
In the base case we have $C=\exists r.\top$. 
%and $C=\forall r.\top$.
The claim holds in the base case  for all $n\in\mathbb{N}$
and all $m\geq n$, with $m\in\mathbb{N}$,
since, by definition of $M^{\infty}$, we have that  
$M^{\infty}\models \exists r.\top(a)$.
%and the premisse is violated for $\forall r.\bot$
%(there is no $n\in\mathbb{N}$ such that
%$M^n\models \forall r.\bot(a)$).
Suppose that the claim holds for $D\in\{D_1,D_2\}$,
that is,	if there is $n\in\mathbb{N}$
such that for all $m\geq n$, we have that 
 $M^m\models D(a)$ %, for all $n\in\mathbb{N}$,
then $M^{\infty}\models D(a)$.
We now consider the following cases.
\begin{itemize}
\item  $\exists r.D_1$: 
Suppose that there is  $n\in\mathbb{N}$
such that for all $m\geq n$, we have that
  $M^m\models \exists r.D_1(a)$. %, for all $n\in\mathbb{N}$.
  By definition of $M^m$, for all $m\geq n >0$,
   we have that $M^{m-1}\models D_1(a)$ (note that we 
   can assume w.l.o.g. that there is such $n>0$ because
    if there is $n$ satisfying the claim then $n+1$ also satisfies the claim). 
  Then, by the inductive hypothesis,
  %if, for all $n\in\mathbb{N}$, 
  %we have that $M^n\models D_1(a)$
  %then 
  $M^{\infty}\models D_1(a)$.
  Finally, by definition of 
   $M^{\infty}$, if   $M^{\infty}\models D_1(a)$ then  $M^{\infty}\models \exists r.D_1(a)$.
\item $\neg D_1$: In this case, we use the contrapositive.
Suppose that $M^{\infty}\not\models \neg D_1(a)$.
We want to show that there is no 
$n\in\mathbb{N}$ 
such that for all $m\geq n$, with $m\in\mathbb{N}$, 
we have that $M^m\models \neg D_1(a)$.
If $M^{\infty}\not\models \neg D_1(a)$
then  $M^{\infty} \models  D_1(a)$ and so,
 by Claim~\ref{cl:eviction-aux-aux},
 there is  $n\in\mathbb{N}$
such that for all $m\geq n$, with $m\in\mathbb{N}$, 
we have that
$M^m\models    D_1(a)$. 
Then, there can be no 
$n \in\mathbb{N}$
such that for all $m\geq n$, with $m\in\mathbb{N}$, 
we have that
$M^{m}\models \neg   D_1(a)$.
%, for all $n\in\mathbb{N}$.
%By definition of $M^m$, for all $m\geq n >0$,
%we have that $M^{m-1}\models D_1(a)$ (note that we 
%can assume w.l.o.g. that there is such $n>0$). 
%Then, by the inductive hypothesis,
%if, for all $n\in\mathbb{N}$, 
%we have that $M^n\models D_1(a)$
%then 
%$M^{\infty}\models D_1(a)$.
%Finally, by definition of 
%$M^{\infty}$, if   $M^{\infty}\models D_1(a)$ then  $M^{\infty}\models \forall r.D_1(a)$.
\item $D_1\sqcap D_2$:
Suppose that there is  $n\in\mathbb{N}$
such that for all $m\geq n$, we have that
$M^m\models D_1\sqcap D_2(a)$. %, for all $n\in\mathbb{N}$.
%By definition of $M^m$, for all $m\geq n >0$,
%we have that $M^{m-1}\models D_1(a)$ (note that we 
%can assume w.l.o.g. that there is such $n>0$). 
Then, for all $m\geq n$, we have that
$M^m\models D_1(a)$ and $M^m\models D_2(a)$.
By the inductive hypothesis,
%if, for all $n\in\mathbb{N}$, 
%we have that $M^n\models D_1(a)$
%then 
$M^{\infty}\models D_1(a)$ and $M^{\infty}\models D_2(a)$.
So $M^{\infty}\models D_1\sqcap D_2(a)$.
\end{itemize}

\begin{claim}\label{cl:eviction-aux-aux-aux}
	For every \ALC{} concept $C$ 
	if there is $n\in\mathbb{N}$ 
	such that for all $m\geq n$, with $m\in\mathbb{N}$, 
	we have that $M^m\models \top\sqsubseteq C$
	then $M^{\infty}\models \top\sqsubseteq C$.
\end{claim}

Suppose to the contrary that, for some \ALC{} concept $C$, 
    there is $n\in\mathbb{N}$ 
such that, for all $m\geq n$, with $m\in\mathbb{N}$, 
we have that $M^m\models \top\sqsubseteq C$
but $M^{\infty}\not\models \top\sqsubseteq C$.
If $M^{\infty}\not\models \top\sqsubseteq C$
then there is $k\in\mathbb{N}$ such that
$k\not\in C^{M^{\infty}}$. 
By definition of $k\in\mathbb{N}$ 
and the models of the form $M^n$ (recall that
the domain of such models is $\mathbb{N}$),
for all $m\in\mathbb{N}$, there is a bisimulation 
between $M^{m-k}$ and $M^{m}$ containing 
$(a^{M^{m-k}},k)$.
%For 
Since \ALC{} is invariant under bisimulations, 
%this means that, 
for all $m'\geq m-k$,
we have that $a^{M^{m'}} \in C^{M^{m'}}$. 
Then, by Claim~\ref{cl:eviction-aux}, 
$a^{M^{\infty}} \in C^{M^{\infty}}$. 
By definition of $k\in\mathbb{N}$ 
and  $M^{\infty}$,
there is a bisimulation between $M^{\infty}$ and 
itself (that is, $M^{\infty}$) containing 
$(a^{M^{\infty}},k)$.
Therefore, $k \in C^{M^{\infty}}$. 

\medskip

We are now ready to show that 
\(\logsysALC\) is not \mconnm-compatible.
Suppose to the contrary that
there is \(\baseb \in \finitepwset(\llang_\ALC)\) such that \(\modelsof{\baseb} \in \FRsubs( \mSet, \logsysALC)\).
We can assume w.l.o.g. that $\baseb$ is of the form
$\{\top\sqsubseteq D, C(a)\}$.
Indeed, if it contains
e.g. $C_1(a), \ldots, C_k(a)$ then this is equivalent 
to $C_1\sqcap\ldots \sqcap C_k(a)$.
Also, concept inclusions $C_1\sqsubseteq D_1, \ldots, C_k\sqsubseteq D_k$ can be equivalently
rewritten as $\top\sqsubseteq ((\neg C_1 \sqcup D_1)\sqcap \ldots \sqcap (\neg C_k \sqcup D_k))$.
%that is not a tautology then .
%For all $n$

If there is $n\in\mathbb{N}$ such that
$M^n\not\models \baseb$ then\footnote{Recall that
	$M^n$ has a chain of $n+1$ elements connected via the role $r$.}
\[\baseb':=\{\top\sqsubseteq D\sqcup (\bigsqcup^{n+1}_{i=0}(\exists r^{i}.\top\sqcap \neg \exists r^{i+1}.\top)), 
\]\[
C\sqcup (\exists r^{n+1}.\top\sqcap \neg \exists r^{n+2}.\top)(a)\}\]
is such that $M^n\models \baseb'$.
Moreover, $\modelsof{\baseb}\subset\modelsof{\baseb'}$.
By definition of $\baseb'$ and $\mSet$, we also have that
  \(\modelsof{\baseb'} \in \FRsubs( \mSet, \logsysALC)\).
This contradicts the assumption that
\(\modelsof{\baseb} \in \FRsubs( \mSet, \logsysALC)\).
So, for all $n\in\mathbb{N}$, we have  that
$M^n\models \baseb$.
 
Then, by Claims~\ref{cl:eviction-aux} and~\ref{cl:eviction-aux-aux-aux}, it follows that  
$M^{\infty}\models \baseb$.
Since, as already mentioned, $M^{\infty}\not\in\mSet$,
this contradicts the assumption that
\(\modelsof{\baseb} \in \FRsubs( \mSet, \logsysALC)\).
  Thus, $\FRsubs( \mSet, \logsysALC)=\emptyset$. 
\end{proof}

We now prove Theorem~\ref{ALCecompat}
(the signature is infinite).

\begin{restatable}{theorem}{evcALCcompat}%
\label{ALCecompat}
    %The satisfaction system 
    \(\logsysALC\) is not %\mconnm-compatible nor 
    \mexnm-compatible. 
\end{restatable}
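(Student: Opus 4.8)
The plan is to refute $\mexnm$-compatibility directly: exhibit a finite base $\baseb$ and a set of models $\mSet$ with $\FRsups(\modelsof{\baseb}\cup\mSet,\logsysALC)=\emptyset$. The feature being exploited is that $\NC$ is infinite while every finite base mentions only finitely many names. Fix an enumeration $A_0,A_1,A_2,\dots$ of $\NC$ and a fixed $a\in\NI$. Take $\baseb:=\{(A_0\sqcap\neg A_0)(a)\}$, so $\modelsof{\baseb}=\emptyset$; in particular $\emptyset\in\FRsets(\logsysALC)$. Take $\mSet:=\modelsof{\Sigma}$ where $\Sigma:=\{(\neg A_i)(a)\mid i\in\mathbb{N}\}$, so that $\modelsof{\baseb}\cup\mSet=\mSet$. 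First I record that $\mSet\neq\emptyset$: the interpretation with one-element domain $\{d\}$, every concept name and every role name interpreted as the empty set, and every individual name mapped to $d$, satisfies $(\neg A_i)(a)$ for all $i$, hence lies in $\mSet$.

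The core of the argument is the claim: \emph{for every $\baseb'\in\finitepwset(\llang_\ALC)$ with $\mSet\subseteq\modelsof{\baseb'}$ there is $\baseb''\in\finitepwset(\llang_\ALC)$ with $\mSet\subseteq\modelsof{\baseb''}\subsetneq\modelsof{\baseb'}$.} To prove it, pick a concept name $A_k$ that does not occur in the finite set $\baseb'$ and set $\baseb'':=\baseb'\cup\{(\neg A_k)(a)\}$. Since every $I\in\mSet$ satisfies $\baseb'$ and also satisfies $(\neg A_k)(a)\in\Sigma$, we get $\mSet\subseteq\modelsof{\baseb''}$. For strictness I use an $A_k$-flipping observation: given $I\in\modelsof{\baseb'}$, let $I'$ agree with $I$ on everything except $A_k^{I'}:=A_k^{I}\,\triangle\,\{a^I\}$ (toggle whether $a^I$ lies in $A_k$); then a routine structural induction on concepts, and then on ontology expressions, shows that for any concept, concept inclusion, or assertion in which $A_k$ does not occur, satisfaction in $I$ coincides with satisfaction in $I'$, so $I'\in\modelsof{\baseb'}$. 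Applying this to any $I\in\mSet\subseteq\modelsof{\baseb'}$ (which has $a^I\notin A_k^I$) yields $I'\in\modelsof{\baseb'}$ with $a^{I'}\in A_k^{I'}$, hence $I'\notin\modelsof{\{(\neg A_k)(a)\}}\supseteq\modelsof{\baseb''}$. Thus $\modelsof{\baseb''}\subsetneq\modelsof{\baseb'}$, proving the claim.

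Now I conclude. Every element of $\FRsets(\logsysALC)$ is $\modelsof{\baseb'}$ for some $\baseb'\in\finitepwset(\llang_\ALC)$, so any member of $\FRsups(\mSet,\logsysALC)$ would be such a $\modelsof{\baseb'}$ with $\mSet\subseteq\modelsof{\baseb'}$ and with no set of $\FRsets(\logsysALC)$ strictly between $\mSet$ and $\modelsof{\baseb'}$; but the claim produces exactly such an intermediate set $\modelsof{\baseb''}\in\FRsets(\logsysALC)$, a contradiction. Hence $\FRsups(\mSet,\logsysALC)=\FRsups(\modelsof{\baseb}\cup\mSet,\logsysALC)=\emptyset$, so $\logsysALC$ is not $\mexnm$-compatible. (Equivalently, one can phrase the conclusion via \cref{incFRsubs}: $\mSet$ is not finitely representable, it has no immediate successor in $(\FRsets(\logsysALC)\cup\{\mSet\},\subset)$ by the flipping claim, and $\emptyset\in\FRsets(\logsysALC)$ is a finitely representable subset of $\mSet$, so none of conditions (i)--(iii) holds.) The only technical point is the $A_k$-flipping invariance, which is a standard induction; the conceptual crux is just that a finite base can constrain only finitely many concept names, so the non-finitely-representable target $\mSet$ always admits a strictly tighter finite over-approximation.
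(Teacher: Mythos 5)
Your proof is correct, and it establishes the theorem by the same underlying mechanism as the paper---an infinite signature means a finite base can constrain only finitely many concept names, so a non-finitely-representable target set of models admits no $\subseteq$-minimal finitely representable superset---but the execution is genuinely different and arguably cleaner. The paper fixes a \emph{single} model $M$ with $A_i^M=\{a_i^M\}$ for all $i$ and shows $\FRsups(\{M\},\logsysALC)=\emptyset$; because $M$ satisfies the assertions $A_i(a_i)$ \emph{positively}, a candidate finite base might entail some of them via a TBox axiom, which forces the paper into a two-case analysis (does the base leave some $A_k(a_k)$ undetermined or not?), and the second case rests on a somewhat informal argument that a single concept inclusion $\top\sqsubseteq P$ cannot enforce infinitely many independent assertions without excluding $M$. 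You instead take as target the models of infinitely many \emph{negative} assertions $(\neg A_i)(a)$; then any finite base $\baseb'$ omits some $A_k$, and your $A_k$-flipping lemma (a routine structural induction showing satisfaction of $A_k$-free expressions is invariant under toggling $a^I\in A_k^I$) certifies both that $\baseb'\cup\{(\neg A_k)(a)\}$ still contains the target and that it is strictly smaller---no case split needed, and every step is a standard, fully rigorous induction. The small additional checks you make (that $\mSet\neq\emptyset$, and that $\emptyset\in\FRsets(\logsysALC)$ so the degenerate escape via condition (iii) of the characterisation theorem is unavailable) are exactly the right ones; the price of your route is only that the input $\mSet$ is a set of models of an infinite base rather than a singleton, which is permitted by the framework.
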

%\begin{proof}
	
%\begin{claim}
%For every \ALC{} concept $C$, if $M^\infty\models C(a)$ then there exists 
%$n\in\mathbb{N}$ such that for all $m\geq n$, with %$m\in\mathbb{N}$, we have that $M^m\models C(a)$. 
%\end{claim}
	
%\end{proof}
\begin{proof}
    Let \(\logsysALC = (\llang_\ALC, \mUni_\ALC, \models_\ALC)\) be the
    usual satisfaction system for \(\ALC\). 
    For conciseness, we will write \(\models\) instead of \(\models_\ALC\) within this proof.  
    Assume for contradiction that
    \(\logsysALC\) is \mexnm-compatible.

    Consider the signature \(\NC = \{C_i \mid i\in \mathbb{N}\}\), \(\NI = \{a_i \mid i \in \mathbb{N}\}\), and \(\NR\) an arbitrary countably infinite set disjoint with \(\NC \cup \NR\).
    Also, consider the model \(\modelm
    = (\Delta^\modelm, \cdot^\modelm)\}\) where \(\Delta^\modelm = \mathbb{N}\), and
    \(\cdot^\modelm\) is such that \(r^\modelm = \emptyset\) for all \(r \in \NR\) and \(A_i^\modelm = \{a_i\}\) and \(a_i^\modelm = i\) for all \(i \in \mathbb{N}\). 
    Now, let \(\baseb_\bot = \{\top
    \sqsubseteq \bot\}\), \(\baseb_\bot\) is inconsistent (it has no models).

    By hypothesis,  \(\logsysALC\) is \mexnm-compatible which means that 
    there is a \(\baseb \in
    \finitepwset(\llang_\ALC)\) such that \(\modelsof{\baseb}
    = \modelsof{\mExp(\baseb_\bot, \{M\})}\), that is, \(\modelsof{\baseb} \in \FRsups(\{M\}, \logsysALC)\). 
%    Then, we can define the following set:
Let 
    \begin{align*}
        J = \{i \in \mathbb{N} \mid \forall M', M^{''} \in \modelsof{\baseb},\\
		 M' \models A_i(a_i) \text{ iff } M^{''} \models A_i(a_i)\}
    \end{align*}
    We have two cases: either (i) \(J \neq \mathbb{N}\), or (ii) \(J = \mathbb{N}\).
    In all cases,  we will reach  a contradiction,  and therefore we conclude that \(\logsysALC\) is not \mexnm-compatible. 

\begin{enumerate}[label= (\roman*), leftmargin=*]
    \item \(J \neq \mathbb{N}\). Then \(\baseb\) does not specify whether some \(A_k(a_k)\) with \(k \in \mathbb{N} \setminus J\) holds or not, that is, it will have both models in which \(A_k(a_k)\) holds and models in which \(\neg{A_k}(a_k)\) holds.
    We can build a base \(\baseb' = \baseb \cup \{A_k(a_k)\}\). 
    The base \(\baseb'\) is finite, \(M \in \modelsof{\baseb'}\), and \(\modelsof{\baseb'} \subset \modelsof{\baseb}\). 
    Hence, \(\modelsof{\baseb} \not\in \FRsups(\{M\}, \logsysALC)\), a contradiction.

    \item \(J = \mathbb{N}\). In this case, \(M \models A_i(a_i)\) for all \(M \in \modelsof{\baseb}\) and all $i\in\mathbb{N}$.    %We will prove that there is no finite base with \(J    = \mathbb{N}\) that has \(M\) as one of its models. 
    %We will reach a contradiction.  
        Without loss of generality, we can assume that \(\baseb\) has a single concept inclusion \(\top \sqsubseteq P\), with \(P\) an \ALC{} concept.
        Moreover, as \(\baseb\) is finite, it can only have finitely many assertions.
        Thus, we write \(\baseb\) as 
        \[
            \baseb = \{\top \sqsubseteq P\} \cup \{C_k(a_k) \mid k \in K\},
        \]
        where \(K\) is a finite subset of \(\mathbb{N}\) and \(P\) and \(C_k\) are \ALC{} concepts for all \(k \in K\).
        Let \(j  \in \mathbb{N} \setminus K\).
        As the assertions cannot enforce \(A_j(a_j)\), we have, by the monotonicity of \ALC{}, that \( \models \top\sqsubseteq P\) must entail  \(A_j(a_j)\), in other words, \(P \sqsubseteq A_j\) must be a tautology.
        But this also implies that \(\top \sqsubseteq A_j\) must hold in every model of \(\{\top \sqsubseteq P\}\).
        However, \(M \not\models A_j(a_i)\) for all \(i \neq j\), therefore \(M\) is not a model of \(\{\top \sqsubseteq P\}\) (\(M \not\in \modelsof{\{\top \sqsubseteq P\}}\)). 
        %\nb{I did not understand \(M \not\in \modelsof{\{\top \sqsubseteq P\}}\)} 
        %\nb{RG:\ I tried clarifying this point}.
        Additionally, by semantics and monotonicity of \ALC{}, we have that \(\modelsof{\baseb} \subseteq \modelsof{\{\top \sqsubseteq P\}}\), thus \(M \not\in \modelsof{\baseb}\), a contradiction. 
\end{enumerate}

    Therefore, there is no finite base \(\baseb\) such that: \(\modelsof{\baseb}
    \in \FRsets(\logsysALC)\), \(M \in \modelsof{\baseb}\) and
    \(\modelsof{\baseb}\) is \(\subseteq\)-minimal. Consequently, \(\FRsups(\{M\},
    \logsysALC) = \emptyset\).  
    Hence, \(\logsysALC\) is not \mexnm-compatible.
\end{proof}

\allALCcompat*

\begin{proof}
    Direct consequence of \cref{ALCrcompat,ALCecompat}
\end{proof}

We now consider a simpler satisfaction system that we called \(\logsysDLABox\).  We point out that 
we need negative assertions
to be \mconnm-compatible since we cannot express
contradiction with only positive assertions 
(logics that cannot express contradiction
are not \mconnm-compatible).

\allABoxcompat*
\begin{proof}
	The proof that  \(\logsysDLABox\) is  not %\mconnm-compatible nor 
	\mexnm-compatible
	is similar to the proof of Theorem~\ref{ALCecompat}, but simpler since 
	we only need to consider assertions (not concept inclusions). 	%\nb{  ABOX case}
	
 We now show that  \(\logsysDLABox\) is \mconnm-compatible.
 For this we need to prove that, for every
% ABox ontology \Omc
set of (positive and negative) assertions 
\Omc---we call it an \empty{ABox ontology}---and every \(\mSet \subseteq \mUni\),
 we have that  \(\FRsubs(\modelsof{\Omc} \setminus \mSet, \logsysDLABox) \neq \emptyset\).
 %for all \(\baseb \in \finitepwset(\llang)\) and .}	

Suppose to the contrary that
there exists an ABox ontology $\hat{\Omc}$
and a set \(\mSet \subseteq \mUni\)
such that 
\(\FRsubs(\modelsof{\hat{\Omc}} \setminus \mSet, \logsysDLABox) = \emptyset\).
By definition of \(\FRsubs(\modelsof{\hat{\Omc}} \setminus \mSet, \logsysDLABox)\),
%of $(\FRsubs(\modelsof{\hat{\Omc}} %\setminus \mSet, \logsysDLLITE)$
this can only happen
if either
\begin{itemize}
\item there is no ABox ontology \Omc such that 
$\modelsof{\Omc}\subseteq (\modelsof{\hat{\Omc}} \setminus \mSet)$, or
\item 
for all $i\in\mathbb{N}$,
there are ABox ontologies
$\Omc_i,\Omc_{i+1}$
such that 
$\modelsof{\Omc_i}\subset\modelsof{\Omc_{i+1}}\subseteq (\modelsof{\hat{\Omc}} \setminus \mSet)$.

\end{itemize}
The former cannot happen
since we can express contradiction in 
our restricted language (so there is always
an ABox ontology \Omc, e.g. $A(a),\neg A(a)$,   such that $\modelsof{\Omc}=\emptyset$
and this is for sure a subset of 
$(\modelsof{\hat{\Omc}} \setminus \mSet)$).
The fact that the latter also cannot
happen is because given two ontologies
$\Omc,\Omc'$ in this restricted
language we have that
$\modelsof{\Omc}\subset
\modelsof{\Omc'}$
iff $\Omc'\subset
 \Omc$.
% $\Omc\models \alpha$ iff 
%$\alpha\in\Omc$. 
So
there cannot 
be an infinite sequence 
of ABox ontologies $\Omc_i$ 
%in this language 
such that, for all $i\in\mathbb{N}$, 
$\modelsof{\Omc_i}\subset
\modelsof{\Omc_{i+1}}$ because this implies
$\Omc_i\supset \Omc_{i+1}$, for all $i\in\mathbb{N}$, 
and $\Omc_i, \Omc_{i+1}$ are finite.
\end{proof}
\allDLLitecompat*
\begin{proof}
	%The proof that \(\logsysDLLITE\)
	%is not reception-compatible 
	%is similar to 
	%the proof for \(\logsysALC\).
	%So 
	We start proving
	that \(\logsysDLLITE\)
	is eviction-compatible. 
    For conciseness, we will write \(\mUni\) and \(\models\) to represent, respectively, the universe of models and the satisfaction relation in \(\logsysDLLITE\) in this proof.
For this we need to prove that, for every
DL-Lite$_\Rmc$ ontology \Omc
and every \(\mSet \subseteq \mUni\),
we have that  \(\FRsubs(\modelsof{\Omc} \setminus \mSet, \logsysDLLITE) \neq \emptyset\).
 %for all \(\baseb \in \finitepwset(\llang)\) and .}	

Suppose to the contrary that
there exists a DL-Lite$_\Rmc$ ontology $\hat{\Omc}$
and a set \(\mSet \subseteq \mUni\)
such that 
\(\FRsubs(\modelsof{\hat{\Omc}} \setminus \mSet, \logsysDLLITE) = \emptyset\).
By definition of \(\FRsubs(\modelsof{\hat{\Omc}} \setminus \mSet, \logsysDLLITE)\),
%of $(\FRsubs(\modelsof{\hat{\Omc}} %\setminus \mSet, \logsysDLLITE)$
this can only happen
if either
\begin{itemize}
\item there is no DL-Lite$_\Rmc$ ontology \Omc such that 
$\modelsof{\Omc}\subseteq (\modelsof{\hat{\Omc}} \setminus \mSet)$, or
	\item 
for all $i\in\mathbb{N}$,
there are DL-Lite$_\Rmc$ ontologies
$\Omc_i,\Omc_{i+1}$
such that 
$\modelsof{\Omc_i}\subset\modelsof{\Omc_{i+1}}\subseteq (\modelsof{\hat{\Omc}} \setminus \mSet)$.

\end{itemize}
The former cannot happen
since we can express contradiction in DL-Lite$_\Rmc$ (so there is always
a DL-Lite$_\Rmc$ ontology \Omc such that $\modelsof{\Omc}=\emptyset$
and this is for sure a subset of 
$(\modelsof{\hat{\Omc}} \setminus \mSet)$).
The fact that the latter also cannot
happen is a consequence of the
following two claims. 

%In the following, 
%We say that
%an expression $\alpha$ is \emph{ trivial} w.r.t. an ontology \Omc
%if $\alpha$ is a tautology
%or $\alpha$ is of the form 
%$B\sqsubseteq C$ or $R\sqsubseteq S$
%and it is a logical consequence of \Omc that the left side of the inclusion is empty.

%\nb{changing, add def trivial}
\begin{claim}\label{claim:dllite}
  Given a satisfiable DL-Lite$_\Rmc$ ontology \Omc (over a finite signature $\NC\cup\NR\cup\NI$), we have that the DL-Lite$_\Rmc$ ontology
  $\Omc^t=\{\alpha\mid \Omc\models \alpha \}$
   is finite.
\end{claim}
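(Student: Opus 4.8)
The plan is to prove Claim~\ref{claim:dllite} by a direct counting argument: over a fixed finite signature there are only finitely many syntactically distinct DL-Lite$_\Rmc$ axioms, so $\Omc^t$, being by definition a set of such axioms, is finite. First I would enumerate the building blocks allowed by the DL-Lite$_\Rmc$ grammar given above. The basic concepts $B$ are either a concept name $A \in \NC$ or an unqualified existential $\exists S$ with $S$ a possibly inverse role ($r$ or $r^-$, $r \in \NR$), giving $|\NC| + 2|\NR|$ possibilities; the general concepts $C$ are $B$ or $\neg B$, hence at most $2(|\NC| + 2|\NR|)$ of them. Likewise the role expressions $S$ range over the $2|\NR|$ values $r, r^-$, and the $T$'s over the $4|\NR|$ values $S, \neg S$. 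Consequently there are at most $2(|\NC| + 2|\NR|)^2$ concept inclusions $B \sqsubseteq C$ and at most $8|\NR|^2$ role inclusions $S \sqsubseteq T$; and since $\NI$ is finite, there are at most $|\NC| \cdot |\NI|$ concept assertions $A(a)$ and $|\NR| \cdot |\NI|^2$ role assertions $r(a,b)$.

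Next I would observe that the grammar produces exactly axioms of these shapes---role inclusions, concept inclusions, and (positive) assertions---so the set $\mathcal{A}$ of \emph{all} DL-Lite$_\Rmc$ axioms over $\NC \cup \NR \cup \NI$ has cardinality bounded by the finite sum of the quantities above, and is therefore finite. Since $\Omc^t = \{\alpha \mid \Omc \models \alpha\}$ collects only DL-Lite$_\Rmc$ axioms $\alpha$, we have $\Omc^t \subseteq \mathcal{A}$, and a subset of a finite set is finite; hence $\Omc^t$ is finite. I would note in passing that the satisfiability hypothesis on $\Omc$ is not actually needed for this particular conclusion (if $\Omc$ were unsatisfiable then $\Omc^t = \mathcal{A}$, still finite), so it is presumably invoked for the companion claim; for Claim~\ref{claim:dllite} the counting suffices on its own.

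The only delicate point---and the ``main obstacle''---is being precise about what the grammar admits: the argument hinges on the fact that DL-Lite$_\Rmc$ allows neither qualified existentials nor role composition nor any nesting of constructors, so no single axiom can mention an unbounded number of signature symbols. Combined with the finiteness of $\NC$, $\NR$, and $\NI$, this caps the total number of axioms. If the language permitted nesting (as full \ALC{} does) or had an infinite signature, the conclusion would fail, which is exactly why this claim is stated for DL-Lite$_\Rmc$ with a finite signature rather than for \ALC{}.
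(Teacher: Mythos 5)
Your proof is correct and takes essentially the same route as the paper's: both argue that over a fixed finite signature only finitely many syntactically distinct DL-Lite$_\Rmc$ axioms exist (the paper bounds concept and role inclusions by $(6n)^2$ and $(4n)^2$ respectively), so $\Omc^t$ is a subset of a finite set. If anything, your version is slightly more complete, since you also count the assertions $A(a)$ and $r(a,b)$, which the paper's own proof omits even though assertions are part of DL-Lite$_\Rmc$ ontologies as defined, and you correctly observe that satisfiability is not needed for this particular claim.
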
	

We first argue that the number 
of possible DL-Lite$_\Rmc$ concept and role inclusions that can be formulated
with a finite signature $\NC\cup\NR\cup\NI$
%the concept and role 
%names occurring in  \Omc 
is finite. 
Indeed, concepts are of the form
$A,\neg A,\exists r,\neg  \exists r,\exists r^-,\neg  \exists r^-$ and role expressions
are of the form $R,R^-,\neg R, \neg R^-$. So if the number of concept names plus the number of role names is $n$, then there are at most
$(6n)^2$   possible concept inclusions and at most $(4n)^2$   possible role inclusions (with concept and role 
names occurring in  \Omc).
This finishes the proof of the claim.

\medskip

%Now we argue that if \Omc is satisfiable and $\Omc\models\alpha$,
%where $\alpha$ is a DL-Lite$_\Rmc$ expression that is not a tautology, then  $\alpha$ is an expression
%with concept and role 
%names occurring in  \Omc.

%This follows from the fact that
%if a concept name or a role name does
%not occur in \Omc then 
%changing its 
%extension in a model
%does not affect
%the satisfaction 
%of   expressions built with
%concept and role names in \Omc. 
%By assumption, $\alpha$ is not a %tautology
%and since \Omc is satisfiable it does %not entail a contradiction. 
%Then one can 

\begin{claim}\label{claim:dllite-aux}
	Let $\Omc,\Omc'$ be a satisfiable DL-Lite$_\Rmc$ ontologies.
If $\modelsof{\Omc}\subset\modelsof{\Omc'} $ then $\Omc^t\supset \Omc'^{t}$.
\end{claim}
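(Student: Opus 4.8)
The plan is to prove the two halves of the claimed strict inclusion separately, both coming directly from the model-theoretic reading of entailment that the satisfaction-system framework provides, and then to note how the claim feeds into the proof of \cref{dllitecompat}.

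First I would record the non-strict containment \(\Omc^t \supseteq \Omc'^t\). By definition \(\Omc \models \alpha\) holds exactly when \(\modelsof{\Omc} \subseteq \modelsof{\{\alpha\}}\) in \(\logsysDLLITE\). From the hypothesis we in particular have \(\modelsof{\Omc} \subseteq \modelsof{\Omc'}\); hence for any DL-Lite\(_\Rmc\) axiom \(\alpha\) with \(\alpha \in \Omc'^t\), i.e.\ \(\modelsof{\Omc'} \subseteq \modelsof{\{\alpha\}}\), we get \(\modelsof{\Omc} \subseteq \modelsof{\Omc'} \subseteq \modelsof{\{\alpha\}}\), so \(\alpha \in \Omc^t\). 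This gives \(\Omc'^t \subseteq \Omc^t\).

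Next I would establish strictness. Since \(\modelsof{\Omc} \subset \modelsof{\Omc'}\) is proper, fix a model \(M \in \modelsof{\Omc'} \setminus \modelsof{\Omc}\). Because \(M\) satisfies an ontology iff it satisfies each of its (finitely many) axioms — a property of \(\logsysDLLITE\) — and \(M \not\models \Omc\), there is an axiom \(\beta \in \Omc\) with \(M \not\models \beta\). Then \(\beta \in \Omc^t\) trivially, since every model of \(\Omc\) satisfies each axiom of \(\Omc\); but \(\beta \notin \Omc'^t\), because \(M \models \Omc'\) while \(M \not\models \beta\) witnesses \(\Omc' \not\models \beta\). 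Hence \(\beta \in \Omc^t \setminus \Omc'^t\), and combined with the previous paragraph we conclude \(\Omc^t \supsetneq \Omc'^t\).

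I do not expect a real obstacle here: the argument is essentially generic for monotonic logics, and the only DL-Lite\(_\Rmc\)-specific ingredients are that ontologies are sets of axioms (so the witness \(\beta\) lies in the language over which \(\Omc^t\) is formed) and, for the downstream use, the finite-signature hypothesis. The mild care needed is to phrase entailment semantically and to observe \(\Omc \subseteq \Omc^t\). For the application in \cref{dllitecompat}: together with \cref{claim:dllite} (finiteness of \(\Omc^t\) over a finite signature), this claim rules out the hypothesised infinite chain \(\modelsof{\Omc_i} \subset \modelsof{\Omc_{i+1}}\), since it would induce an infinite strictly descending chain \(\Omc_0^t \supsetneq \Omc_1^t \supsetneq \cdots\) of subsets of the finite set \(\Omc_0^t\), which is impossible.
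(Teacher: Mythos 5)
Your proposal is correct, and for the non-strict half it coincides with the paper's proof: both derive \(\Omc'^t \subseteq \Omc^t\) from \(\modelsof{\Omc} \subseteq \modelsof{\Omc'}\) via the semantic reading of entailment. Where you genuinely diverge is on strictness. The paper's own proof stops after the non-strict containment and simply announces ``in other words, \(\Omc^t \supset \Omc'^t\)'' without justifying why the inclusion is proper, even though properness is exactly what the downstream argument in \cref{dllitecompat} needs (an infinite chain of merely non-strict inclusions among finite sets yields no contradiction). You close this gap explicitly: picking \(M \in \modelsof{\Omc'} \setminus \modelsof{\Omc}\) and an axiom \(\beta \in \Omc\) with \(M \not\models \beta\) gives \(\beta \in \Omc^t \setminus \Omc'^t\), since \(\Omc \models \beta\) trivially while \(M\) witnesses \(\Omc' \not\models \beta\). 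This is sound; the only hypotheses you use are that ontologies are finite sets of axioms satisfied conjunctively and that \(\Omc \subseteq \Omc^t\), both of which hold here. An equivalent route the paper could have taken is to observe \(\modelsof{\Omc} = \modelsof{\Omc^t}\), so \(\Omc^t = \Omc'^t\) would force \(\modelsof{\Omc} = \modelsof{\Omc'}\), contradicting the strict hypothesis; your witness-axiom argument is more concrete and buys the same conclusion. Your closing remark about how the claim combines with \cref{claim:dllite} to kill the infinite ascending chain of model sets matches the paper's intended use exactly.
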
	

If $\modelsof{\Omc}\subset\modelsof{\Omc'} $ then 
$\Omc\models \Omc'$.
This means that
if $\Omc'\models \alpha$
then $\Omc\models \alpha$. So if
%a non-tautological 
$\alpha$ is in $\Omc'^{t}$ 
%and
 % $\alpha$  is not trivial w.r.t.   $\Omc'$
then it is in $\Omc^{t}$.
In other words, $\Omc^t\supset \Omc'^{t}$.
This finishes the proof of the claim.

\medskip

By Claims~\ref{claim:dllite} and~\ref{claim:dllite-aux}  there cannot 
be an infinite sequence 
of DL-Lite$_\Rmc$ ontologies $\Omc_i$ 
%in this language 
such that, for all $i\in\mathbb{N}$, 
$\modelsof{\Omc_i}\subset
\modelsof{\Omc_{i+1}}$ because this implies  
$\Omc^t_i\supset \Omc^{t}_{i+1}$, for all $i\in\mathbb{N}$, 
and $\Omc^t_i,\Omc^{t}_{i+1}$ are finite.
%	In DL-Lite$_\Rmc$, 
%	if \(\NC = \{\mathsf{Even}\}\) and \(\NR = \emptyset\)
%	then the only 
%	possible concept inclusions
%	are $\mathsf{Even}\sqsubseteq \mathsf{Even}$ and
%	$\mathsf{Even}\sqsubseteq \neg \mathsf{Even}$,
%	the former is a tautology
%	while the latter a contradiction.
%	 Then, 
	 %by the semantics of DL-Lite$_\Rmc$,
%	 a DL-Lite$_\Rmc$ ontology \Omc (with such signature) 
%	 can only entail $\mathsf{Even}(a_i)$
%	 if this assertion
%	 is in   \Omc and the rest of the argument is as in the proof of 
%	 Theorem~\ref{ALCecompat}.
\medskip

The proof 
that \(\logsysDLLITE\)
is reception-compatible is similar. 
For this we need to prove that, for every
DL-Lite$_\Rmc$ ontology \Omc
and every \(\mSet \subseteq \mUni\),
we have that
\(\FRsups(\modelsof{\Omc} \cup \mSet, \logsysDLLITE) \neq \emptyset\).
% for all \(\baseb \in \finitepwset(\llang)\) and \(\mSet \subseteq \mUni\).

%we have that  \(\FRsubs(\modelsof{\Omc} \setminus %\mSet, \logsysDLLITE) \neq \emptyset\).
%for all \(\baseb \in \finitepwset(\llang)\) and .}	

Suppose to the contrary that
there exists a DL-Lite$_\Rmc$ ontology $\hat{\Omc}$
and a set \(\mSet \subseteq \mUni\)
such that 
\(\FRsups(\modelsof{\hat{\Omc}} \cup \mSet, \logsysDLLITE) = \emptyset\).
By definition of \(\FRsups(\modelsof{\hat{\Omc}} \cup \mSet, \logsysDLLITE)\),
%of $(\FRsubs(\modelsof{\hat{\Omc}} %\setminus \mSet, \logsysDLLITE)$
this can only happen
if either
\begin{itemize}
\item there is no DL-Lite$_\Rmc$ ontology \Omc such that 
$(\modelsof{\hat{\Omc}} \cup \mSet)\subseteq \modelsof{\Omc}$, or
\item 
for all $i\in\mathbb{N}$,
there are DL-Lite$_\Rmc$ ontologies
$\Omc_i,\Omc_{i+1}$
such that 
$(\modelsof{\hat{\Omc}} \cup \mSet)\subseteq \modelsof{\Omc_{i+1}} \subset
\modelsof{\Omc_i} $.
\end{itemize}
The former cannot happen
since we can express tautologies in DL-Lite$_\Rmc$ 
(so there is always
a DL-Lite$_\Rmc$ ontology \Omc such that $\modelsof{\Omc}=\mUni$
and this is for sure a superset of 
$(\modelsof{\hat{\Omc}} \cup \mSet)$).
The latter is  a consequence of   Claims~\ref{claim:dllite} and~\ref{claim:dllite-aux}.  There cannot 
be an infinite sequence 
of DL-Lite$_\Rmc$ ontologies $\Omc_i$ 
%in this language 
such that, for all $i\in\mathbb{N}$, 
$\modelsof{\Omc_{i+1}}\subset
\modelsof{\Omc_i}$ because this implies 
$\Omc^t_i\subset \Omc^{t}_{i+1}$, for all $i\in\mathbb{N}$, 
and $\Omc^t_i, \Omc^{t}_{i+1}$ are bounded by a polynomial in the
size of the finite signature (see proof of Claim~\ref{claim:dllite}).
\end{proof}

\end{document}